\DeclareMathOperator{\reducesto}{\leq_{SA}}
\DeclareMathOperator{\proj}{\ensuremath{\pi}}
\DeclareMathOperator{\fpol}{fPol}
\DeclareMathOperator{\supp}{{\rm supp}}
\DeclareMathOperator{\CSP}{CSP}
\DeclareMathOperator{\VCSP}{VCSP}
\DeclareMathOperator{\feas}{Feas}
\DeclareMathOperator{\opt}{Opt}
\DeclareMathOperator{\rdom}{{Feas}}
\DeclareMathOperator{\pol}{Pol}
\DeclareMathOperator{\Var}{Var}
\DeclareMathOperator*{\E}{\mathop{{\mathbb E}}}
\DeclareMathOperator{\avg}{{\mathrm avg}}
\DeclareMathOperator{\ar}{ar}
\newcommand{\N}{\mbox{$\mathbb N$}}
\newcommand{\eq}[1]{\ensuremath{\phi^{#1}_{=}}}
\newcommand{\qq}{\ensuremath{\overline{\mathbb{Q}}}}
\newcommand{\inst}{\ensuremath{I}}
\newcommand{\tup}[1]{\ensuremath{\mathbf #1}}
\newcommand{\multiset}[1]{\ensuremath{\{\hspace*{-2pt}\{#1\}\hspace*{-2pt}\}}}
\newcommand{\vcspopt}[0]{{\rm Opt}}
\newcommand{\vcspval}[0]{{\rm Val}}
\renewcommand{\vec}[1]{\ensuremath{\mathbf{#1}}}
\newcommand{\lpopt}[3]{{\rm Opt_{LP}}(#1)}
\newcommand{\lpval}[4]{{\rm Val_{LP}}(#1,#2)}
\newcommand\Crestrict[2]{{
  \left.\kern-\nulldelimiterspace 
  #1 
  \right|_{#2} 
  }}
\newtheorem{theorem}{Theorem}[section]
\newtheorem{lemma}[theorem]{Lemma}
\newtheorem{proposition}[theorem]{Proposition}
\newtheorem{corollary}[theorem]{Corollary}
\newtheorem*{lemma*}{Lemma} 
\newtheorem*{proposition*}{Proposition} 
\newtheorem*{theorem*}{Theorem} 
\theoremstyle{definition}
\newtheorem{definition}[theorem]{Definition}
\newtheorem{example}[theorem]{Example}
\newtheorem{remark}[theorem]{Remark}
\begin{document}
\title{The power of Sherali-Adams relaxations\\ for general-valued CSPs\footnote{An extended
abstract of part of this work appeared in the \emph{Proceedings of the 42nd International Colloquium on Automata, Languages, and
Programming (ICALP'15)}~\cite{tz15:icalp}.}}

\author{
Johan Thapper\\
Universit\'e Paris-Est, Marne-la-Vall\'ee, France\\
\texttt{thapper@u-pem.fr}
\and
Stanislav \v{Z}ivn\'{y}\thanks{Stanislav \v{Z}ivn\'y was supported by a Royal
Society University Research Fellowship. Part of this work was done when the
second author was visiting the Simons Institute for the Theory of Computing at UC Berkeley.
This project has received funding from the European Research Council (ERC) under
the European Union's Horizon 2020 research and innovation programme (grant
agreement No 714532). The paper reflects only the authors' views and not the
views of the ERC or the European Commission. The European Union is not liable
for any use that may be made of the information contained therein.}\\
University of Oxford, UK\\
\texttt{standa.zivny@cs.ox.ac.uk}
}

\date{}
\maketitle

\begin{abstract} 

We give a precise algebraic characterisation of the power of Sherali-Adams
relaxations for solvability of valued constraint satisfaction problems to
optimality. The condition is that of bounded width which has already been shown
to capture the power of local consistency methods for decision CSPs and the
power of semidefinite programming for robust approximation of CSPs. 

Our characterisation has several algorithmic and complexity consequences. On the
algorithmic side, we show that several novel and many known valued constraint
languages are tractable via the third level of the Sherali-Adams relaxation. For
the known languages, this is a significantly simpler algorithm than the
previously obtained ones. On the complexity side, we obtain a dichotomy theorem
for valued constraint languages that can express an injective unary function.
This implies a simple proof of the dichotomy theorem for conservative
valued constraint languages established by Kolmogorov and \v{Z}ivn\'y [JACM'13],
and also a dichotomy theorem for the exact solvability of Minimum-Solution
problems. These are generalisations of Minimum-Ones problems to \emph{arbitrary}
finite domains. Our result improves on several previous classifications by
Khanna et al. [SICOMP'00], Jonsson et al. [SICOMP'08], and Uppman [ICALP'13].

\end{abstract}

%
%
\section{Introduction}

Convex relaxations are one of the most powerful techniques for designing
polynomial-time exact and approximation
algorithms~\cite{Chlamtac12:sdp,BarakS14}. The idea is to formulate the problem
at hand as an integer program and relax it to a convex program which can be
solved in polynomial time, such as a linear program (LP) or a semidefinite
program (SDP). A solution to the problem is then obtained by designing a
(possibly randomised) polynomial-time algorithm that converts the solution to
such a relaxation into an integer solution to the original problem.

Convex relaxations can be strengthened by including additional constraints which
are satisfied by an integer solution. This process of generating stronger
relaxations by adding larger (but still local) constraints is captured by
various hierarchies of convex relaxations, including the hierarchy of linear
programming relaxations proposed by Sherali and Adams~\cite{Sherali1990}, that
by Lov\'asz and Schrijver~\cite{Lovasz91:jo}, and their semidefinite programming
versions, including the hierarchy of Lasserre~\cite{Lasserre01:jo} (see
also~\cite{Laurent03:mor} for a nice comparison of these hierarchies). For an
integer program with $n$ variables taking values in $\{0,1\}$, the convex
program obtained by $n$ levels of any of the above-mentioned hierarchies has
integrality gap $1$, that is, it gives an exact solution (but the program may
take exponential time to solve). Since the size of a program obtained by $k$
levels of these hierarchies is $n^{O(k)}$, for a constant $k$, the program can
be solved in polynomial time.

In this paper we study \emph{constant level} Sherali-Adams relaxations for
\emph{exact} solvability of discrete optimisation problems. We do this within
the framework of constraint satisfaction problems, which captures a large family
of both theoretical and practical problems. An instance of the \emph{valued
constraint satisfaction problem} (VCSP) is given by a collection of variables
that is assigned labels from a given finite domain with the goal to
\emph{minimise} an objective function given by a sum of weighted relations (cost
functions), each depending on some subset of the
variables~\cite{Cohen06:complexitysoft}. The weighted relations can take on
finite rational values and positive infinity. 

By varying the codomain of the weighted relations, we get a variety of
interesting problems. When the codomain is $\{0,\infty\}$, we get the class of
decision problems known as \emph{constraint satisfaction
problems}~\cite{Feder98:monotone} with the goal to determine whether or not
there is a labelling for all variables that evaluates the objective function to
zero. When the codomain is $\{0,1\}$, we get the class of optimisation problems
known as \emph{minimum} constraint satisfaction
problems~\cite{Creignou95:maximum,Creignouetal:siam01,Khanna00:approximability}.
When the codomain is $\mathbb{Q}$, we get the class of optimisation problems
known as \emph{finite-valued} (or generalised~\cite{Raghavendra08:stoc})
constraint satisfaction problems~\cite{tz16:jacm}. The special case of having a
domain of size two has been studied extensively under the name of pseudo-Boolean
optimisation~\cite{Boros:pseudo-boolean,Crama11:book}. Finally, by allowing a
codomain to be both $\mathbb{Q}$ and positive infinity, we get the large class
of problems known as \emph{valued} constraint satisfaction
problems~\cite{Cohen06:complexitysoft,kz17:survey}. Intuitively, the infinite
value deems certain labellings forbidden and thus all constraints are required
to be satisfied, whereas the rational values model the optimisation aspect of
the problem. 

We remark that this framework is more general than that of \emph{mixed} CSPs
with hard and soft constraints used in the approximation
community~\cite{Kumar11:soda}, where each constraint is either hard or soft;
hard constraints correspond to $\{0,\infty\}$-valued weighted relations in our
framework, and a soft constraint corresponds to a $\{0,w\}$-valued weighted
relation, where $w$ is the weight of the constraint.  Thus, all constraints in
mixed CSPs are $2$-valued.

Valued CSPs are sometimes also called \emph{general-valued} CSPs to emphasise
the fact that (decision) CSPs are a special case of valued CSPs.

For constraint satisfaction problems, an important algorithmic technique is
\emph{local consistency methods}, i.e.\ considering a bounded number of
variables at a time and propagating infeasible partial assignments. Problems for
which such techniques suffice to decide satisfiability are said to have
\emph{bounded width}. In an important series of papers,
\cite{Maroti08:weakly,LaroseZadori07:au,Bulatov09:width,Barto14:jacm}, the
property of having bounded width has been shown to be equivalent to a
universal-algebraic condition, now known as the ``bounded width condition''.
There is a clear relation between the local propagation in consistency methods
for decision CSPs and the consistent marginals-condition of Sherali-Adams
relaxations. In this paper, we demonstrate the applicability of powerful
universal-algebraic techniques, developed for decisions CSPs, in the study of
linear programming hierarchies for valued CSPs.

\subsection*{Contributions}

A set $\Gamma$ of weighted relations on some fixed finite domain is called a
\emph{valued constraint language}.
We denote by $\VCSP(\Gamma)$ the class of VCSP 
instances with all weighted relations from $\Gamma$. 

In our first result, we give an
algebraic~\cite{Bulatov05:classifying,cccjz13:sicomp}
characterisation of the power of Sherali-Adams relaxations for VCSPs.
Theorem~\ref{thm:main}, presented in Section~\ref{sec:sapower}, shows that for a valued constraint language $\Gamma$ of
finite size the following three statements are equivalent: 
(i) $\Gamma$ is tractable via a constant level Sherali-Adams relaxation;
(ii) $\Gamma$ is tractable via the third level Sherali-Adams relaxation;
(iii) the support clone of $\Gamma$ contains (not necessarily idempotent) 
$m$-ary weak near-unanimity operations for every $m\geq 3$.\footnote{The precise definition of weak near-unanimity operations can be found in
Section~\ref{sec:prelims}.}
The condition (iii) is precisely that of ``bounded width'' for constraint
languages with codomain $\{0,\infty\}$ (such languages are known as crisp)~\cite{Maroti08:weakly,LaroseZadori07:au,Bulatov09:width,Barto14:jacm}.
Note that the implication ``(ii)\ $\Longrightarrow$\ (i)'' is trivial.

The implication ``(iii)\ $\Longrightarrow$\ (ii)'', proved in
Section~\ref{sec:suff}, is shown via linear programming duality and
fundamentally relies on~\cite{Barto14:jacm} and~\cite{Barto16:jloc}. 
This result
simplifies and generalises several previously obtained tractability results for
valued constraint languages, as discussed in Section~\ref{sec:algs}. For
example, valued constraint languages with a tournament pair multimorphism were
previously known to be tractable using ingenious application of various
consistency techniques, advanced analysis of constraint networks using modular
decompositions, and submodular function
minimisation~\cite{Cohen08:Generalising}. Here, we show that an even less
restrictive condition (having a binary conservative commutative operation in
some fractional polymorphism) ensures that the third level of the Sherali-Adams
relaxation solves all instances to optimality.

The implication ``(i)\ $\Longrightarrow$\ (iii)'', proved in
Section~\ref{sec:nec}, is shown by proving that, given a language $\Gamma$ that
violates (iii), $\Gamma$ can simulate linear equations in some Abelian group.
This result is known for $\{0,\infty\}$-valued constraint
languages~\cite{Barto14:jacm}. It suffices to show that linear equations can
fool constant level Sherali-Adams relaxations, which is proved in
Section~\ref{sec:gap}, and that the ``simulation'' preserves bounded level of
Sherali-Adams relaxations for valued constraint languages, which is proved in
Section~\ref{sec:reductions}. Previously, it was only known that this
``simulation'' preserves polynomial-time reducibility.
One immediate corollary of our result is a classification of conservative
valued constraint languages~\cite{kz13:jacm} \emph{without} relying
on~\cite{Takhanov10:stacs}. In fact we give an alternative and still simple
proof of the complexity classification of conservative valued constraint
languages~\cite{kz13:jacm}, which implies that tractable conservative valued
constraint languages are captured by a majority operation in the
support clone, which was not previously known.

Overall, we give a precise characterisation of the power of Sherali-Adams
relaxations for exact solvability of VCSPs. This rather surprising result
demonstrates how robust the concept of bounded width is, capturing not only the
power of local consistency methods for decision
CSPs~\cite{Bulatov09:width,Barto14:jacm,Bulatov16:lics} and the class of decision CSPs that can
be robustly approximated~\cite{Barto16:sicomp}, but also the power of
Sherali-Adams relaxations for exact solvability of VCSPs.

Minimum-Solution~\cite{Jonsson08:max-sol} problems are special types of VCSPs
that involve $\{0,\infty\}$-valued weighted relations together with a single
unary $\mathbb{Q}$-valued weighted relation that is required to be injective.
(The natural encoding of Vertex Cover as a VCSP instance is of this kind.)
Minimum-Solution problems include integer programming over bounded domains and
can be viewed as a generalisation of Min-Ones
problems~\cite{Creignouetal:siam01,Khanna00:approximability} to larger domains.
Compare this to the result~\cite{ccjkpz16:arxiv-binarisation} that any VCSP
instance is equivalent to a VCSP instance with only binary relations and
unary (not necessarily injective) finite-valued weighted relations. Hence,
unless we settle the CSP dichotomy conjecture~\cite{Feder98:monotone}, some additional requirement on
the unary weighted relations (such as injectivity) is necessary.

As a corollary of our characterisation, we give, in Section~\ref{sec:compl}, a
complete complexity classification of exact solvability of Minimum-Solution
problems over \emph{arbitrary finite} domains, thus improving on previous
partial classifications for domains of size two~\cite{Khanna00:approximability}
and three~\cite{Uppman13:icalp}, homogeneous and maximal (under a certain
algebraic conjecture) languages~\cite{Jonsson08:siam} and on graphs with few
vertices~\cite{Jonsson07:maxsol}. Theorem~\ref{thm:min-sol} shows that the
Minimum-Solution problem is NP-hard unless it satisfies the bounded width
condition. Previous partial results included ad-hoc algorithms for
various special cases. Our result shows that one algorithm, the third level of
the Sherali-Adams relaxation, solves all tractable cases and is thus universal.
As a matter of fact, we actually prove a complexity classification for a larger
class of problems that includes Minimum-Solutions problems as a special case, as
described in detail in Section~\ref{sec:compl}.

\subsection*{Related work}

The first level of the Sherali-Adams hierarchy is known as the \emph{basic
linear programming} (BLP) relaxation~\cite{Chekuri04:sidma}.
In~\cite{tz12:focs}, the authors gave a precise algebraic
characterisation of $\Gamma$ for which any instance of $\VCSP(\Gamma)$ is solved
to optimality by BLP, see also~\cite{ktz15:sicomp}.
The characterisation proved important not only in the study of
VCSPs~\cite{hkp14:sicomp} and other classes of problems~\cite{Hirai16:mp}, but
also in the design of fixed-parameter algorithms~\cite{Iwata16:sicomp}.
In~\cite{tz16:jacm}, it was then shown that for finite-valued CSPs, the BLP
solves \emph{all} tractable cases; i.e.\ if BLP fails to solve any instance of
some finite-valued constraint language then this language is NP-hard.
The BLP has been considered in the context of CSPs for robust
approximability~\cite{Kun12:itcs,Dalmau13:robust} and constant-factor
approximation~\cite{Ene13:soda,Dalmau15:soda}. 
Higher levels of Sherali-Adams hierarchy have been considered for
(in)approximability of CSPs~\cite{Vega07:soda,Chan:jacm,Yoshida14:itcs}.
Semidefinite programming relaxations have also been considered in the context of
CSPs for approximability~\cite{Raghavendra08:stoc} and robust
approximability~\cite{Barto16:sicomp}.
Concrete lower bounds on Sherali-Adams and other relaxations
include~\cite{Schoenebeck07:ccc,Charikar09:talg,Georgiou10:sicomp,Alekhnovich11:toc}.
Whilst the complexity of valued constraint languages is open, it has been shown
that a dichotomy for constraint languages, conjectured
in~\cite{Feder98:monotone}, implies a dichotomy for valued constraint
languages~\cite{Kolmogorov15:focs}. Our results give a complete complexity
classification for a large class of VCSPs without any dependence on the
dichotomy conjecture~\cite{Feder98:monotone}. Since the announcement of our
results~\cite{tz15:icalp}, the tractability results obtained in this paper were
shown using different methods (preprocessing combined with an LP
relaxation)~\cite{Kolmogorov15:focs}.

One ingredient of our proof is the fact that constant level Sherali-Adams
relaxations cannot solve exactly instances involving equations over a
non-trivial Abelian group. This is known to follow, via~\cite{Tulsiani09:stoc},
from a stronger result of Grigoriev~\cite{Grigoriev01:tcs}, later rediscovered
by Schoenebeck~\cite{Schoenebeck08:focs}, that limits the power of $\Omega(n)$ levels of
Lasserre SDP relaxations for approximately solving Max-CSPs involving equations. However, a formal
proof would require the definition of SDP relaxations that are not in the scope of
this article. Rather, we provide here a direct, elementary proof of this fact
and observe that our proof actually gives a gap instance for Sherali-Adams
relaxations of level $\Theta(\sqrt{n})$. This also has the advantage of our
proof being self-contained.

%
%
\section{Preliminaries}
\label{sec:prelims}

\subsection{Valued CSPs}

We denote by $[m]$ the set $\{1,2,\ldots,m\}$. Let $\qq=\mathbb{Q}\cup\{\infty\}$ denote the set of rational numbers extended
with positive infinity. Throughout the paper, let $D$ be a fixed finite set of
size at least two, also called a \emph{domain}; we call the elements of $D$
\emph{labels}.

\begin{definition}\label{def:wrel}
An $r$-ary \emph{weighted relation} over $D$ is a mapping $\phi:D^r\to\qq$. We
write $\ar(\phi)=r$ for the arity of $\phi$.
\end{definition}

A weighted relation $\phi \colon D^r\to\qq$ is called \emph{finite-valued} if
$\phi(\tup{x}) < \infty$ for all $\tup{x} \in D^r$.
A weighted relation $\phi \colon D^r \to \{0,\infty\}$ can be seen as the (ordinary) relation
$\{ \tup{x} \in D^r \mid \phi(\tup{x}) = 0 \}$.
We will use both viewpoints interchangeably.

For any $r$-ary weighted relation $\phi$, we denote by
$\feas(\phi)=\{\tup{x}\in D^r \mid \phi(\tup{x})<\infty\}$ the
underlying $r$-ary \emph{feasibility relation}, and by
$\opt(\phi)=\{\tup{x}\in\feas(\phi) \mid \forall\tup{y}\in D^r:
\phi(\tup{x})\leq\phi(\tup{y})\}$ the $r$-ary \emph{optimality relation},
which contains the tuples on which $\phi$ is minimised.

\begin{definition}
Let $V=\{x_1,\ldots, x_n\}$ be a set of variables. A \emph{valued constraint} over $V$ is an expression
of the form $\phi(\tup{x})$ where $\phi$ is a weighted relation and $\tup{x}\in V^{\ar(\phi)}$.
The tuple $\tup{x}$ is called the \emph{scope} of the constraint.
\end{definition}

We will use the notational convention to denote by $X_i$
the \emph{set} of variables occurring in the scope $\tup{x}_i$.

\begin{definition}\label{def:vcsp}
An instance $I$ of the \emph{valued constraint satisfaction problem} (VCSP) is specified
by a finite set $V=\{x_1,\ldots,x_n\}$ of variables, a finite set $D$ of labels,
and an \emph{objective function} $\phi_\inst$
expressed as follows:
\begin{equation}
\phi_\inst(x_1,\ldots, x_n)=\sum_{i=1}^q{\phi_i(\tup{x}_i)}\,,
\label{eq:sepfun}
\end{equation}
where each $\phi_i(\tup{x}_i)$, $1\le i\le q$, is a valued constraint. 
Each constraint may appear multiple times in~$\inst$.
An \emph{assignment} to $\inst$ is a map $\sigma \colon V \to D$.
The goal is to find an assignment that minimises the objective function.
\end{definition}

For a VCSP instance $I$, we write $\vcspval(\inst,\sigma)$ for
$\phi_\inst(\sigma(x_1), \dots, \sigma(x_n))$, and $\vcspopt(\inst)$ for the
minimum of $\vcspval(\inst,\sigma)$ over all assignments.

An assignment $\sigma$ with $\vcspval(\inst,\sigma)<\infty$ is called
\emph{satisfying}. A VCSP instance $I$ is called \emph{satisfiable} if there is
a satisfying assignment to $I$. CSPs are a special case of VCSPs with
(unweighted) relations with the goal to determine the existence of a satisfying
assignment.

A \emph{valued constraint language}, or just a constraint language, over $D$ is
a set of weighted relations over $D$. We denote by $\VCSP(\Gamma)$ the class of
all VCSP instances in which the weighted relations are all contained in
$\Gamma$. A constraint language $\Delta$ is called \emph{crisp} if $\Delta$
contains only (unweighted) relations. For a crisp language $\Delta$ we denote by
$\CSP(\Delta)$ the class $\VCSP(\Delta)$ to emphasise the fact that there is no
optimisation involved.

A valued constraint language $\Gamma$ is called \emph{tractable} if
$\VCSP(\Gamma')$ can be solved (to optimality) in polynomial time for every
finite subset $\Gamma'\subseteq\Gamma$, and $\Gamma$ is called
\emph{NP-hard} if $\VCSP(\Gamma')$ is NP-hard for some finite
$\Gamma'\subseteq\Gamma$.

\subsection{Fractional Polymorphisms}

Given an $r$-tuple $\tup{x}\in D^r$, we denote its $i$th entry by $\tup{x}[i]$ for $1\leq i\leq r$.
A mapping $f \colon D^m\rightarrow D$ is called an $m$-ary \emph{operation} on $D$; $f$ is
\emph{idempotent} if $f(x,\ldots,x)=x$.
We apply an $m$-ary operation $f$ to $m$ $r$-tuples
$\tup{x_1},\ldots,\tup{x_m}\in D^r$ coordinatewise, that is, \begin{equation}
f(\tup{x_1},\ldots,\tup{x_m})=(f(\tup{x_1}[1],\ldots,\tup{x_m}[1]),\ldots,f(\tup{x_1}[r],\ldots,\tup{x_m}[r]))\,.
\end{equation}

\begin{definition} \label{def:pol}
Let $\phi$ be a weighted relation on $D$ and let $f$ be an $m$-ary operation on $D$.
We call $f$ a \emph{polymorphism of $\phi$} if,
for any $\tup{x_1},\ldots,\tup{x_m} \in \rdom(\phi)$,
we have that $f(\tup{x_1},\ldots,\tup{x_m})\in\rdom(\phi)$.

For a valued constraint language $\Gamma$,
we denote by $\pol(\Gamma)$ the set of all operations which are polymorphisms of all 
$\phi \in \Gamma$. We write $\pol(\phi)$ for $\pol(\{\phi\})$.
\end{definition}

A probability distribution $\omega$ over the set of $m$-ary operations on $D$ is called
an $m$-ary \emph{fractional operation}.
We define $\supp(\omega)$ to be the set of operations assigned positive probability by $\omega$.

The following two notions are known to capture the complexity of valued
constraint languages~\cite{cccjz13:sicomp,Kozik15:icalp} and will also be important in this
paper.

\begin{definition} \label{def:wp} 
Let $\phi$ be a weighted relation on $D$ and
let $\omega$ be an $m$-ary fractional operation on $D$.
We call $\omega$ a \emph{fractional polymorphism of $\phi$} if 
$\supp(\omega)\subseteq\pol(\phi)$ and for any
$\tup{x}_1,\ldots,\tup{x}_m \in \rdom(\phi)$, we have
\begin{equation}
\E_{f\sim \omega}[\phi(f(\vec{x}_1,\ldots,\vec{x}_m))]\ \le\
\avg\{\phi(\vec{x}_1),\ldots,\phi(\vec{x}_m)\}\,.
\label{eq:wpol}
\end{equation}
For a valued constraint language $\Gamma$, we denote by $\fpol(\Gamma)$ the set of all
fractional operations which are fractional polymorphisms of all weighted
relations $\phi \in \Gamma$. 
We say that $\Gamma$ is \emph{improved} by $\omega$ if $\omega\in\fpol(\Gamma)$.
We write $\fpol(\phi)$ for $\fpol(\{\phi\})$.
\end{definition}

\begin{example}
Consider the domain $D=\{0,1\}$ and the two binary operations $\min$ and $\max$
on $D$ that return the smaller and the larger its two arguments, respectively. A
valued constraint language on $D$ is called \emph{submodular} if it has the
fractional polymorphism $\omega$ defined by
$\omega(\min)=\omega(\max)=\frac{1}{2}$.
\end{example}

\begin{definition}
Let $\Gamma$ be a valued constraint language on $D$. We define
\begin{equation}
\supp(\Gamma)\ =\ \bigcup_{\omega \in \fpol(\Gamma)} \supp(\omega)\,.
\end{equation}
\end{definition}

An $m$-ary \emph{projection} is an operation of the form
$\proj^{(m)}_i(x_1,\ldots,x_m)=x_i$ for some $1\leq i\leq m$. Projections are
polymorphisms of all valued constraint languages. 

The \emph{composition} of an $m$-ary operation $f:D^m\rightarrow D$ with $m$
$n$-ary operations $g_i:D^n\rightarrow D$ for $1\leq i\leq m$ is the
$n$-ary function $f[g_1,\ldots,g_m]:D^n\to D$ defined by
\begin{equation}
f[g_1,\ldots,g_m](x_1,\ldots,x_n)=f(g_1(x_1,\ldots,x_n),\ldots,g_m(x_1,\ldots,x_n))\,.
\end{equation}

A \emph{clone} of operations is a set of operations on $D$ that contains
all projections and is closed under composition. 
$\pol(\Gamma)$ is a clone for any valued constraint language $\Gamma$.

\begin{lemma}\label{lem:suppclone}
For any valued constraint language $\Gamma$, $\supp(\Gamma)$ is a clone.
\end{lemma}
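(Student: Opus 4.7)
The plan is to verify directly that $\supp(\Gamma)$ satisfies the two defining properties of a clone: it contains all projections, and it is closed under composition.

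For the projections, I would exhibit, for each arity $m \geq 1$, a single fractional polymorphism whose support is the set of all $m$-ary projections. Take $\omega_m$ to be the uniform distribution on $\{\proj^{(m)}_1, \ldots, \proj^{(m)}_m\}$. Projections are polymorphisms of every weighted relation, so $\supp(\omega_m) \subseteq \pol(\phi)$ for every $\phi$. Moreover, for any $r$-ary $\phi$ and any $\vec{x}_1, \ldots, \vec{x}_m \in \rdom(\phi)$,
\begin{equation*}
\E_{f \sim \omega_m}[\phi(f(\vec{x}_1, \ldots, \vec{x}_m))] \;=\; \frac{1}{m}\sum_{i=1}^{m} \phi(\vec{x}_i) \;=\; \avg\{\phi(\vec{x}_1), \ldots, \phi(\vec{x}_m)\},
\end{equation*}
so~\eqref{eq:wpol} holds with equality. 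Hence $\omega_m \in \fpol(\Gamma)$ for every valued constraint language $\Gamma$ on $D$, and every projection lies in $\supp(\Gamma)$.

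For closure under composition, suppose $f \in \supp(\omega_f)$ is $m$-ary and $g_1, \ldots, g_m \in \supp(\omega_{g_i})$ are $n$-ary, where $\omega_f, \omega_{g_1}, \ldots, \omega_{g_m} \in \fpol(\Gamma)$. I would build a single fractional operation $\omega$ witnessing that $f[g_1, \ldots, g_m]$ is in $\supp(\Gamma)$ by sampling $f' \sim \omega_f$ and $g'_1 \sim \omega_{g_1}, \ldots, g'_m \sim \omega_{g_m}$ independently and outputting the composition $f'[g'_1, \ldots, g'_m]$. The desired operation $f[g_1, \ldots, g_m]$ appears with positive probability, namely $\omega_f(f) \cdot \prod_i \omega_{g_i}(g_i) > 0$, so it suffices to check $\omega \in \fpol(\Gamma)$. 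Since $\pol(\Gamma)$ is a clone, the support of $\omega$ is contained in $\pol(\Gamma)$, so only the averaging inequality remains.

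For that inequality, fix $\phi \in \Gamma$ and $\vec{y}_1, \ldots, \vec{y}_n \in \rdom(\phi)$, and write $\vec{z}_i = g'_i(\vec{y}_1, \ldots, \vec{y}_n)$. Because each $g'_i$ is a polymorphism of $\phi$, each $\vec{z}_i$ lies in $\rdom(\phi)$, so the fractional polymorphism property of $\omega_f$ applies to the tuples $\vec{z}_1, \ldots, \vec{z}_m$ and gives
\begin{equation*}
\E_{f'}[\phi(f'(\vec{z}_1, \ldots, \vec{z}_m))] \;\leq\; \frac{1}{m}\sum_{i=1}^{m} \phi(\vec{z}_i).
\end{equation*}
Taking expectation over the independent $g'_i$ and applying the fractional polymorphism property of each $\omega_{g_i}$ to $\vec{y}_1, \ldots, \vec{y}_n$ yields $\E_{g'_i}[\phi(\vec{z}_i)] \leq \avg\{\phi(\vec{y}_1), \ldots, \phi(\vec{y}_n)\}$. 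Summing and dividing by $m$ gives the required bound on $\E_{\omega}[\phi(f'[g'_1,\ldots,g'_m](\vec{y}_1, \ldots, \vec{y}_n))]$. There is no real obstacle; the only conceptual point is that sampling $f'$ and the $g'_i$ independently decouples the two applications of~\eqref{eq:wpol} so that they chain together by linearity of expectation.
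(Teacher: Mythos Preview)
Your proof is correct and follows essentially the same approach as the paper: the uniform distribution on projections witnesses that projections lie in $\supp(\Gamma)$, and the composition is handled by independently sampling from the witnessing fractional polymorphisms and composing. The paper merely states that the resulting distribution is a fractional polymorphism ``by a straightforward verification,'' whereas you spell out that verification explicitly.
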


We note that Lemma~\ref{lem:suppclone} has also been observed
in~\cite{fz16:toct} and in~\cite{Kozik15:icalp}. For completeness, we give a
proof here. (Our proof is slightly different from the proofs
in~\cite{fz16:toct,Kozik15:icalp} as we have defined fractional
polymorphisms as probability distributions.)

\begin{proof}
Observe that $\supp(\Gamma)$ contains all projections as
$\tau_m\in\fpol(\Gamma)$ for every $m\geq 1$, where $\tau_m$ is the fractional
operation defined by $\tau_m(\proj^{(m)}_i)=\frac{1}{m}$ for every $1\leq i\leq
m$. Thus we only need to show that $\supp(\Gamma)$ is closed under composition.

Let $f\in\supp(\Gamma)$ be an $m$-ary operation with $\omega(f)>0$ for some
$\omega\in\fpol(\Gamma)$. Moreover, let $g_i\in\supp(\Gamma)$ be
$n$-ary operations with $\mu_i(g_i)>0$ for some $\mu_i \in \fpol(\Gamma)$,
where $1\leq i\leq m$.
We define an $n$-ary fractional operation 
\begin{align}
\omega'(p)\ =\ \Pr_{\substack{t \sim \omega\\ h_i \sim \mu_i}} \left[ t[h_1, \ldots, h_m] = p \right]\,.
\end{align}

Since $\omega(f)>0$ and $\mu_i(g_i)>0$ for all $1\leq i\leq m$, we have
$\omega'(f[g_1,\ldots,g_m])>0$. 
A straightforward verification shows that $\omega'\in\fpol(\Gamma)$.
Consequently, $f[g_1,\ldots,g_m]\in\supp(\Gamma)$.
\end{proof}

The following lemma is a generalisation of~\cite[Lemma~2.9]{tz16:jacm} from
arity one to arbitrary arity and from finite-valued to valued constraint
languages, but the proof is analogous. A special case has also been observed, in
the context of Min-Sol problems~\cite{Uppman13:icalp}, by Hannes
Uppman~\cite{Uppman15:thesis}.

\begin{lemma}\label{lem:killing}
Let $\Gamma$ be a valued constraint language of finite size on a domain $D$ and let $f \in  \pol(\Gamma)$.
Then, $f \in \supp(\Gamma)$ if, and only if, $f \in \pol(\opt(\phi_I))$ for all
instances $I$ of $\VCSP(\Gamma)$.
\end{lemma}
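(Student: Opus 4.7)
For the forward direction, I would start from $\omega \in \fpol(\Gamma)$ with $\omega(f) > 0$ and, for any instance $I \in \VCSP(\Gamma)$ and any optimal $\sigma_1, \ldots, \sigma_m \in \opt(\phi_I)$ (where $m = \ar(f)$), sum the fractional polymorphism inequality~\eqref{eq:wpol} across the constraints of $I$ to obtain
\[
\E_{g \sim \omega}[\phi_I(g(\sigma_1, \ldots, \sigma_m))] \ \leq\ \tfrac{1}{m}\sum_{i=1}^m \phi_I(\sigma_i)\ =\ \vcspopt(I).
\]
Since every $g \in \supp(\omega) \subseteq \pol(\Gamma)$ maps satisfying tuples to satisfying tuples, each summand on the left is at least $\vcspopt(I)$, so equality must hold throughout; taking $g = f$ (which carries positive $\omega$-weight) yields $f(\sigma_1, \ldots, \sigma_m) \in \opt(\phi_I)$, i.e., $f \in \pol(\opt(\phi_I))$.

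For the converse I argue by contrapositive: assuming $f \not\in \supp(\Gamma)$, the plan is to produce a witnessing instance $I \in \VCSP(\Gamma)$ with $f \not\in \pol(\opt(\phi_I))$ via linear programming duality. Consider the LP in the variables $\omega(g) \geq 0$ for $g \in \pol^{(m)}(\Gamma)$ with the extra constraint $\omega(f) \geq 1$ and, for each $\phi \in \Gamma$ and each $\bar x = (\tup{x}_1, \ldots, \tup{x}_m) \in (\feas(\phi))^m$, the homogeneous fractional polymorphism inequality
\[
\sum_g \omega(g)\,\phi(g(\bar x))\ \leq\ \Bigl(\sum_g \omega(g)\Bigr)\cdot \tfrac{1}{m}\sum_i \phi(\tup{x}_i).
\]
The hypothesis $f \not\in \supp(\Gamma)$ renders this LP infeasible (a feasible $\omega$ could be normalised to an honest fractional polymorphism with $f$ in its support). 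Farkas' lemma then delivers nonnegative rationals $\lambda_{\phi, \bar x}$ and a strict separation $\mu > 0$ satisfying
\[
\sum_{\phi, \bar x} \lambda_{\phi, \bar x}\,\phi(g(\bar x))\ \geq\ C + \mu\,[g = f]\qquad\text{for every } g \in \pol^{(m)}(\Gamma),
\]
where $C := \sum_{\phi, \bar x} \lambda_{\phi, \bar x}\cdot \tfrac{1}{m}\sum_i \phi(\tup{x}_i)$.

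From these data I would build $I$ with variable set $V = D^m$ by including, for each $(\phi, \bar x)$ with $\lambda_{\phi, \bar x} > 0$, the constraint $\phi$ applied to the $r$ columns of $\bar x$ (themselves elements of $D^m$, hence variables of $I$), with multiplicity $\lambda_{\phi, \bar x}$ after clearing denominators. Every $m$-ary operation $\sigma : D^m \to D$ is then an assignment of $I$ with $\vcspval(I, \sigma) = \sum_{\phi, \bar x} \lambda_{\phi, \bar x}\,\phi(\sigma(\bar x))$. The projections $\proj_i^{(m)}$ lie in $\supp(\Gamma)$ (carried by the uniform fractional operation $\tau_m$), so they are distinct from $f$; the Farkas inequality yields $\vcspval(I, \proj_i^{(m)}) \geq C$, and summing over $i$ collapses this to equality, so each projection attains exactly $C$, while $\vcspval(I, f) \geq C + \mu > C$. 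Since $f(\proj_1^{(m)}, \ldots, \proj_m^{(m)}) = f$ coordinate-wise, once I ensure $\vcspopt(I) = C$ the projections will be optimal and $f$ will not, delivering $f \not\in \pol(\opt(\phi_I))$.

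The hard part will be verifying $\vcspopt(I) = C$: Farkas covers all polymorphism assignments, and a non-polymorphism $\sigma$ whose infeasibility witness already appears as a scope in $I$ pays $\vcspval(I, \sigma) = \infty$; the delicate case is a non-polymorphism whose violations of $\feas(\phi)$ fall only on tuples absent from $I$. I would resolve this by augmenting $I$ with an extra copy of $\phi$ applied to the columns of every $\bar y \in (\feas(\phi))^m$, at a positive weight chosen small enough that the Farkas separation $\mu$ remains dominant. After augmentation $\opt(\phi_I) \subseteq \pol^{(m)}(\Gamma)$, the Farkas analysis applies, and a careful scaling argument (exploiting that the augmentation is symmetric across the projections) preserves the projections' status as optima and $f$'s status as a strict non-optimum. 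This balancing of the augmenting weights is the technical crux, where the proof generalises beyond the finite-valued, unary case of~\cite[Lemma~2.9]{tz16:jacm}.
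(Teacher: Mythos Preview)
Your argument is essentially the paper's: the forward direction is the same computation (the paper phrases it as infeasibility of the Farkas alternative, but unwinding that gives exactly your averaging inequality), and the converse builds the same indicator instance on variable set $D^m$ from a Farkas certificate $z$.

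You are in fact more careful than the paper on one point. The paper simply asserts that the projections are optimal in the constructed instance $I$, citing the Farkas system; but that system only bounds the objective at assignments $g \in \pol(\Gamma)$, so a non-polymorphism whose feasibility violations lie entirely on tuples with $z = 0$ can in principle undercut the projections. Concretely, take $D=\{0,1\}$, $\Gamma=\{\phi,\psi\}$ with $\phi(x)=x$ and $\psi$ the crisp relation $\{(0,1),(1,0),(1,1)\}$, and $f=\max$; then $f\in\pol(\Gamma)\setminus\supp(\Gamma)$, one valid Farkas certificate has $z(\phi,0,1)=z(\phi,1,0)=1$ and zero elsewhere, the resulting instance is $\phi(v_{01})+\phi(v_{10})$, and the all-zero assignment (which fails to preserve $\feas(\psi)$) achieves value $0$ while the projections achieve $1$. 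For that instance $\opt(\phi_I)=\{(a,0,0,b):a,b\in D\}$ and $\max$ \emph{does} preserve it. So your concern is legitimate, and the paper does not explicitly address it.

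Your proposed fix, however, is not yet complete. After augmenting $I$ with weight-$\epsilon$ copies of every constraint $\phi(\bar y)$, the feasible assignments are exactly the polymorphisms, and for small $\epsilon$ the new optimum is attained on those $g\in\pol(\Gamma)$ that achieve value $C$ on the original instance and, among those, minimise the uniform-weight value $\sum_{\phi,\bar y}\phi(g(\bar y))$. Your symmetry remark only shows that all projections share the \emph{same} uniform-weight value; it does not rule out some other polymorphism $g$ tying the projections at $C$ on the original instance but beating them on the uniform one, which would eject the projections from $\opt(\phi_{I_\epsilon})$ and leave you without the witnesses $\pi_1,\dots,\pi_m$. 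One partial repair: whenever $\psi\in\Gamma$ is $\{0,\infty\}$-valued, the entries $z(\psi,\bar x)$ contribute zero to every Farkas inequality and may be set strictly positive for free, so in the example above (and whenever every $\phi\in\Gamma$ with $\feas(\phi)\subsetneq D^{\ar(\phi)}$ is crisp) one can simply choose a better Farkas certificate. The case of a weighted relation taking both $\infty$ and at least two distinct finite values still needs a separate argument beyond what either you or the paper supplies.
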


\begin{proof}
Let $m$ be the arity of $f$.
The operation $f$ is in $\supp(\Gamma)$ if, and only if, there exists a fractional polymorphism
$\omega$ with $f \in \supp(\omega)$.
This is the case if, and only if, the following system of linear inequalities in 
the variables $\omega(g)$ for $m$-ary $g \in \pol(\Gamma)$ is satisfiable:
\begin{align}
  \sum_{g\in\pol(\Gamma)} \omega(g) \phi(g(\tup{x}_1,\dots,\tup{x}_m)) & \leq 
  \avg\{\phi(\tup{x}_1),\dots,\phi(\tup{x}_m)\}, \quad \forall \phi \in \Gamma, \tup{x}_i \in \feas(\phi), \notag \\
  \sum_{g\in\pol(\Gamma)} \omega(g) & = 1, \notag \\ 
  \omega(f) & > 0, \notag \\
  \omega(g) & \geq 0, \quad \forall \mbox{$m$-ary\ }g\in\pol(\Gamma).
  \label{eq:killing1}
\end{align}

By Farkas' lemma (e.g.~\cite[Lemma~2.8]{tz16:jacm}), the system (\ref{eq:killing1}) is unsatisfiable if, and only if, the following
system in variables $z(\phi,\tup{x}_1,\dots,\tup{x}_m)$, for $\phi \in \Gamma, \tup{x}_i \in \feas(\phi)$,
is satisfiable:
\begin{align}
  \sum_{\substack{\phi \in \Gamma\\ \tup{x}_i \in \feas(\phi)}}
  z(\phi,\tup{x}_1, \dots, \tup{x}_m) \left( \avg\{\phi(\tup{x}_1),\dots,\phi(\tup{x}_m)\} 
  -\phi(g(\tup{x}_1,\dots,\tup{x}_m)) \right)  
  & \leq 0,
  \quad \forall \mbox{$m$-ary\ }g \in \pol(\Gamma), \notag \\
  \sum_{\substack{\phi \in \Gamma\\ \tup{x}_i \in \feas(\phi)}}
  z(\phi,\tup{x}_1,\dots,\tup{x}_m) \left( \avg\{\phi(\tup{x}_1),\dots,\phi(\tup{x}_m)\} 
  - \phi(f(\tup{x}_1, \dots, \tup{x}_m)) \right)
  & < 0, \notag \\
  z(\phi, \tup{x}_1, \dots, \tup{x}_m) & \geq 0, \quad \forall \phi \in \Gamma, \tup{x}_i \in \feas(\phi).
  \label{eq:killing2}
\end{align}

First, assume that $f \not\in \supp(\Gamma)$ so that (\ref{eq:killing2}) has a feasible solution $z$.
Note that by scaling we may assume that $z$ is integral.
Let $V^{(m)} = \{ v_{\tup{x}} \mid \tup{x} \in D^m \}$ and
let $\tup{v} = (v_1, \dots, v_n)$ be an enumeration of $V^{(m)}$.
Define $\iota \colon V^{(m)} \to D^m$ by $\iota(v_{\tup{x}}) = \tup{x}$
and
let $I$ be the instance of $\VCSP(\Gamma)$ with variables $V^{(m)}$
and objective function:
\[
\phi_{I}(\tup{v}) = \sum_{\substack{\phi \in \Gamma\\ \tup{x}_i \in \feas(\phi)}}
z(\phi, \tup{x}_1, \dots, \tup{x}_m)
\phi(\iota^{-1}(\tup{x}_1[1],\dots,\tup{x}_m[1]), \dots, \iota^{-1}(\tup{x}_1[\ar(\phi)],\dots,\tup{x}_m[\ar(\phi)])),
\]
where the multiplication by $z$ is simulated by taking the corresponding constraint with multiplicity $z$.
According to (\ref{eq:killing2}), 
every projection $\proj^{(m)}_i$ induces
an optimal assignment $\proj^{(m)}_i \circ \iota$ to $I$.
Interpreted as $D^m$-tuples, we therefore have $\proj^{(m)}_i \in \opt(\phi_{I})$ for $1 \leq i \leq m$.
On the other hand, (\ref{eq:killing2}) states that $f \circ \iota$ is not an optimal assignment,
so $f(\proj^{(m)}_1, \dots, \proj^{(m)}_m) \not\in \opt(\phi_{I})$.
In other words, $f \not\in \pol(\opt(\phi_{I}))$, and $I$ is an instance
of $\VCSP(\Gamma)$.

For the opposite direction, assume that $f \in \supp(\Gamma)$,
so that (\ref{eq:killing2}) is unsatisfiable.
Let $I$ be an instance of $\VCSP(\Gamma)$ with objective function 
$\phi_I(y_1, \dots, y_n) = \sum_p \phi_p(\tup{y}_p)$.
Let $\sigma_1, \dots, \sigma_m \in \opt(\phi_I)$.
We will consider $\sigma_j$ both as tuples and as assignments $V \to D$.
In particular, $\sigma_j(\tup{y}_p)$ is the projection of $\sigma_j$ onto the scope $\tup{y}_p$.
Let $z(\phi, \tup{x}_1, \dots, \tup{x}_m)$ be the number of indices $p$ for which $\phi = \phi_p$
and $\sigma_j(\tup{y}_p) = \tup{x}_j$ for every $1 \leq j \leq m$.
Then,
\begin{align*}
\sum_{\substack{\phi \in \Gamma\\ \tup{x}_i \in \feas(\phi)}}
  z(\phi,\tup{x}_1, \dots, \tup{x}_m)  \avg\{\phi(\tup{x}_1),\dots,\phi(\tup{x}_m)\} 
&=
\sum_{p} \avg_j \{\phi_p(\sigma_j(\tup{y}_p)) \} \\
&= 
\avg_j \{ \sum_p \phi_p(\sigma_j(\tup{y}_p)) \}
=
\vcspopt(I)
\end{align*}
and, for all $g \in \pol(\Gamma)$,
\[
\sum_{\substack{\phi \in \Gamma\\ \tup{x}_i \in \feas(\phi)}}
  z(\phi,\tup{x}_1, \dots, \tup{x}_m) \phi(g(\tup{x}_1, \dots, \tup{x}_m))
  =
  \sum_p \phi_p(g(\sigma_1(\tup{y}_p), \dots, \sigma_m(\tup{y}_p))).
  \]
It follows that all non-strict inequalities in (\ref{eq:killing2}) are satisfied by $z$,
and since (\ref{eq:killing2}) is unsatisfiable, this implies that
$\vcspopt(I) \leq  \sum_p \phi_p(f(\sigma_1(\tup{y}_p), \dots, \sigma_m(\tup{y}_p))$
must hold with equality
so $f(\sigma_1, \dots, \sigma_m) \in \opt(\phi_I)$.
Since the $\sigma_j$ were chosen arbitrarily, $f \in \pol(\opt(\phi_I))$.
This establishes the lemma.
\end{proof}

\subsection{Cores and Constants}
\label{sec:cores}

\begin{definition}
Let $\Gamma$ be a valued constraint language with domain $D$ and let $S \subseteq D$.
The \emph{sub-language $\Gamma[S]$ of $\Gamma$ induced by $S$} is the valued constraint
language defined on domain $S$ and containing the restriction of every weighted relation
$\phi\in\Gamma$ onto $S$.
\end{definition}

\begin{definition}\label{def:core}
A valued constraint language $\Gamma$ is \emph{a core} if all unary operations in
$\supp(\Gamma)$ are bijections.
A valued constraint language $\Gamma'$ is a \emph{core of $\Gamma$} if $\Gamma'$ is a
core and $\Gamma' = \Gamma[f(D)]$ for some unary $f \in \supp(\Gamma)$.
\end{definition}

The following lemma implies that when studying the computational complexity of a
valued constraint language $\Gamma$‚ we may assume that $\Gamma$ is a core.

\begin{lemma}\label{lem:core}
Let $\Gamma$ be a valued constraint language and $\Gamma'$ a core of $\Gamma$.
Then, for all instances $I$ of $\VCSP(\Gamma)$ and $I'$ of $\VCSP(\Gamma')$, where
$I'$ is obtained from $I$ by substituting each weighted relation in $\Gamma$ for
its restriction in $\Gamma'$, the optimum of $I$ and $I'$ coincide.
\end{lemma}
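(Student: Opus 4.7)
The plan is to use Lemma~\ref{lem:killing} applied to the unary operation $f \in \supp(\Gamma)$ from Definition~\ref{def:core}. Since each instance $I$ of $\VCSP(\Gamma)$ uses only finitely many weighted relations, we may apply the lemma to the finite sub-language $\Gamma_0 \subseteq \Gamma$ of weighted relations occurring in $I$, obtaining $f \in \pol(\opt(\phi_I))$. In particular, for any optimal assignment $\sigma^* \in \opt(\phi_I)$, the composition $f \circ \sigma^*$ is also in $\opt(\phi_I)$, and it takes values in the image $f(D)$, which is precisely the domain of $\Gamma'$.

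The next key observation is that whenever $\sigma \colon V \to f(D)$, the objective functions of $I$ and $I'$ assign the same value to $\sigma$. Indeed, $\Gamma'$ is defined by restricting each $\phi \in \Gamma$ appearing in $I$ to tuples in $f(D)^{\ar(\phi)}$, and since $\sigma$ lands in $f(D)$, each scope $\sigma(\tup{x}_i)$ is such a tuple, so $\phi_i(\sigma(\tup{x}_i))$ is unchanged by the restriction. This immediately gives $\vcspopt(I) \leq \vcspopt(I')$, because any assignment to $I'$ is automatically an assignment to $I$ taking values in $f(D)$.

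For the reverse inequality, if $\vcspopt(I) = \infty$ then $I$ has no satisfying assignment; a fortiori $I'$ has none either (any satisfying assignment to $I'$ would also satisfy $I$), and so both optima equal $\infty$. Otherwise, pick an optimal satisfying assignment $\sigma^*$ of $I$; by the first paragraph, $f \circ \sigma^*$ is also optimal for $I$ and lands in $f(D)$, hence is a valid assignment to $I'$. Combining with the second paragraph, $\vcspopt(I') \leq \vcspval(I', f \circ \sigma^*) = \vcspval(I, f \circ \sigma^*) = \vcspopt(I)$, which yields equality. The only subtlety is the finite-size hypothesis of Lemma~\ref{lem:killing}, which is easily resolved by restricting to the finite sub-language of weighted relations occurring in the instance under consideration.
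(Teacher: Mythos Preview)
Your proof is correct and follows essentially the same approach as the paper: both use Lemma~\ref{lem:killing} to conclude that $f \circ \sigma^*$ is optimal for $I$ whenever $\sigma^*$ is, and both note that any assignment to $I'$ is an assignment to $I$ of the same value. Your write-up is slightly more careful in explicitly handling the unsatisfiable case and the finite-size hypothesis of Lemma~\ref{lem:killing} (by passing to the finite sub-language $\Gamma_0$ actually used in $I$), points which the paper's proof leaves implicit.
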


A special case of Lemma~\ref{lem:core} for finite-valued constraint languages
was proved by the authors in~\cite{tz16:jacm}. Lemma~\ref{lem:core}, proved
below using Lemma~\ref{lem:killing}, has also been observed
in~\cite{Kozik15:icalp} and in~\cite{tz15:sidma}, where it was proved in
a different way (and without the use of Lemma~\ref{lem:killing}).

\begin{proof}
By definition, $\Gamma' = \Gamma[f(D)]$,
where $D$ is the domain of $\Gamma$ and $f \in \supp(\omega)$
for some unary fractional polymorphism $\omega$.
Assume that $I$ is satisfiable, and let $\sigma$ be an optimal assignment to $I$.
Now $f \circ \sigma$ is a satisfying assignment to $I'$, and by Lemma~\ref{lem:killing}, 
$f \circ \sigma$ is also an optimal assignment to $I$.
Conversely, any satisfying assignment to $I'$ is a satisfying assignment to $I$
of the same value.
\end{proof}

Let $\mathcal{C}_D = \{ \{(a)\} \mid a \in D \}$ be the set of constant unary
relations on the set $D$. It is known
(cf.~\cite[Proposition~20]{Kozik15:icalp}), that for a valued constraint
language $\Gamma$ on $D$ and a core $\Gamma'$ of $\Gamma$ on $D' \subseteq D$,
the problem $\VCSP(\Gamma' \cup \mathcal{C}_{D'})$ polynomial-time reduces to
$\VCSP(\Gamma$). In Theorem~\ref{thm:reductions}(\ref{red:coreconstants}) in
Section~\ref{sec:nec}, we present a stronger form of this reduction.

Let $\Gamma$ be a valued constraint language on $D$ with $\mathcal{C}_D\subseteq
\Gamma$. It is well known and easy to show that any $f\in\pol(\Gamma)$ is
idempotent~\cite{Bulatov05:classifying}.

\subsection{Relational Width}

We define \emph{relational width} which is the basis for our notion of valued
relational width.

Let $J$ be an instance of the CSP with $\phi_J(x_1,\dots,x_n) = \sum_{i=1}^q
\phi_i(\tup{x}_i)$, $X_i \subseteq V = \{x_1, \dots, x_n\}$ and $\phi_i
\colon D^{\ar(\phi_i)} \to \{0,\infty\}$.

For a tuple $\tup{t} \in D^{X}$, we denote by $\pi_{X'}(\tup{t})$ its projection onto $X' \subseteq X$.
For a constraint $\phi_i(\tup{x}_i)$, we define $\pi_{X'}(\phi_i) = \{
\pi_{X'}(\tup{t}) \mid \tup{t} \in \feas(\phi_i)\}$ where $X'\subseteq X_i$.

Let $1 \leq k \leq \ell$ be integers.
The following definition is equivalent\footnote{The two requirements
in~\cite{Barto16:jloc} are: for every $X\subseteq V$ with $|X|\leq\ell$ we have
$X\subseteq X_i$ for some $1\leq i\leq q$; and for every set $X\subseteq V$ with
$|X|\leq k$ and every $1\leq i,j\leq q$ with $X\subseteq X_i$ and $X\subseteq
X_j$ we have $\pi_{X}(\phi_i)=\pi_{X}(\phi_j)$.} to the definition of
$(k,\ell)$-minimality~\cite{Bulatov06:ja} for CSP instances
given in~\cite{Barto16:jloc}.

\begin{definition}
Let $J$ be an instance of the CSP with $\phi_J(x_1,\ldots,x_n)=\sum_{i=1}^q
\phi_i(\tup{x}_i)$, $X_i \subseteq V = \{x_1,\ldots,x_n\}$ and $\phi_i\colon
D^{\ar(\phi_i)}\to\{0,\infty\}$. Then $J$ is said to be \emph{$(k,\ell)$-minimal} if:
\begin{itemize}
\item
For every $X \subseteq V$, $\left|X\right| \leq \ell$, there exists $1 \leq i \leq q$ such that $X = X_i$.
\item
For every $i, j \in \left[q\right]$ such that $\left| X_j \right| \leq k$ and $X_j \subseteq X_i$,
$\phi_j = \pi_{X_j}(\phi_i)$.
\end{itemize}
\end{definition}

There is a straightforward polynomial-time algorithm for finding an equivalent $(k,\ell)$-minimal instance~\cite{Barto16:jloc}.
This leads to the notion of \emph{relational width}:

\begin{definition}
A constraint language $\Delta$ has relational width $(k,\ell)$ if, for every instance
$J$ of $\CSP(\Delta)$, an equivalent $(k,\ell)$-minimal instance is non-empty if,
and only if,
$J$ has a solution.
\end{definition}

An $m$-ary idempotent operation $f \colon D^m\to D$ is called a 
\emph{weak near-unanimity} (WNU) operation if, for all $x,y\in D$,
\begin{equation}\label{pseudownu}
f(y,x,x,\ldots,x)=f(x,y,x,x,\ldots,x)=f(x,x,\ldots,x,y)\,.
\end{equation}

\begin{definition}\label{def:bwc} 
We say that a clone of operations satisfies the \emph{bounded width condition (BWC)}
if it contains a (not necessarily idempotent) $m$-ary operation satisfying the identities (\ref{pseudownu}), for every $m\geq 3$.
\end{definition}

The following result is known as the ``bounded width theorem'' as it
characterises constraint languages of bounded relational width, that is,
constraint languages that are tractable via the $(k,\ell)$-minimality algorithm
for some $k\leq\ell$. 

\begin{theorem}[\hspace*{-0.3em}\cite{Barto14:jacm,Bulatov09:width,LaroseZadori07:au}]\label{thm:boundedrelwidth}
Let $\Delta$ be a constraint language of finite size containing all constant unary relations.
Then, $\Delta$ has bounded relational width if, and only if, $\pol(\Delta)$ satisfies the BWC.
\end{theorem}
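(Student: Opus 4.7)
The plan is to prove the two directions separately, both leveraging the fact that $\mathcal{C}_D \subseteq \Delta$ forces every $f \in \pol(\Delta)$ to be idempotent (since $f$ must preserve each constant relation $\{(a)\}$). Consequently, the (a priori non-idempotent) WNUs demanded by BWC are automatically idempotent here, which aligns the hypothesis with the idempotent version of the bounded width theorem.

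For the necessity direction, I argue contrapositively. Suppose $\pol(\Delta)$ fails BWC, so that for some $m \geq 3$ there is no $m$-ary WNU in $\pol(\Delta)$. By the Larose--Z\'adori theorem~\cite{LaroseZadori07:au} (or equivalently the classification in~\cite{Maroti08:weakly}), the variety generated by the idempotent algebra $(D,\pol(\Delta))$ must admit a quotient of a subalgebra that is affine over a non-trivial Abelian group $\mathbb{Z}_p$. Via the standard pp-interpretation/clone-homomorphism correspondence this implies that $\Delta$ pp-interprets a crisp language $\Delta'$ whose $\CSP$ encodes systems of linear equations over $\mathbb{Z}_p$. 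A classical construction (e.g., Tseitin contradictions on expanders, or direct algebraic encoding) produces, for any $(k,\ell)$, an instance of $\CSP(\Delta')$ whose $(k,\ell)$-minimal closure is non-empty but which has no solution. Because pp-interpretations preserve bounded relational width (up to a blow-up of $k,\ell$ by a constant depending only on the arities and $|D|$), this contradicts the bounded relational width of $\Delta$.

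For the sufficiency direction, the plan is to show that $(k,\ell)$-minimality with $k,\ell$ depending only on $|D|$ decides $\CSP(\Delta)$. The standard strategy, following Barto--Kozik~\cite{Barto14:jacm} and building on~\cite{Bulatov09:width}, is: run the $(k,\ell)$-minimality algorithm; if the resulting instance is empty, report ``no solution''; otherwise construct a solution by iterated domain reduction. At each step one uses the idempotent WNU polymorphisms furnished by BWC to locate, at some variable $v$, a non-empty proper \emph{absorbing subuniverse} of its current domain (or, failing that, a \emph{strong subuniverse} arising from a Mal'tsev-like center in a relevant subalgebra). Replacing the domain of $v$ by this subuniverse preserves both non-emptiness of the $(k,\ell)$-minimal closure and BWC of the induced polymorphism clone on the restricted instance. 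Iterating until every variable has a singleton domain produces a satisfying assignment.

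The main obstacle is the sufficiency direction: the absorption theorem and the subsequent Prague-instance machinery driving the Barto--Kozik analysis require substantial algebraic development and are far from routine. In a self-contained exposition one would have to build absorption, strong subuniverses, and the associated propagation arguments from scratch; in the present plan I would invoke the cited results as black boxes after verifying the reduction to the idempotent setting provided by $\mathcal{C}_D \subseteq \Delta$.
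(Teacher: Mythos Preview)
The paper does not give a proof of this theorem at all: it is stated with attribution to \cite{Barto14:jacm,Bulatov09:width,LaroseZadori07:au} and used as a black box. So there is no ``paper's own proof'' to compare against; the paper's treatment is simply to cite the result.

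Your sketch is a reasonable high-level summary of how those cited papers establish the result, and you correctly identify the key reduction (idempotency from $\mathcal{C}_D \subseteq \Delta$), the necessity mechanism (failure of BWC yields, via \cite{LaroseZadori07:au,Maroti08:weakly} and tame congruence theory, a pp-interpretation of linear equations over a nontrivial Abelian group, which defeats any fixed $(k,\ell)$-minimality), and the sufficiency mechanism (the Barto--Kozik absorption machinery). Your closing paragraph is honest: the sufficiency direction is genuinely deep, and a self-contained proof is a substantial paper in its own right. One small imprecision: when BWC fails you may land in a type~$\mathbf{1}$ situation rather than type~$\mathbf{2}$, in which case the interpreted structure need not be affine over $\mathbb{Z}_p$; but either type still yields unbounded width, so the conclusion stands. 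In short, both you and the paper ultimately defer to the same cited references, which is the appropriate thing to do here.
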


Moreover, a collapse of relational width is known.

\begin{theorem}[\hspace*{-0.3em}\cite{Barto16:jloc,Bulatov09:width}]\label{thm:libor23}
Let $\Delta$ be a constraint language of finite size containing all constant unary relations.
If $\Delta$ has bounded relational width, then it has relational width $(2,3)$.
\end{theorem}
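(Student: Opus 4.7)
The plan is to prove the theorem by induction on the total domain weight $\sum_{x\in V}\left|D_x\right|$, where $D_x=\pi_{\{x\}}(\phi_i)$ for any constraint $\phi_i$ with $x\in X_i$ (this is well-defined because $(2,3)$-minimality equalises all unary projections of a variable). Specifically, I would show that every non-empty $(2,3)$-minimal instance $J$ of $\CSP(\Delta)$ has a satisfying assignment, where ``non-empty'' means $\feas(\phi_i)\neq\emptyset$ for every constraint. The base case $\left|D_x\right|=1$ for all $x$ is immediate: the forced assignment satisfies every constraint thanks to $(2,3)$-minimality. By Theorem~\ref{thm:boundedrelwidth}, the hypothesis that $\Delta$ has bounded relational width gives $\pol(\Delta)$ the BWC, which is the algebraic fuel for the inductive step.

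The inductive step rests on the \emph{absorption dichotomy} of Barto--Kozik: any finite algebra whose term operations contain WNUs of every arity $m\geq 3$ is either one-element, or admits a proper \emph{absorbing} subuniverse $B\subsetneq A$ (there is $f\in\pol$ with $f(B,\ldots,B,A,B,\ldots,B)\subseteq B$ in every position), or a proper \emph{centre}. Apply this to some $D_x$ with $\left|D_x\right|>1$; replace $D_x$ by the proper subuniverse $B$ by adding the unary constraint ``$x\in B$'', and re-run the polynomial-time $(2,3)$-minimality closure. The crucial \emph{propagation lemma} is that the resulting instance is still non-empty: every constraint $\phi_i$ still has a feasible tuple with $x$-coordinate in $B$. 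If this holds, induction applies and yields a solution, which is automatically a solution of the original $J$. If no $D_x$ admits such a proper absorbing/centre subuniverse, then $J$ is ``linked'' (a Prague instance in the sense of Barto--Kozik), and one assembles a global solution by repeatedly applying WNU polymorphisms from $\pol(\Delta)$ to compatible partial solutions along the constraint graph, propagating consistency pairwise.

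The main obstacle is the propagation lemma in the reduction step: showing that restricting $D_x$ to an absorbing subuniverse $B$ and then closing under $(2,3)$-minimality does not empty any constraint. The proof needs to invoke the absorbing polymorphism $f$ to ``fold'' any witness tuple whose $x$-coordinate lies outside $B$ into a tuple with $x$-coordinate in $B$, while simultaneously keeping the projections onto all other pairs and triples compatible with the rest of $J$. This is exactly where one genuinely needs $(2,3)$-minimality: absorption is a local operation on tuples, but compatibility must be maintained across triples of variables, and three is enough because WNU identities involve comparisons at a single ``perturbed'' coordinate against a baseline, which can be encoded using pairwise and triple-wise projections. The centre case is handled analogously with a slightly different local-to-global argument, and the linked base case, although conceptually the endgame, also demands care: one must iterate the WNU averaging procedure along a suitably chosen spanning structure of $V$ and verify that the accumulated errors cancel, which is the subtlest combinatorial part of the Barto--Bulatov argument.
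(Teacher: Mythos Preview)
The paper does not prove Theorem~\ref{thm:libor23}; it is quoted from~\cite{Barto16:jloc,Bulatov09:width} and used as a black box in the proof of Theorem~\ref{thm:mainicalp}. There is therefore no ``paper's own proof'' to compare your proposal against.

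That said, your sketch is in the spirit of the cited references, particularly Barto's absorption-based argument in~\cite{Barto16:jloc} and the earlier Barto--Kozik work underlying~\cite{Barto14:jacm}. A few cautions: you conflate two distinct toolkits, namely Barto--Kozik absorption theory and Bulatov's ``centre'' machinery from his coloured-graph approach; the actual collapse proof in~\cite{Barto16:jloc} stays within the absorption framework and does not invoke centres. Your ``absorption dichotomy'' is also stated too loosely: the relevant structural fact is not that every non-trivial SD$(\wedge)$ algebra has a proper absorbing subuniverse or a centre, but rather a more delicate statement about absorbing subuniverses inside subdirect products and the behaviour of \emph{Prague strategies}. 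Finally, the ``propagation lemma'' you flag as the main obstacle is indeed the heart of the matter, but the way it is handled in~\cite{Barto16:jloc} is not by a single restriction-and-reclose step; one works with a carefully maintained system of pattern instances and shows that a $(2,3)$-minimal instance is already a Prague instance, from which solvability follows. Your outline would need substantial tightening to become a proof, but the overall direction---induction via absorption, with the linked case resolved by a Prague-style argument---is correct for the cited literature.
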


\begin{remark}
\label{remark:BWC}
We remark that most of the papers cited above use a different bounded width
condition, namely that of having WNU operations of all but
finitely many arities~\cite[Theorem~1.2]{Maroti08:weakly}.
By~\cite[Theorem~1.6~(4)]{Kozik15:au}, this is equivalent to
Definition~\ref{def:bwc}. 
Also note that our definition of the BWC does not require idempotency of the
operations. The reason is that we prove our main result, Theorem~\ref{thm:main}
below, without the requirement
of including the constant unary relations, which is often assumed in the
algebraic papers on the CSP.
\end{remark}

%
%
\section{The Power of Sherali-Adams Relaxations}
\label{sec:sapower}

In this section, we state our main result on the power of 
the Sherali-Adams linear programming relaxation~\cite{Sherali1990} to VCSPs.
We also give a number of applications of this result.
The Sherali-Adams linear programming relaxation is defined in Section~\ref{width}
and the characterisation of its power is stated in Section~\ref{sec:main}.
In Section~\ref{sec:algs}, we give a number of algorithmic consequences of our result
and Section~\ref{sec:compl} show how it can be used to derive complete complexity
classifications for large families of valued constraint languages.
In Section~\ref{sec:blp}, we compare our result to the characterisation of
valued relational width 1 which we obtained in previous work.
Finally, in Section~\ref{sec:meta}, we address the problem of finding an actual solution
and of determining whether or not
a valued constraint language has bounded valued relational width.

\subsection{Valued Relational Width}
\label{width}

Let $I$ be an instance of the VCSP with $\phi_\inst(x_1,\dots,x_n) = \sum_{i=1}^q
\phi_i(\tup{x}_i)$, $X_i \subseteq V = \{x_1, \dots, x_n\}$ and $\phi_i \colon
D^{\ar(\phi_i)} \to \qq$. A \emph{null constraint} is a constraint that has a
weighted relation identical to $0$. Ensure that for every non-empty $X
\subseteq V$ with $|X| \leq \ell$ there is some constraint $\phi_i(\tup{x}_i)$
with $X_i=X$, possibly by adding null constraints.

The Sherali-Adams relaxation with
parameters $(k,\ell)$, henceforth called the SA$(k,\ell)$-relaxation of $I$, is given by
the following linear program.
The variables are $\lambda_{i}(\sigma)$ for every $i \in \left[q\right]$ and assignment $\sigma \colon X_i \to D$.
We slightly abuse notation by writing $\sigma \in \feas(\phi_i)$ for $\sigma \colon X_i \to D$ such that $\sigma(\tup{x}_i) \in \feas(\phi_i)$.
\begin{align}
\min \sum_{i = 1}^q \sum_{\sigma \in \feas(\phi_i)} & \lambda_{i}(\sigma) \phi_i(\sigma(\tup{x}_i)) \nonumber\\
\lambda_{j}(\tau) &= \sum_{\substack{\sigma \colon X_i \to D\\ \Crestrict{\sigma}{X_j} = \tau}} \lambda_i(\sigma) &
   \forall i, j \in \left[q\right] : X_j \subseteq X_i, \left|X_j\right| \leq k, \tau \colon X_j \to D \label{sa:marginal} \\
\sum_{\sigma \colon X_i \to D} \lambda_{i}(\sigma) &= 1 &
   \forall i\in\left[q\right] \label{sa:sum1} \\
\lambda_{i}(\sigma) &= 0 &
   \forall i \in \left[q\right], \sigma \colon X_i \to D, \sigma(\tup{x}_i) \not\in \feas(\phi_i) \label{sa:infeas} \\
\lambda_{i}(\sigma) &\geq 0 &
   \forall i \in \left[q\right], \sigma \colon X_i \to D \label{sa:nonnegative}
\end{align}

The relaxation SA$(k,k)$ is often referred to as ``$k$ rounds of Sherali-Adams''.

We write $\lpval{I}{\lambda}{k}{\ell}$ for the value of the LP-solution $\lambda$ to the
SA$(k,\ell)$-relaxation of $I$, and $\lpopt{I}{k}{\ell}$ for its optimal value.

\begin{definition}
We say that a valued constraint language $\Gamma$ has \emph{valued relational width
$(k,\ell)$} if, for every instance $I$ of $\VCSP(\Gamma)$, 
$\vcspopt(\inst)=\lpopt{I}{k}{\ell}$ (i.e.\ the optimum of $I$ coincides with the optimum of the SA$(k,\ell)$-relaxation of
$ I$).
\end{definition}

When $\Gamma$ has valued relational width $(k,k)$ we also say that $\Gamma$ has valued relational width $k$.
When $\Gamma$ has valued relational width $k$ for some fixed $k \geq 1$, then we say that
$\Gamma$ has \emph{bounded valued relational width}.

We say that an instance $I$ of $\VCSP(\Gamma)$ is a \emph{gap instance for} SA$(k,\ell)$,
if its SA$(k,\ell)$ optimum is strictly smaller than its VCSP optimum.
Then, $\Gamma$ having bounded valued relational width is equivalent to saying that
there is some constant level of the Sherali-Adams hierarchy for which there are no gap instances
in $\VCSP(\Gamma)$.

\begin{definition}\label{def:reducesto}
Let $\Gamma$ and $\Delta$ be two valued constraint languages.
We write $\Delta \reducesto \Gamma$ if there is a polynomial-time reduction
from $\VCSP(\Delta)$ to $\VCSP(\Gamma)$ that preserves bounded valued relational
width.
\end{definition}

By Definition~\ref{def:reducesto}, $\reducesto$ reductions compose. 
Let $\Delta\reducesto\Gamma$. By Definition~\ref{def:reducesto}, (i) if $\Gamma$ has
bounded valued relational width then so does $\Delta$; (ii) if $\Delta$ does not
have bounded valued relational width then neither does $\Gamma$.

\subsection{A Characterisation of Bounded Valued Relational Width}
\label{sec:main}

The following characterisation of bounded valued relational width is our main result.
It precisely captures the power of Sherali-Adams relaxations for exact optimisation of VCSPs.

\begin{theorem}[Main]\label{thm:main}
Let $\Gamma$ be a valued constraint language of finite size. The following are equivalent:
\begin{enumerate}[(i)]
\item $\Gamma$ has bounded valued relational width. \label{cnd:bound}
\item $\Gamma$ has valued relational width $(2,3)$. \label{cnd:23}
\item $\supp(\Gamma)$ satisfies the BWC.  \label{cnd:BWC}
\end{enumerate}
\end{theorem}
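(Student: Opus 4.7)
The plan is to close the cycle of implications (ii) $\Rightarrow$ (i) $\Rightarrow$ (iii) $\Rightarrow$ (ii). The first implication is immediate from Definition of bounded valued relational width, since $(2,3)$ is a constant level. The remaining two implications rest on substantially different techniques and draw on different parts of the framework developed in the preliminaries.

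For (iii) $\Rightarrow$ (ii), I would begin with an arbitrary instance $I$ of $\VCSP(\Gamma)$ and, using Lemma~\ref{lem:core}, reduce to the case where $\Gamma$ is a core. After enforcing $(2,3)$-minimality at the LP level, I would fix an optimal solution $\lambda^*$ to the SA$(2,3)$-relaxation. The strategy is to extract from $I$ a crisp CSP instance $J$ whose satisfying assignments correspond to integral VCSP-optimal assignments of $I$: each cost function $\phi_i$ is replaced by its optimality relation $\opt(\phi_i)$, possibly further restricted to the support of $\lambda^*$ projected onto the scope. By Lemma~\ref{lem:killing}, every $f \in \supp(\Gamma)$ lies in $\pol(\opt(\phi_I))$, hence $\pol(J) \supseteq \supp(\Gamma)$ and therefore satisfies the BWC by hypothesis. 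Theorems~\ref{thm:boundedrelwidth} and~\ref{thm:libor23} then guarantee that $J$ has relational width $(2,3)$, so a non-empty $(2,3)$-minimal reduct of $J$ yields an integral solution. Using $\lambda^*$ to certify that the $(2,3)$-minimal reduct is non-empty, and arguing via LP duality that the integral solution thereby obtained achieves cost equal to $\lpopt{I}{2}{3}$, completes the direction.

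For (i) $\Rightarrow$ (iii), I would prove the contrapositive. If $\supp(\Gamma)$ fails the BWC then, by the algebraic characterisation of bounded width for crisp CSPs and the Barto--Kozik theorem, some crisp language invariant under $\supp(\Gamma)$ pp-interprets linear equations over a non-trivial finite Abelian group. I would lift this pp-interpretation to a valued reduction $\Delta_{\mathrm{EQ}} \reducesto \Gamma$ preserving bounded valued relational width, using Lemma~\ref{lem:killing} to turn polymorphism-level simulations into cost-function-level gadgets. Independently, I would construct explicit gap instances for SA$(k,\ell)$ applied to $\Delta_{\mathrm{EQ}}$: for a random system of $\Theta(n)$ equations over $\mathbb{Z}_p$ on $n$ variables, one assembles locally consistent pseudo-distributions on $k$-variable windows (exploiting that small subsystems of a random linear system are almost surely consistent and that marginals of uniform solutions are pairwise independent), certifying LP value $0$, while an expansion/counting argument lower-bounds the true optimum by $\Omega(n)$. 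Combined with the reduction, this contradicts bounded valued relational width for $\Gamma$.

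The hard part will be the two technical pillars of (i) $\Rightarrow$ (iii). First, the gap construction must go through for every constant level $k$ and $\ell$, which requires that the locally consistent distributions extend coherently across all $k$-scopes of an expander-like equation system; this is where the $\Theta(\sqrt{n})$-level bound mentioned in the introduction enters and where the proof must be self-contained. Second, unlike ordinary polynomial-time reductions, $\reducesto$ must preserve the \emph{level} of the Sherali-Adams hierarchy, so the classical pp-interpretation machinery from the algebraic theory of the CSP needs to be refined — gadgets must have bounded arity and must not blow up the LP scope structure — and its interaction with the SA$(k,\ell)$ marginal conditions verified carefully. Direction (iii) $\Rightarrow$ (ii), although also nontrivial, follows a more template-like route once $\opt$ has been substituted for $\phi$ and the bounded width theorem is invoked.
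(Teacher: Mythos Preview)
Your overall architecture matches the paper's, and the (i)$\Rightarrow$(iii) direction is broadly on target: the paper likewise passes to a crisp language via Lemma~\ref{lem:killing}, uses a pp-interpretation of equations over an Abelian group, and then exhibits gap instances (the paper uses an explicit torus construction rather than random systems, but your route is viable). Your second ``technical pillar'' --- that pp-interpretations and the other standard reductions must be shown to preserve the SA level --- is precisely the content the paper develops in Section~\ref{sec:reductions}.

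The (iii)$\Rightarrow$(ii) direction, however, contains a genuine gap in the construction of the crisp instance $J$. Replacing $\phi_i$ by $\opt(\phi_i)$ does not work: an optimal assignment to $I$ need not minimise each constraint individually, so the system of relations $\{\opt(\phi_i)\}$ may be jointly unsatisfiable even when $I$ has finite optimum, and $\lambda^*$ certainly does not certify non-emptiness, since $\supp(\lambda^*_i)\subseteq\feas(\phi_i)$ but not $\subseteq\opt(\phi_i)$ in general. Your ``further restriction'' $\opt(\phi_i)\cap\supp(\lambda^*_i)$ only makes this worse. What the paper actually uses is $\phi'_i=\supp(\lambda^*_i)$ itself, and the key missing ingredient is a separate lemma (Lemma~\ref{lem:fullsupport}) showing that one may choose the optimal LP solution $\lambda^*$ so that each $\supp(\lambda^*_i)$ is closed under every operation in $\supp(\Gamma)$; this is proved by pushing $\lambda^*$ through a fractional polymorphism and averaging, then iterating. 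Lemma~\ref{lem:killing} does not do this job: it controls $\pol(\opt(\phi_I))$ for whole instances, not invariance of the LP support. With $\phi'_i=\supp(\lambda^*_i)$ in hand, the marginal equations make $J$ $(2,3)$-minimal, the BWC plus Theorems~\ref{thm:boundedrelwidth} and~\ref{thm:libor23} produce a satisfying $\alpha$ with $\alpha_i\in\supp(\lambda^*_i)$, and then complementary slackness closes the gap as you indicate.

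A secondary omission in both directions: Theorems~\ref{thm:boundedrelwidth} and~\ref{thm:libor23} require the constant relations, so passing to a core is not enough; you must also add $\mathcal{C}_{D'}$ and check that the BWC survives (Lemma~\ref{lem:corewnus}) and that the reduction is a $\reducesto$-reduction (Lemma~\ref{lem:coreconstants}).
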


The proof of Theorem~\ref{thm:main} is based on the following two theorems
which show that the BWC is a sufficient and necessary condition, respectively, 
for a constraint language to have bounded valued relational width.

\begin{theorem}\label{thm:mainicalp}
Let $\Gamma$ be a valued constraint language of finite size containing all constant unary relations.
If $\supp(\Gamma)$ satisfies the BWC, then $\Gamma$ has valued relational width $(2,3)$.
\end{theorem}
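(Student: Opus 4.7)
The plan is to round any optimal fractional solution $\lambda$ of the SA$(2,3)$-relaxation of an instance $I\in\VCSP(\Gamma)$ to an integral assignment of equal value. The rounding extracts a crisp $(2,3)$-minimal CSP instance from the support of $\lambda$, verifies that its constraint relations are invariant under $\supp(\Gamma)$, and then appeals to the classical bounded width theorem together with the collapse of relational width (Theorems~\ref{thm:boundedrelwidth} and~\ref{thm:libor23}) to produce an integer solution. Value equality between this solution and $\lpopt{I}{2}{3}$ is then recovered via LP duality.

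Concretely, for each constraint index $i$ set $R_i:=\{\sigma(\tup{x}_i)\mid \lambda_i(\sigma)>0\}$ and let $J$ be the CSP instance on the same variables as $I$ whose constraints are the $R_i(\tup{x}_i)$. The marginal equalities~(\ref{sa:marginal}) of SA$(2,3)$ translate directly into $\pi_{X_j}(R_i)=R_j$ whenever $X_j\subseteq X_i$ and $|X_j|\leq 2$, so $J$ is $(2,3)$-minimal by construction. The central step is to prove that $\supp(\Gamma)\subseteq \pol(R_i)$ for every $i$. Here I would exploit LP duality: a vertex optimum of the dual yields rational potentials $y_{j,\tau}$ on the $\leq 2$-element scopes such that, by complementary slackness, $R_i$ consists exactly of the minimisers over $\feas(\phi_i)$ of the reduced cost $\tup{t}\mapsto \phi_i(\tup{t})-\sum_{j:\,X_j\subseteq X_i,\,|X_j|\leq 2} y_j(\tup{t}|_{X_j})$. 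After clearing denominators, this reduced cost can be realised as the objective function of an auxiliary instance in $\VCSP(\Gamma)$ assembled from integer multiplicities of $\phi_i$ and constant-unary constraints from $\mathcal{C}_D\subseteq\Gamma$, whose optimality relation is precisely $R_i$. Lemma~\ref{lem:killing} then delivers the required invariance $\supp(\Gamma)\subseteq\pol(R_i)$.

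With this invariance in hand, the constraint language $\{R_i\}\cup\mathcal{C}_D$ is preserved by the idempotent BWC-clone $\supp(\Gamma)$, so by Theorem~\ref{thm:boundedrelwidth} it has bounded relational width, and by Theorem~\ref{thm:libor23} even width $(2,3)$. Since $J$ is $(2,3)$-minimal and all $R_i$ are non-empty, $J$ has an integer solution $\sigma^\star$. To conclude that $\sum_i\phi_i(\sigma^\star(\tup{x}_i))=\lpopt{I}{2}{3}$, observe that the reduced cost is constant, equal to some $c_i$, on each $R_i$; evaluating at $\sigma^\star$ and summing over $i$, the potentials $y_{j,\tau}$ cancel scope-by-scope against the consistent assignments induced by $\sigma^\star$ on the $\leq 2$-element scopes, and the remaining $\sum_i c_i$ matches the dual objective at $y$ exactly. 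This sandwiches $\vcspval(I,\sigma^\star)$ between $\vcspopt(I)$ and $\lpopt{I}{2}{3}$, forcing them all to coincide and establishing valued relational width $(2,3)$.

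The main obstacle is the invariance step above: bridging from the LP-dual description of $R_i$, which a priori involves arbitrary rational Lagrange shifts on sub-scopes, to a genuine $\VCSP(\Gamma)$-instance whose $\opt$ relation is exactly $R_i$, so that Lemma~\ref{lem:killing} becomes available. The hypothesis $\mathcal{C}_D\subseteq\Gamma$ is crucial, since it lets us pin individual variables to fixed labels and thereby encode subscope-potentials combinatorially; the delicate part is to simultaneously handle all marginal constraints while ensuring that the assembled instance's objective, after scaling, realises the reduced cost of $\phi_i$ to within an additive constant.
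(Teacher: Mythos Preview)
Your overall architecture matches the paper's proof: extract a crisp CSP $J$ from the support of an optimal LP solution, observe from the marginal equalities~(\ref{sa:marginal}) that $J$ is $(2,3)$-minimal, establish that each $R_i$ is invariant under $\supp(\Gamma)$, invoke Theorems~\ref{thm:boundedrelwidth} and~\ref{thm:libor23} to obtain a satisfying assignment, and then recover value-equality by summing tight dual rows so that the $y$-terms telescope. The last step is exactly what the paper does.

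The genuine gap is in your ``central step''. First, complementary slackness gives only the inclusion $R_i\subseteq\{\text{minimisers of the reduced cost}\}$; the reverse inclusion requires a strictly complementary pair, which you have not arranged. Second, and more fundamentally, even granting equality you cannot realise the reduced cost $\tup{t}\mapsto\phi_i(\tup{t})+\sum_j y_{j,\cdot,i}(\tup{t}|_{X_j})$ as the objective of an instance of $\VCSP(\Gamma)$ from ``integer multiplicities of $\phi_i$ and constant-unary constraints''. The dual potentials $y_{j,\cdot,i}$ are arbitrary rational functions on the $\leq 2$-element sub-scopes, whereas the relations in $\mathcal{C}_D$ are $\{0,\infty\}$-valued and merely pin variables; they contribute nothing to a finite objective and cannot encode such shifts. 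Without this encoding Lemma~\ref{lem:killing} is unavailable, so the invariance claim is unsupported. This is not a ``delicate'' detail but a structural obstruction.

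The paper sidesteps the issue with a direct primal argument (Lemma~\ref{lem:fullsupport}): for any $m$-ary $\omega\in\fpol(\Gamma)$ and any feasible $\lambda$, the pushforward $\lambda^\omega_i(\sigma)=\Pr_{f\sim\omega,\;\sigma_1,\dots,\sigma_m\sim\lambda_i}[f\circ(\sigma_1,\dots,\sigma_m)=\sigma]$ is again feasible and no worse in value; averaging $\lambda$ with $\lambda^\omega$ strictly enlarges the support whenever some $\supp(\lambda_i)$ is not already closed under $\supp(\omega)$. Iterating yields an optimal $\lambda^*$ with every $\supp(\lambda^*_i)$ closed under $\supp(\Gamma)$, and \emph{that} particular choice of $\lambda^*$ is what makes the rest of your outline go through verbatim.
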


\begin{theorem}\label{thm:mainicalp2}
Let $\Gamma$ be a valued constraint language of finite size containing all constant unary relations.
If $\Gamma$ has bounded valued relational width, then
$\supp(\Gamma)$ satisfies the BWC.
\end{theorem}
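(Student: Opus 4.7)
The plan is to prove the contrapositive: assuming $\supp(\Gamma)$ violates the BWC, I produce, for every level $k$, a gap instance for SA$(k,k)$ inside $\VCSP(\Gamma)$. Following the roadmap in the introduction, the argument factors through two ingredients: (a) a hardness result stating that systems of linear equations over any non-trivial finite Abelian group admit Sherali-Adams gap instances at every constant level, and (b) a structural step showing that any $\Gamma$ violating condition (iii) can simulate such linear equations in the strong sense of $\reducesto$. Since $\reducesto$ composes and preserves bounded valued relational width by Definition~\ref{def:reducesto}, combining (a) with (b) immediately yields that $\Gamma$ does not have bounded valued relational width.

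For the structural step (b), I would first use Lemma~\ref{lem:killing} to translate the failure of BWC into a statement about a crisp language. Since $\Gamma$ contains all constant unary relations, its polymorphisms are idempotent, so $\pol(\Gamma) = \pol(\{\feas(\phi) : \phi \in \Gamma\})$; together with Lemma~\ref{lem:killing} this gives
\[
\supp(\Gamma) = \pol\bigl(\{\feas(\phi) : \phi \in \Gamma\} \cup \{\opt(\phi_I) : I \in \VCSP(\Gamma)\}\bigr).
\]
Call this crisp language $\Delta$. By hypothesis $\pol(\Delta)$ does not satisfy the BWC, so by the CSP bounded width theorem (Theorem~\ref{thm:boundedrelwidth}) and its underlying algebraic content from~\cite{Barto14:jacm}, $\Delta$ pp-interprets, via a gadget, the language consisting of the single ternary relation $\{x + y + z = 0\}$ in some non-trivial finite Abelian group $(A,+)$.

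The crux of the proof, and where I expect the main obstacle, is upgrading this crisp pp-interpretation into a $\reducesto$-reduction from $\VCSP(\{x+y+z=0\})$ into $\VCSP(\Gamma)$. A pp-interpretation yields only a polynomial-time reduction; here the reduction must also carry an SA$(k,k)$-feasible solution of the target instance back to an SA$(k',k')$-feasible solution of the source for some $k'$ depending only on $k$ and the gadget. The idea is to realise each variable $x$ of the linear-equations instance by a tuple of $\Gamma$-variables whose optimal assignments in the resulting $\VCSP(\Gamma)$ subinstance encode the value of $x$, using precisely the optimality relations that were adjoined when forming $\Delta$. One then has to verify that local Sherali-Adams marginals on the $\Gamma$-side project down to locally consistent marginals on the $\Delta$-side, so that a target gap instance pulls back to a source gap instance. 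This compatibility is promised in Section~\ref{sec:reductions} and is the technical heart of the theorem.

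With step (b) in place, the proof concludes by invoking the construction from Section~\ref{sec:gap}: for every non-trivial finite Abelian group $(A,+)$ and every level $k$, there exist instances of $\VCSP(\{x+y+z=0\})$ whose SA$(k,k)$-optimum is strictly below the true VCSP-optimum. Transporting such a gap instance through the $\reducesto$-reduction from (b) produces a gap instance inside $\VCSP(\Gamma)$ at the required level, contradicting bounded valued relational width and completing the contrapositive.
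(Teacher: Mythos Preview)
Your outline is the same as the paper's: prove the contrapositive by (1) passing to a crisp language $\Delta$ with $\Delta \reducesto \Gamma$ and $\pol(\Delta)$ still failing the BWC, (2) interpreting linear equations in $\Delta$, (3) noting that interpretation preserves $\reducesto$ (Theorem~\ref{thm:reductions}(\ref{red:interpret})), and (4) invoking the Sherali-Adams gap for linear equations (Theorem~\ref{thm:eg3notbw}). You also correctly flag step (3) as the technical heart, handled in Section~\ref{sec:reductions}.

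Two points need repair. First, your $\Delta$ is infinite, since you adjoin $\opt(\phi_I)$ for \emph{all} instances $I$; Theorem~\ref{thm:clarity} applies only to finite languages. The paper fixes this in Lemmas~\ref{lem:killingf} and~\ref{lem:crispwnus}: since BWC fails at some specific arity $m$, only the finitely many $m$-ary WNUs need to be excluded, and each is killed by a single instance from Lemma~\ref{lem:killing}. Second, and more seriously, your final paragraph asserts gap instances for $\VCSP(\{x+y+z=0\})$, but that language has no gap instances whatsoever: the all-zero assignment satisfies every constraint, so both the VCSP optimum and every SA optimum equal $0$. You must interpret a language rich enough to produce \emph{unsatisfiable} systems. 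The paper interprets the language $E_{\mathcal{G},3}$ containing all relations $x_1+\dots+x_m=a$ for $m\leq 3$ and $a\in G$ (Theorem~\ref{thm:clarity}); the gap construction in Section~\ref{sec:gap} then uses two ternary relations $\{x=y+z\}$ and $\{x=y+z+g\}$ with $g\neq 0$. With these two corrections your sketch matches the paper's proof.
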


We prove Theorems~\ref{thm:mainicalp} and~\ref{thm:mainicalp2} in Section~\ref{sec:suff}
and~\ref{sec:nec}, respectively.
In order to finish the proof of Theorem~\ref{thm:main}, we must reduce to the case when
the language $\Gamma$ is assumed to contain all constants.
This is done by taking a core $\Gamma'$ of $\Gamma$ on a domain $D' \subseteq D$
and adding $\mathcal{C}_{D'}$ to $\Gamma$'.
We need the following two lemmas to ensure that this can be carried out.
Lemma~\ref{lem:coreconstants} is proved in Section~\ref{sec:reductions}
(as Lemma~\ref{lem:core-preserve}).
Lemma~\ref{lem:corewnus} is proved in Section~\ref{sec:corewnus}.

\begin{lemma}\label{lem:coreconstants}
Let $\Gamma$ be a valued constraint language of finite size on domain $D$. If
$\Gamma'$ is a core of $\Gamma$ on domain $D' \subseteq D$, then $\Gamma' \cup
\mathcal{C}_{D'} \reducesto \Gamma$.
\end{lemma}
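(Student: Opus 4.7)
The plan is to define a reduction $I' \mapsto I$ from $\VCSP(\Gamma' \cup \mathcal{C}_{D'})$ to $\VCSP(\Gamma)$ that shifts both the VCSP-optimum and any SA$(k,\ell)$-value by the same additive constant $c$. Combined with the bounded-width hypothesis $\lpopt{I}{k}{\ell} = \vcspopt(I)$, this will give $\lpopt{I'}{k}{\ell} + c \geq \lpopt{I}{k}{\ell} = \vcspopt(I') + c$, so $\lpopt{I'}{k}{\ell} \geq \vcspopt(I')$; since SA is a relaxation, equality follows, establishing that $\Gamma' \cup \mathcal{C}_{D'}$ inherits valued relational width $(k,\ell)$ from $\Gamma$, as Definition~\ref{def:reducesto} requires.

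I would first choose a unary $f \in \supp(\Gamma)$ with $f(D) = D'$, which exists because $\Gamma'$ is a core of $\Gamma$. Iterating $f$ inside the clone $\supp(\Gamma)$ (closed by Lemma~\ref{lem:suppclone}), I may assume $f$ is idempotent on $D$ and fixes $D'$ pointwise. Lemma~\ref{lem:killing} then gives $f \in \pol(\opt(\phi_J))$ for every instance $J$ of $\VCSP(\Gamma)$, so every such $J$ admits an optimal solution with image in $D'$.

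Given $I'$, the instance $I$ is constructed as follows. Each non-constant constraint $\phi'(\tup{x})$ in $I'$, where $\phi' = \phi|_{D'}$ for some $\phi \in \Gamma$, is replaced by $\phi(\tup{x})$. Each constant constraint $\{(a)\}(v)$ is removed, a fresh variable $y_a$ is introduced, and every other occurrence of $v$ is renamed to $y_a$. To control the $y_a$, I would adjoin a \emph{pinning subinstance} $J$ of $\VCSP(\Gamma)$ containing the $y_a$ among its variables, whose integer optima include the assignment $y_a = a$ for all $a \in D'$. Such a $J$ can be obtained by enumerating the feasible structure of $\Gamma$ on the variables $\{y_a\}_{a \in D'}$; since $\Gamma' \cup \mathcal{C}_{D'}$ is itself a core (the identity is the only unary operation in its support clone), any other integer optimum of $J$ is obtained from $y_a = a$ by an automorphism of $\Gamma'$, and hence yields the same non-gadget objective value on $I$.

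The main obstacle is the SA lift. Given a SA$(k,\ell)$-solution $\lambda'$ to $I'$, I would define $\lambda$ on $I$ by transferring the $\lambda'$-weights to the renamed non-gadget constraints and by placing the gadget weights integrally on the optimum $y_a = a$. Verifying the SA marginal consistency on the shared variables $v \leftrightarrow y_a$ is the delicate step: it holds because the constant constraint $\{(a)\}(v)$ in $I'$ forces the $\lambda'$-marginal on $v$ to be concentrated on $a$, matching the integral $y_a = a$ marginal on the gadget side. The objective value of $\lambda$ then equals that of $\lambda'$ plus $c = \vcspopt(J)$, giving the desired additive shift and closing the argument.
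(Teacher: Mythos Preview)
Your overall architecture---replace each constant constraint by a shared pinning gadget whose integer optimum is the assignment $y_a=a$, and lift an SA-solution by putting integral mass on that assignment---is indeed the route the paper takes. The SA lift you sketch is essentially correct: once the gadget is placed integrally, consistency on mixed subsets follows because every marginal on a $y_a$ (from either side) is the point mass at $a$.

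The real gap is in the pinning gadget itself. You assert that ``such a $J$ can be obtained by enumerating the feasible structure of $\Gamma$ on the variables $\{y_a\}_{a\in D'}$'' and that ``any other integer optimum of $J$ is obtained from $y_a=a$ by an automorphism of $\Gamma'$''. Neither claim is justified. The fact that $\Gamma'\cup\mathcal{C}_{D'}$ has only the identity in its unary support clone is a statement about \emph{that} language; it says nothing about the set of optimal assignments of an instance of $\VCSP(\Gamma)$. For a naive ``indicator-style'' $J$ built from $\Gamma$, an optimal assignment $y_a\mapsto g(a)$ need not correspond to a bijection on $D'$, nor even to an operation in $\supp(\Gamma')$: Lemma~\ref{lem:killing} gives you that $\opt(\phi_J)$ is \emph{closed under} $\supp(\Gamma)$, not that it is contained in it. Without this, your step ``hence yields the same non-gadget objective value on $I$'' fails, and you cannot conclude $\vcspopt(I)=\vcspopt(I')+c$.

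The paper closes exactly this gap, and it is the non-obvious part of the proof. It first applies Lemma~\ref{lem:killingf} (itself a consequence of Lemma~\ref{lem:killing} via Farkas) with $F$ the set of unary non-bijections on $D'$, producing a \emph{crisp} language $\Delta$ on $D'$ such that $\Delta\reducesto\Gamma'$ and every unary polymorphism of $\Delta$ is a bijection. The pinning gadget is then the indicator instance $I_\Delta$ of $\Delta$, whose satisfying assignments are, by construction, exactly the bijections in $\pol(\Delta)$. This lets one invert: if $\alpha$ is optimal for $J$ and $f_\alpha$ is the induced bijection, then $f_\alpha^{-1}\in\pol(\Delta)\subseteq\supp(\Gamma')$, so $f_\alpha^{-1}\circ\alpha$ is again optimal (Lemma~\ref{lem:killing}) and now pins $y_a$ to $a$. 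The reduction is accordingly factored as $\Gamma'\cup\mathcal{C}_{D'}\reducesto\Gamma'\cup\Delta\reducesto\Gamma'\reducesto\Gamma$, with the first step handled by the general gadget lemma (Lemma~\ref{lem:sa-reduc}) rather than an ad hoc additive-shift argument. Your sketch would become a proof once you supply this missing construction of $\Delta$ (or an equivalent device ensuring the gadget optima are bijections).
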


\begin{lemma}\label{lem:corewnus}
Let $\Gamma$ be a valued constraint language of finite size on domain $D$ and
$\Gamma'$ a core of $\Gamma$ on domain $D' \subseteq D$.
Then, $\supp(\Gamma)$ satisfies the BWC if, and only if, $\supp(\Gamma' \cup \mathcal{C}_{D'})$
satisfies the BWC.
\end{lemma}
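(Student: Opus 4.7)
The plan is to prove the two directions separately, using Lemma~\ref{lem:killing} as the main tool for transferring $\supp$-membership between $\Gamma$ and $\Gamma' \cup \mathcal{C}_{D'}$. Throughout, let $f \in \supp(\Gamma)$ be the unary witness of the core, so that $f(D) = D'$.

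For the (easier) reverse direction, given an $m$-ary WNU $w' \in \supp(\Gamma' \cup \mathcal{C}_{D'})$, I would set $w(x_1, \dots, x_m) := w'(f(x_1), \dots, f(x_m))$; the WNU identities for $w$ follow directly from those for $w'$. To verify $w \in \supp(\Gamma)$ via Lemma~\ref{lem:killing}, fix $I \in \VCSP(\Gamma)$ and optima $\sigma_1, \dots, \sigma_m \in \opt(\phi_I)$. Applying Lemma~\ref{lem:killing} to $f$ gives that each $f \circ \sigma_j$ is a $D'$-valued optimum of $I$; viewing $I$ as the instance $I'$ of $\VCSP(\Gamma' \cup \mathcal{C}_{D'})$ obtained by restricting each weighted relation to $D'$, Lemma~\ref{lem:core} gives $\vcspopt(I) = \vcspopt(I')$ and places $f \circ \sigma_j \in \opt(\phi_{I'})$. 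Lemma~\ref{lem:killing} applied to $w'$ then yields $w(\sigma_1, \dots, \sigma_m) = w'(f\circ\sigma_1, \dots, f\circ\sigma_m) \in \opt(\phi_{I'}) = \opt(\phi_I)$, as required.

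For the (harder) forward direction, given an $m$-ary WNU $w \in \supp(\Gamma)$, the strategy is to replace it with a WNU $\hat w \in \supp(\Gamma)$ whose image lies in $D'$ and which is idempotent on $D'$. Set $u(x) := w(x, \dots, x)$ and $v := f \circ u$; both are in $\supp(\Gamma)$ by Lemma~\ref{lem:suppclone}, and $v(D) \subseteq D'$. A short argument combining Lemmas~\ref{lem:killing} and~\ref{lem:core} (parallel to the proof of Lemma~\ref{lem:core}) shows that $v|_{D'} \in \supp(\Gamma')$, so since $\Gamma'$ is a core, $v|_{D'}$ is a bijection of $D'$ of some finite order $N$. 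Then $h := v^{N-1} \circ f \in \supp(\Gamma)$ satisfies $h(u(a)) = v^N(a) = a$ for all $a \in D'$, so $\hat w := h \circ w \in \supp(\Gamma)$ is an $m$-ary WNU (post-composition by a unary preserves the identities) with image in $D'$ and $\hat w(a, \dots, a) = a$ on $D'$.

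The final step is to show $\hat w|_{D'^m} \in \supp(\Gamma' \cup \mathcal{C}_{D'})$, once again via Lemma~\ref{lem:killing}. Given $I \in \VCSP(\Gamma' \cup \mathcal{C}_{D'})$, let $V^* \subseteq V$ be the variables pinned by $\mathcal{C}_{D'}$ constraints to values $c(x) \in D'$, let $I^- \in \VCSP(\Gamma')$ denote the instance obtained by substituting these values, and view $I^-$ as $I^\dagger \in \VCSP(\Gamma)$. Any optima $\sigma_1, \dots, \sigma_m \in \opt(\phi_I)$ restrict to $D'$-valued optima $\sigma_j^-$ of $I^-$; Lemma~\ref{lem:core} puts them in $\opt(\phi_{I^\dagger})$, Lemma~\ref{lem:killing} applied to $\hat w \in \supp(\Gamma)$ keeps $\hat w(\sigma_1^-, \dots, \sigma_m^-)$ in $\opt(\phi_{I^\dagger})$, and the idempotency of $\hat w$ on $D'$ ensures that the lift back to $V$ equals $\hat w(\sigma_1, \dots, \sigma_m) \in \opt(\phi_I)$. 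The principal obstacle is arranging the idempotency of $\hat w$ on $D'$: this crucially relies on the core property of $\Gamma'$, which forces $v|_{D'}$ to be a permutation of a finite set and hence of finite order $N$, a feature that would not be available without first passing to the core.
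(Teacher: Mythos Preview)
Your approach differs from the paper's: instead of composing fractional polymorphisms directly (and invoking a technical construction from~\cite{ktz15:sicomp} to build a unary fractional polymorphism whose entire support maps into $D'$), you transport individual operations between $\supp(\Gamma)$, $\supp(\Gamma')$, and $\supp(\Gamma'\cup\mathcal{C}_{D'})$ via Lemma~\ref{lem:killing}. This is more elementary and works for the reverse direction (one small correction: you only need $\opt(\phi_{I'})\subseteq\opt(\phi_I)$, not equality---optima of $I$ need not be $D'$-valued) and also for the construction of the idempotent-on-$D'$ WNU $\hat w$ in the forward direction.

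The final step of the forward direction, however, is broken. When you form $I^-$ by ``substituting'' the pinned values into the $\Gamma'$-constraints, the resulting weighted relations---for instance $y\mapsto\phi'(a,y)$ with $\phi'\in\Gamma'$ and $a\in D'$---are in general not in $\Gamma'$, so $I^-\notin\VCSP(\Gamma')$ and there is no legitimate $I^\dagger\in\VCSP(\Gamma)$ on which to invoke Lemma~\ref{lem:killing}. If instead $I^-$ is meant to be $I$ with the $\mathcal{C}_{D'}$-constraints simply dropped, then the $\sigma_j$ need no longer be optimal for $I^-$, and again Lemma~\ref{lem:killing} does not apply at the $\sigma_j$. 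A correct route is: first show $\hat w|_{(D')^m}\in\supp(\Gamma')$ (your Lemma~\ref{lem:killing}/Lemma~\ref{lem:core} argument for $v|_{D'}$ extends verbatim to any $m$-ary operation in $\supp(\Gamma)$ with image in $D'$), and then pass to $\supp(\Gamma'\cup\mathcal{C}_{D'})$ at the level of fractional polymorphisms. The missing fact is that in a core $\Gamma'$ every unary $\pi\in\supp(\Gamma')$ preserves values exactly, i.e.\ $\phi'(\pi(\tup{x}))=\phi'(\tup{x})$ for all $\phi'\in\Gamma'$: the uniform distribution on the finite group $G$ of such bijections is a limit of averaged iterates of unary fractional polymorphisms of $\Gamma'$, hence itself one, and the inequality $\tfrac{1}{|G|}\sum_{\pi\in G}\phi'(\pi(\tup{x}))\le\phi'(\rho(\tup{x}))$ for every $\rho\in G$ forces all $G$-translates to coincide. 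Consequently the point mass at each $\pi\in G$ lies in $\fpol(\Gamma')$, so given $\omega'\in\fpol(\Gamma')$ with $\hat w|_{D'}\in\supp(\omega')$, replacing each $g\in\supp(\omega')$ by $d_g^{-1}\circ g$ (where $d_g(x)=g(x,\dots,x)$) produces a fractional polymorphism of $\Gamma'$ with entirely idempotent support---hence one of $\Gamma'\cup\mathcal{C}_{D'}$---still containing $\hat w|_{D'}$. The paper also states this passage as an immediate conclusion without spelling it out.
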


\begin{proof}[Proof of Theorem~\ref{thm:main}]
The implication $(\ref{cnd:23})\Longrightarrow(\ref{cnd:bound})$ is trivial.
We first prove the implication $(\ref{cnd:BWC})\Longrightarrow(\ref{cnd:23})$.
Suppose that $\supp(\Gamma)$ satisfies the BWC. We start by going to a core of
$\Gamma$ and adding constant unary relations with the goal of applying
Theorem~\ref{thm:mainicalp}. Let $\Gamma'$ be a core of $\Gamma$ on domain $D'
\subseteq D$ and let $\Gamma'_c = \Gamma' \cup \mathcal{C}_{D'}$. By
Lemma~\ref{lem:corewnus}, $\supp(\Gamma'_c)$ also satisfies the BWC. By
Theorem~\ref{thm:mainicalp}, $\Gamma'_c$ has valued relational width $(2,3)$, so
clearly $\Gamma'$ has valued relational width $(2,3)$ as well. Every feasible
solution to the SA$(2,3)$-relaxation of an instance $I'$ of $\VCSP(\Gamma')$ is
also a feasible solution to the SA$(2,3)$-relaxation of the corresponding
instance $I$ of $\VCSP(\Gamma)$. The result now follows from Lemma~\ref{lem:core}
as the optimum of $I'$ and $I$ coincide.

It remains to prove the implication
$(\ref{cnd:bound})\Longrightarrow(\ref{cnd:BWC})$.
Suppose that $\supp(\Gamma)$ does not satisfy the BWC.
We start by going to a core of $\Gamma$ and adding constant unary relations with
the goal of applying Theorem~\ref{thm:mainicalp2}.
Let $\Gamma'$ be a core of $\Gamma$ on domain $D' \subseteq D$
and let $\Gamma'_c = \Gamma' \cup \mathcal{C}_{D'}$.
By Lemma~\ref{lem:corewnus}, $\supp(\Gamma'_c)$ does not satisfy the BWC.
By Theorem~\ref{thm:mainicalp2}, $\Gamma'_c$ does not have bounded valued
relational width. Finally, by Lemma~\ref{lem:coreconstants}, $\Gamma$ does not have
bounded valued relational width either.
\end{proof}

\subsection{Algorithmic Consequences}
\label{sec:algs}

We now give examples of previously studied valued constraint
languages and show that, as a corollary of Theorem~\ref{thm:main}, they all have, as
well as their generalisations, valued relational width $(2,3)$.

\begin{example}\label{ex:maj}
Let $\omega$ be a ternary fractional operation defined by
$\omega(f)=\omega(g)=\omega(h)=\frac{1}{3}$ for some 
(not necessarily distinct) majority operations $f$,
$g$, and $h$. Cohen et al. proved the tractability of any language improved by
$\omega$ by a reduction to CSPs with a majority
polymorphism~\cite{Cohen06:complexitysoft}.
\end{example}

\begin{example}\label{ex:cohen}
Let $\omega$ be a ternary fractional operation defined by $\omega(f)=\frac{2}{3}$ and $\omega(g)=\frac{1}{3}$, 
where $f:\{0,1\}^3\to\{0,1\}$ is the Boolean majority operation and
$g:\{0,1\}^3\to\{0,1\}$ is the Boolean minority operation.
Cohen et al. proved the tractability of any language improved by $\omega$ by a
simple propagation algorithm~\cite{Cohen06:complexitysoft}.
\end{example}

\begin{example}\label{ex:mjn}
Generalising Example~\ref{ex:cohen} from
Boolean to arbitrary domains, let $\omega$ be a ternary fractional operation
such that $\omega(f)=\frac{1}{3}$,
$\omega(g)=\frac{1}{3}$, and $\omega(h)=\frac{1}{3}$ for some (not necessarily
distinct) conservative majority operations $f$ and $g$, and a conservative minority operation $h$; such an
$\omega$ is called an MJN.
Kolmogorov and \v{Z}ivn\'y proved the tractability of any
language improved by $\omega$ by a 3-consistency algorithm and a reduction, via
Example~\ref{ex:stp}, to submodular function minimisation~\cite{kz13:jacm}.
\end{example}

The following corollary of Theorem~\ref{thm:main} generalises Examples~\ref{ex:maj}-\ref{ex:mjn}.

\begin{corollary}\label{cor:maj}
Let $\Gamma$ be a valued constraint language of finite size such that $\supp(\Gamma)$
contains a majority operation.
Then, $\Gamma$ has valued relational width $(2,3)$.
\end{corollary}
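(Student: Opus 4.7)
The plan is to invoke the main Theorem~\ref{thm:main}: it suffices to verify that $\supp(\Gamma)$ satisfies the BWC, and then valued relational width $(2,3)$ follows automatically. Thus the entire task reduces to a purely algebraic statement about the clone $\supp(\Gamma)$, which is a clone by Lemma~\ref{lem:suppclone}.

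For the base case, a majority operation $m$ is by definition a ternary idempotent operation satisfying $m(y,x,x) = m(x,y,x) = m(x,x,y) = x$, which is precisely identity~(\ref{pseudownu}) for arity $3$, so $m$ is itself a $3$-ary WNU. For arities $k \geq 4$, I would produce a $k$-ary WNU $w_k \in \supp(\Gamma)$ by iterated composition: set $w_3 = m$ and, inductively,
\[
w_k(x_1, \ldots, x_k) \;=\; m\bigl(x_1,\, x_2,\, w_{k-1}(x_2, x_3, \ldots, x_k)\bigr).
\]
A straightforward induction on $k$ shows that substituting $y$ in any single coordinate of $w_k$ (and $x$ in the others) evaluates to $x$: if $y$ sits in position $1$, the outer $m$ sees $m(y,x,x) = x$ (using idempotency of $w_{k-1}$); if $y$ sits in position $2$, then $w_{k-1}(y,x,\ldots,x) = x$ by induction, so the outer $m$ sees $m(x,y,x) = x$; and if $y$ sits in position $j \geq 3$, then $w_{k-1}$ returns $x$ by induction and the outer application is $m(x,x,x) = x$. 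Since $m$ is idempotent, $w_k$ is idempotent as well, so the WNU identities~(\ref{pseudownu}) hold.

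There is no real obstacle here beyond Theorem~\ref{thm:main} itself, which does all the heavy lifting. Since $\supp(\Gamma)$ is closed under composition (Lemma~\ref{lem:suppclone}) and contains the majority operation, it automatically contains all the terms $w_k$ constructed above, hence satisfies the BWC, and the corollary follows.
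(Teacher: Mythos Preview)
Your proof is correct and follows the same strategy as the paper: verify the BWC for $\supp(\Gamma)$ and invoke Theorem~\ref{thm:main}, using Lemma~\ref{lem:suppclone} for closure under composition. The only difference is in how you manufacture the higher-arity WNUs: the paper simply sets $g_k(x_1,\dots,x_k) = f(x_1,x_2,x_3)$, which is immediately a $k$-ary near-unanimity term since any single $y$ among $x$'s either lands in one of the first three slots (handled by the majority identities) or is ignored entirely. Your recursive construction $w_k(x_1,\dots,x_k) = m(x_1,x_2,w_{k-1}(x_2,\dots,x_k))$ also works and your case analysis is sound, but it is more labour than needed here.
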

\begin{proof}
Let $f$ be a majority operation in $\supp(\Gamma)$.
Then, for every $k \geq 3$, $f$ generates a WNU $g_k$ of arity $k$:
$g_k(x_1,\dots,x_k) = f(x_1,x_2,x_3)$.
By Lemma~\ref{lem:suppclone}, $\supp(\Gamma)$ is a clone,
so $g_k \in \supp(\Gamma)$ for all $k \geq 3$.
Therefore, $\supp(\Gamma)$ satisfies the BWC and
the result follows from Theorem~\ref{thm:main}.
\end{proof}

\begin{example}\label{ex:stp}
Let $\omega$ be a binary fractional operation defined by
$\omega(f)=\omega(g)=\frac{1}{2}$, where $f$ and $g$ are conservative and
commutative operations and $f(x,y)\neq g(x,y)$ for every $x$ and $y$; such an
$\omega$ is called a \emph{symmetric tournament pair} (STP). Cohen et al. proved
the tractability of any language improved by $\omega$ by a 3-consistency
algorithm and an ingenious reduction to submodular function
minimisation~\cite{Cohen08:Generalising}. Such languages were shown to be the
only tractable languages among conservative finite-valued constraint
languages~\cite{kz13:jacm}. 
\end{example}

The following corollary of Theorem~\ref{thm:main} generalises Example~\ref{ex:stp}.

\begin{corollary}\label{cor:stp}
Let $\Gamma$ be a valued constraint language of finite size such that $\supp(\Gamma)$
contains two symmetric tournament operations (that is, binary operations
$f$ and $g$ that are both conservative and commutative and $f(x,y)\neq g(x,y)$
for every $x$ and $y$).
Then, $\Gamma$ has valued relational width $(2,3)$.
\end{corollary}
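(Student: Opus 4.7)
The plan is to deduce Corollary~\ref{cor:stp} from Theorem~\ref{thm:main} by exhibiting, in $\supp(\Gamma)$, a WNU operation of every arity $m \geq 3$, thereby establishing the BWC. A pleasant observation is that only one of the two STP operations, call it $f$, is actually needed; the condition $f(x,y) \neq g(x,y)$ plays no role in the argument. So let $f \in \supp(\Gamma)$ be commutative and conservative. Since $\supp(\Gamma)$ is a clone by Lemma~\ref{lem:suppclone}, every operation obtained by nested composition of $f$ with itself lies in $\supp(\Gamma)$.

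Define $w_m \colon D^m \to D$ recursively by $w_2 = f$ and $w_m(x_1, \ldots, x_m) = f(w_{m-1}(x_1, \ldots, x_{m-1}), x_m)$ for $m \geq 3$. Each $w_m$ lies in $\supp(\Gamma)$ and is idempotent (conservativity gives $f(x,x) = x$). The key algebraic ingredient is the absorption identity $f(f(x,y), x) = f(x,y)$, valid for every commutative conservative binary $f$: writing $a = f(x,y) \in \{x,y\}$, if $a = x$ then $f(a,x) = f(x,x) = x = a$, and if $a = y$ then $f(a,x) = f(y,x) = f(x,y) = a$ by commutativity.

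Using this absorption identity together with commutativity, a routine induction on $m$ shows that $w_m(x, \ldots, x, y, x, \ldots, x) = f(x,y)$ regardless of where $y$ is placed: if $y$ occurs in the last position, then $w_m = f(w_{m-1}(x, \ldots, x), y) = f(x,y)$; otherwise, by the inductive hypothesis $w_{m-1}(x,\ldots,y,\ldots,x) = f(x,y)$, so $w_m = f(f(x,y), x) = f(x,y)$ by absorption. This establishes the WNU identities~(\ref{pseudownu}) for $w_m$, so $\supp(\Gamma)$ satisfies the BWC, and Theorem~\ref{thm:main} delivers valued relational width $(2,3)$. There is no real obstacle beyond the absorption identity; everything else is bookkeeping induction, which is why the result follows so much more cleanly than via the submodular-minimisation route of~\cite{Cohen08:Generalising}.
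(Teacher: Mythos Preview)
Your proof is correct, but it takes a different route from the paper's own proof of Corollary~\ref{cor:stp}. The paper uses \emph{both} $f$ and $g$ to build an explicit majority operation
\[
h(x,y,z)=f(f(g(x,y),g(x,z)),g(y,z)),
\]
and then invokes Corollary~\ref{cor:maj}. You instead discard $g$ entirely, observe the absorption identity $f(f(x,y),x)=f(x,y)$ for any conservative commutative $f$, and iterate $f$ to produce WNUs of every arity directly, appealing to Theorem~\ref{thm:main}. This is precisely the argument the paper gives one paragraph later for the strictly stronger Corollary~\ref{cor:tp}, which assumes only a single tournament operation in $\supp(\Gamma)$; you have in effect noticed that Corollary~\ref{cor:stp} is subsumed by Corollary~\ref{cor:tp} and proved the latter. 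The trade-off is that the paper's route yields the slightly sharper conclusion that $\supp(\Gamma)$ contains a majority operation (which genuinely needs the pair $f,g$ with $f\neq g$ pointwise), whereas your route is more economical in its hypotheses and shows that the constraint $f(x,y)\neq g(x,y)$ is irrelevant for the width-$(2,3)$ conclusion.
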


\begin{proof}
It is straightforward to verify that $h(x,y,z)=f(f(g(x,y),g(x,z)),g(y,z))$ is a
majority operation, as observed in~\cite[Corollary~5.8]{Cohen08:Generalising}.
The claim then follows from Corollary~\ref{cor:maj}.
\end{proof}

\begin{example}\label{ex:tp}
Generalising Example~\ref{ex:stp},
let $\omega$ be a binary fractional operation defined by
$\omega(f)=\omega(g)=\frac{1}{2}$, where $f$ and $g$ are conservative and
commutative operations; such
an $\omega$ is called a \emph{tournament pair}. Cohen et al. proved the
tractability of any language improved by $\omega$ by a consistency-reduction
relying on Bulatov's result~\cite{Bulatov06:ja}, which in turn relies on
3-consistency, to the STP case from Example~\ref{ex:stp}~\cite{Cohen08:Generalising}.
\end{example}

The following corollary of Theorem~\ref{thm:main} generalises Example~\ref{ex:tp}.

\begin{corollary}\label{cor:tp}
Let $\Gamma$ be a valued constraint language of finite size such that $\supp(\Gamma)$
contains a tournament operation (that is, a binary conservative and commutative
operation).
Then, $\Gamma$ has valued relational width $(2,3)$.
\end{corollary}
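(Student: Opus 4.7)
The strategy is to show directly that $\supp(\Gamma)$ satisfies the BWC, after which Theorem~\ref{thm:main} (specifically the implication $(\ref{cnd:BWC}) \Longrightarrow (\ref{cnd:23})$) will immediately give the conclusion. Unlike in Corollary~\ref{cor:stp}, we have only a single binary conservative commutative operation $t \in \supp(\Gamma)$ at our disposal, so we cannot form the three-way composition $f(f(g(x,y),g(x,z)),g(y,z))$ that yields a majority; in fact, taking $t$ to be $\min$ on $\{0,1\}$ shows that no majority need lie in the clone generated by a tournament operation, so we are obliged to exhibit a WNU of each arity $m \geq 3$ by hand.

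For every $m \geq 2$ I would define the right-folded iterate
\[
g_m(x_1,\ldots,x_m) \;=\; t\bigl(x_1,\, t(x_2,\, t(\ldots,\, t(x_{m-1},x_m)\ldots))\bigr),
\]
which belongs to $\supp(\Gamma)$ because $\supp(\Gamma)$ is a clone (Lemma~\ref{lem:suppclone}). The key step is then to verify that each $g_m$, for $m \geq 3$, satisfies the WNU identities (\ref{pseudownu}). I would prove by induction on $m$ the slightly stronger statement that, for all $x, y \in D$, (a) $g_m(x,\ldots,x) = x$, and (b) $g_m$ returns the common value $t(x,y)$ on every tuple obtained from $(x,\ldots,x)$ by replacing a single coordinate by $y$. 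Part (a) is immediate from conservativity of $t$. For part (b), peeling off the outer $t$ reduces the computation to two elementary facts: $t(y,x) = t(x,y)$ (handling a $y$ in the leading coordinate) and $t(x, t(x,y)) = t(x,y)$ (handling a $y$ in any later coordinate, after invoking the inductive hypothesis on $g_{m-1}$). The latter identity holds because conservativity forces $t(x,y) \in \{x,y\}$, and $t$ is idempotent on each element of~$D$.

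The only genuinely non-routine point, and the sole obstruction in the argument, is the realisation that despite the absence of any majority operation in the clone generated by $t$, the single collapsing identity $t(x, t(x,y)) = t(x,y)$ already suffices to force all $m$ one-$y$ substitutions of $g_m$ to coincide, which is precisely the content of (\ref{pseudownu}); everything else reduces to a routine induction. Once the $g_m$'s are in place for every $m \geq 3$, $\supp(\Gamma)$ satisfies the BWC and Theorem~\ref{thm:main} closes the argument.
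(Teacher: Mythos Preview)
Your proposal is correct and follows essentially the same approach as the paper's proof: both iterate the tournament operation to build an $m$-ary term for each $m\geq 3$, verify the WNU identities using the absorption law $t(x,t(x,y))=t(x,y)$ (which the paper phrases as ``$f$ is a 2-semilattice''), and then invoke Lemma~\ref{lem:suppclone} and Theorem~\ref{thm:main}. The only cosmetic difference is that the paper uses the left-folded term $f(f(\ldots(f(x_1,x_2),x_3),\ldots),x_k)$ whereas you use the right-folded one, which is immaterial since $t$ is commutative.
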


\begin{proof} 
Let $f$ be a tournament operation from $\supp(\Gamma)$. We claim that $f$ is a
2-semilattice; that is, $f$ is idempotent, commutative, and satisfies the
restricted associativity law $f(x,f(x,y)) = f(f(x,x),y)$.  To see that, notice
that $f(x,f(x,y))=x$ if $f(x,y)=x$ and $f(x,f(x,y))=y$ if $f(x,y)=y$; together,
$f(x,f(x,y))=f(x,y)$. On the other hand, trivially $f(f(x,x),y)=f(x,y)$. 

Also note that $f(x,f(y,x))=f(x,f(x,y))=f(x,y)$. 
For every $k \geq 3$, $f$ generates a WNU $g_k$ of arity $k$:
$g_k(x_1,\dots,x_k) = f(f(\ldots(f(x_1,x_2),x_3),\ldots),x_k)$.
By Lemma~\ref{lem:suppclone}, $\supp(\Gamma)$ is a clone,
so $g_k \in \supp(\Gamma)$ for all $k \geq 3$.
Therefore, $\supp(\Gamma)$ satisfies the BWC so
the result follows from Theorem~\ref{thm:main}.
\end{proof}

\begin{example}\label{ex:stp-mjn}
In this example we denote by $\multiset{\ldots}$ a multiset.
Let $\omega$ be a binary fractional operation on $D$ defined by
$\omega(f)=\omega(g)=\frac{1}{2}$ and let $\mu$ be a ternary fractional
operation on $D$ defined by $\mu(h_1)=\mu(h_2)=\mu(h_3)=\frac{1}{3}$. 
Suppose that $\multiset{f(x,y),g(x,y)}=\multiset{x,y}$ for every $x$ and $y$ and
$\multiset{h_1(x,y,z),h_2(x,y,z),h_3(x,y,z)}=\multiset{x,y,z}$ for every $x$, $y$, and
$z$. 
Moreover, suppose that for every two-element subset $\{a,b\} \subseteq D$,
either $\omega|_{\{a,b\}}$ is an STP or $\mu|_{\{a,b\}}$ is an MJN.
Let $\Gamma$ be a language on $D$ improved by a fractional polymorphism $\omega$
as described above. 
Kolmogorov and \v{Z}ivn\'y proved the tractability of $\Gamma$ by a
3-consistency algorithm and a reduction, via Example~\ref{ex:stp}, to submodular
function minimisation~\cite{kz13:jacm}. Such languages were shown to be the only
tractable languages among conservative valued constraint
languages~\cite{kz13:jacm}.
We will discuss conservative valued constraint languages in more detail in
Section~\ref{sec:compl}.
\end{example}

The following corollary of Theorem~\ref{thm:main} covers Example~\ref{ex:stp-mjn}.

\begin{corollary}\label{cor:stp-mjn}
Let $\Gamma$ be a valued constraint language of finite size with fractional polymorphisms
$\omega$ and $\mu$ as described in Example~\ref{ex:stp-mjn}.
Then, $\Gamma$ has valued relational width $(2,3)$.
\end{corollary}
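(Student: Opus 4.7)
Our plan is to invoke Theorem~\ref{thm:main}, so it suffices to show that $\supp(\Gamma)$ satisfies the BWC. We will do more and exhibit a majority operation inside $\supp(\Gamma)$; the claim then follows from Corollary~\ref{cor:maj}. The multiset conditions force $f(x,y) \in \{x,y\}$ and $h_i(x,y,z) \in \{x,y,z\}$, so $f, g, h_1, h_2, h_3$ are conservative, and by Lemma~\ref{lem:suppclone} they generate, together with projections, a subclone of $\supp(\Gamma)$ consisting only of conservative operations.

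The first step is local. For every two-element subset $\{a,b\} \subseteq D$ I would exhibit a ternary term over $\{f, g, h_1, h_2, h_3\}$ whose restriction to $\{a,b\}$ is the majority on $\{a,b\}$. If $\omega|_{\{a,b\}}$ is an STP, then $f|_{\{a,b\}}$ and $g|_{\{a,b\}}$ are the two distinct binary conservative commutative operations on $\{a,b\}$, i.e.\ $\min$ and $\max$ for some order on the pair; the same term $M(x,y,z) = f(f(g(x,y), g(x,z)), g(y,z))$ used in Corollary~\ref{cor:stp} then restricts to the majority on $\{a,b\}$ (a direct evaluation of the eight possible inputs in $\{a,b\}^3$ verifies this). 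If $\mu|_{\{a,b\}}$ is an MJN, then $h_1|_{\{a,b\}}$ is already a majority by definition.

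The main obstacle is the second step: assembling these pair-wise local majorities into a single ternary term $M^\star \in \supp(\Gamma)$ whose restriction to \emph{every} two-element subset of $D$ is a majority. Since every term built from our generators is conservative, such an $M^\star$ is automatically a majority on all of $D$, and Corollary~\ref{cor:maj} then delivers valued relational width $(2,3)$. The difficulty in the patching is that the composition of two ternary terms that are majorities on distinct pairs is generally not a majority on either pair, so a single composition is not enough. The standard remedy is an inductive composition argument over an enumeration $P_1, \ldots, P_N$ of the two-element subsets of $D$: build $M^{(k)} \in \supp(\Gamma)$ which is a majority on $P_1, \ldots, P_k$, and at each step combine it with the next pair's local majority in such a way that conservativity, together with the defining identities $m(x,x,y) = m(x,y,x) = m(y,x,x) = x$, preserves the majority property on all previously handled pairs. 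This is the route taken in Bulatov's combinatorial analysis of conservative CSPs, and it adapts to our setting because the relevant subclone of $\supp(\Gamma)$ is conservative; once $M^\star = M^{(N)}$ is obtained, Corollary~\ref{cor:maj} concludes the proof.
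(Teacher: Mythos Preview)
Your route differs substantially from the paper's, and it has a genuine gap. The paper avoids patching entirely by writing down a single term: with $p(x,y,z)=f(f(g(y,x),g(x,z)),g(y,z))$ one checks that on STP pairs $p$ restricts to the majority, while on non-STP pairs the multiset condition on $\{f,g\}$ forces $\{f,g\}|_{\{a,b\}}=\{\proj^{(2)}_1,\proj^{(2)}_2\}$, so $p|_{\{a,b\}}\in\{\proj^{(3)}_1,\proj^{(3)}_2\}$. The paper then sets $q(x,y,z)=p(h_1(x,y,z),h_2(x,y,z),h_3(x,y,z))$ and verifies the majority identities directly on every pair, using only the multiset condition and the MJN shape of $\mu$ on the non-STP pairs. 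No induction over the two-element subsets is needed.

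The gap in your proposal is the second step. First, your local claim ``$h_1|_{\{a,b\}}$ is a majority by definition'' is unjustified: the MJN condition on a pair only says that two of the three restrictions $h_1|_{\{a,b\}},h_2|_{\{a,b\}},h_3|_{\{a,b\}}$ are majorities and one is a minority, and which index is the minority may vary from pair to pair, so you cannot fix a single local term valid on all MJN pairs without further work. Second, and more seriously, the inductive patching is not actually carried out: you invoke ``Bulatov's combinatorial analysis'' but supply neither the term $M^{(k+1)}$ built from $M^{(k)}$ and the next local majority, nor the verification that the majority identities persist on $P_1,\dots,P_k$. Having, for each pair, some conservative term that acts as a majority on that pair does not by itself yield a global majority in the clone; the composition that simultaneously preserves all earlier identities is precisely the non-trivial content you are omitting. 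If you wish to complete the argument along your lines you must exhibit that inductive construction explicitly, at which point the paper's single term $q$ is both shorter and more transparent.
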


\begin{proof}
Let $P$ be the set of $2$-element subsets of $D$ such that $\omega|_{\{a,b\}}$
is an STP for $\{a,b\}\in P$ and $\mu|_{\{a,b\}}$ is an MJN for $\{a,b\}\not\in
P$. 
Let $p(x,y,z)=f(f(g(y,x),g(x,z)),g(y,z))$. Observe that $p|_{\{a,b\}}$ is a
majority for $\{a,b\}\in P$, and $p|_{\{a,b\}}$ is either $\proj^{(3)}_1$ or
$\proj^{(3)}_2$ for $\{a,b\}\not\in P$ (possibly different projections for
different $2$-element subsets from $P$). Now let
$q(x,y,z)=p(h_1(x,y,z),h_2(x,y,z),h_3(x,y,z))$. 
For $x,y\in\{a,b\}\in P$, $q(x,x,y)=q(x,y,x)=q(y,x,x)=p(\multiset{x,x,y})=x$.
For $x,y\in\{a,b\}\not\in P$, $q(x,x,y)=q(x,y,x)=q(y,x,x)=p(x,x,y)=x$
as $p$ is either the first or the second projection. Thus, $q$ is
a majority operation. 
The claim then follows from Corollary~\ref{cor:maj}.
\end{proof}

\subsection{Complexity Consequences}
\label{sec:compl}

We now give some computational complexity consequences of
Theorem~\ref{thm:main}. First, we obtain a new and simpler proof (in fact two
proofs) of the complexity classification of conservative valued constraint
languages~\cite{kz13:jacm}. Second, we obtain a complexity classification of
(generalisation of) Minimum-Solution problems over arbitrary finite domains.

\emph{Minimum-Solution} (Min-Sol) problems~\cite{Jonsson08:max-sol}, studied
under the name of Min-Ones on Boolean
domains~\cite{Creignouetal:siam01,Khanna00:approximability}, constitute a large
and interesting subclass of VCSPs including, for instance, 
integer linear programming over bounded domains.

\begin{definition}
A valued constraint language $\Gamma$ on finite domain $D$ is called a
\emph{Min-Sol language} if
$\Gamma=\Delta\cup\{\nu\}$, where $\Delta$ is a crisp constraint language on $D$ 
and $\nu:D\to\mathbb{Q}$ is an injective
finite-valued weighted relation.
\end{definition}

In other words, in Min-Sol problems the optimisation part of the objective
function is a sum of unary terms involving an injective finite-valued weighted
relation. 

As our main result in this section, we give a complexity classification of
\emph{all} Min-Sol languages on \emph{arbitrary} finite domains, thus improving
on previous classifications obtained for Min-Sol languages on domains with two
elements~\cite{Khanna00:approximability}, three elements~\cite{Uppman13:icalp},
and other special
cases~\cite{Jonsson07:maxsol,Jonsson08:max-sol,Jonsson08:siam}.

By 
Lemma~\ref{lem:coreconstants},
we can, without loss
of generality, restrict our attention to languages that include constants.

\begin{theorem}\label{thm:min-sol}
Let $D$ be an arbitrary finite domain and let $\Gamma=\Delta \cup \{\nu\}$ be an
arbitrary Min-Sol language of finite size on $D$ with $\mathcal{C}_D \subseteq \Gamma$.
Then, either $\supp(\Gamma)$ satisfies the BWC, in which case $\Gamma$ has
valued relational width $(2,3)$, or $\VCSP(\Gamma)$ is NP-hard. 
\end{theorem}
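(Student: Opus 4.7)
The first alternative is immediate from Theorem~\ref{thm:main}: if $\supp(\Gamma)$ satisfies the BWC, then $\Gamma$ has valued relational width $(2,3)$. The work is in the contrapositive, which I would attack by upgrading the ``simulation of linear equations'' that appears in the proof of Theorem~\ref{thm:mainicalp2} (Section~\ref{sec:nec}) into a polynomial-time reduction from the Nearest Codeword Problem over $\mathbb{Z}_p$ (minimise an injective unary cost over the feasible set of a system of $\mathbb{Z}_p$-linear equations), which is NP-hard for every prime $p$. The role of $\nu$ is to carry the unary cost through the simulation.

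Because $\mathcal{C}_D \subseteq \Gamma$, every $f \in \supp(\Gamma)$ is idempotent, and by Lemma~\ref{lem:suppclone}, $\supp(\Gamma)$ is an idempotent clone. Failure of the BWC for an idempotent clone (by the Barto--Kozik theory that drives Section~\ref{sec:nec}) yields a prime $p$, a subalgebra $B$ of $\supp(\Gamma)$, and a congruence $\theta$ on $B$ such that $B/\theta$ is term-equivalent to a non-trivial module $A$ over $\mathbb{Z}_p$. The construction used to prove Theorem~\ref{thm:mainicalp2} converts this obstruction into a $\reducesto$-reduction from the crisp language of linear equations over $A$ (with constants) to $\VCSP(\Gamma)$: each $A$-variable $x$ is encoded by a gadget of $\Gamma$-variables whose feasible assignments select a coset of $A$, and each linear equation is encoded by a further $\Gamma$-gadget enforcing the corresponding affine relation.

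Given an instance of Nearest Codeword with injective cost $\alpha\colon A\to\mathbb{Q}$, I would then attach to each $A$-variable's gadget a $\nu$-constraint on a chosen coordinate together with $\mathcal{C}_D$-constraints pinning the remaining gadget coordinates, arranging that the minimum of the added cost over the gadget's feasible set is an injective affine function of the encoded element of $A$. Summing these local costs over all variables of the input Nearest Codeword instance reproduces, up to an additive constant, the global Nearest Codeword objective. As the entire construction is polynomial-time, Nearest Codeword over $A$ reduces to $\VCSP(\Gamma)$, which is therefore NP-hard.

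The main obstacle is this last step. The reductions of Section~\ref{sec:nec} are designed to preserve Sherali-Adams gaps, not numerical cost, so one must verify that inside the gadget encoding of an $A$-variable there is always a coordinate on which the induced $\nu$-cost (together with the $\mathcal{C}_D$-freedom used to fix other coordinates) separates all cosets of $A$; injectivity of $\nu$ is precisely what makes this possible, and it is also the feature that distinguishes the Min-Sol setting from the general VCSP setting, where the same algebraic obstruction only forces a constant-level Sherali-Adams integrality gap, consistent with the fact that crisp systems of linear equations over $\mathbb{Z}_p$ are in $\mathsf{P}$ via Gaussian elimination.
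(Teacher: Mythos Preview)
Your high-level plan matches the paper's: derive Theorem~\ref{thm:min-sol} as a special case of Theorem~\ref{thm:injective}, and prove the latter by first extracting, from the failure of the BWC, an interpretation of $E_{\mathcal{G},3}$ in a crisp $\Delta$ with $\Delta \reducesto \Gamma$ (exactly the machinery of Section~\ref{sec:nec}), and then using $\nu$ to transport an injective unary cost onto the interpreted domain so that a known NP-hard optimisation problem over linear equations reduces to $\VCSP(\Gamma)$.

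The gap is in the transport step. You propose to put a single $\nu$ on one coordinate of the $d$-tuple gadget and pin the remaining $d-1$ coordinates with $\mathcal{C}_D$. This cannot work in general: the interpretation has parameters $(d,S,h)$ with $h\colon S\to G$ surjective from $S\subseteq D^d$, and nothing prevents $|G|>|D|$. A single coordinate of $\nu$ takes only $|D|$ distinct values, so it cannot separate more than $|D|$ fibres of $h$; moreover, pinning $d-1$ coordinates to fixed values restricts $S$ to a one-dimensional slice that need not meet every fibre $h^{-1}(x)$, so some elements of $G$ become unrepresentable. The paper avoids both problems by using all $d$ coordinates at once: it forms the $d$-ary weighted relation $\phi(x_1,\dots,x_d)=\sum_{i=1}^{d} C^{i-1}\nu(x_i)$ with $C>\max\nu-\min\nu$, which is injective on all of $D^d$, and then pulls it back through $h^{-1}(\eq{G})$ (available in $\langle\Delta\rangle$ by the definition of interpretation) to obtain a unary weighted relation $\phi''$ on $G$. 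Since the fibres of $h$ are disjoint and $\phi$ takes pairwise distinct values on $D^d$, the fibrewise minima are distinct, so $\phi''$ is injective. Hardness then comes from a result for $E'_{\mathcal{G}}\cup\{\phi''\}$ that applies to any non-constant unary cost, so one need not arrange a specific cost such as Hamming distance; your ``Nearest Codeword'' framing would require an extra argument to match the particular $\phi''$ that actually arises.
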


In order to prove Theorem~\ref{thm:min-sol}, we prove a more general result
classifying valued constraint languages that can express an injective unary
finite-valued weighted relation. Theorem~\ref{thm:min-sol} is then a simple
corollary of the following result.

\begin{theorem}\label{thm:injective}
Let $D$ be an arbitrary finite domain and let $\Gamma$ be an arbitrary valued constraint
language of finite size on $D$ with $\mathcal{C}_D\subseteq \Gamma$. Assume that
$\Gamma$ expresses a unary finite-valued weighted relation $\nu$ that is injective on $D$.
Then, either $\supp(\Gamma)$ satisfies the BWC, in which case $\Gamma$ has
valued relational width $(2,3)$, or $\VCSP(\Gamma)$ is NP-hard. 
\end{theorem}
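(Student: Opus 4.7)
If $\supp(\Gamma)$ satisfies the BWC, valued relational width $(2,3)$ is immediate from Theorem~\ref{thm:main}; the substance of the statement is therefore the hardness direction. Suppose $\supp(\Gamma)$ fails the BWC. Because $\mathcal{C}_D \subseteq \Gamma$, every polymorphism of $\Gamma$ is idempotent, so $\supp(\Gamma)$ is an idempotent clone on $D$, to which the classical theory of finite idempotent algebras applies. My plan is to split into two subcases according to whether $\supp(\Gamma)$ admits a Taylor term, and to prove NP-hardness in both.

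The first subcase -- $\supp(\Gamma)$ has no Taylor term -- does not need $\nu$ at all. Introduce the crisp auxiliary language
\[
\Lambda \;=\; \{\opt(\phi_I) \mid I \text{ an instance of } \VCSP(\Gamma)\} \;\cup\; \feas(\Gamma) \;\cup\; \mathcal{C}_D.
\]
Lemma~\ref{lem:killing} combined with the identity $\pol(\feas(\Gamma)) = \pol(\Gamma)$ and the idempotency forced by $\mathcal{C}_D$ gives $\pol(\Lambda) = \supp(\Gamma)$, so by hypothesis $\pol(\Lambda)$ has no Taylor term. The CSP dichotomy theorem of Bulatov/Zhuk therefore makes $\CSP(\Lambda')$ NP-hard for some finite $\Lambda'\subseteq\Lambda$ with $\pol(\Lambda') = \supp(\Gamma)$. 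A multiplicity gadget reduces $\CSP(\Lambda')$ to $\VCSP(\Gamma)$: a constraint $R(\tup{y})$ with $R = \opt(\phi_I)$ is encoded by a copy of the instance $I$ whose variables are identified with $\tup{y}$, while constraints from $\feas(\Gamma) \cup \mathcal{C}_D$ are enforced directly by the corresponding $\Gamma$-constraints. One checks that the constructed VCSP instance attains the fixed target cost $\sum_i \vcspopt(I_i)$ (precomputable in constant time since $\Lambda'$ is fixed) exactly when the original CSP is satisfiable.

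The second subcase -- $\supp(\Gamma)$ has a Taylor term but still fails BWC -- is where injectivity of $\nu$ is used. By tame congruence theory (Hobby-McKenzie; Bulatov-Jeavons-Krokhin; Barto-Kozik), the failure of BWC in the presence of a Taylor term yields an \emph{affine factor}: a pp-interpretation in $\supp(\Gamma)$ of the language of linear equations over some non-trivial finite Abelian group $G$. Pinning the interpreting variables with $\mathcal{C}_D$ and inducing an injective unary cost on $G$ from $\nu$ -- for instance by assigning to each class the minimum-$\nu$ representative, which are pairwise distinct because $\nu$ separates points of $D$ -- produces a polynomial-time reduction from Min-Sol over linear equations in $G$ to $\VCSP(\Gamma)$. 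The latter is NP-hard already for $G = \mathbb{Z}/2\mathbb{Z}$ by the Min-Ones dichotomy of Khanna-Sudan-Trevisan-Williamson, and the general $G$ reduces to the Boolean case. The chief obstacle is precisely this second subcase: extracting the affine-group pp-interpretation from a Taylor-but-not-BWC clone in a form compatible with the $\reducesto$-style reductions of Section~\ref{sec:reductions}, and verifying that the induced unary cost on $G$ inherits injectivity from $\nu$. The first subcase, by comparison, is essentially a formal consequence of Lemma~\ref{lem:killing}, the CSP dichotomy, and a standard gadget.
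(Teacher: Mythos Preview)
Your outline is essentially sound, but it diverges from the paper's proof in a way worth noting, and there is one genuine gap in your second subcase.

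\textbf{Comparison with the paper.} The paper does \emph{not} split on Taylor/no-Taylor. Instead it invokes Lemma~\ref{lem:crispwnus} and then Theorem~\ref{thm:clarity} once: failure of the BWC (whether by admitting type~$\mathbf{1}$ or type~$\mathbf{2}$) already yields a pp-interpretation of $E_{\mathcal{G},3}$ in a crisp $\Delta$ with $\Delta \reducesto \Gamma$, for some non-trivial Abelian $\mathcal{G}$. From there it pushes $\nu$ through the interpretation uniformly and finishes with a single hardness citation (APX-hardness of $E'_{\mathcal{G}} \cup \{\phi''\}$ from \cite[Thm.~3.18]{Thapper10:thesis}). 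Your split buys you the clean observation that the no-Taylor case does not need $\nu$ at all, but it costs you a detour through the CSP dichotomy (which is overkill; Bulatov--Jeavons--Krokhin~\cite{Bulatov05:classifying} already gives NP-hardness in the absence of a Taylor term, and the paper is explicit about not relying on the dichotomy). Your reduction from $\CSP(\Lambda')$ back to $\VCSP(\Gamma)$ also needs the $\opt$/$\feas$ machinery of Lemmas~\ref{lem:opt-preserve} and~\ref{lem:feas-preserve} rather than the bare ``multiplicity gadget'' you sketch: encoding a $\feas(\phi)$-constraint by $\phi$ itself perturbs the target cost, so the scaling in Lemma~\ref{lem:feas-preserve} is not optional.

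\textbf{The gap.} In your second subcase, the sentence ``assigning to each class the minimum-$\nu$ representative, which are pairwise distinct because $\nu$ separates points of $D$'' does not type-check when the interpretation has dimension $d > 1$: the $h$-classes are subsets of $D^d$, not of $D$, and injectivity of $\nu$ on $D$ says nothing about $D^d$. The paper fixes this explicitly by first building a $d$-ary injective cost $\phi(x_1,\dots,x_d) = \sum_{i} C^{i-1}\nu(x_i)$ with $C > \max_a \nu(a) - \min_a \nu(a)$, then defining $\phi'(\tup{x}) = \min_{\tup{y}}\bigl(h^{-1}(\eq{G})(\tup{x},\tup{y}) + \phi(\tup{y})\bigr)$, which is constant on each $h$-class and injective across classes because the classes are disjoint and $\phi$ is injective on $D^d$. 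You should adopt this construction; once you do, your route via Boolean Min-Ones and reduction of general $\mathcal{G}$ to $\mathbb{Z}/2\mathbb{Z}$ is a valid alternative to the paper's direct citation of \cite{Thapper10:thesis}.
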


We now define conservative valued constraint languages~\cite{kz13:jacm}.

\begin{definition}
A valued constraint language $\Gamma$ on $D$ is called \emph{conservative} if
$\Gamma$ contains all $\{0,1\}$-valued unary weighted relations. 
\end{definition}

We remark that for crisp constraint languages a different definition is
used~\cite{Bulatov11:conservative}.

Note that any
conservative language $\Gamma$ is a core and by Lemma~\ref{lem:coreconstants} we
can assume that $\mathcal{C}_D\subseteq \Gamma$.

Theorem~\ref{thm:injective} implies the following dichotomy theorem, first
established in~\cite{kz13:jacm} with the help of~\cite{Takhanov10:stacs}.

\begin{theorem}
Let $D$ be an arbitrary finite domain and let $\Gamma$ be an arbitrary 
conservative valued constraint language on $D$.
Then, either $\supp(\Gamma)$ satisfies the BWC, in which case
$\Gamma$ has valued relational width $(2,3)$, or $\VCSP(\Gamma)$ is NP-hard. 
\end{theorem}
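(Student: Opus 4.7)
The plan is to derive this dichotomy directly from Theorem~\ref{thm:injective} by checking its two hypotheses: (a) a conservative $\Gamma$ is its own core, so Lemma~\ref{lem:coreconstants} lets us reduce to the case $\mathcal{C}_D\subseteq\Gamma$; and (b) a conservative $\Gamma$ expresses an injective finite-valued unary weighted relation. Given both, Theorem~\ref{thm:injective} applied to $\Gamma\cup\mathcal{C}_D$ yields the stated dichotomy, after Lemma~\ref{lem:corewnus} translates the BWC condition between $\supp(\Gamma)$ and $\supp(\Gamma\cup\mathcal{C}_D)$, and Lemma~\ref{lem:coreconstants} transfers both valued relational width $(2,3)$ and NP-hardness from $\Gamma\cup\mathcal{C}_D$ back to $\Gamma$.

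For (a), fix $a\in D$ and consider the $\{0,1\}$-valued unary $u_a\in\Gamma$ defined by $u_a(a)=0$ and $u_a(x)=1$ for $x\neq a$. For any $\omega\in\fpol(\Gamma)$, the fractional polymorphism inequality at the point $a$ gives $\E_{g\sim\omega}[u_a(g(a))]\leq u_a(a)=0$, and since $u_a\geq 0$ this forces $g(a)=a$ for every $g\in\supp(\omega)$. Because $a$ was arbitrary, every unary operation in $\supp(\Gamma)$ is the identity, so $\Gamma$ is already a core. For (b), enumerate $D=\{d_1,\ldots,d_n\}$ and let $v_i\in\Gamma$ be the $\{0,1\}$-valued unary with $v_i(d_i)=1$ and $v_i(x)=0$ otherwise. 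Setting $\nu(x):=\sum_{i=1}^{n}2^{i}\, v_i(x)$ yields a finite-valued unary with pairwise distinct values $\nu(d_j)=\sum_{i\neq j}2^{i}$, hence injective. Any constraint $\nu(y)$ inside a VCSP instance is realised by posting the constraints $v_i(y)$ with multiplicities $2^i$ on the same variable $y$, so $\Gamma$ expresses $\nu$ in the sense required by Theorem~\ref{thm:injective}.

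I do not expect a genuine obstacle here, since the whole mathematical content resides in Theorem~\ref{thm:injective}; the remaining issue is merely the finite-size hypothesis, which requires a routine compactness step when $\Gamma$ is allowed to be infinite. On the tractable side this is automatic: each VCSP instance uses only finitely many relations $\Gamma_0\subseteq\Gamma$, and since $\supp(\Gamma)\subseteq\supp(\Gamma_0)$ the BWC on $\supp(\Gamma)$ propagates to every such $\Gamma_0$ (after adding the finitely many unaries $v_1,\ldots,v_n$ and $\mathcal{C}_D$). On the hard side, if $\supp(\Gamma)$ contains no $m$-ary WNU, then Lemma~\ref{lem:killing} provides, for each of the finitely many $m$-ary operations $f$ on $D$ that is a WNU, a finite instance $I_f$ witnessing $f\notin\supp(\Gamma)$; taking the finite union of the weighted relations appearing in these instances, together with the $v_i$'s and $\mathcal{C}_D$, gives a finite subset of $\Gamma$ whose support still omits all $m$-ary WNUs, and Theorem~\ref{thm:injective} applied to this finite subset produces an NP-hard finite sub-language of $\Gamma$.
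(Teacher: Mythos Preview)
Your approach is exactly the paper's: the theorem is stated there as an immediate consequence of Theorem~\ref{thm:injective}, with the paper noting just before it that any conservative language is a core and that one may therefore assume $\mathcal{C}_D\subseteq\Gamma$. Your verifications of these two points (that the unaries $u_a$ force every unary support operation to be the identity, and that a weighted sum of the $\{0,1\}$-valued unaries yields an expressible injective $\nu$) are correct; note only the harmless arithmetic slip that with $v_i(d_i)=1$ one gets $\nu(d_j)=2^{j}$, not $\sum_{i\neq j}2^{i}$.

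One caveat on the finite-size reduction. On the tractable side your argument is clean, but on the hard side you invoke Lemma~\ref{lem:killing} to produce the instances $I_f$ for a possibly infinite $\Gamma$, whereas that lemma (and its Farkas-based proof) is stated only for languages of finite size. What you actually need is that $f\notin\supp(\Gamma)$ implies $f\notin\supp(\Gamma_0)$ for some finite $\Gamma_0\subseteq\Gamma$; this is more than a one-line remark. The paper is equally informal here (its proof of Theorem~\ref{thm:cons2} makes the same move), so you are in good company, but if you want a fully rigorous argument you should either restrict to finite $\Gamma$ or justify this compactness step separately.
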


We now give a different proof classifying conservative valued constraint
languages that relies on~\cite{Takhanov10:stacs} but has the advantage of giving
a more specific tractability criterion than the BWC that is different from the
STP/MJN criterion established in~\cite{kz13:jacm} and discussed in
Example~\ref{ex:stp-mjn}.

The following theorem was proved by Takhanov~\cite{Takhanov10:stacs} with a 
small strengthening in~\cite{kz13:jacm}.
%

\begin{theorem}[\hspace*{-0.3em}\cite{kz13:jacm,Takhanov10:stacs}]\label{thm:takhanov-kz}
Let $\Gamma$ be a conservative valued constraint language.
If $\pol(\Gamma)$ does not contain a majority polymorphism, then $\VCSP(\Gamma)$ is NP-hard.
\end{theorem}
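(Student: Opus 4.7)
The plan is to exploit the conservativity of $\Gamma$ to localise the absence of a majority polymorphism to some $2$-element subset $B = \{a,b\} \subseteq D$, and then obtain NP-hardness of $\VCSP(\Gamma)$ by a gadget reduction performed entirely on $B$. Conservativity lets me add a unary $\{0,\infty\}$-valued weighted relation forcing each variable to lie in $B$, so it suffices to show that the restriction $\Gamma[B]$ defines an NP-hard VCSP. As a first step I would prove that absence of a global majority in $\pol(\Gamma)$ implies absence of a polymorphism that restricts to the majority on some particular $B$; this is a conservative-clone argument due to Bulatov, where local majorities on every $2$-element subset can be glued into a global one by iterated conservative composition.

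Having fixed such a $B$, I would next analyse $\pol(\Gamma)|_B$ using Post's lattice of clones over $\{a,b\}$. Since the restriction is conservative, contains all projections, and contains no majority, the only remaining possibilities are: (a) only projections; (b) a clone generated by a semilattice operation ($\min$ or $\max$ on $B$); or (c) the clone generated by the Mal'cev (minority) operation. In case~(a) the crisp sub-language $\{\feas(\phi) \mid \phi \in \Gamma\}$ restricted to $B$, together with $\{\{a\}\}$ and $\{\{b\}\}$, is already NP-hard by Schaefer's dichotomy, so $\VCSP(\Gamma)$ is NP-hard. In cases~(b) and~(c) the feasibility $\CSP$ on $B$ is polynomial-time solvable (by arc-consistency, or Gaussian elimination over $\mathbb{Z}_2$, respectively), so hardness must come from the finite-valued part: here I would construct a reduction from an $s$-$t$-Min-Cut-type problem (semilattice case) or a Max-$k$-Lin-type problem (Mal'cev case), using the conservative unary costs of $\Gamma$ together with some finite-valued weighted relation in $\Gamma[B]$ that is not improved by a majority.

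The main obstacle is case~(c): the feasibility structure is affine and thus easy, so NP-hardness must be extracted purely from the interaction of the finite-valued part of $\Gamma[B]$ with Mal'cev feasibility constraints. Concretely, I would have to exhibit a finite-valued weighted relation in $\Gamma[B]$ whose optimality relation is not preserved by the minority on $B$, and then use it, together with the conservative unary costs, to encode weighted satisfiability of a system of linear $\mathbb{Z}_2$-equations, an NP-hard optimisation problem. I expect this step to essentially re-derive Takhanov's original inductive reduction through list homomorphism problems; so while the conceptual plan is clean, the construction and verification of the gadget is where the bulk of the work lies.
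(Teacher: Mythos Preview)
The paper does not prove Theorem~\ref{thm:takhanov-kz}; it is imported wholesale from~\cite{Takhanov10:stacs} and~\cite{kz13:jacm} and used as a black box in the proof of Theorem~\ref{thm:cons2}. There is therefore no ``paper's own proof'' to compare your proposal against. What you have written is a plausible reconstruction of how such a proof might go, and the overall shape---localise to a bad $2$-element subset via a Bulatov-style gluing argument, then invoke Post's lattice---is in the spirit of the conservative-CSP literature.

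That said, your case analysis has a genuine weak spot. In case~(b) you propose a ``reduction from an $s$--$t$-Min-Cut-type problem'', but $s$--$t$-Min-Cut is in~P, so a reduction from it proves nothing. What you actually need in the semilattice case is an NP-hard optimisation problem over Horn-definable feasibility regions with arbitrary $\{0,1\}$-unary costs (think Min-Ones over Horn, or a weighted vertex-cover encoding), and you have not indicated how the specific weighted relations present in $\Gamma[B]$ would let you express such an instance. The difficulty is that the hypothesis ``$\pol(\Gamma)$ has no majority'' constrains only the feasibility relations, so in cases~(b) and~(c) the crisp part of $\Gamma[B]$ is tractable and \emph{all} the hardness has to be squeezed out of the interaction between the unary $\{0,1\}$-costs and the Horn/affine feasibility structure. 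You correctly identify this as the crux in case~(c), but it is equally the crux in case~(b), and neither case is handled beyond a one-line gesture. In short: the localisation step and case~(a) are fine; cases~(b) and~(c) are where the actual theorem lives, and your sketch there would need substantial work---essentially redoing the Takhanov argument---before it counts as a proof.
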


We can strengthen Theorem~\ref{thm:takhanov-kz} to show NP-hardness of
$\VCSP(\Gamma)$ for a conservative valued constraint language $\Gamma$ which 
lacks a majority operation in the support clone of $\Gamma$. Consequently, we
obtain an alternative tractability criterion for conservative valued constraint
languages to the original criterion~\cite{kz13:jacm} that involved a binary
STP multimorphism and a ternary MJN multimorphism (cf.
Example~\ref{ex:stp-mjn}).

\begin{theorem}\label{thm:cons2}
Let $\Gamma$ be a conservative valued constraint language.
Either $\VCSP(\Gamma)$ is NP-hard, or $\supp(\Gamma)$ contains a majority operation
and hence $\Gamma$ has valued relational width $(2,3)$.
\end{theorem}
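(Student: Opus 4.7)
The plan is to strengthen Theorem~\ref{thm:takhanov-kz} by replacing its hypothesis ``$\pol(\Gamma)$ contains no majority'' with the weaker hypothesis ``$\supp(\Gamma)$ contains no majority''. Once this strengthening is in hand, the dichotomy follows immediately by combining it with Corollary~\ref{cor:maj}: if $\supp(\Gamma)$ does contain a majority operation, then $\Gamma$ has valued relational width $(2,3)$.

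To prove the strengthening, suppose $\Gamma$ is conservative (so we may assume $\mathcal{C}_D \subseteq \Gamma$) and that $\supp(\Gamma)$ contains no majority. I would construct a conservative extension $\Gamma^* \supseteq \Gamma$ satisfying (i) $\pol(\Gamma^*)$ contains no majority, and (ii) $\VCSP(\Gamma^*)$ reduces in polynomial time to $\VCSP(\Gamma)$. Then Theorem~\ref{thm:takhanov-kz} applied to $\Gamma^*$ yields NP-hardness of $\VCSP(\Gamma^*)$, and (ii) transfers this to $\VCSP(\Gamma)$. The extension $\Gamma^*$ is built using Lemma~\ref{lem:killing}: since $D$ is finite, there are only finitely many ternary operations on $D$, and in particular only finitely many majority operations $f \in \pol(\Gamma)$. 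For each such $f$, the assumption $f \notin \supp(\Gamma)$ together with Lemma~\ref{lem:killing} produces an instance $I_f$ of $\VCSP(\Gamma)$ with $f \notin \pol(\opt(\phi_{I_f}))$. Setting $R_f := \opt(\phi_{I_f})$ and $\Gamma^* := \Gamma \cup \{R_f : f \text{ is a majority in } \pol(\Gamma)\}$ augments $\Gamma$ by only finitely many crisp relations, so $\Gamma^*$ remains conservative. Since $\pol(\Gamma^*) \subseteq \pol(\Gamma)$, any majority $g \in \pol(\Gamma^*)$ would be one of the $f$'s with $R_f \in \Gamma^*$, contradicting $g = f \notin \pol(R_f)$; this gives (i).

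For the reduction (ii), each constraint $R_f(\tup{y})$ in an instance $J$ of $\VCSP(\Gamma^*)$ is replaced by a copy of the gadget $\phi_{I_f}$ applied to $\tup{y}$ (identifying the variables of $I_f$ with the entries of $\tup{y}$), where each gadget constraint is duplicated a polynomially bounded number of times. Because $\phi_{I_f}(\sigma(\tup{y})) \geq \vcspopt(I_f)$ with equality exactly when $\sigma(\tup{y}) \in R_f$, choosing the multiplicity so that the penalty for even a single violation of an $R_f$-constraint exceeds the total cost that non-gadget $\Gamma$-constraints could contribute forces every optimal assignment of the resulting $\VCSP(\Gamma)$ instance $J'$ to satisfy all $R_f$-constraints in $J$; the optimum of $J$ is recovered from that of $J'$ by subtracting the fixed constant arising from the scaled $\vcspopt(I_f)$ offsets. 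The main bookkeeping point to watch will be to make the multiplicity large enough, but polynomial in $|J|$, to separate the ``satisfy $R_f$'' from the ``violate $R_f$'' regimes, using that $\Gamma$ has finite size so that the finite weights of constraints in $\Gamma$ are bounded. With (i) and (ii) established, Takhanov's theorem delivers NP-hardness of $\VCSP(\Gamma)$, completing the dichotomy.
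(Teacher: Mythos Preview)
Your approach is correct and essentially identical to the paper's. Both proofs use Lemma~\ref{lem:killing} to ``kill'' every majority in $\pol(\Gamma)\setminus\supp(\Gamma)$ by adjoining the relations $\opt(\phi_{I_f})$, then apply Theorem~\ref{thm:takhanov-kz} to the resulting conservative language. The only difference is that the paper packages your gadget-with-multiplicity reduction as a black box (Theorem~\ref{thm:reductions}(\ref{red:feasopt}), proved as Lemma~\ref{lem:opt-preserve}), whereas you re-derive it inline; you could shorten your write-up by citing that result directly.
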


\subsection{Related Work on BLP and Relational Width}
\label{sec:blp}

The SA$(1,1)$ relaxation is also known as the \emph{basic linear programming relaxation} (BLP).
The following result capturing the power of BLP has been
established.\footnote{Theorem~\ref{thm:blp} as stated here follows
from~\cite[Corollary~3]{ktz15:sicomp} using Lemma~\ref{lem:suppclone}.}

An $m$-ary operation $f:D^m\to D$ is called \emph{symmetric} if
$f(x_1,\ldots,x_m)=f(x_{\pi(1)},\ldots,x_{\pi(m)})$ for every permutation $\pi$
of $\{1,\ldots,m\}$.

\begin{definition} 
We say that a clone of operation satisfies the \emph{SYM} condition if it
contains an $m$-ary symmetric operation, for every $m\geq 2$.
\end{definition}

\begin{theorem}[\hspace*{-0.3em}\cite{ktz15:sicomp}]\label{thm:blp}
Let $\Gamma$ be a valued constraint language of finite size. Then the following are equivalent:
\begin{enumerate}
\item $\Gamma$ has valued relational width $1$.
\item $\supp(\Gamma)$ satisfies the SYM.
\end{enumerate}
\end{theorem}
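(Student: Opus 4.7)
The plan is to deduce Theorem~\ref{thm:blp} from the previously-established characterisation of BLP-tractability in \cite[Corollary~3]{ktz15:sicomp}, which asserts that $\Gamma$ has valued relational width $1$ if and only if, for every $m \geq 2$, $\fpol(\Gamma)$ contains an $m$-ary \emph{symmetric} fractional polymorphism (that is, a fractional polymorphism whose support consists of symmetric $m$-ary operations). Lemma~\ref{lem:suppclone} is then the bridge between this $\fpol$-formulation and the $\supp(\Gamma)$-formulation asserted in Theorem~\ref{thm:blp}.

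For the direction $(1) \Rightarrow (2)$, invoke \cite[Corollary~3]{ktz15:sicomp} to obtain, for each $m \geq 2$, a symmetric fractional polymorphism $\omega_m \in \fpol(\Gamma)$. Any operation $f \in \supp(\omega_m)$ is then a symmetric $m$-ary operation lying in $\supp(\Gamma)$, so $\supp(\Gamma)$ satisfies SYM. For the converse direction $(2) \Rightarrow (1)$, assume $\supp(\Gamma)$ contains a symmetric $m$-ary operation $f$ for every $m \geq 2$. Following the Farkas template of the proof of Lemma~\ref{lem:killing}, I would write down the feasibility LP asking for an $m$-ary fractional polymorphism of $\Gamma$ whose variables $\omega(g)$ range only over symmetric $g \in \pol(\Gamma)$, with the side constraint $\omega(f) > 0$. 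If this restricted LP were infeasible, Farkas would yield a rational witness $z(\phi, \tup{x}_1, \dots, \tup{x}_m) \geq 0$ which, assembled exactly as in the proof of Lemma~\ref{lem:killing}, produces a VCSP($\Gamma$) instance $\inst$ on which every symmetric $g \in \pol(\Gamma)$ preserves the average cost but $f$ strictly worsens it; the projections $\proj^{(m)}_i \circ \iota$ are optimal assignments while $f$ applied to them is not, contradicting $f \in \supp(\Gamma)$ via Lemma~\ref{lem:killing}. Hence the LP is feasible, yielding an $\omega_m \in \fpol(\Gamma)$ whose support is entirely symmetric, and $(1)$ follows from the reverse direction of \cite[Corollary~3]{ktz15:sicomp}. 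Lemma~\ref{lem:suppclone} is used here to ensure that, although one argues arity-by-arity, the clone-closure of $\supp(\Gamma)$ produces symmetric operations of every arity from a starter collection (for instance, via the kinds of compositions used in Corollaries~\ref{cor:maj} and~\ref{cor:tp}).

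The main obstacle is the converse direction, where one must translate a piecewise datum --- one symmetric operation $f$ per arity in $\supp(\Gamma)$ --- into a single fractional-polymorphism object whose \emph{entire} support consists of symmetric operations, as required by \cite[Corollary~3]{ktz15:sicomp}. A naive symmetrisation $\omega^*(g) = \frac{1}{m!} \sum_{\pi} \omega(g_\pi)$ of a fractional polymorphism witnessing $f \in \supp(\omega)$ only yields a permutation-invariant fractional operation, not one supported on symmetric $g$'s; the Farkas step above sidesteps this by directly selecting a fractional polymorphism from the smaller polytope of symmetric-support distributions, with Lemma~\ref{lem:killing} providing the contradiction that rules out infeasibility. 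Verifying carefully that the restricted-Farkas dual really does single out $f$ as the non-optimality-preserving operation in the constructed instance is the technical heart of the argument.
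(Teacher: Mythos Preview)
The paper does not give a proof of Theorem~\ref{thm:blp}; the footnote simply invokes \cite[Corollary~3]{ktz15:sicomp} together with Lemma~\ref{lem:suppclone}. Your direction $(1)\Rightarrow(2)$ is correct and immediate.

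Your Farkas argument for $(2)\Rightarrow(1)$ has a genuine gap. When you restrict the LP variables to symmetric $g\in\pol(\Gamma)$, the Farkas alternative only bounds $\sum_{\phi,\tup{x}} z(\phi,\tup{x})\,[\avg-\phi(g(\tup{x}))]$ for \emph{symmetric} $g$. The projections $\proj^{(m)}_i$ are not symmetric for $m\geq 2$, so the certificate $z$ says nothing about them, and your claim that ``the projections $\proj^{(m)}_i\circ\iota$ are optimal assignments'' in the constructed instance is unjustified. In fact it must fail: since $f\in\supp(\Gamma)$, the \emph{unrestricted} LP (over all $g\in\pol(\Gamma)$) is feasible, so its Farkas dual is infeasible; hence your $z$---or its symmetrisation over $S_m$, which pins every projection to the common value $A=\sum z\cdot\avg$---must admit some non-symmetric $g_0\in\pol(\Gamma)$ with value strictly below $A$. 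Thus the projections are not in $\opt(\phi_I)$, and Lemma~\ref{lem:killing} cannot be applied to them to obtain a contradiction.

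The implication you are after---that a symmetric $m$-ary $f\in\supp(\Gamma)$ forces an $m$-ary fractional polymorphism whose \emph{entire} support is symmetric---is one of the nontrivial facts established in \cite{ktz15:sicomp}; the machinery there for upgrading ``one good operation in the support'' to ``all of the support is good'' is the iterated-expansion technique of \cite[Lemma~10]{ktz15:sicomp} (the same device used in this paper's proof of Lemma~\ref{lem:corewnus}), not a restricted-Farkas shortcut. Your closing remark on Lemma~\ref{lem:suppclone} is also off: the SYM hypothesis already supplies a symmetric operation in every arity $m\geq 2$, so there is no ``starter collection'' to bootstrap.
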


By definition, the SA$(1,\ell)$-relaxation is at least as tight as the
SA$(1,1)$-relaxation; i.e., any solution to the SA$(1,\ell)$-relaxation gives a
solution to the SA$(1,1)$-relaxation of the same value.
Hence any language with valued relational width $1$ has
valued relational width $(1,\ell)$. We now show that for any fixed $\ell$,
SA$(1,1)$ and SA$(1,\ell)$ have the same power. 

\begin{proposition}\label{prop:width1}
Let $\Gamma$ be a valued constraint language of finite size and let $\ell>1$ be fixed. If
$\Gamma$ has valued relational width $(1,\ell)$ then $\Gamma$ has valued
relational width $1$.
\end{proposition}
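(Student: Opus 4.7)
The plan is to show the only non-trivial direction: given $\Gamma$ has valued relational width $(1,\ell)$, I will prove that for every instance $I$ of $\VCSP(\Gamma)$ one has $\lpopt{I}{1}{1} \geq \lpopt{I}{1}{\ell}$; combined with the trivial inequality $\lpopt{I}{1}{1} \leq \lpopt{I}{1}{\ell} \leq \vcspopt(I)$ and the assumption $\lpopt{I}{1}{\ell} = \vcspopt(I)$, this yields $\lpopt{I}{1}{1} = \vcspopt(I)$ as required.

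The key step is a lifting construction: given any feasible solution $\lambda^*$ to the SA$(1,1)$-relaxation of $I$, I will produce a feasible solution $\lambda$ to the SA$(1,\ell)$-relaxation of $I$ of the same objective value. Since the SA$(1,\ell)$-relaxation adds null constraints for every non-empty $X \subseteq V$ with $|X| \leq \ell$, I need to define $\lambda_X$ for each such null constraint. First, for each variable $v$ appearing in some constraint, let $\mu_v$ denote the single-variable marginal $\lambda_{\{v\}}^{*}$; by the marginal conditions \eqref{sa:marginal} of SA$(1,1)$, $\mu_v$ coincides with the marginal on $v$ of every $\lambda_i^*$ whose scope $X_i$ contains $v$ (for any variable not appearing in any original constraint, simply pick an arbitrary distribution on $D$). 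For each added null constraint with scope $X$, define
\begin{equation}
\lambda_X(\tau) \;=\; \prod_{v \in X} \mu_v(\tau(v)), \qquad \tau \colon X \to D.
\end{equation}
On the original constraints, keep $\lambda_i = \lambda_i^*$.

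It then remains to verify feasibility of $\lambda$ in SA$(1,\ell)$. Non-negativity and the normalisation \eqref{sa:sum1} for the product distributions are immediate, and \eqref{sa:infeas} is vacuous on null constraints since $\feas$ is the whole product set. For the marginal condition \eqref{sa:marginal}, note that $k=1$, so only singleton sub-scopes matter: the marginal of a product distribution $\lambda_X$ on any $\{v\} \subseteq X$ is exactly $\mu_v$, which equals both $\lambda_{\{v\}}$ (our definition) and the marginal of every $\lambda_i^*$ containing $v$ (by SA$(1,1)$-feasibility of $\lambda^*$). Hence every required marginal equality holds. Since the null constraints have weighted relations identically $0$, the objective value of $\lambda$ equals $\lpval{I}{\lambda^*}{1}{1}$, so $\lpopt{I}{1}{\ell} \leq \lpval{I}{\lambda}{1}{\ell} = \lpopt{I}{1}{1}$, completing the proof.

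The main conceptual point — which is really the only subtlety — is that with $k=1$ the marginal constraints of SA$(1,\ell)$ are entirely local to single variables, so consistent singleton marginals automatically extend via product distributions to consistent distributions on any larger scope. I do not expect any serious obstacle beyond being careful with isolated variables and with the bookkeeping of which null constraints need to be added.
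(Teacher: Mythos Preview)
Your proposal is correct and takes essentially the same approach as the paper: both lift an optimal SA$(1,1)$ solution to a feasible SA$(1,\ell)$ solution of the same objective value by placing product distributions (of the singleton marginals) on the added null constraints. The paper phrases this as a proof by contradiction whereas you give the direct inequality $\lpopt{I}{1}{\ell} \leq \lpopt{I}{1}{1}$, but the construction and verification are identical.
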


\begin{proof}
Let $\inst$ be an instance of $\VCSP(\Gamma)$.
Assume that $\vcspopt(\inst)=\lpopt{\inst}{1}{\ell}$ for the SA$(1,\ell)$-relaxation
of $\inst$. For the sake of contradiction, suppose that
$\vcspopt(\inst)>\lpopt{\inst}{1}{1}$
for the SA$(1,1)$-relaxation of $\inst$ and let $\lambda^*$ be an optimal
solution to SA$(1,1)$ of value $OPT^*$. Define $\lambda'$ as
follows. If $\lambda_i(\sigma)$ is a variable of SA$(1,1)$
then $\lambda'_i(\sigma)=\lambda_i^*(\sigma)$. Otherwise, let $\lambda_i(\sigma)$
correspond to the $i$th valued constraint $\phi_i(\tup{x}_i)$ with variables
$\{x_1,\ldots,x_r\}$.  We define $\lambda'_i(\sigma)$ as the product of the
$\lambda^*$'s corresponding to $\sigma(x_j)$, $1\leq j\leq r$. More formally, if $\phi_j(x_j)$
are the valued constraints with the scope $x_j$, for $1\leq j\leq r$, then we
define $\lambda'_i(\sigma)=\prod_{j=1}^r\lambda_j^*(\sigma(x_j))$. By the
definition of $\lambda'$, $\lambda'$ is a feasible solution to SA$(1,\ell)$. By
the definition of the SA relaxations, the extra valued constraints present in SA$(1,\ell)$ but
missing in SA$(1,1)$ are null and thus 
$\lpval{I}{\lambda'}{1}{\ell}=OPT^*<\vcspopt(\inst)$. But this contradicts $\Gamma$ having valued
relational width $(1,\ell)$.
\end{proof}

\begin{corollary}
Let $\Gamma$ be a valued constraint language of finite size. Then, the valued relational width
of $\Gamma$ is either $1$, or $2$, or $(2,3)$, or unbounded.
\end{corollary}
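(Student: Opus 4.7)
The plan is to combine Theorem~\ref{thm:main} and Proposition~\ref{prop:width1} via a short case analysis on which widths $\Gamma$ admits; the content is essentially a bookkeeping consolidation of the two earlier collapse results, but the fact that the statement produces only four possible width classes hinges entirely on the strength of Theorem~\ref{thm:main}.

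First I would dispose of the trivial case: if $\Gamma$ has no bounded valued relational width, then by definition the width is unbounded, giving the fourth alternative. So I then assume $\Gamma$ has some bounded width $(k,\ell)$. Here I would invoke Theorem~\ref{thm:main}: because $\Gamma$ satisfies condition~(\ref{cnd:bound}), it satisfies condition~(\ref{cnd:23}) as well, so $\Gamma$ has valued relational width $(2,3)$. This is the crucial step, since it rules out every ``intermediate'' width such as $(3,3)$, $(2,5)$, or $(3,4)$ as a distinct possibility — all of them collapse to $(2,3)$.

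It remains to refine the bounded case among the values $1$, $2$, and $(2,3)$. If $\Gamma$ admits some width of the form $(1,\ell')$ with $\ell' \geq 1$, then by Proposition~\ref{prop:width1} it admits width $1=(1,1)$, so the family of $(1,\ell')$-widths collapses to $1$. Otherwise every width of $\Gamma$ has $k \geq 2$; combined with the fact that $\Gamma$ already has width $(2,3)$, the loosest description of $\Gamma$'s width is either $(2,2)=2$ (when $\Gamma$ happens to have width $(2,2)$) or precisely $(2,3)$ (when it does not). These three sub-cases match the first three possibilities of the statement, completing the classification.

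There is no real obstacle here, since both ingredients are already in place. The only delicate point to be careful about in writing it up is to make sure the reader appreciates that it is Theorem~\ref{thm:main} doing the heavy lifting: a priori one could have imagined an infinite hierarchy of distinct width classes parametrised by $(k,\ell)$, but the collapse to $(2,3)$, together with the $(1,\ell')\to 1$ collapse from Proposition~\ref{prop:width1}, leaves only the four levels listed.
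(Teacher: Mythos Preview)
Your proposal is correct and follows essentially the same approach as the paper's own proof: both arguments combine Theorem~\ref{thm:main} (bounded collapses to $(2,3)$) with Proposition~\ref{prop:width1} (any $(1,\ell)$ collapses to $1$), and then simply sort the remaining possibilities. The paper's version is a two-sentence sketch of exactly this reasoning, while you spell out the sub-cases among $1$, $2$, and $(2,3)$ more explicitly.
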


\begin{proof} 
If the valued relational width of $\Gamma$ is bounded then it is $(2,3)$, by
Theorem~\ref{thm:main}. If the valued relational width of $\Gamma$ is $(1,\ell)$
for some $\ell>1$ then it is $1$, by Proposition~\ref{prop:width1}. 
\end{proof}

There are valued constraint languages that have valued relational width $(2,3)$
but not $1$. For example, languages improved by a tournament pair fractional
polymorphism~\cite{Cohen08:Generalising}, discussed in detail in
Example~\ref{ex:tp} in Section~\ref{sec:algs}, have valued relational width
$(2,3)$ by the results in this paper, but do not have valued relational width in
$1$ as shown~\cite[Example~5]{ktz15:sicomp} using Theorem~\ref{thm:blp}.

It could be that either SA$(1)$ and SA$(2)$, or SA$(2)$ and SA$(2,3)$ have the
same power. The former happens in case of relational width. Dalmau
proved that if a crisp language has relational width $2$ then it has
relational width $1$~\cite{Dalmau09:ipl}. Together with
Theorem~\ref{thm:libor23} and the analogue of Proposition~\ref{prop:width1} for
relational width established in~\cite{Feder98:monotone},
this gives a trichotomy for relational width.

\begin{theorem}[\hspace*{-0.3em}\cite{Feder98:monotone,Dalmau09:ipl,Barto16:jloc}]\label{thm:trichotomy}
Let $\Delta$ be a crisp constraint language of finite size. Then precisely one of the following is true:
\begin{enumerate}
\item $\Delta$ has relational width $1$.
\item $\Delta$ has relational width $(2,3)$ and does not have relational width
$2$, nor $(1,\ell)$ for any $\ell\geq 1$.
\item $\Delta$ does not have bounded relational width.
\end{enumerate}
\end{theorem}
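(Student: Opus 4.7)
The plan is a case analysis on whether $\Delta$ has bounded relational width, combining three previously cited collapse results: Theorem~\ref{thm:libor23}, Dalmau's collapse of relational width~$2$ to relational width~$1$~\cite{Dalmau09:ipl}, and the crisp analog of Proposition~\ref{prop:width1} from~\cite{Feder98:monotone} (which collapses relational width $(1,\ell)$ to relational width~$1$). Essentially no new argument is needed beyond organising these facts correctly.

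First I would verify pairwise mutual exclusivity. Cases~1 and~2 are disjoint because case~2 explicitly requires that $\Delta$ does not have relational width~$1$. Cases~1 and~3 are disjoint because relational width~$1$ implies bounded relational width. Cases~2 and~3 are disjoint because relational width $(2,3)$ is bounded. Hence at most one case holds.

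Next I would show that at least one case holds. If $\Delta$ does not have bounded relational width, we are in case~3. Otherwise, by Theorem~\ref{thm:libor23}, $\Delta$ has relational width $(2,3)$. If in addition $\Delta$ has relational width~$1$, we are in case~1. The remaining possibility is that $\Delta$ has relational width $(2,3)$ but not relational width~$1$; I claim this lands us in case~2. It suffices to rule out relational width~$2$ and relational width $(1,\ell)$ for every $\ell\geq 1$. If $\Delta$ had relational width~$2$, then by~\cite{Dalmau09:ipl} it would have relational width~$1$, contradicting our assumption. If $\Delta$ had relational width $(1,\ell)$ for some $\ell\geq 1$, then by the crisp analog of Proposition~\ref{prop:width1} established in~\cite{Feder98:monotone}, it would again have relational width~$1$, the same contradiction. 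Thus case~2 holds.

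The main ``obstacle'' is not really mathematical but bookkeeping: one must be careful that the cited collapse results apply to the general finite crisp $\Delta$ in the statement (Theorem~\ref{thm:libor23} is quoted under the hypothesis that $\Delta$ contains all constant unary relations, whereas Theorem~\ref{thm:trichotomy} omits it). Since the width hierarchy is preserved under passing to a core with constants added for crisp languages, and the analogs of Lemmas~\ref{lem:core} and~\ref{lem:coreconstants} hold in the crisp setting, one can reduce to the idempotent case before invoking Theorem~\ref{thm:libor23} and then transfer the conclusion back; this is the only step that requires some care.
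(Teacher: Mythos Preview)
Your proposal is correct and matches the paper's approach exactly: the paper does not give a formal proof of Theorem~\ref{thm:trichotomy} but, in the paragraph immediately preceding it, explains that the trichotomy follows by combining Dalmau's collapse~\cite{Dalmau09:ipl}, Theorem~\ref{thm:libor23}, and the crisp analogue of Proposition~\ref{prop:width1} from~\cite{Feder98:monotone}---precisely the three ingredients you organise. Your closing observation about the constants hypothesis in Theorem~\ref{thm:libor23} is a genuine subtlety that the paper glosses over.
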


\begin{remark} It follows from the definitions that if a crisp constraint language
$\Delta$ has relational width $(k,\ell)$ then
$\Delta$ also has valued relational width $(k,\ell)$. However, the converse does
not hold. There exists a constraint language on a three-element domain with two
relations that has valued relational width $1$ but not relational width
$1$\cite[Example~99]{Kun16:clones}. 
\end{remark}

\subsection{Obtaining a Solution and the Meta Problem}
\label{sec:meta}

We now address two questions related to our main result.

Firstly, we show that for any VCSP instance over a language of valued relational
width $(2,3)$ we can not only compute the value of an optimal solution but we
can also find an optimal assignment in polynomial time.

\begin{proposition}\label{prop:self-reduce}
Let $\Gamma$ be a valued constraint language of finite size and $I$ an instance of $\VCSP(\Gamma)$.
If $\supp(\Gamma)$ satisfies the BWC, then an optimal assignment to $I$ can be found in polynomial time.
\end{proposition}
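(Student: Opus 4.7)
The plan is to use standard self-reduction built on top of Theorem~\ref{thm:main}. The main idea is: once we can solve the SA$(2,3)$-relaxation in polynomial time and know it computes $\vcspopt$ exactly, an optimal assignment can be constructed by iteratively fixing one variable at a time and checking whether the optimum is preserved.

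First I would reduce to a language that contains all constant unary relations, so that ``fix $x_j := a$'' can be expressed as a constraint of the language. Let $\Gamma'$ be a core of $\Gamma$ on some domain $D' \subseteq D$ and set $\Gamma'_c = \Gamma' \cup \mathcal{C}_{D'}$. By Lemma~\ref{lem:corewnus}, $\supp(\Gamma'_c)$ again satisfies the BWC, so by Theorem~\ref{thm:main} the language $\Gamma'_c$ has valued relational width $(2,3)$. Convert the input instance $I$ of $\VCSP(\Gamma)$ into an instance $I'$ of $\VCSP(\Gamma')$ by restricting every weighted relation in each constraint to $D'$. By Lemma~\ref{lem:core}, $\vcspopt(I') = \vcspopt(I)$, and any optimal assignment to $I'$ (which takes values in $D' \subseteq D$) is also an optimal assignment to $I$.

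Next I would perform the self-reduction on $I'$, viewed as an instance of $\VCSP(\Gamma'_c)$. Compute $\mathrm{OPT} := \lpopt{I'}{2}{3} = \vcspopt(I')$ by solving the polynomial-size linear program. Then process the variables $x_1, \ldots, x_n$ in any order: for the current variable $x_j$, iterate over $a \in D'$ and form the instance $I'_{j,a}$ by appending to the current $I'$ the constraint $\{(a)\}(x_j)$, where $\{(a)\} \in \mathcal{C}_{D'}$. Since $I'_{j,a}$ remains an instance of $\VCSP(\Gamma'_c)$, we have $\lpopt{I'_{j,a}}{2}{3} = \vcspopt(I'_{j,a})$, again computable in polynomial time. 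Some $a \in D'$ must yield $\lpopt{I'_{j,a}}{2}{3} = \mathrm{OPT}$, because any optimal assignment to the current $I'$ assigns $x_j$ to some value in $D'$. Pick such an $a$, set $\sigma(x_j) := a$, replace $I'$ by $I'_{j,a}$, and continue.

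After all $n$ variables are fixed, $\sigma$ is an optimal assignment to the (original) $I'$, and therefore to $I$. The procedure solves $O(n|D|)$ linear programs of polynomial size, giving a polynomial-time algorithm overall. I do not foresee a real obstacle: the only content beyond classical self-reduction is the need for constant unary relations in the language so that ``$x_j = a$'' can be encoded as a constraint without leaving a width-$(2,3)$ language, and this is precisely what passing to a core and invoking Lemma~\ref{lem:corewnus} (together with Theorem~\ref{thm:main}) provides.
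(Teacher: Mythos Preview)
Your proposal is correct and follows essentially the same approach as the paper's proof: pass to a core with constants, invoke Lemma~\ref{lem:corewnus} to preserve the BWC, use the resulting valued relational width $(2,3)$ to compute optima via the LP, and then perform standard self-reduction by fixing variables one at a time. The paper's version is terser (it cites Theorem~\ref{thm:mainicalp} directly rather than Theorem~\ref{thm:main}, and does not spell out the appeal to Lemma~\ref{lem:core}), but the content is the same.
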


\begin{proof}
Let $\Gamma'$ be a core of $\Gamma$ on domain $D'$, 
and let $\Gamma_c =  \Gamma' \cup \{ \mathcal{C}_{D'} \}$.
By Lemma~\ref{lem:corewnus}, $\supp(\Gamma_c)$ satisfies the BWC,
so by Theorem~\ref{thm:mainicalp} we can obtain the optimum of $I$
by solving a linear programming relaxation.
Now, we can use self-reduction to obtain an optimal assignment.
It suffices to modify the instance $I$ to successively force each variable
to take on each value of $D'$. Whenever the optimum of the modified instance
matches that of the original instance, we can move on to assign the next variable.
This means that we need to solve at most $1+\left|V\right| \left|D'\right|$ linear
programming relaxations before finding an optimal assignment,
where $V$ is the set of variables of $I$.
\end{proof}

Secondly, we show that testing for the BWC is a decidable problem.
We rely on the following result that was proved in~\cite{Kozik15:au}, and also
follows from results in~\cite{Barto16:jloc}.

\begin{theorem}[\hspace*{-0.3em}\cite{Kozik15:au}]\label{thm:34}
An idempotent clone of operations satisfies the BWC if, and only if, it
contains a ternary WNU $f$ and a quaternary WNU $g$ with $f(y,x,x)=g(y,x,x,x)$ for
all $x$ and $y$.
\end{theorem}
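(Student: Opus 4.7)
My plan is to prove the two implications of the equivalence separately, treating the non-trivial direction first and then the other.

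For the direction that a ternary WNU $f$ and quaternary WNU $g$ with $f(y,x,x) = g(y,x,x,x)$ generate WNUs of all arities $m \geq 3$, I would argue by induction on $m$. The base cases are $w_3 := f$ and $w_4 := g$. The linking identity provides a single binary near-identity $h(x,y) := f(y,x,x) = g(y,x,x,x)$ with $h(x,x) = x$ by idempotency. For the inductive step, the strategy is to define $w_m$ as a layered composition of $f$ and $g$ in which the inner layer distributes the $m$ coordinates symmetrically and the outer layer collapses, via idempotency, to the common binary term $h$ whenever at most one coordinate differs from the rest. A natural ansatz is something of the form $w_m(x_1, \ldots, x_m) = g(w_{m-1}(x_1, \ldots, x_{m-1}), w_{m-1}(x_2, \ldots, x_m), \ldots)$, chosen so that each of the $m$ shifted identities $w_m(x, \ldots, x, y, x, \ldots, x)$ (for $y$ in any fixed position) reduces under the induction hypothesis to $h(x,y)$.

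For the converse, that the BWC implies the existence of specific $f$ and $g$ satisfying the linking identity, pick any ternary WNU $f_0$ and quaternary WNU $g_0$ from the family guaranteed by BWC. The binary collapses $\alpha(x,y) := f_0(y,x,x)$ and $\beta(x,y) := g_0(y,x,x,x)$ both satisfy $\alpha(x,x) = \beta(x,x) = x$, but are not forced to coincide. To align them, I would work in the free algebra $\mathbf{F}_2$ on two generators in the clone: since the clone contains WNUs of every arity and $\mathbf{F}_2$ is (for our purposes) governed by a finite number of distinct binary terms, an iterated substitution argument must reach a fixed point at which the collapses agree, yielding modified ternary $f$ and quaternary $g$ with $f(y,x,x) = g(y,x,x,x)$.

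The main obstacle is the inductive step in the first direction: verifying that a fixed composition satisfies all $m$ shifted WNU identities at once is delicate because the shifts break the natural symmetry of the ansatz, and one has to repeatedly invoke the linking identity to transport values between ternary and quaternary pieces without losing the WNU structure. To sidestep a brittle explicit construction, I would adopt the free-algebra formalism of~\cite{Kozik15:au}, reducing the problem to checking that a certain relation on a finitely presented free algebra is compatible with the clone; the linking identity is precisely what ensures this compatibility. The parallel approach of~\cite{Barto16:jloc}, based on absorption theory and loop-lemma arguments, offers an alternative route should the direct construction prove too intricate.
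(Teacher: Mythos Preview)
The paper does not prove this theorem. It is stated with the citation~\cite{Kozik15:au} and the remark that it ``also follows from results in~\cite{Barto16:jloc}'', and is then immediately used as a black box in the decidability proposition that follows. There is therefore no paper proof to compare your proposal against; the intended ``proof'' here is simply the citation.

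As for your sketch on its own merits: the converse direction (BWC $\Rightarrow$ linked $3$--$4$ WNUs) is the more tractable one, and your iterated-substitution idea on the binary collapses $\alpha,\beta$ is in the right spirit --- in a finite idempotent algebra one can compose and iterate until the two near-identity binary terms stabilise to a common one. The forward direction (linked $3$--$4$ WNUs $\Rightarrow$ WNUs of every arity $m\geq 3$) is where your plan is shakiest. There is no known clean inductive term construction of the shape $w_m = g(w_{m-1}(\cdot),w_{m-1}(\cdot),\ldots)$ that produces a WNU of each arity directly from $f$ and $g$; the difficulty you flag --- that the $m$ cyclic shifts break the symmetry of any natural ansatz --- is real and does not dissolve just by invoking the linking identity. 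The argument in~\cite{Kozik15:au} does not proceed this way: it passes through a chain of equivalent Maltsev conditions and the tame-congruence-theoretic characterisation (omitting types~$\mathbf{1}$ and~$\mathbf{2}$), rather than exhibiting explicit $w_m$. You acknowledge this at the end by deferring to the free-algebra machinery of~\cite{Kozik15:au} or the absorption arguments of~\cite{Barto16:jloc}; that deferral is exactly what the paper itself does, and is the correct resolution.
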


\begin{proposition}
Testing whether a valued constraint language of finite size
satisfies the BWC is decidable.
\end{proposition}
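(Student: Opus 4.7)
The plan is to combine three ingredients: (a) Lemma~\ref{lem:corewnus} together with the observation that adjoining constants makes $\pol$ idempotent, reducing the problem to an idempotent clone; (b) Theorem~\ref{thm:34}, which turns the a priori infinite BWC condition into a finite one in the idempotent case; and (c) the LP characterisation of $\supp$ implicit in the proof of Lemma~\ref{lem:killing} via the system~(\ref{eq:killing1}).

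First, I would effectively compute a core of $\Gamma$. There are only finitely many unary operations $f\colon D\to D$, and for each I can check (i) whether $f\in\pol(\Gamma)$ directly from the definition, and (ii) whether $f\in\supp(\Gamma)$ by testing feasibility of the linear system~(\ref{eq:killing1}) with $m=1$. If some non-bijective $f\in\supp(\Gamma)$ is found, replace $\Gamma$ by $\Gamma[f(D)]$ and iterate; the domain strictly shrinks, so the procedure terminates in at most $|D|$ steps with a core $\Gamma'$ on some $D'\subseteq D$. Set $\Gamma'_c=\Gamma'\cup\mathcal{C}_{D'}$. By Lemma~\ref{lem:corewnus}, $\supp(\Gamma)$ satisfies the BWC if, and only if, $\supp(\Gamma'_c)$ does, so it suffices to decide the latter.

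Since $\mathcal{C}_{D'}\subseteq\Gamma'_c$, every element of $\pol(\Gamma'_c)$ is idempotent, hence $\supp(\Gamma'_c)$ is an idempotent clone. Theorem~\ref{thm:34} then reduces its BWC to the existence of an idempotent ternary WNU $f$ and an idempotent quaternary WNU $g$ in $\supp(\Gamma'_c)$ with $f(y,x,x)=g(y,x,x,x)$ for all $x,y\in D'$. I would enumerate all finitely many pairs $(f,g)$ of operations on $D'$ of arities $3$ and $4$ satisfying idempotency, the WNU identities, and the linking identity, and for each such pair test $f,g\in\supp(\Gamma'_c)$ by checking feasibility of the corresponding instance of~(\ref{eq:killing1}) set up with respect to $\Gamma'_c$. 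The algorithm accepts iff some pair passes both LPs.

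No single step is a serious obstacle: the key conceptual move is the appeal to Theorem~\ref{thm:34}, which collapses the infinite schema of WNU identities defining the BWC into a single finite check, after which the remainder is routine enumeration combined with LP feasibility testing. The only care required is bookkeeping, namely setting up the LPs from~(\ref{eq:killing1}) over the correct enriched language $\Gamma'_c$ (with its constant relations and restricted domain $D'$) when performing the support-membership tests.
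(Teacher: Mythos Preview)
Your proposal is correct and follows essentially the same approach as the paper's own proof: reduce to the core with constants via Lemma~\ref{lem:corewnus}, invoke Theorem~\ref{thm:34} to collapse the BWC to a finite condition on an idempotent clone, and then test support membership by linear programming. The paper is terser (it simply cites~\cite[Section~4]{tz16:jacm} for computing the core and says ``it is easy to write a linear program'' for the final check), but the structure and the key ingredients are identical.
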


\begin{proof}
Let $\Gamma$ be a valued constraint language of finite size on domain $D$. Let
$\Gamma'$ be a core of $\Gamma$ defined on domain $D'\subseteq D$. Finding $D'$
and $\Gamma'$ can be done via linear
programming~\cite[Section~4]{tz16:jacm}. 
By Lemma~\ref{lem:corewnus}, $\supp(\Gamma)$
satisfies the BWC if, and only if, $\supp(\Gamma'\cup \mathcal{C}_{D'})$ satisfies the
BWC. As constant unary relations enforce idempotency,
by Theorem~\ref{thm:34},
$\supp(\Gamma'\cup\mathcal{C}_{D'})$ satisfies the BWC if, and only if,
$\supp(\Gamma'\cup\mathcal{C}_{D'})$ contains a ternary WNU $f$ and a 4-ary 
WNU $g$ with $f(y,x,x)=g(y,x,x,x)$ for all $x$ and $y$. It is easy to write a
linear program that checks for this condition, as it has been
done in the context of finite-valued constraint
languages~\cite[Section~4]{tz16:jacm}.
\end{proof}

\section{Sufficiency: Proof of Theorem~\ref{thm:mainicalp}}
\label{sec:suff}

In this section, we prove that the BWC is a sufficient condition for a valued
constraint language with all constant unary relations to have valued relational width $(2,3)$.

We start with a technical lemma.
For a feasible solution $\lambda$ of SA$(k,\ell)$, let
$\supp(\lambda_i) = \{ \sigma \colon X_i \to D \mid \lambda_i(\sigma) > 0 \}$.

\begin{lemma}\label{lem:fullsupport}
Let $I$ be an instance of $\VCSP(\Gamma)$.
Assume that SA$(k,\ell)$ for $I$ is feasible.
Then, there exists an optimal solution $\lambda^*$ to SA$(k,\ell)$ such that,
for every $i$,
$\supp(\lambda^*_i)$ is closed under 
every operation in $\supp(\Gamma)$.
\end{lemma}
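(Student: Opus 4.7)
The plan is to use a standard averaging argument along fractional polymorphisms. Given any optimal solution $\lambda$ to SA$(k,\ell)$ and any $m$-ary fractional polymorphism $\omega \in \fpol(\Gamma)$, I would construct a new optimal solution $\lambda'$ whose support is ``enriched'' by $\omega$, and then extract the desired $\lambda^*$ as an optimal solution of inclusion-maximal support. Such a maximal solution exists because the set of optimal SA$(k,\ell)$-solutions is a non-empty polytope, and any point in its relative interior (for instance, the uniform convex combination of its vertices) has support at each index $i$ equal to the union of the supports over all optimal solutions.

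Concretely, for each constraint index $i$ and each $\tau \colon X_i \to D$, I would set
\begin{equation*}
\lambda'_i(\tau) \;=\; \sum_{g \in \supp(\omega)} \omega(g)
  \sum_{\substack{\sigma_1,\ldots,\sigma_m \colon X_i \to D\\ g \circ (\sigma_1,\ldots,\sigma_m) = \tau}}
  \prod_{r=1}^{m} \lambda_i(\sigma_r),
\end{equation*}
where $(g \circ (\sigma_1,\ldots,\sigma_m))(x) = g(\sigma_1(x), \ldots, \sigma_m(x))$. Normalisation (\ref{sa:sum1}) reduces to $\bigl(\sum_\sigma \lambda_i(\sigma)\bigr)^{m} = 1$ averaged over $g$; non-negativity (\ref{sa:nonnegative}) is obvious; and feasibility (\ref{sa:infeas}) follows from $\supp(\omega) \subseteq \pol(\Gamma) \subseteq \pol(\feas(\phi_i))$, so that the support of $\lambda'_i$ stays inside $\feas(\phi_i)$. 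For the marginal condition (\ref{sa:marginal}) with a constraint index $j$ satisfying $X_j \subseteq X_i$, I would use the identity $(g \circ (\sigma_1,\ldots,\sigma_m))|_{X_j} = g \circ (\sigma_1|_{X_j},\ldots,\sigma_m|_{X_j})$ to reindex the inner sum by the restrictions $\rho_r = \sigma_r|_{X_j}$; grouping terms with fixed $\rho_1,\ldots,\rho_m$ and applying marginal consistency of $\lambda$ to each factor turns $\prod_r \lambda_i(\sigma_r)$ into $\prod_r \lambda_j(\rho_r)$, which is precisely $\lambda'_j(\tau')$. Finally, $\lpval{I}{\lambda'}{k}{\ell} \leq \lpval{I}{\lambda}{k}{\ell}$ follows from the fractional polymorphism inequality (\ref{eq:wpol}) applied pointwise to each tuple $(\sigma_1(\tup{x}_i),\ldots,\sigma_m(\tup{x}_i)) \in \feas(\phi_i)^m$ and telescoping the remaining products using $\sum_\sigma \lambda_i(\sigma) = 1$; since $\lambda$ was optimal, $\lambda'$ is optimal too.

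To conclude, I would take $\lambda^*$ to be an optimal solution of inclusion-maximal support. For any $f \in \supp(\Gamma)$, pick $\omega \in \fpol(\Gamma)$ with $\omega(f) > 0$ and apply the construction to $\lambda^*$. For any $\sigma_1, \ldots, \sigma_m \in \supp(\lambda^*_i)$, the assignment $f \circ (\sigma_1, \ldots, \sigma_m)$ contributes weight at least $\omega(f) \prod_r \lambda^*_i(\sigma_r) > 0$ to $\lambda'_i$, so it belongs to $\supp(\lambda'_i)$; by the maximality of $\lambda^*$'s support, this forces $f \circ (\sigma_1,\ldots,\sigma_m) \in \supp(\lambda^*_i)$, as required. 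The main technical obstacle is verifying marginal consistency of $\lambda'$: the nested product $\prod_r \lambda_i(\sigma_r)$ has to be reorganised into a product of values of lower-arity marginals of $\lambda$, and this is precisely where the product-of-independent-copies structure of the construction is essential. The rest of the argument is essentially routine unfolding of definitions combined with the fractional polymorphism inequality.
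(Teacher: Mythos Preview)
Your proposal is correct and follows essentially the same approach as the paper: both construct, from an optimal $\lambda$ and an $m$-ary $\omega\in\fpol(\Gamma)$, the ``averaged'' solution $\lambda^\omega_i(\sigma)=\Pr_{f\sim\omega,\ \sigma_1,\ldots,\sigma_m\sim\lambda_i}[f\circ(\sigma_1,\ldots,\sigma_m)=\sigma]$, verify feasibility (the marginal check via the identity $(g\circ(\sigma_1,\ldots,\sigma_m))|_{X_j}=g\circ(\sigma_1|_{X_j},\ldots,\sigma_m|_{X_j})$ and the product structure) and optimality (via the fractional polymorphism inequality). The only difference is cosmetic: the paper obtains $\lambda^*$ by iterating $\lambda\mapsto\tfrac12(\lambda+\lambda^\omega)$ until the support stabilises, whereas you directly pick an optimal solution of inclusion-maximal support (a relative-interior point of the optimal face) and argue closure by maximality; both conclusions are immediate once $\lambda^\omega$ is known to be optimal.
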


\begin{proof}
Let $\omega$ be an arbitrary $m$-ary fractional polymorphism of $\Gamma$,
and let $\lambda$ be any feasible solution $\lambda$ to SA$(k,\ell)$. Define
$\lambda^\omega$ by
\[
\lambda^\omega_i(\sigma) = \Pr_{\substack{f \sim \omega \\ \sigma_1,\dots,\sigma_m \sim \lambda_i}}
[ f \circ (\sigma_1,\dots,\sigma_m) = \sigma].
\]
We show that $\lambda^\omega$ is a feasible solution to SA$(k,\ell)$, and that if $\lambda$ is optimal,
then so is $\lambda^\omega$.

Clearly $\lambda^\omega_i$ is a probability distribution for each $i \in \left[q\right]$,
so (\ref{sa:sum1}) and (\ref{sa:nonnegative}) hold.
Since $\omega$ is a fractional polymorphism of $\Gamma$,
we have $\sigma \in \feas(\phi_i)$ 
for any choice of $f \in \supp(\omega)$ and 
$\sigma_1, \dots, \sigma_m \in \supp(\lambda_i)$.
Hence, $\lambda^\omega_i(\sigma) = 0$ for $\sigma \not\in \feas(\phi_i)$,
so (\ref{sa:infeas}) holds.

Finally, let $j \in \left[q\right]$ be such that $X_j \subseteq X_i$, $\left|X_j\right| \leq k$,
and let $\tau \colon X_j \to D$. Then,
\begin{align*}
\sum_{\sigma \colon X_i \to D, \Crestrict{\sigma}{X_j} = \tau} \lambda^\omega_i(\sigma)
&=
\sum_{\sigma \colon X_i \to D, \Crestrict{\sigma}{X_j} = \tau} \Pr_{\substack{f \sim \omega \\ \sigma_1, \dots, \sigma_m \sim \lambda_i}}
[ f \circ (\sigma_1,\dots,\sigma_m) = \sigma]\\
&=
\Pr_{\substack{f \sim \omega \\ \sigma_1, \dots, \sigma_m \sim \lambda_i}}
[ \Crestrict{(f \circ (\sigma_1,\dots,\sigma_m))}{X_j} = \tau]\\
&=
\Pr_{\substack{f \sim \omega \\ \sigma_1, \dots, \sigma_m \sim \lambda_i}}
[ f \circ (\sigma_1|_{X_j},\dots,\sigma_m|_{X_j}) = \tau]\\
&=
\sum_{\tau_1, \dots, \tau_m \colon X_j \to D}
\Pr_{\substack{f \sim \omega \\ \sigma_1,\dots,\sigma_m \sim \lambda_i}}
[ \Crestrict{\sigma_1}{X_j} = \tau_1, \dots, \Crestrict{\sigma_m}{X_j} = \tau_m,
f \circ (\tau_1,\dots,\tau_m) = \tau]\\
&=
\sum_{\tau_1, \dots, \tau_m \colon X_j \to D}
\lambda_j(\tau_1) \cdots \lambda_j(\tau_m)
\Pr_{f \sim \omega}
[ f \circ (\tau_1,\dots,\tau_m) = \tau]\\
&=
\Pr_{\substack{f \sim \omega \\ \tau_1, \dots, \tau_m \sim \lambda_j}}
[ f \circ (\tau_1,\dots,\tau_m) = \tau]\\
&=
\lambda^\omega_j(\tau),
\end{align*}
where, we have used the fact that
(\ref{sa:marginal}) can be read as 
$\lambda_j(\tau) = \Pr_{\sigma \sim \lambda_i} \left[\Crestrict{\sigma}{X_j} =
\tau\right]$.
It follows that (\ref{sa:marginal}) also holds for $\lambda^\omega$, so $\lambda^\omega$ is feasible.

For each $i \in \left[q\right]$, we have:
\begin{align*}
\sum_{\sigma \in \feas(\phi_i)} \lambda_i(\sigma)\phi_i(\sigma(\tup{x}_i)) &=
\E_{\sigma \sim \lambda_i} \phi_i(\sigma) = \E_{\sigma_1,\dots,\sigma_m \sim
\lambda_i} \frac{1}{m} \sum_{j=1}^m \phi_i(\sigma_j(\tup{x}_i))\\
&\geq \E_{\substack{f \sim \omega\\ \sigma_1,\dots,\sigma_m \sim \lambda_i}} \phi_i(f(\sigma_1(\tup{x}_i),\dots,\sigma_m(\tup{x}_i)))\\
&= \sum_{\sigma \in \feas(\phi_i)} \Big( \Pr_{\substack{f \sim \omega\\ \sigma_1,\dots,\sigma_m \sim \lambda_i}} \left[f \circ (\sigma_1,\dots,\sigma_m) = \sigma\right] \Big) \phi_i(\sigma(\tup{x}_i))\\
&= \sum_{\sigma \in \feas(\phi_i)} \lambda^\omega_i(\sigma) \phi_i(\sigma(\tup{x}_i)).
\end{align*}
Therefore, if $\lambda$ is optimal, then $\lambda^\omega$ must also be optimal.

Now assume that $\lambda$ is an optimal solution and that $\supp(\lambda)$ is not
closed under some operation $f \in \supp(\omega)$ for $\omega \in \fpol(\Gamma)$,
i.e.\ for some $\sigma_1, \dots, \sigma_m \in \supp(\lambda)$, we have
$f(\sigma_1,\dots,\sigma_m) \not\in \supp(\lambda)$.
But note that  $f(\sigma_1,\dots,\sigma_m) \in \supp(\lambda^\omega_i)$.
Therefore, $\lambda' = \frac{1}{2}(\lambda+\lambda^\omega)$ is an optimal solution
such that $\supp(\lambda_i) \subsetneq \supp(\lambda'_i) \subseteq
D^{X_i}$.
For each $i \in \left[q\right]$, $D^{X_i}$ is finite.
Hence, by repeating this procedure, we obtain a sequence of optimal solutions with
strictly increasing support until, after a finite number of steps, we obtain a $\lambda^*$
that is closed under
every operation in $\supp(\Gamma)$.
\end{proof}

We now have everything that is needed to prove Theorem~\ref{thm:mainicalp}.

\begin{proof}[Proof of Theorem~\ref{thm:mainicalp}]
Let $I$ be an instance of $\VCSP(\Gamma)$ with $\phi_\inst(x_1,\dots,x_n) = \sum_{i=1}^q
\phi_i(\tup{x}_i)$, $X_i \subseteq V = \{x_1, \dots, x_n\}$ and $\phi_i \colon
D^{\ar(\phi_i)} \to \qq$.

The dual of the SA$(k,\ell)$ relaxation can be written in the following form,
with variables $z_i$ for $i \in \left[q\right]$ and $y_{j, \tau, i}$ for $i, j \in \left[q\right]$ such that $X_j \subseteq X_i$, $\left|X_j\right| \leq k$, and $\tau \colon X_j \to D$.
The dual variables corresponding to $\lambda_i(\sigma) = 0$ are eliminated together with
the dual inequalities for $i, \sigma \not\in \feas(\phi_i)$.

\begin{align}
\max \sum_{i=1}^q z_i \nonumber\\
z_i &\leq \phi_i(\sigma) + \sum_{j \in \left[q\right], X_j \subseteq X_i} y_{j, \Crestrict{\sigma}{X_j}, i} - \sum_{j \in \left[q\right], X_i \subseteq X_{j}} y_{i, \sigma, j}  & \forall i \in \left[q\right], \left|X_i\right| \leq k, \sigma \in \feas(\phi_i) \label{dual:smallk} \\
z_i &\leq \phi_i(\sigma) + \sum_{\substack{j \in \left[q\right], X_j \subseteq X_i \\ \left|X_j\right| \leq k}} y_{j,\Crestrict{\sigma}{X_j},i} & \forall i \in \left[q\right], |X_i| > k, \sigma \in \feas(\phi_i) \label{dual:largek}
\end{align}

It is clear that if $I$ has a feasible solution, then so does the SA$(k,\ell)$ primal.
Assume that the SA$(2,3)$-relaxation has a feasible solution.

By Lemma~\ref{lem:fullsupport}, there exists an optimal primal solution
$\lambda^*$ such that, for every $i \in \left[q\right]$, $\supp(\lambda^*_i)$ is
closed under $\supp(\Gamma)$. Let $y^*$, $z^*$ be an optimal dual solution.

Let $\Delta = \{\phi'_i\}_{i=1}^q \cup \{ \mathcal{C}_D \}$, where $\phi'_i =
\supp(\lambda^*_i)$, i.e. $\phi'_i(\tup{x})=0$ if $\tup{x}\in\supp(\lambda^*_i)$
and $\phi'_i(\tup{x})=\infty$ otherwise. We consider the instance $J$ of
CSP$(\Delta)$ with $\phi_J(x_1,\ldots,x_n)=\sum_{i=1}^q\phi'_i(\tup{x}_i)$.

We make the following observations:
\begin{enumerate}
\item
By construction of $\lambda^*$, $\supp(\Gamma) \subseteq \pol(\Delta)$, so $\Delta$
contains all constant unary relations and
satisfies the BWC.
By Theorems~\ref{thm:boundedrelwidth} and~\ref{thm:libor23}, the language $\Delta$ has relational width $(2,3)$. 
\item
The first set of constraints in the primal say that if $i, j \in \left[q\right]$,
$\left|X_j\right| \leq 2$ and $X_j \subseteq X_i$, then $\lambda^*_j(\tau) >
0$ (i.e.\ $\tau \in \phi'_j$) if, and only if, $\sum_{\sigma \colon X_i \to D,
\Crestrict{\sigma}{X_j} = \tau} \lambda^*_i(\sigma) > 0$ (i.e.\ $\tau$ satisfies
$\pi_{X_j}(\phi'_i)$). In other words, $J$ is $(2,3)$-minimal.
\end{enumerate}

These two observations imply that $J$ has a satisfying assignment $\alpha \colon V \to D$.
Let $\alpha_i = \Crestrict{\alpha}{X_i}$.
By complementary slackness, since $\lambda^*_i(\alpha_i) > 0$ for every $i \in \left[q\right]$, 
we must have equality in the corresponding rows in the dual indexed by $i$ and $\alpha_i$.
We sum these rows over $i$:
\begin{equation}\label{eq:dualsum}
\sum_{i=1}^q z^*_i\ =\ \sum_{i=1}^q \phi_i(\alpha(\tup{x}_i)) +
\Big( \sum_{i=1}^q \sum_{\substack{j \in \left[q\right], X_j \subseteq X_i \\ \left|X_j\right| \leq 2}} y^*_{j,\Crestrict{\alpha_i}{X_j},i} -
 \sum_{\substack{i \in \left[q\right]\\ \left|X_i\right| \leq 2}}
 \sum_{\substack{j\in\left[q\right]\\ X_i \subseteq X_j}} y^*_{i,\alpha_i,j}
\Big)\,.
\end{equation}

By noting that $\Crestrict{\alpha_i}{X_j} = \alpha_j$ when $X_j \subseteq X_i$, 
we can rewrite the expression in parenthesis on
the right-hand side of (\ref{eq:dualsum}) as:
\begin{equation}\label{eq:is0}
 \sum_{\substack{i, j \in \left[q\right], X_j \subseteq X_i\\ \left|X_j\right| \leq 2}} y^*_{j, \alpha_j, i} -
 \sum_{\substack{i, j \in \left[q\right], X_i \subseteq X_j\\ \left|X_i\right| \leq 2}} y^*_{i, \alpha_i, j}\ =\ 0.
\end{equation}

Therefore,
\[
\sum_{i=1}^q \sum_{\sigma \in \feas(\phi_i)} \lambda^*_i(\sigma) \phi_i(\sigma(\tup{x}_i)) = 
\sum_{i=1}^q z^*_i\ =\ \sum_{i=1}^q \phi_i(\alpha(\tup{x}_i)),
\]
where the first equality follows by strong LP-duality, and the second by (\ref{eq:dualsum}) and
(\ref{eq:is0}).
Since $I$ was an arbitrary instance of $\VCSP(\Gamma)$,
the theorem follows.
\end{proof}

%
%
\section{Necessity: Proof of Theorem~\ref{thm:mainicalp2}}
\label{sec:nec}

In this section, we prove that the BWC is a necessary condition for a valued
constraint language with all constant unary relations 
to have bounded valued relational width.

The main idea of the proof is to show that if $\supp(\Gamma)$ does not satisfy the BWC,
then $\Gamma$ can, in a sense, simulate linear equations in some Abelian group.
We show that such linear equations do not have bounded valued relational width,
and that the simulation preserves bounded valued relational width.
We first state the result on linear equations in an Abelian group and then
discuss the precise meaning of ``simulation''.

Let $\mathcal{G}$ be an Abelian group over a finite set $G$ and let $r \geq 1$ be an integer.
Denote by $E_{\mathcal{G},r}$ the crisp constraint language over domain $G$ with,
for every $a \in G$, and $1 \leq m \leq r$, a relation 
$R^m_a = \{ (x_1, \dots, x_m) \in G^m \mid x_1 + \dots + x_m = a \}$.
In Section~\ref{sec:gap}, we prove the following.

\begin{theorem}\label{thm:eg3notbw}
Let $\mathcal{G}$ be a finite non-trivial Abelian group.
Then, the constraint language $E_{\mathcal{G},3}$ does not have bounded valued relational width.
\end{theorem}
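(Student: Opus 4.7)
The plan is to exhibit, for every fixed $k \geq 1$ and $\ell \geq k$, an instance $I$ of $\VCSP(E_{\mathcal{G},3})$ that is unsatisfiable (so $\vcspopt(I)=\infty$, since the weighted relations of $E_{\mathcal{G},3}$ are crisp $\{0,\infty\}$-valued) but whose SA$(k,\ell)$-relaxation is feasible (so $\lpopt{I}{k}{\ell}\leq 0<\infty$). This gives an explicit gap instance at every level, so $E_{\mathcal{G},3}$ cannot have bounded valued relational width. Ideally, the construction will be parameterised by $n$ in such a way that the gap holds up to level $k=\Theta(\sqrt{n})$, matching the refined statement the introduction advertises.

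First I would build the instance as a 3-LIN system over $\mathcal{G}$: pick a 3-uniform hypergraph $H$ on $n$ vertices with $m = \Theta(n)$ hyperedges, and for each hyperedge $e=\{i,j,\ell\}$ introduce an equation $x_i+x_j+x_\ell = b_e$, i.e.\ a constraint $R^3_{b_e}(x_i,x_j,x_\ell)$. The right-hand sides $b_e \in G$ are chosen so that some linear combination of the equations produces $0=c$ with $c\neq 0_{\mathcal{G}}$, forcing the full system to be unsatisfiable. The natural candidates for $H$ are either a random 3-uniform hypergraph with appropriate edge density, or an explicit expanding hypergraph; either way the point is to trade off global over-determinacy against local freedom.

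Next I would define the SA$(k,\ell)$ solution directly from the linear structure. For each subset $T\subseteq V$ with $|T|\leq \ell$, let $E_T$ be the set of equations of $I$ whose scope is contained in $T$, and let $L_T\subseteq G^T$ be the affine subspace of assignments satisfying $E_T$. Set $\lambda_T$ to be the uniform distribution on $L_T$. Non-negativity, normalisation, and support on $\feas(\phi_i)$ for the original 3-ary constraints are immediate, so the only non-trivial condition is marginal consistency~\eqref{sa:marginal}: for $T\subseteq T'$ with $|T'|\leq\ell$ and $|T|\leq k$, the projection of $\lambda_{T'}$ onto $T$ must equal $\lambda_T$. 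Since projections of uniform distributions on affine subspaces are uniform on the projected subspace, this reduces to the linear-algebraic identity $\pi_T(L_{T'})=L_T$, equivalently to saying that no linear combination of the equations in $E_{T'}$ produces an equation on $T$-variables that is not already in the span of $E_T$.

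The main obstacle, and the combinatorial heart of the proof, is establishing this local-independence property. It is a strong form of hypergraph expansion: every sub-collection of $\leq \ell$ hyperedges of $H$ must span a linearly independent set of equations, and moreover no sub-collection can ``shrink'' its support under row operations. For random 3-uniform hypergraphs with $m=\Theta(n)$, a standard first-moment/union-bound argument shows that every vertex set of size $\leq \Theta(\sqrt{n})$ induces a sub-hypergraph with strictly positive edge-boundary, which yields the required independence; for any fixed $\ell$, this holds with probability tending to $1$ as $n\to\infty$, so the instance exists. Once this expansion is in hand, feasibility of SA$(k,\ell)$ together with unsatisfiability of $I$ is immediate, and since $k,\ell$ were arbitrary, $E_{\mathcal{G},3}$ has no bounded valued relational width.
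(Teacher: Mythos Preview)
Your outline is sound and is essentially the Grigoriev--Schoenebeck route the paper acknowledges in the introduction; the paper itself, however, takes a genuinely different path. Rather than a random (or expander) 3-uniform hypergraph, it builds a fully explicit instance on the torus grid $T_{n\times n}$: one variable per vertex and one per edge, with constraints $y_{a,b+1}=y_{a,b}+x_{a,b}+c_{a,b}$ horizontally and $z_{a+1,b}=z_{a,b}+x_{a,b}+d_{a,b}$ vertically, the constants chosen so that summing all horizontal equations and subtracting all vertical ones yields $0=g\neq 0$. Feasibility of SA$(k,k)$ for $k<n/2$ is then established by a direct combinatorial analysis (subsets that ``exclude a cross'' and ``contain no hole''): for each small set $X_i$ one passes to a canonical closure $\bar X_i$, takes the uniform distribution on the satisfying assignments of the induced sub-instance, and verifies marginal consistency by an explicit counting lemma (each partial solution has exactly $|N_i|/|N_j|$ extensions). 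This buys a self-contained, elementary proof with no probabilistic or expander machinery; the price is the weaker level bound $\Theta(\sqrt{p})$ in the number $p$ of variables, whereas the random route, done properly, gives $\Omega(p)$.

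Two points in your sketch need tightening. First, the property you state---``every vertex set of size $\le\Theta(\sqrt n)$ has strictly positive edge-boundary''---is not the one that drives marginal consistency. What you need is a statement about \emph{edge} sets: every collection of $s$ hyperedges with $s\le s_0$ spans strictly more than (roughly) $2s$ vertices, so that no $\mathbb{Z}$-linear combination of the equations in $E_{T'}$ collapses to an equation supported on a proper subset $T$. A vertex set can have outgoing hyperedges and still induce enough internal hyperedges to create such an implied equation; your condition does not rule this out. Second, defining $\lambda_T$ directly as uniform on $L_T$ and checking $\pi_T(L_{T'})=L_T$ for all $|T|\le k$, $|T'|\le\ell$ only works cleanly when the girth (in the linear-dependence sense) exceeds $\ell$; to push to linear levels over a general Abelian group you need the closure trick---replace $T$ by the set $\bar T$ obtained by repeatedly adding any variable determined by an equation whose other variables are already present, and define $\lambda_T$ as the projection from $\bar T$. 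This is precisely what the paper's $\bar X_i$ construction does. With those fixes your argument goes through; as written, the expansion hypothesis is misstated and the level bound is neither the $\Theta(\sqrt n)$ you claim nor the $\Omega(n)$ the method actually delivers.
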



\begin{definition} \label{def:expres}
We say that an $m$-ary weighted relation $\phi$ is \emph{expressible} over a valued constraint
language $\Gamma$ if there exists an instance $I$ of $\VCSP(\Gamma)$ 
with variables $x_1,\ldots,x_m,v_1,\ldots,v_p$
such that
\begin{equation}
\phi(x_1,\ldots,x_m) = \min_{v_1,\dots,v_p} \phi_I(x_1,\dots,x_m,v_1,\dots,v_p).
\end{equation}
\end{definition}

For a fixed set $D$, let $\eq{D}$ denote the binary equality relation $\{ (x,x) \mid x \in D \}$.
Denote by $\langle \Gamma \rangle$ all weighted relations expressible in $\Gamma \cup \{ \eq{D} \}$,
where $D$ is the domain of $\Gamma$.
A weighted relation being expressible over $\Gamma \cup \{ \eq{D} \}$ is the analogue of a
relation being definable by a \emph{primitive positive (pp)} formula (using existential quantification and
conjunction) over a relational structure with equality.
Indeed, when $\Gamma$ is crisp, the two notions coincide.

 
\begin{definition} 
Let $\Delta$ and $\Delta'$ be valued constraint languages on domain $D$ and $D'$, respectively.
We say that
$\Delta$ has an \emph{interpretation} in $\Delta'$ with parameters $(d,S,h)$ if there exists a
$d \in \N$, a set $S\subseteq D'^d$, and a surjective map $h:S\to D$ such
that $\langle \Delta' \rangle$ contains the following weighted relations:
\begin{itemize}
\item $\phi_S \colon D'^d \to \qq$ defined by
$\phi_S(\tup{x}) = 0$ if $\tup{x} \in S$ and $\phi_S(\tup{x}) = \infty$ otherwise;
\item $h^{-1}(\eq{D})$; and
\item $h^{-1}(\phi_i)$, for every weighted relation $\phi_i \in \Delta$, 
\end{itemize}
where $h^{-1}(\phi_i)$, for an $m$-ary weighted relation $\phi_i$,
is the $dm$-ary weighted relation on $D'$ defined by
$h^{-1}(\phi_i)(\tup{x}_1,\dots,\tup{x}_m) = \phi_i(h(\tup{x}_1),\dots,h(\tup{x}_m))$,
for all $\tup{x}_1,\dots,\tup{x}_m\in~S$.
\end{definition}

When $\Gamma$ is crisp, the notion of an interpretation coincides
with the notion of a \emph{pp-interpretation} for relational structures~\cite{Bodirsky08:survey}.

\begin{theorem}\label{thm:clarity}
Let $\Delta$ be a crisp constraint language of finite size
that contains all constant unary relations.
If $\pol(\Delta)$ does not satisfy the BWC, then there exists a finite non-trivial Abelian 
group $\mathcal{G}$
such that $\Delta$ interprets $E_{\mathcal{G},r}$, for every $r \geq 1$.
\end{theorem}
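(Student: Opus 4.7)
The plan is to combine the deeper content of the algebraic bounded-width theorem with a syntactic pp-interpretation construction. Since $\Delta$ contains all constant unary relations, every $f \in \pol(\Delta)$ is idempotent, so $\pol(\Delta)$ generates an idempotent locally finite variety $\mathcal{V}$. In this idempotent setting, the failure of the BWC for $\pol(\Delta)$ is equivalent (see~\cite{Maroti08:weakly,LaroseZadori07:au,Barto14:jacm,Bulatov09:width,Kozik15:au}) to the existence of a non-trivial \emph{affine subfactor} in $\mathcal{V}$: there exist a pp-power $\mathbb{A}$ of $(D;\pol(\Delta))$, a subalgebra $\mathbb{B} \leq \mathbb{A}$, and a congruence $\theta$ on $\mathbb{B}$ such that the quotient $\mathbb{B}/\theta$ is polynomially equivalent to a non-trivial module over a finite ring. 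Let $\mathcal{G}$ denote the additive group of this module; then $\mathcal{G}$ is a finite non-trivial Abelian group, and the Mal'cev operation $p(x,y,z) = x - y + z$ is a term operation on $\mathbb{B}/\theta$.

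From this affine subfactor I would read off the interpretation parameters $(d,S,h)$ by the standard recipe (as in~\cite[Section~7]{Barto14:jacm} or~\cite{Bodirsky08:survey}). Take $d$ to be the arity of the pp-power yielding $\mathbb{A}$, let $S \subseteq D^d$ be a pp-definable set of representatives of the $\theta$-classes in $\mathbb{B}$, and define $h : S \to G$ sending each representative to the group element that codes its $\theta$-class. Pp-definability of $\phi_S$ over $\Delta \cup \{\eq{D}\}$ follows from pp-definability of $\mathbb{B}$ as a subalgebra of the pp-power $\mathbb{A}$, and pp-definability of $h^{-1}(\eq{G})$ follows from pp-definability of the congruence $\theta$ on $\mathbb{B}$. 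The Mal'cev term $p$ pp-defines the ternary sum relation $h^{-1}(\{(x,y,z) \mid x + y = z\})$ on $G$; since $\Delta$ contains all constants (including ones coding each $a \in G$), this yields a pp-definition of $h^{-1}(R^3_a)$ for every $a \in G$. Higher-arity sums $R^m_a$ with $m \leq r$ are then obtained by iterated existential composition of ternary sums along a balanced binary tree, so a \emph{single} choice of $(d,S,h)$ works uniformly for every $r \geq 1$.

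The main obstacle is the first step: extracting the non-trivial affine subfactor from the bare failure of the BWC is the deepest ingredient and requires the full weight of tame congruence theory, as developed and used in~\cite{Barto14:jacm,Bulatov09:width,Maroti08:weakly,Kozik15:au}. In particular, one must rule out that the only minimal subfactors available are of ``unary type'' (which would not produce an Abelian group), relying on the idempotency forced by the presence of all constants together with the equivalent formulations of the BWC via WNU identities. Once the affine subfactor is in hand, the production of $(d,S,h)$ and the chain of pp-definitions needed to reach every $R^m_a$ is a routine syntactic exercise; the only care required is to verify that $S$ and the coding $h$ can be chosen uniformly so that $\phi_S$, $h^{-1}(\eq{G})$, and every $h^{-1}(R^m_a)$ simultaneously belong to $\langle \Delta \rangle$.
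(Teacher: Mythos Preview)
Your proposal is correct and follows the same underlying route as the paper. Both argue: failure of the BWC in the idempotent clone $\pol(\Delta)$ forces, via tame congruence theory, an affine obstruction in $\mathrm{HSP}^{\mathrm{fin}}$ of the polymorphism algebra, and this obstruction yields a pp-interpretation of $E_{\mathcal{G},r}$. The paper compresses the argument into a chain of citations---\cite{Kozik15:au} for ``not BWC $\Leftrightarrow$ admits type $\mathbf{1}$ or $\mathbf{2}$'', \cite[Lemmas~20 and~21]{Atseriasetal09:tcs} for ``admits type $\mathbf{1}$ or $\mathbf{2}$ $\Rightarrow$ the variety contains a reduct of the polymorphism algebra of $E_{\mathcal{G},r}$'', and the standard dictionary \cite{Bodirsky08:survey,Bulatov05:classifying} between pseudovariety containment and pp-interpretation---whereas you unpack the last two steps by hand, exhibiting $(d,S,h)$ explicitly from the subfactor data and building the relations $R^m_a$ from the Mal'cev term and constants. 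Your version is more transparent about where the interpretation actually comes from; the paper's is shorter and delegates the subtleties to \cite{Atseriasetal09:tcs}.

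One small point worth tightening in your write-up: you assert that $p(x,y,z)=x-y+z$ is a \emph{term} operation on $\mathbb{B}/\theta$, but polynomial equivalence to a module only guarantees $p$ as a \emph{polynomial}. In the idempotent setting this is ultimately fine (an idempotent abelian algebra is affine and does have a Mal'cev term), but the cleaner route---implicit in \cite{Atseriasetal09:tcs} and not requiring the graph of $p$ at all---is to observe that the idempotent term clone of $\mathbb{B}/\theta$ sits inside the affine operations $\sum r_i x_i$ with $\sum r_i = 1$, and each $R^m_a$ is a coset of a submodule of $G^m$, hence invariant under all such operations; so $h^{-1}(R^m_a)$ is pp-definable over $\Delta$ directly.
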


\begin{proof}
It has been shown in~\cite[Theorem~1.6~(4)]{Kozik15:au} that if the polymorphism
algebra $B$ of $\Delta$ does not satisfy the BWC, then the variety generated by
B admits type $\mathbf{1}$ or $\mathbf{2}$ (the notion of admitting types comes
from Tame Congruence Theory~\cite{Hobby88:structure}). By~\cite[Lemmas 20 and
21]{Atseriasetal09:tcs}, this implies that there exists a finite non-trivial
Abelian group $\mathcal{G}$ such that the variety generated by $B$ contains a
reduct $A$ of the polymorphism algebra of $E_{\mathcal{G},r}$, for every $r \geq
1$. For finite algebras $A$ and $B$, $A$ is contained in the variety generated
by $B$ if, and only if, $A$ is contained in the pseudovariety generated by $B$. In
terms of pp-interpretations~\cite{Bodirsky08:survey}, this is equivalent to
$E_{\mathcal{G},r}$ having a pp-interpretation in $\Delta$ (see
also~\cite{Bulatov05:classifying}). 
\end{proof}

Our notion of reduction will be the $\reducesto$ reduction from Definition~\ref{def:reducesto}.

The following theorem shows that
we can augment a valued constraint language with various additional weighted relations.
The transformations in these reductions have previously been used to prove
polynomial-time
reductions~\cite{Bulatov05:classifying,cccjz13:sicomp,tz16:jacm,Kozik15:icalp,tz15:icalp}.
Here, we show that they all \emph{additionally} preserve bounded valued relational width.

\begin{theorem}\label{thm:reductions}
Let $\Gamma$ be a valued constraint language of finite size on domain $D$.
The following holds:
\begin{enumerate}
\item\label{red:express}
If $\phi$ is expressible in $\Gamma$, then $\Gamma \cup \{\phi\} \reducesto \Gamma$.
\item\label{red:equality}
$\Gamma \cup \{\eq{D}\} \reducesto \Gamma$.
\item\label{red:interpret}
If $\Gamma$ interprets the valued constraint language $\Delta$ of finite size, then $\Delta \reducesto \Gamma$.
\item\label{red:feasopt}
If $\phi \in \Gamma$, then 
$\Gamma \cup \{\opt(\phi)\} \reducesto \Gamma$ and
$\Gamma \cup \{\feas(\phi)\} \reducesto \Gamma$.
\item\label{red:coreconstants} If $\Gamma'$ is a core of $\Gamma$ on domain $D' \subseteq D$, then $\Gamma' \cup \mathcal{C}_{D'} \reducesto \Gamma$.
\end{enumerate}
\end{theorem}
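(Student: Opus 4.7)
The polynomial-time reductions in each of the five parts are essentially standard from prior work; the novelty here is that they additionally preserve valued relational width. My uniform strategy is as follows. To prove $\Delta\reducesto\Gamma$, given an instance $I$ of $\VCSP(\Delta)$ the reduction produces an instance $I'$ of $\VCSP(\Gamma)$ with $\vcspopt(I)=\vcspopt(I')$. It then suffices to show that for suitable $k',\ell'$ depending only on $k,\ell$ and the reduction's parameters, any SA$(k',\ell')$ solution $\lambda$ of $I$ can be lifted to an SA$(k,\ell)$ solution $\lambda'$ of $I'$ with $\lpval{I'}{\lambda'}{k}{\ell}\leq\lpval{I}{\lambda}{k'}{\ell'}$; combined with tightness of SA$(k,\ell)$ on $I'$ this yields tightness of SA$(k',\ell')$ on $I$.

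For (1), I replace each occurrence of $\phi(\tup{x})$ in $I$ by a disjoint copy of the gadget $I_\phi$ of Definition~\ref{def:expres}; the optima match by construction, and to lift an SA-solution one extends the marginal on $\tup{x}$ by an optimal conditional distribution on the gadget's auxiliary variables. Since these auxiliary variables are disjoint across copies, the SA-consistency conditions decouple across gadgets, and the level grows only by the gadget's arity. Part (2) is essentially the special case of (1) in which the gadget merely identifies two variables. For (3), I assemble the interpretation into a composite reduction: introduce, for each $\Delta$-variable $y$, a block of $d$ fresh $\Gamma$-variables $y^{(1)},\ldots,y^{(d)}$ constrained by $\phi_S$; replace each $\phi_i\in\Delta$ by $h^{-1}(\phi_i)$ and each equality by $h^{-1}(\eq{D})$; then apply (1) and (2) to eliminate the auxiliary weighted relations, all of which lie in $\langle\Gamma\rangle$ by the definition of interpretation. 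To lift an SA-solution of the $\Delta$-instance, fix, for each $a\in D$, a distribution $\mu_a$ on $h^{-1}(a)\cap S$ (non-empty by surjectivity of $h$). Each $\Delta$-assignment $\sigma$ on a scope $X$ is then lifted to a $\Gamma$-assignment by sampling, independently for each $y\in X$, a block from $\mu_{\sigma(y)}$. Consistency of $\Gamma$-marginals on subscopes of size at most $dk'$ follows from consistency of $\Delta$-marginals on subscopes of size at most $k'$ together with the independence of the block samplings, so the $\Gamma$-level $(k,\ell)=(dk',d\ell')$ suffices.

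For (4), one reduces $\feas(\phi)$ by adding copies of $\phi$ on the same scope with sufficiently large multiplicity so that any SA-solution placing positive mass on an infeasible tuple would contradict tightness; $\opt(\phi)$ is handled similarly, additionally relying on Lemma~\ref{lem:killing}, which ensures that $\supp(\Gamma)$-operations preserve $\opt(\phi)$, together with Lemma~\ref{lem:fullsupport}, so that the averaging therein does not move SA-support off $\opt(\phi)$. For (5), the core's witness $f\in\supp(\Gamma)$ with $f(D)=D'$ justifies restricting to $D'$ (compare Lemma~\ref{lem:core}), and the constants $\mathcal{C}_{D'}$ are encoded by a standard anchoring gadget: an auxiliary $\Gamma$-instance that, when appended to $I$, pins down one designated variable per element of $D'$, and each constant constraint $\{(a)\}(x)$ is then replaced by an equality identification with the corresponding anchor, itself eliminated via~(2). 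SA preservation here again leans on Lemmas~\ref{lem:killing} and~\ref{lem:fullsupport}: after averaging the SA-solution by $\supp(\Gamma)$-operations, each anchor's marginal is supported on $D'$ and, by the defining property of the anchoring gadget, concentrates on the intended constant.

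The steps requiring most care are (3) and (5). In (3), the arity blow-up by a factor of $d$ dictates the precise interplay between the $\Delta$- and $\Gamma$-level SA parameters, and the construction of the lift crucially uses both surjectivity of $h$ and the fact that $\phi_S$ pins blocks into $S$. In (5), the anchoring construction must be calibrated so that, after $\supp(\Gamma)$-averaging, the anchor marginals remain concentrated on the intended constants; this uses that the presence of $\mathcal{C}_{D'}$ forces all polymorphisms to be idempotent, in turn forcing the averaging to fix the anchors.
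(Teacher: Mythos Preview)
Your overall framework and parts (1) and (3) are essentially right, and match the paper's approach. But there are genuine gaps in (2) and (5), and a confusion in (4).

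Part (2) is \emph{not} a special case of (1): $\eq{D}$ need not be expressible over $\Gamma$, and the actual reduction is variable identification modulo the equality equivalence relation. The substantive step you skip is that equality constraints chain: for $u,v\in X_i$ in the same class but connected only via a long path of equalities through variables outside $X_i$, you must show that every $\sigma\colon X_i\to D$ with positive SA-support has $\sigma(u)=\sigma(v)$, or else the lifted assignment on the quotient is not well-defined. The paper proves this by induction on path length, using the marginal constraints~(\ref{sa:marginal}) at level $k\geq 2$ to pull in the intermediate vertices one at a time; the ``disjoint gadgets'' intuition from (1) does not apply precisely because the identifications are not disjoint.

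In (5) your concentration claim fails. The anchoring gadget (the indicator instance of the crisp core $\Delta$) has as satisfying assignments all unary polymorphisms of $\Delta$, i.e.\ all bijections of $D'$, not only the identity. Averaging via Lemma~\ref{lem:fullsupport} only \emph{enlarges} the support (closing it under $\supp$-operations); it never shrinks it, so anchor marginals cannot concentrate this way. Your idempotency appeal is also misplaced: $\mathcal{C}_{D'}$ lives on the source side $\Gamma'\cup\mathcal{C}_{D'}$, not on the target, so the target's polymorphisms are not forced to be idempotent. The paper instead lifts each partial assignment $\sigma$ \emph{deterministically} by setting each anchor $x_a$ to $a$, and handles the reverse direction by composing an optimal target assignment $\alpha$ with $f_\alpha^{-1}$ (available since the unary polymorphisms of $\Delta$ form a group). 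Separately, in (4) you have the two subcases inverted: for $\opt(\phi)$ one replaces each $\opt(\phi)$-constraint by many copies of $\phi$, with no need for Lemmas~\ref{lem:killing} or~\ref{lem:fullsupport} since constraint~(\ref{sa:infeas}) already confines the SA-support to $\opt(\phi)$; for $\feas(\phi)$ one instead multiplies the \emph{other} $\Gamma$-constraints by a large factor, so that the bounded finite noise introduced by replacing $\feas(\phi)$ with a single copy of $\phi$ is dominated by the amplified gap.
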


Note that Theorem~\ref{thm:reductions}(\ref{red:coreconstants}) is just a
restatement of Lemma~\ref{lem:coreconstants}.

A formal proof of Theorem~\ref{thm:reductions} is given in Section~\ref{sec:reductions}. Here is the main idea.

All of the reductions are based on replacing each constraint $\phi_i(\tup{x}_i)$ of an
instance $I$ of the left-hand side by some gadget, given as an instance $J_i$ of the
right-hand side. The instance $J$ is then defined as the sum of all objective functions $\phi_{J_i}$.

If the replacements satisfy certain conditions, then we show that, for any $1
\leq k' \leq \ell'$, there exist $1 \leq k \leq \ell$ such that if $\Delta'$ has
valued relational width $(k',\ell')$, then $\Delta$ has valued relational width
$(k,\ell)$, so the reductions preserve bounded valued relational width. The
conditions are: $(a)$ for every satisfying and optimal solution $\alpha$ of $J$, there is a
satisfying assignment $\sigma^\alpha$ of $I$ such that $\vcspval(I,
\sigma^\alpha) \leq \vcspval(J,\alpha)$; $(b)$ for every large enough $k$,
feasible solution $\lambda$ to the SA$(k,2k)$-relaxation of $I$, and assignment
$\sigma \colon X_i \to D$ with positive support in $\lambda$, there exists a
satisfying assignment $\alpha_i^\sigma$ of $J_i$ such that
$\phi_i(\sigma(\tup{x}_i)) \geq \vcspval(J_i, \alpha^\sigma_i)$; and $(c)$ the
assignments $\alpha_i^\sigma$ are ``pairwise consistent'', i.e.\
$\alpha^{\sigma_i}_i$ and $\alpha^{\sigma_r}_r$ agree on the intersection of the
variables of $J_i$ and $J_r$ whenever $\sigma_i$ and $\sigma_r$ are restrictions
of some $\sigma \colon X \to D$ with positive support in $\lambda$.

We will also need the following technical lemmas.

\begin{lemma}\label{lem:killingf}
Let $\Gamma$ be a valued constraint language of finite size over domain $D$ and let $F$ be a
finite set of operations over $D$.
If $\supp(\Gamma) \cap F = \emptyset$, then there exists a crisp constraint language $\Delta$
such that
$\pol(\Delta) \cap F = \emptyset$ and $\Delta \reducesto \Gamma$.
\end{lemma}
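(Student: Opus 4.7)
The plan is to build $\Delta$ by collecting, for each $f\in F$, one crisp relation $R_f$ that witnesses $f\notin\pol(R_f)$, and then to show that the resulting $\Delta$ reduces to $\Gamma$ by combining the expressibility and optimality reductions of Theorem~\ref{thm:reductions}.

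First, since $F$ is finite and $F\cap\supp(\Gamma)=\emptyset$, I would apply Lemma~\ref{lem:killing} once for each $f\in F$: the hypothesis $f\notin\supp(\Gamma)$ (with $f\in\pol(\Gamma)$ being the interesting case, the other trivially being handled) yields an instance $I_f$ of $\VCSP(\Gamma)$ such that $f\notin\pol(\opt(\phi_{I_f}))$. Set $R_f=\opt(\phi_{I_f})$, viewed as a crisp relation over $D$ of arity $|V(I_f)|$, and let $\Delta=\{R_f\mid f\in F\}$, which is finite. For the property $\pol(\Delta)\cap F=\emptyset$, note that $\pol(\Delta)\subseteq\pol(R_f)$ for every $f\in F$, and by construction $f\notin\pol(R_f)$, so no element of $F$ is a polymorphism of $\Delta$.

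Second, I would show $\Delta\reducesto\Gamma$ by chaining the reductions of Theorem~\ref{thm:reductions}. For each $f\in F$, the weighted relation $\phi_{I_f}$ is trivially expressible in $\Gamma$ (take the instance $I_f$ itself with no existentially quantified variables in Definition~\ref{def:expres}). Hence by Theorem~\ref{thm:reductions}(\ref{red:express}),
\[
\Gamma\cup\{\phi_{I_f}\}\ \reducesto\ \Gamma.
\]
Now $\phi_{I_f}$ belongs to $\Gamma\cup\{\phi_{I_f}\}$, so by Theorem~\ref{thm:reductions}(\ref{red:feasopt}) applied to this augmented language,
\[
\Gamma\cup\{\phi_{I_f},R_f\}\ \reducesto\ \Gamma\cup\{\phi_{I_f}\}.
\]
Since $\reducesto$ composes (noted after Definition~\ref{def:reducesto}), I get $\Gamma\cup\{R_f\}\reducesto\Gamma$. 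Iterating this argument over the finitely many $f\in F$ (each step adds one new $R_f$ while $\phi_{I_f}$ remains expressible in whatever the current language is, since it is already expressible in $\Gamma$), I obtain $\Gamma\cup\Delta\reducesto\Gamma$. Finally, $\Delta\reducesto\Gamma\cup\Delta$ via the identity reduction, so by transitivity $\Delta\reducesto\Gamma$.

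The work is essentially bookkeeping once Lemma~\ref{lem:killing} is applied; the only point that requires attention is the iteration that incorporates all the $R_f$'s simultaneously, but since each $\phi_{I_f}$ lives inside $\Gamma$ to begin with, every intermediate expressibility claim holds automatically and no subtle interaction between the added relations needs to be controlled. I do not foresee a genuine obstacle — Lemma~\ref{lem:killing} does the conceptual heavy lifting by converting the non-membership $f\notin\supp(\Gamma)$ into an explicit crisp witness, and the remaining steps are bounded-width-preserving gadget replacements already packaged inside Theorem~\ref{thm:reductions}.
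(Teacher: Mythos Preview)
Your argument has a genuine gap in the case $f\notin\pol(\Gamma)$, which you dismiss as ``trivially being handled'' without saying how. Lemma~\ref{lem:killing} has $f\in\pol(\Gamma)$ as a hypothesis, so it does not furnish an instance $I_f$ in that case, and no such instance need exist. For a concrete counterexample, take $D=\{0,1,2\}$ and $\Gamma=\{\phi\}$ with $\phi(0)=0$, $\phi(1)=1$, $\phi(2)=\infty$. The unary operation $f$ with $f(0)=0$, $f(1)=2$, $f(2)=0$ fails to be a polymorphism of $\Gamma$ (since $f(1)=2\notin\feas(\phi)$), yet every instance $I$ of $\VCSP(\Gamma)$ has $\opt(\phi_I)$ of the form $\{0\}^A\times D^B$, which $f$ preserves. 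Hence no choice of $R_f=\opt(\phi_{I_f})$ can exclude this $f$ from $\pol(\Delta)$.

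The paper's proof handles this by enlarging $\Delta$ to also contain $\feas(\phi)$ for every $\phi\in\Gamma$: if $f\notin\pol(\Gamma)$ then $f\notin\pol(\feas(\phi))$ for some $\phi\in\Gamma$, so $f\notin\pol(\Delta)$; and Theorem~\ref{thm:reductions}(\ref{red:feasopt}) gives $\Gamma\cup\{\feas(\phi)\}\reducesto\Gamma$. With this addition your bookkeeping for the $\opt$-relations and the chaining of reductions goes through as you wrote it.
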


\begin{proof}
By Lemma~\ref{lem:killing}, for each $f \in F \cap \pol(\Gamma)$,
there is an instance $I_f$ of $\VCSP(\Gamma)$
such that $f \not\in \pol(\opt(\phi_{I_f}))$.
Let $\Delta = \{ \opt(\phi_{I_f}) \mid f \in F \} \cup \{ \feas(\phi) \mid \phi \in \Gamma \}$.
For $f \in F \cap \pol(\Gamma)$, we have $f \not\in \pol(\opt(\phi_{I_f})) \supseteq \pol(\Delta)$.
For $f \in F \setminus \pol(\Gamma)$, we have $f \not\in \pol(\phi)$, for some $\phi \in \Gamma$,
so $f \not\in \pol(\Delta)$.
It follows that $\pol(\Delta) \cap F = \emptyset$.
Finally, $\Delta \reducesto \Gamma$ holds by
Theorem~\ref{thm:reductions}(\ref{red:express}) and (\ref{red:feasopt}).
\end{proof}

\begin{lemma}\label{lem:crispwnus}
Let $\Gamma$ be a valued constraint language of finite size.
If $\supp(\Gamma)$ does not satisfy the BWC,
then there is a crisp constraint language $\Delta$ of finite size 
such that $\pol(\Delta)$ does not satisfy the BWC, and $\Delta \reducesto \Gamma$.
\end{lemma}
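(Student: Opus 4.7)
The plan is to reduce this to Lemma~\ref{lem:killingf} by choosing an appropriately small finite set of operations to kill off. The key observation is that failing the BWC is a finitary phenomenon: by Definition~\ref{def:bwc}, since $\supp(\Gamma)$ does not satisfy the BWC, there must exist a single arity $m \geq 3$ such that $\supp(\Gamma)$ contains no $m$-ary WNU operation. I fix such an $m$ at the start.

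Next, I let $F$ be the set of \emph{all} $m$-ary WNU operations on the domain $D$ of $\Gamma$. Since $D$ is finite, there are only finitely many $m$-ary operations on $D$, so $F$ is automatically finite. By the choice of $m$, we have $\supp(\Gamma) \cap F = \emptyset$. Applying Lemma~\ref{lem:killingf} to $\Gamma$ and $F$ produces a crisp constraint language $\Delta$ with $\pol(\Delta) \cap F = \emptyset$ and $\Delta \reducesto \Gamma$; inspection of the construction in the proof of Lemma~\ref{lem:killingf} shows that $\Delta$ is built from finitely many relations $\opt(\phi_{I_f})$ (one per $f \in F$) together with the finitely many $\feas(\phi)$ for $\phi \in \Gamma$, so $\Delta$ is of finite size.

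To finish, observe that because $F$ was taken to contain \emph{every} $m$-ary WNU on $D$, the condition $\pol(\Delta) \cap F = \emptyset$ says precisely that $\pol(\Delta)$ contains no $m$-ary WNU operation. Hence $\pol(\Delta)$ fails the BWC at arity $m$, which is exactly what is required. I do not anticipate a significant obstacle here: once Lemma~\ref{lem:killingf} is in hand, the argument is essentially a matter of choosing $F$ correctly. The only points requiring care are that failing the BWC reduces to failure at a single arity (immediate from Definition~\ref{def:bwc}) and that the set of $m$-ary WNUs on a finite domain is finite (immediate since it is a subset of $D^{D^m}$).
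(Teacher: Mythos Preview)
The proposal is correct and follows essentially the same argument as the paper: fix an arity $m \geq 3$ at which $\supp(\Gamma)$ lacks a WNU, take $F$ to be the finite set of all $m$-ary WNUs on $D$, and invoke Lemma~\ref{lem:killingf}. Your added remarks on why $F$ and $\Delta$ are finite and why $\pol(\Delta)$ then fails the BWC at arity $m$ simply make explicit what the paper leaves implicit.
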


\begin{proof}
Since $\supp(\Gamma)$ does not satisfy the BWC, there exists an $m \geq 3$ such that
$\supp(\Gamma)$ does not contain any $m$-ary WNU.
Let $F$ be the (finite) set of all $m$-ary WNUs.
The result follows by applying Lemma~\ref{lem:killingf}
to $\Gamma$ and $F$.
\end{proof}

We now have everything that is needed to prove Theorem~\ref{thm:mainicalp2}.

\begin{proof}[Proof of Theorem~\ref{thm:mainicalp2}]
Suppose that $\supp(\Gamma)$ does not satisfy the BWC.
By Lemma~\ref{lem:crispwnus}, there exists a crisp constraint language $\Delta$ 
such that $\pol(\Delta)$ does not satisfy the BWC and $\Delta \reducesto \Gamma$.
Since $\mathcal{C}_{D} \subseteq \Gamma$, we may assume, without loss of generality, that
$\mathcal{C}_{D} \subseteq \Delta$.

By Theorem~\ref{thm:clarity}, there exists a finite non-trivial Abelian group $\mathcal{G}$
and an interpretation of $E_{\mathcal{G},3}$ in $\Delta$.
By Theorem~\ref{thm:eg3notbw}, $E_{\mathcal{G},3}$ does not have bounded valued relational width.
By Theorem~\ref{thm:reductions}(\ref{red:interpret}),
we have $E_{\mathcal{G},3} \reducesto \Delta \reducesto \Gamma$,
so $\Gamma$ does not have bounded valued relational width.
\end{proof}

\section{Reductions: Proof of Theorem~\ref{thm:reductions}}
\label{sec:reductions}

Theorem~\ref{thm:reductions} follows from
Lemmas~\ref{lem:expr-preserve}--\ref{lem:core-preserve} proved in this section.

For a valued constraint language $\Gamma$, let $\ar(\Gamma)$ denote $\max \{
\ar(\phi) \mid \phi \in \Gamma \}$.

It will sometimes be convenient to add null constraints to
a VCSP instance as placeholders, to ensure that they have a scope,
even if these relations may not necessarily be members of
the corresponding constraint language $\Gamma$.
In order to obtain an equivalent instance that is formally in
$\VCSP\Gamma$), the null constraints can simply be dropped,
as they are always satisfied and do not influence the value of the
objective function.

We extend the convention of denoting the set of variables in $\tup{x}_i$ by $X_i$
to tuples $\tup{y}_i$, $\tup{y}'_i$, and $\tup{v}$, whose sets are denoted by
$Y_i$, $Y'_i$, and $V_i$, respectively.

The following technical lemma is the basis for most of the reductions.

\begin{lemma}\label{lem:sa-reduc}
Let $\Delta$ and $\Delta'$ be valued constraint languages of finite size over domains $D$ and $D'$,
respectively.

Let $(I, i) \mapsto J_i$ be a map 
that to each instance $I$
of $\VCSP(\Delta)$ with variables $V$ and objective function $\sum_{i=1}^q \phi_i(\tup{x}_i)$,
and index $i \in [q]$,
associates an instance $J_i$ of $\VCSP(\Delta')$ with variables $Y_i$
and objective function $\phi_{J_i}$.
Let $J$ be the $\VCSP(\Delta')$ instance with variables $V' = \bigcup_{i=1}^q Y_i$ and
objective function $\sum_{i=1}^q \phi_{J_i}$.

Suppose that the following holds:
\begin{enumerate}[(a)]
\item\label{cond:jtoi}
For every satisfying and optimal assignment $\alpha$ of $J$, there exists a satisfying assignment $\sigma^{\alpha}$
of $I$ such that
\[
\vcspval(I,\sigma^{\alpha}) \leq \vcspval(J,\alpha).
\]
\end{enumerate}

Furthermore, suppose that for
$k \geq \ar(\Delta)$, and
any feasible solution $\lambda$ of the SA$(k,2k)$-relaxation of $I$,
the following properties hold:
\begin{enumerate}[(a)]
\setcounter{enumi}{1}
\item\label{cond:itoj}
For $i \in [q]$, and $\sigma \colon X_i \to D$ with positive support in $\lambda$,
there exists a satisfying assignment $\alpha^{\sigma}_i$ of $J_i$ such that
\[
\phi_i(\sigma(\tup{x}_i)) \geq \vcspval(J_i, \alpha^{\sigma}_i);
\]
\item\label{cond:consistent}
for $i, r \in [q]$, any $X \subseteq V$ with $X_i \cup X_r \subseteq X$, and $\sigma \colon X \to D$ with positive support in $\lambda$, 
\[
\alpha^{\sigma_i}_i|_{Y_i \cap Y_r} = 
\alpha^{\sigma_r}_r|_{Y_i \cap Y_r},
\]
where
$\sigma_i = \Crestrict{\sigma}{X_i}$ and $\sigma_r = \Crestrict{\sigma}{X_r}$.
\end{enumerate}

Then, $I \mapsto J$ is a many-one reduction from $\VCSP(\Delta)$ to $\VCSP(\Delta')$,
and for any $1 \leq k' \leq \ell'$, there exist $1 \leq k \leq \ell$ such that if
$I$ is a gap instance for SA$(k,\ell)$, then
$J$ is a gap instance for SA$(k',\ell')$.
In particular, the reduction preserves bounded valued relational width.
\end{lemma}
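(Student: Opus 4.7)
The plan has two parts. First, conditions \emph{(a)}, \emph{(b)}, and \emph{(c)} together force $\vcspopt(I) = \vcspopt(J)$, yielding the many-one reduction. Second, for suitably large $k, \ell$ depending on $k', \ell', \ar(\Delta), \ar(\Delta')$, I will convert any feasible SA$(k,\ell)$-solution of $I$ into a feasible SA$(k',\ell')$-solution of $J$ of no greater value; combined with the equality of optima, this shows that a gap instance of $I$ for SA$(k,\ell)$ yields a gap instance of $J$ for SA$(k',\ell')$.

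For $\vcspopt(I) = \vcspopt(J)$, the inequality $\vcspopt(I) \leq \vcspopt(J)$ is \emph{(a)} applied to an optimal satisfying assignment $\alpha$ of $J$. Conversely, given an optimal satisfying assignment $\sigma$ of $I$, I form the trivial SA$(k,2k)$-solution $\lambda$ of $I$ (for $k = \ar(\Delta)$) concentrated on $\Crestrict{\sigma}{X_i}$ for each~$i$. Condition \emph{(b)} then produces satisfying assignments $\alpha^{\sigma_i}_i$ of each $J_i$ with cost at most $\phi_i(\sigma(\tup{x}_i))$, and \emph{(c)} makes them agree on shared variables so they glue to a single satisfying assignment of $J$ of value at most $\vcspval(I,\sigma)$. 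The case in which $I$ is infeasible is handled by the contrapositive of \emph{(a)}.

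For the gap-preserving LP conversion, set $L = \max(\ell', \ar(\Delta'))$, $k = (1+L)\,\ar(\Delta)$, and $\ell = 2k$, and fix once and for all a choice $i(y) \in [q]$ with $y \in Y_{i(y)}$ for every $y \in V'$. Given a feasible SA$(k,\ell)$-solution $\lambda$ of $I$, for each constraint $\psi_j$ of $J$ (original or added null) with scope $Y_j$, let $X_j^* = \bigcup_{y \in Y_j} X_{i(y)}$, enlarged to include $X_i$ whenever $\psi_j$ belongs to $J_i$. Then $|X_j^*| \leq k$, so the marginal $\lambda_{X_j^*}$ is unambiguous from the coverage and marginal-consistency axioms of SA$(k,\ell)$. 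For each $\sigma \in \supp(\lambda_{X_j^*})$, define $\alpha^\sigma(y) = \alpha^{\sigma|_{X_{i(y)}}}_{i(y)}(y)$ for $y \in Y_j$ (this lies in $D'$ by \emph{(b)}, and is independent of the choice of $i(y)$ by \emph{(c)}), and set $\mu_j(\tau) = \Pr_{\sigma \sim \lambda_{X_j^*}}\!\left[\alpha^\sigma|_{Y_j} = \tau\right]$.

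Feasibility of $\mu$ is checked as follows: non-negativity and unit sum are immediate; $(\ref{sa:infeas})$ holds because each $\alpha^{\sigma|_{X_{i(y)}}}_{i(y)}$ is a satisfying assignment of $J_{i(y)}$ by \emph{(b)}; and the marginal condition $(\ref{sa:marginal})$ for $Y_{j'} \subseteq Y_j$ with $|Y_{j'}| \leq k'$ follows from the corresponding identity for $\lambda$ together with the observation that $\alpha^\sigma|_{Y_{j'}}$ depends only on $\sigma|_{X_{j'}^*}$. For the value bound, when $\psi_j$ belongs to $J_i$, condition \emph{(c)} combined with $X_i \subseteq X_j^*$ lets me rewrite $\mu_j(\tau) = \Pr_{\rho \sim \lambda_i}[\alpha^\rho_i|_{Y_j} = \tau]$; summing over $j$ in $J_i$ gives $\E_{\rho \sim \lambda_i}\, \vcspval(J_i, \alpha^\rho_i) \leq \E_{\rho \sim \lambda_i}\, \phi_i(\rho(\tup{x}_i))$ by \emph{(b)}, and summing over $i$ (with null constraints contributing zero) yields $\lpval{J}{\mu}{k'}{\ell'} \leq \lpval{I}{\lambda}{k}{\ell}$, hence $\lpopt{J}{k'}{\ell'} \leq \lpopt{I}{k}{\ell}$. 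The main obstacle will be the careful bookkeeping in the marginal-consistency verification of $\mu$ and in the rewriting from $\alpha^\sigma$ to $\alpha^\rho_i$ via \emph{(c)}: one must check that these manipulations commute with averaging under $\lambda_{X_j^*}$, and that the many auxiliary choices (namely the indices $i(y)$ and the particular sets $X_j^*$) can be made freely without altering the outcome.
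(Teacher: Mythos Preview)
Your proposal is correct and follows essentially the same route as the paper's proof: both halves (equality of optima via (a) together with the deterministic SA solution built from an optimal $\sigma$, and the LP-to-LP conversion obtained by pushing $\lambda$ forward along the maps $\sigma\mapsto\alpha^{\sigma}_i$) match, with only the organizational difference that you fix specific indices $i(y)$ and sets $X_j^*$ where the paper instead proves a choice-independence claim for its $\mu^i_j$. One small slip to fix: your stated justification of (\ref{sa:infeas}) is not quite right, since a non-null $\psi_j$ lies in some $J_i$ that need not be any $J_{i(y)}$; the correct argument is precisely the rewriting $\mu_j(\tau)=\Pr_{\rho\sim\lambda_i}\bigl[\alpha^{\rho}_i|_{Y_j}=\tau\bigr]$ that you already derive for the value bound (using your enlargement $X_i\subseteq X_j^*$ together with condition~(c)), after which (\ref{sa:infeas}) follows because $\alpha^{\rho}_i$ is a satisfying assignment of $J_i$.
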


\begin{proof}
First, we show that $\vcspopt(I) = \vcspopt(J)$.
From condition (\ref{cond:jtoi}), if $J$ is satisfiable, then so is $I$ and 
$\vcspopt(I) \leq \vcspopt(J)$.
Conversely, if $I$ is satisfiable, and $\sigma$ is an optimal assignment to $I$, then
the SA$(k,2k)$ 
solution $\lambda$ that assigns probability $1$ to $\Crestrict{\sigma}{X}$ for every
$X \subseteq V$ with $|X| \leq 2k$ is feasible.
Let $\sigma_i = \Crestrict{\sigma}{X_i}$.
By (\ref{cond:itoj}), there exist 
satisfying assignments $\alpha_i^{\sigma_i}$ of $J_i$, for all $i \in [q]$, such that
$\vcspopt(I) \geq \lpopt{I}{k}{2k} \geq \sum_{i \in [q]} \vcspval(J_i,\alpha_i^{\sigma_i})$.
Define an assignment $\alpha \colon V' \to D'$ by letting
$\alpha(y) = \alpha_i^{\sigma_i}(y)$ for an arbitrary $i$ such that $y \in Y_i$.
We claim that $\Crestrict{\alpha}{Y_i} = \alpha_i^{\sigma_i}$, for all $i\in
[q]$. From this it follows that $\alpha$ is a satisfying assignment to $J$ such
that
$\sum_{i \in [q]} \vcspval(J_i,\alpha_i^{\sigma_i}) = \vcspval(J,\alpha) \geq \vcspopt(J)$,
and hence that $\vcspopt(I) \geq \vcspopt(J)$.
Indeed, let $y \in V'$ and assume that $y \in Y_{i}$ and $y \in Y_{r}$.
Let $X = X_i \cup X_r$.
Then, since $\lambda(\Crestrict{\sigma}{X}) = 1$, 
it follows from (\ref{cond:consistent}) that $\alpha_i^{\sigma_i}(y) = \alpha_r^{\sigma_r}(y)$.

Let $1 \leq k' \leq \ell'$ be arbitrary, and let $k = \max \{ \ell', \ar(\Delta') \} \cdot \ar(\Delta)$, $\ell = 2k$.
Assume that $I$ is a gap instance for the SA$(k,\ell)$-relaxation of $\VCSP(\Delta)$,
and let $\lambda$ be a feasible solution
such that
$\lpval{I}{\lambda}{k}{\ell} < \vcspopt(I)$ (where $\vcspopt(I)$ may be $\infty$, i.e.\ $I$ may be unsatisfiable).
We show that there is a feasible solution $\kappa$ to the SA$(k',\ell')$-relaxation of $J$ such that
$\lpval{J}{\kappa}{k'}{\ell'} \leq \lpval{I}{\lambda}{k}{\ell}$.
Then, by condition (\ref{cond:jtoi}),
we have $\lpval{J}{\kappa}{k'}{\ell'} \leq \lpval{I}{\lambda}{k}{\ell} < \vcspopt(I) \leq \vcspopt(J)$,
so $J$ is a gap instance for the SA$(k',\ell')$-relaxation of $\VCSP(\Delta')$.
Since $k'$ and $\ell'$ were chosen arbitrarily, the result then follows.

To this end,
augment $I$ with null constraints on $X_{q+1}, \dots, X_{q'}$ so that
for every at most $\ell$-subset $X \subseteq V$, there exists an $i \in [q']$
such that $X_i = X$.
Rewrite the objective function of $J$ as $\sum_{j=1}^{p} \phi'_j(\tup{y}'_j)$,
$\phi' \in \Delta'$,
where,
by possibly first adding extra null constraints to $J$,
we will assume that for every at most $\ell'$-subset $Y \subseteq V'$, there exists a $j \in [p]$
such that $Y'_j = Y$.
For each $i \in [q]$, let $C_i$ be the set of indices $j \in [p]$ corresponding to the weighted constraints in the instance $J_i$.

For $X \subseteq V$, define $Y_X = \bigcup_{i \in [q] : X_i \subseteq X} Y_i$.
For $i \in [q'] \setminus [q]$, let $J_i$ be an instance on the variables $Y_{X_i}$ containing
a single null constraint on the variables.
For $\sigma \in \supp(\lambda_i)$,
and any $r, s \in [q]$ such that $X_r \cup X_s \subseteq X_i$ and $y \in Y_r \cap Y_s$,
by (\ref{cond:consistent}),
it holds that $\alpha^{\sigma_r}_r(y) = \alpha^{\sigma_s}_s(y)$.
Therefore, we can uniquely define $\alpha^{\sigma}_i \colon Y_{X_i} \to D'$ 
for $i \in [q'] \setminus [q]$ by letting
$\alpha^{\sigma}_i(y) = \alpha^{\sigma_r}_r(y)$ for any choice of $r \in [q]$ with
$X_r \subseteq X_i$ and $y \in Y_r$.
Furthermore, this definition is consistent with $\alpha^\sigma_i$ for $i \in [q]$ in the sense
that $(\ref{cond:consistent})$ now holds for all $i, r \in [q']$.

For $m \geq 1$, let 
$X_{(\leq m)} = \{ X = \bigcup_{i \in S} X_i \mid S \subseteq [q], \left| X \right| \leq m \}$, 
and for $Y \subseteq V'$ with $\left|Y\right| \leq \ell'$, let
$X_{(\leq m)}(Y) = \{ X \in X_{(\leq m)} \mid Y \subseteq Y_X \}$.

Let $j \in [p]$ be arbitrary and let $X = \bigcup_{i \in S} X_i \in X_{(\leq n)}(Y'_j)$,
for some $S \subseteq [q]$, where $n=|V|$.
For each $y \in Y'_j$, let $i(y) \in S$ be an index such that $y \in Y_{i(y)}$
and let $X' = \bigcup_{y \in Y'_j} X_{i(y)}$.
Then, $Y'_j \subseteq Y_{X'}$, $X' \subseteq X$,
and $\left|X'\right| \leq \max\{\ell',\ar(\Delta')\} \cdot \ar(\Delta) \leq k$,
so $X' \in X_{(\leq k)}(Y'_j)$.

In other words,
\begin{equation}\label{eq:thindown}
\text{for all } X \in X_{(\leq n)}(Y'_j), \text{there exists } i\in[q'] \text{ such that } X_i \subseteq X \text{ and } X_i \in X_{(\leq k)}(Y'_j).
\end{equation}
In particular (\ref{eq:thindown}) shows that for every $j$ there exists
$i\in[q']$ such that $X_i \in X_{(\leq \ell)}(Y'_j)$,
since $\bigcup_{i \in [q]} X_i \in X_{(\leq n)}(Y'_j)$ for all $j$.

For $j \in [p]$, $\alpha \colon Y'_j \to D'$, and an $i \in [q']$ such that $X_i \in X_{(\leq \ell)}(Y'_j)$, define
\begin{equation}\label{eq:kappadef}
\mu^i_j(\alpha) = \sum_{\substack{\sigma \in \supp(\lambda_i)\\ 
\alpha^{\sigma}_{i}|_{Y'_j} = \alpha}} \lambda_i(\sigma).
\end{equation}

\medskip
\noindent
{\bf Claim:}
Definition (\ref{eq:kappadef}) is independent of the choice of $X_i \in X_{(\leq \ell)}(Y'_j)$.
\medskip

First, we prove this equality for $X_r \subseteq X_i$ with $X_r \in X_{(\leq k)}(Y'_j)$ and $X_i \in X_{(\leq \ell)}(Y'_j)$.

We have,
\[
\mu^{r}_j(\alpha)
=
\sum_{\substack{\tau \in \supp(\lambda_r)\\ \alpha^{\tau}_{r}|_{Y'_j} = \alpha}} \enspace
\sum_{\substack{\sigma \in \supp(\lambda_i)\\ \Crestrict{\sigma}{X_r} = \tau}} \lambda_{i}(\sigma)
=
\sum_{\substack{\sigma \in \supp(\lambda_i) \\ \alpha^{\sigma_r}_r|_{Y'_j} = \alpha}} \lambda_i(\sigma)
=
\sum_{\substack{\sigma \in \supp(\lambda_i)\\ \alpha^{\sigma}_{i}|_{Y'_j} = \alpha}} \lambda_{i}(\sigma)
=
\mu^{i}_j(\alpha),
\]
where the first equality follows by (\ref{eq:kappadef}) and (\ref{sa:marginal}) for $\lambda$ since $\left|X_r\right| \leq k$,
and the second equality follows by interchanging the order of summation and noting that
$\sigma \in \supp(\lambda_i)$ implies that $\sigma_r = \Crestrict{\sigma}{X_r} \in \supp(\lambda_r)$, again
by (\ref{sa:marginal}) for $\lambda$.
The third equality follows by (\ref{cond:consistent}) extended to $i, r \in [q']$.

Next, let $X_r \in X_{(\leq k)}(Y'_j)$ and $X_i \in X_{(\leq \ell)}(Y'_j)$ be arbitrary.
From (\ref{eq:thindown}), it follows that $X_i$ contains a subset $X_s \in X_{(\leq k)}(Y'_j)$.
Since $\left| X_r \cup X_s \right| \leq 2k = \ell$,
there exists an index $u$ such that $X_{u} = X_r \cup X_s$.
The claim now follows by a repeated application of
the first case: $\mu^{r}_j = \mu^{u}_j = \mu^{s}_j = \mu^{i}_j$.

By the claim, we can pick an arbitrary $X_i \in X_{(\leq \ell)}(Y'_j)$ and
uniquely define $\kappa_j = \mu^i_j$.
We now show that this definition of $\kappa$ 
satisfies the equations (\ref{sa:marginal})--(\ref{sa:nonnegative}).

\begin{itemize}

\item
To verify that the equations (\ref{sa:marginal}) hold, let $s, j \in [p]$ be such that $Y'_s \subseteq Y'_j$, and $\beta \colon Y'_s \to D'$.
Let $X_i \in X_{(\leq k)}(Y'_j)$.
We now have:
\[
\sum_{\substack{\alpha \colon Y'_j \to D'\\ \Crestrict{\alpha}{Y'_s} = \beta}} \kappa_j(\alpha)
=\sum_{\substack{\alpha \colon Y'_j \to D'\\ \Crestrict{\alpha}{Y'_s} = \beta}} \mu_j^i(\alpha)
=\mu_s^i(\beta) = \kappa_s(\beta),
\]
where the second equality follows from $Y'_s \subseteq Y'_j$ and a rearrangement of terms,
and the last equality follows from the claim since $Y'_s \subseteq Y'_j \subseteq  Y_{X_i}$,
so $X_i \in X_{(\leq k)}(Y'_s)$.

\item
To verify that the equations (\ref{sa:sum1}) hold
let $Y'_j=\{y\}$ be a singleton and let $X_i\in X_{(\leq
\ell)}(Y'_j)$. We have:
\[
\sum_{\alpha \colon Y'_j \to D'} \kappa_j(\alpha)
=
\sum_{\alpha \colon Y'_j \to D'} 
\sum_{\substack{\sigma \in \supp(\lambda_i)\\ \alpha^{\sigma}_{i}|_{Y'_j} = \alpha}} \lambda_i(\sigma)
=
\sum_{\sigma \in \supp(\lambda_i)} \lambda_i(\sigma)
=
1,
\]
where the last equality follows from~(\ref{sa:sum1}) for $\lambda_i$.

\item
The equations (\ref{sa:infeas}) hold trivially if $\phi'_j$ is a null constraint.
Otherwise, $j \in C_i$ for some $i \in [q]$.
This implies that $X_i \in X_{(\leq k)}(Y'_j)$, and by the claim that $\kappa_j = \mu_j^i$.
Then, $\alpha \in \supp(\kappa_j)$ implies that there is a $\sigma \in \supp(\lambda_i)$
such that $\alpha^{\sigma}_{i}|_{Y'_j} = \alpha$.
By condition (\ref{cond:itoj}) and equation (\ref{sa:infeas}) for $\lambda_i$,
the tuple $\alpha^{\sigma}_{i}(\tup{y}'_j) \in \feas(\phi'_j)$,
so $\kappa_j$ satisfies (\ref{sa:infeas}).

\item $\kappa_j=\mu^i_j$ is defined as a sum of $\lambda$'s, which are nonnegative
by~(\ref{sa:nonnegative}), and thus also satisfies~(\ref{sa:nonnegative}).
\end{itemize}
We conclude that $\kappa$ is a feasible solution to the SA$(k',l')$-relaxation of $J$.

Let $i \in [q]$ and note that by the claim, for every $j \in C_i$, we have $\kappa_j = \mu_j^i$.
Therefore,
\begin{equation}
\begin{aligned}
\sum_{j \in C_i} \sum_{\alpha \in \feas(\phi'_{j})} \kappa_{j}(\alpha) \phi'_j(\alpha(\tup{y}'_j)) &=
\sum_{j \in C_i} \sum_{\alpha \in \feas(\phi'_{j})} \sum_{\substack{\sigma \in \supp(\lambda_i)\\ \alpha^{\sigma}_{i}|_{Y'_j} = \alpha}} \lambda_{i}(\sigma) \phi'_j(\alpha(\tup{y}'_j))\\
&=\sum_{\sigma \in \supp(\lambda_i)} \lambda_{i}(\sigma)
\sum_{j \in C_i}
\sum_{\substack{\alpha \in \feas(\phi'_j)\\ \alpha^{\sigma}_{i}|_{Y'_j} = \alpha}} \phi'_j(\alpha(\tup{y}'_j))\\
&=
\sum_{\sigma \in \supp(\lambda_i)} \lambda_{i}(\sigma)
\sum_{j \in C_i}
\phi'_j(\alpha^{\sigma}_{i}(\tup{y}'_j)) \leq
\sum_{\sigma \in \supp(\lambda_i)} \lambda_{i}(\sigma) \phi_i(\sigma),
\label{eq:objfn}
\end{aligned}
\end{equation}
where the inequality follows from assumption (\ref{cond:itoj}).
Summing inequality (\ref{eq:objfn}) over $i \in [q]$ shows that 
$\lpval{(J}{\kappa}{k'}{\ell'} \leq \lpval{I}{\lambda}{k}{\ell}$ and the lemma follows.
\end{proof}

\begin{lemma}\label{lem:expr-preserve}
Let $\Gamma$ be a valued constraint language of finite size
and let $\phi$ be a weighted relation
expressible over $\Gamma$.
Then, $\Gamma \cup \{ \phi \} \reducesto \Gamma$.
\end{lemma}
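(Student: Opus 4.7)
The plan is to apply Lemma~\ref{lem:sa-reduc} with $\Delta = \Gamma \cup \{\phi\}$ and $\Delta' = \Gamma$. Since $\phi$ is expressible over $\Gamma$, I would fix once and for all an instance $I_\phi$ of $\VCSP(\Gamma)$ with variables $x_1,\dots,x_m,v_1,\dots,v_p$ (where $m=\ar(\phi)$) satisfying
\[
\phi(x_1,\dots,x_m) = \min_{v_1,\dots,v_p} \phi_{I_\phi}(x_1,\dots,x_m,v_1,\dots,v_p).
\]
Given an instance $I$ of $\VCSP(\Gamma \cup \{\phi\})$ with constraints $\phi_i(\tup{x}_i)$, $i\in[q]$, I would define, for each $i$, an instance $J_i$ of $\VCSP(\Gamma)$ as follows: if $\phi_i\in\Gamma$, let $J_i$ contain the single constraint $\phi_i(\tup{x}_i)$ on the variables $Y_i=X_i$; if $\phi_i=\phi$, let $J_i$ be a copy of $I_\phi$ in which each $x_j$ is identified with the $j$th entry of $\tup{x}_i$ and each $v_j$ is renamed to a fresh variable $v^i_j$, where the fresh variables across different indices $i$ are chosen pairwise disjoint and disjoint from $V$. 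Let $J$ be the sum of the $\phi_{J_i}$.

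For condition (\ref{cond:jtoi}), given a satisfying optimal assignment $\alpha$ of $J$, I would set $\sigma^\alpha = \alpha|_V$. For indices $i$ with $\phi_i\in\Gamma$, equality holds: $\phi_i(\sigma^\alpha(\tup{x}_i)) = \phi_{J_i}(\alpha|_{Y_i})$. For indices $i$ with $\phi_i=\phi$, expressibility yields
\[
\phi(\sigma^\alpha(\tup{x}_i)) \leq \phi_{I_\phi}(\sigma^\alpha(\tup{x}_i),\alpha(v^i_1),\dots,\alpha(v^i_p)) = \phi_{J_i}(\alpha|_{Y_i}),
\]
and summing gives $\vcspval(I,\sigma^\alpha)\leq \vcspval(J,\alpha)$. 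For condition (\ref{cond:itoj}), let $\lambda$ be any feasible solution to SA$(k,2k)$ of $I$ with $k\geq \ar(\Gamma\cup\{\phi\})$, and let $\sigma\colon X_i\to D$ satisfy $\lambda_i(\sigma)>0$; then (\ref{sa:infeas}) forces $\phi_i(\sigma(\tup{x}_i))<\infty$. When $\phi_i\in\Gamma$, I take $\alpha^\sigma_i=\sigma$. When $\phi_i=\phi$, the finite minimum defining $\phi(\sigma(\tup{x}_i))$ is attained (since $D$ is finite) at some $v^*_1,\dots,v^*_p$; define $\alpha^\sigma_i|_{X_i}=\sigma$ and $\alpha^\sigma_i(v^i_j)=v^*_j$. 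In either case $\alpha^\sigma_i$ is satisfying and $\vcspval(J_i,\alpha^\sigma_i)\leq \phi_i(\sigma(\tup{x}_i))$.

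For condition (\ref{cond:consistent}), the case $i=r$ is trivial, and by the disjointness of fresh variables across distinct gadgets we have $Y_i\cap Y_r \subseteq X_i\cap X_r$ when $i\neq r$. For any $\sigma\colon X\to D$ in the support of $\lambda$ with $X\supseteq X_i\cup X_r$, both $\alpha^{\sigma_i}_i$ and $\alpha^{\sigma_r}_r$ restrict to $\sigma$ on $X_i\cap X_r$, so they agree on $Y_i\cap Y_r$. An application of Lemma~\ref{lem:sa-reduc} then yields the required reduction preserving bounded valued relational width. The whole argument is a routine gadget-replacement; the only point requiring care is the disjointness of the fresh auxiliary variables across distinct occurrences of $\phi$, which is what makes condition (\ref{cond:consistent}) collapse to a trivial identity on the shared original variables.
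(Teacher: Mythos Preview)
Your proposal is correct and follows essentially the same approach as the paper: replace each occurrence of $\phi$ by a fresh copy of its defining $\VCSP(\Gamma)$ instance, then verify conditions~(\ref{cond:jtoi})--(\ref{cond:consistent}) of Lemma~\ref{lem:sa-reduc} exactly as you do, using disjointness of the auxiliary variables to reduce~(\ref{cond:consistent}) to agreement on $X_i\cap X_r$. The paper's argument is identical in structure and detail; the only cosmetic difference is that it records equality in~(\ref{cond:itoj}) rather than the inequality you state, but either suffices.
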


\begin{proof}
Let $I$ be an instance of $\VCSP(\Gamma \cup \{ \phi \})$
with variables $V = \{x_1,\dots,x_n\}$ and objective function
$\phi_{I}(x_1, \dots, x_n) = \sum_{i=1}^{q} \phi_i(\tup{x}_i)$,
where $\phi_i \in \Gamma \cup \{ \phi \}$ and $\tup{x}_i$ is such that $X_i \subseteq V$.
Let $I'$ be an instance of $\VCSP(\Gamma)$ such that 
$\phi(x_1, \dots, x_m) = \min_{v_i \in D}\phi_{I'}(x_1, \dots, x_m, v_1, \dots, v_p)$.

For $i \in [q]$ such that $\phi_i \in \Gamma$, let
$J_i$ be the instance on variables $Y_i = X_i$ with $\phi_{J_i}(Y_i) = \phi_i(\tup{x}_i)$.
For $i \in [q]$ such that $\phi_i = \phi$, let
$\tup{v}_i$ be a copy of the variables $v_1, \dots, v_p$,
and let
$J_i$ be the instance on variables $Y_i = X_i \cup V_i$ with
objective function $\phi_{J_i}(Y_i) = \phi_{I'}(\tup{x}_i, \tup{v}_i)$.
Let $J$ be the $\VCSP(\Gamma)$ instance with variables $\bigcup_i Y_i$
and objective function $\sum_i \phi_{J_i}$.

We verify properties $(a)$--$(c)$ of Lemma~\ref{lem:sa-reduc}.

\medskip
($a$)
Let $\alpha$ be any satisfying assignment of $J$ and define $\sigma^\alpha = \Crestrict{\alpha}{V}$.
For $i \in [q]$ such that $\phi_i \in \Gamma$, we have $\phi_i(\sigma^{\alpha}(\tup{x}_i)) = \phi_{J_i}(\alpha(Y_i)) < \infty$.
For $i \in [q]$ such that $\phi_i = \phi$, we have $\phi_i(\sigma^{\alpha}(\tup{x}_i)) \leq 
\phi_{J_i}(\alpha(Y_i)) = \phi_{I'}(\alpha(\tup{x}_i, \tup{v}_i)) < \infty$.
Summing over all $i \in [q]$ gives $\vcspval(I,\sigma^\alpha) \leq \vcspval(J,\alpha) < \infty$.

\medskip
Let $k \geq \ar(\Gamma \cup \{\phi\})$ and suppose that $\lambda$ is a feasible solution to the SA$(k,2k)$-relaxation of $I$.
For all $i \in [q]$ and $\sigma \colon X_i \to D$ with positive support in $\lambda$,
define $\alpha^{\sigma}_i$ as follows.
If $\phi_i \in \Gamma$, then define $\alpha^{\sigma}_i = \sigma$.
Otherwise, $\phi_i = \phi$.
Let
$\gamma^{\sigma}_{i} \colon V_i \to D$ be any assignment such that
$\phi_i(\sigma(\tup{x}_i)) =
\phi_J(\sigma(\tup{x}_i), \gamma^{\sigma}_{i}(\tup{v}_j))$,
and define
$\alpha^{\sigma}_i \colon X_i \cup V_i \to D$ by
$\alpha^{\sigma}_i = \sigma \cup \gamma^{\sigma}_i$.

\medskip
($b$)
For all $i \in [q]$,
$\vcspval(J_i, \alpha^{\sigma}_i) =
\phi_{J_i}(\alpha^{\sigma}_i(Y_i)) =
\phi_i(\sigma(\tup{x}_i)) < \infty$, 
where the equalities hold by construction,
and the inequality follows from the feasibility of $\lambda$.

\medskip
($c$)
Let $i, r \in [q]$ and $X \subseteq V$ be as in the lemma and
suppose that $\sigma \colon X \to D$ has positive support in $\lambda$.
If $i = r$, then there is nothing to show.
Otherwise, $Y_i \cap Y_r = X_i \cap X_r$, so
$\alpha^{\sigma_i}_i|_{Y_i \cap Y_r} = 
\sigma_i|_{X_i \cap X_r} = \sigma|_{X_i \cap X_r} = \sigma_r|_{X_i \cap X_r} =
\alpha^{\sigma_r}_r|_{Y_i \cap Y_r}$.

It follows that Lemma~\ref{lem:sa-reduc} is applicable, so
$\Gamma \cup \{\phi\} \reducesto \Gamma$.
\end{proof}

\begin{lemma}\label{lem:eq-preserve}
Let $\Gamma$ be a valued constraint language of finite size over domain $D$.
Then, $\Gamma \cup \{ \eq{D} \} \reducesto \Gamma$.
\end{lemma}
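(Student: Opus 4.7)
The plan is to apply Lemma~\ref{lem:sa-reduc} with $\Delta = \Gamma \cup \{\eq{D}\}$ and $\Delta' = \Gamma$. Given an instance $I$ of $\VCSP(\Gamma \cup \{\eq{D}\})$ with variable set $V$, I let $\sim$ denote the equivalence relation on $V$ generated by the pairs appearing in the equality constraints of $I$, and fix once and for all a representative $\rho(x)$ of every $\sim$-class. The gadget associated to the $i$-th constraint $\phi_i(\tup{x}_i)$ of $I$ is defined as follows: if $\phi_i \in \Gamma$, I set $Y_i = \rho(X_i)$ and let $J_i$ consist of the single constraint $\phi_i(\rho(\tup{x}_i))$; if $\phi_i = \eq{D}$ with $X_i = \{x,y\}$, then $\rho(x) = \rho(y)$, so I set $Y_i = \{\rho(x)\}$ with a single null constraint. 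Each $J_i$ is an instance of $\VCSP(\Gamma)$.

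Condition~(a) of Lemma~\ref{lem:sa-reduc} is immediate: any satisfying assignment $\alpha$ of $J$ lifts to $\sigma^{\alpha}(x) := \alpha(\rho(x))$ for $I$, which automatically satisfies every equality constraint (since $x \sim y$ forces $\rho(x) = \rho(y)$) and achieves the same objective value. Conditions~(b) and~(c) both reduce to the following key observation: for any $k \geq \max(2, \ar(\Gamma \cup \{\eq{D}\}))$, any feasible solution $\lambda$ to the SA$(k,2k)$-relaxation of $I$, any scope $X \subseteq V$ with $|X| \leq 2k$, and any $x, y \in X$ with $x \sim y$, the distribution $\lambda_X$ is supported on $\{\sigma : \sigma(x) = \sigma(y)\}$.

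The main obstacle is proving this observation, since the chain witnessing $x \sim y$ may have length up to $|V|-1$, whereas the SA level is a fixed constant. The resolution is a chain propagation argument that already works at SA$(2,3)$. Given a chain $x = z_0, z_1, \ldots, z_m = y$ in which each $\eq{D}(z_j, z_{j+1})$ is a constraint of $I$, I will show by induction on $j$ that $\lambda_{\{z_0, z_j\}}$ is supported on the diagonal. The base case $j=1$ is (\ref{sa:infeas}) applied to $\eq{D}(z_0, z_1)$. For the inductive step, the triple scope $\{z_0, z_j, z_{j+1}\}$ is present as a (possibly null) constraint by the SA convention since $2k \geq 3$; its marginals onto $\{z_0, z_j\}$ and $\{z_j, z_{j+1}\}$ are supported on the diagonal by the inductive hypothesis and by (\ref{sa:infeas}) respectively, so via (\ref{sa:marginal}) the entire triple distribution is supported on $\sigma(z_0) = \sigma(z_j) = \sigma(z_{j+1})$, and the $\{z_0, z_{j+1}\}$-marginal completes the step. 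The observation for a general $X$ with $|X| \leq 2k$ then follows from one more application of (\ref{sa:marginal}).

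Equipped with this observation, condition~(b) is handled by setting $\alpha_i^{\sigma}(\rho(x)) := \sigma(x)$ for $x \in X_i$: well-definedness is ensured by the observation applied with $X = X_i$, and the value equals $\phi_i(\sigma(\tup{x}_i))$ in the non-equality case and $0 = \phi_i(\sigma(x), \sigma(y))$ in the equality case. For~(c), if $x \in X_i$ and $y \in X_r$ satisfy $\rho(x) = \rho(y)$ (equivalently $x \sim y$), then the observation with $X = X_i \cup X_r$, of size at most $2\ar(\Gamma \cup \{\eq{D}\}) \leq 2k$, forces $\sigma(x) = \sigma(y)$, so $\alpha_i^{\sigma_i}$ and $\alpha_r^{\sigma_r}$ agree on the overlap $Y_i \cap Y_r$. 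The conclusion $\Gamma \cup \{\eq{D}\} \reducesto \Gamma$ then follows directly from Lemma~\ref{lem:sa-reduc}.
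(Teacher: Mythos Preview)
Your proposal is correct and follows essentially the same approach as the paper: both quotient the variable set by the equivalence relation generated by the equality constraints, define the gadgets $J_i$ via this quotient, and verify conditions (a)--(c) of Lemma~\ref{lem:sa-reduc} using a chain-propagation argument at level SA$(2,3)$ to show that any $\sigma$ with positive support must be constant on $\sim$-classes. The only cosmetic difference is that the paper uses the equivalence classes themselves as the new variables of $J$, whereas you use fixed representatives $\rho(x) \in V$; and one small wording slip in your treatment of~(c): you should apply your observation to the \emph{given} $X \supseteq X_i \cup X_r$ (which automatically has $|X| \leq 2k$), not to $X_i \cup X_r$, but the argument goes through unchanged.
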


\begin{proof}
Let $I$ be an instance of $\VCSP(\Gamma \cup \{ \eq{D} \})$ with variables $V = \{x_1, \dots, x_n\}$ and
objective function $\phi_{I}(x_1, \dots, x_n) = \sum_{i=1}^{q} \phi_i(\tup{x}_i)$,
where $\phi_i \in \Gamma \cup \{ \eq{D} \}$ and $\tup{x}_i$ is such that $X_i \subseteq V$.
Define the undirected graph $G = (V,E)$, where
$E$ contains an edge between $u$ and $v$
if, and only if, there is a constraint $\eq{D}(u,v)$ in $I$.
Let $\sim$ be the equivalence relation on
$V$ defined by $u \sim v$ if $u$ and $v$ are in the same connected component of $G$.
For $v \in V$, let $\tilde{v}$ denote the equivalence class of $\sim$
containing $v$.
For a tuple of variables $\tup{x} = (v_1, \dots, v_m) \in V^m$, define $\tilde{\tup{x}} = (\tilde{v}_1, \dots, \tilde{v}_m)$.

For $i \in [q]$, 
let $\tup{y}_i = \tilde{\tup{x}}_i$ and let
$J_i$ be an instance on variables $Y_i$.
If $\phi_i \in \Gamma$‚ then let the objective function be $\phi_{J_i}(Y_i) = \phi_i(\tup{y}_i)$.
Otherwise, let $\phi_{J_i}$ be a null-constraint on $Y_i$.
Let $J$ be the $\VCSP(\Gamma)$ instance with variables $\bigcup_i Y_i$ and objective
function $\sum_i \phi_{J_i}$.

We verify properties $(a)$--$(c)$ of Lemma~\ref{lem:sa-reduc}.

\medskip
($a$)
Let $\alpha$ be satisfying assignment of $J$ and define $\sigma^\alpha(v) = \alpha(\tilde{v})$
for all $v \in V$.
It is clear that 
$\phi_i(\sigma^{\alpha}(\tup{x}_i)) = \phi_{J_i}(\alpha(Y_i))$ for all $i \in [q]$.
Summing over all $i$ gives $\vcspval(I,\sigma^\alpha) = \vcspval(J,\alpha)$.

\medskip
Let $k \geq \ar(\Gamma \cup \{\eq{D}\}) \geq 2$ and suppose that $\lambda$ is a feasible solution to the SA$(k,2k)$-relaxation of $I$.
We claim that,
for all $i \in [q]$ and $\sigma \colon X_i \to D$ with positive support in $\lambda$,
\begin{equation}\label{eq:sigma}
u \sim v \implies \sigma(u) = \sigma(v).
\end{equation}

For $\tilde{v} \in Y_i$, let $\alpha^{\sigma}_i(\tilde{v}) = \sigma(u)$ for some $u \in \tilde{v} \cap X_i$.
By the claim, the definition of $\alpha^{\sigma}_i$ is actually independent of the choice of $u \in \tilde{v} \cap X_i$.
The justification of the claim follows at the end of the proof.

\medskip
($b$)
For all $i \in [q]$,
$\vcspval(J_i, \alpha^{\sigma}_i) =
\phi_{J_i}(\alpha^{\sigma}_i(Y_i)) =
\phi_i(\sigma(\tup{x}_i)) < \infty$, 
where the second equality holds by (\ref{eq:sigma}),
and the inequality follows from the feasibility of $\lambda$.

\medskip
($c$)
Let $i, r \in [q]$ and $X \subseteq V$ be as in the lemma and
suppose that $\sigma \colon X \to D$ has positive support in $\lambda$.
Let $\tilde{v} \in Y_i \cap Y_r$, let $v_1 \in \tilde{v} \cap X_i$ and $v_2 \in \tilde{v} \cap X_r$.
By (\ref{eq:sigma}), $\sigma(v_1) = \sigma(v_2)$, so
$\alpha^{\sigma_i}_i(\tilde{v}) = \sigma_i(v_1) = \sigma(v_1) = \sigma(v_2) = \sigma_r(v_2) = \alpha^{\sigma_r}_r(\tilde{v})$.

\medskip
It remains to prove that (\ref{eq:sigma}) holds for all
$\sigma \colon X_i \to D$ with positive support in $\lambda$.
The proof is by induction over the length of a shortest path,
$u = u_0, \dots, u_d = v$, between $u$ and $v$ in the graph $G$.
If $d = 0$, then $u = v$, so there is nothing to prove.
Assume therefore that $d > 0$ and
that the claim holds for all assignments with positive support and all
$u' \sim v'$ with a shortest path of length strictly smaller
than $d$.
Let $X' = \{u_0, u_d\}$ and note that
since $\left| X' \right| = 2 \leq k$, there exists an assignment $\tau' \colon X' \to D$
with positive support in $\lambda$ such that $\tau' = \Crestrict{\sigma}{X'}$.
Now, let $X = X' \cup \{ u_{d-1} \}$.
Since $\left| X \right| \leq 3 \leq 2k$, it follows that
$\lambda$ has a distribution over assignments to $X$, so there exists an assignment
$\tau \colon X \to D$ with positive support in $\lambda$ such that
$\Crestrict{\tau}{X'} = \tau' = \Crestrict{\sigma}{X'}$.
In particular, $\tau(u_0) = \sigma(u_0)$ and
$\tau(u_d) = \sigma(u_d)$.

By assumption,
there is an equality constraint on $X'' = \{u_{d-1}, u_d\}$ in $J$, so
any assignment $\tau'' \colon X'' \to D$ with positive support in $\lambda$
must have $\tau''(u_{d-1}) = \tau''(u_d)$.
Since equation (\ref{sa:marginal}) holds for $X'' \subseteq X$,
it follows that $\Crestrict{\tau}{X''}$ has positive support in $\lambda$,
and hence $\tau(u_{d-1}) = \tau(u_d)$.

By the induction hypothesis applied to $\tau$ and the path $u_0, \dots, u_{d-1}$,
we now have
$\sigma(u_0) = \tau(u_0) = \tau(u_{d-1}) = \tau(u_d) = \sigma(u_d)$.
It follows that Lemma~\ref{lem:sa-reduc} is applicable, so
$\Gamma \cup \{\phi\} \reducesto \Gamma$.
\end{proof}


\begin{lemma}\label{lem:i-preserve}
Let $\Delta'$ and $\Delta$ be constraint languages of finite size
and assume that $\Delta'$ interprets $\Delta$.
Then, $\Delta \reducesto \Delta'$.
\end{lemma}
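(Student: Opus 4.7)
My plan is to split the reduction into two steps via an intermediate valued constraint language $\Delta''$ on $D'$, verifying that each step preserves bounded valued relational width. Let $(d, S, h)$ be the parameters witnessing the interpretation of $\Delta$ in $\Delta'$, and set $\Delta'' = \{\phi_S, h^{-1}(\eq{D})\} \cup \{h^{-1}(\phi) : \phi \in \Delta\}$. Every weighted relation of $\Delta''$ lies in $\langle \Delta' \rangle$ by the definition of interpretation, and $\Delta''$ is finite since $\Delta$ is. Repeated applications of Lemma~\ref{lem:expr-preserve} therefore give $(\Delta' \cup \{\eq{D'}\}) \cup \Delta'' \reducesto \Delta' \cup \{\eq{D'}\}$, and Lemma~\ref{lem:eq-preserve} further reduces this to $\Delta'$. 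Since inclusion of constraint languages is itself a trivial $\reducesto$-reduction, composing yields $\Delta'' \reducesto \Delta'$, so it suffices to show $\Delta \reducesto \Delta''$.

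For the latter I will invoke Lemma~\ref{lem:sa-reduc}. Fix once and for all a right inverse $\iota \colon D \to S$ of $h$, i.e., a map with $h \circ \iota = \mathrm{id}_D$ (which exists by surjectivity of $h$). Given an instance $I$ of $\VCSP(\Delta)$ with variables $\{x_1, \ldots, x_n\}$ and objective $\sum_{i=1}^q \phi_i(\tup{x}_i)$, introduce for each $x_j$ a fresh $d$-tuple of variables $\tup{y}_j = (y_{j,1}, \dots, y_{j,d})$. For each constraint $\phi_i(x_{i_1}, \ldots, x_{i_{m_i}})$, define $J_i$ on the variable set $Y_i = \bigcup_{l} \{y_{i_l, k} : k \in [d]\}$ with
\[
\phi_{J_i} = h^{-1}(\phi_i)(\tup{y}_{i_1}, \ldots, \tup{y}_{i_{m_i}}) + \sum_{l=1}^{m_i} \phi_S(\tup{y}_{i_l}),
\]
and let $J$ be the sum of the $\phi_{J_i}$. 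Each $J_i$ is an instance of $\VCSP(\Delta'')$.

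The three conditions of Lemma~\ref{lem:sa-reduc} are then routine to check. For $(a)$, any optimal satisfying assignment $\alpha$ of $J$ must send every occurring $\tup{y}_j$ into $S$ (otherwise some $\phi_S$ term would contribute $\infty$), so $\sigma^\alpha(x_j) := h(\alpha(\tup{y}_j))$ is well-defined and $\phi_i(\sigma^\alpha(\tup{x}_i)) = \phi_{J_i}(\alpha(Y_i))$, giving $\vcspval(I, \sigma^\alpha) = \vcspval(J, \alpha)$. For $(b)$, given $\sigma \colon X_i \to D$ with positive support in a feasible SA-solution, set $\alpha_i^\sigma(\tup{y}_{i_l}) = \iota(\sigma(x_{i_l}))$; then all $\phi_S$ terms vanish and $h^{-1}(\phi_i)(\alpha_i^\sigma(\tup{y}_{i_1}), \ldots) = \phi_i(\sigma(\tup{x}_i))$, so $\phi_{J_i}(\alpha_i^\sigma) = \phi_i(\sigma(\tup{x}_i))$. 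Condition $(c)$ is where the fixed choice of $\iota$ matters: if $x_j \in X_i \cap X_r$ and $\sigma_i, \sigma_r$ are the restrictions of some common $\sigma$, then $\alpha_i^{\sigma_i}(\tup{y}_j) = \iota(\sigma(x_j)) = \alpha_r^{\sigma_r}(\tup{y}_j)$. The one real subtlety, therefore, is ensuring this consistency across overlapping scopes, which is precisely why we commit to a single right inverse of $h$ at the outset rather than choosing preimages ad hoc.
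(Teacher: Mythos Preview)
Your argument is correct and follows essentially the same approach as the paper's proof: both factor through the expressibility and equality lemmas to assume the interpreted relations are available, then apply Lemma~\ref{lem:sa-reduc} by blowing up each variable $x_j$ to a $d$-tuple $\tup{y}_j$ and using a fixed section $\iota$ of $h$ (the paper's $\tau_{j,a}$) to define $\alpha_i^\sigma$. The only cosmetic difference is that you attach the $\phi_S$ constraints to each $J_i$ directly, whereas the paper first pads $I$ with unary null constraints on every variable and turns those into the $\phi_S$ constraints; either way the verification of (a)--(c) is the same.
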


\begin{proof}
Let $D$ and $D'$ be the domains of $\Delta$ and $\Delta'$, respectively.
Let $(d, S, h)$ be an interpretation of $\Delta$ in $\Delta'$.
By Lemma~\ref{lem:expr-preserve} and Lemma~\ref{lem:eq-preserve},
we may assume that $\Delta'$ contains
the $d$-ary weighted relation $\phi_S$,
and for each $\phi_i \in \Delta$, that $\Delta'$ contains $h^{-1}(\phi_i)$.

Let $I$ be an instance of $\VCSP(\Delta)$ with variables $V = \{x_1, \dots,
x_n\}$ and
objective function $\phi_I(x_1, \dots, x_n) = \sum_{i=1}^{q} \phi_i(\tup{x}_i)$.
Assume that $\phi_I$ contains a distinguished unary null-constraint for each singleton 
subset $\{x_j\} \subseteq V$, i.e.\ that
for each $1 \leq j \leq n$, there exists an $i \in [q]$ such that $\phi_i$ is a null constraint,
and $\tup{x}_i = (x_j)$.
Let $\tup{v}_1, \dots, \tup{v}_n$ be $d$-tuples of distinct fresh variables
($nd$ distinct variables overall)
and let $V_j$ be the set of variables in $\tup{v}_j$, for each $1 \leq j \leq n$.

For $i \in [q]$,
let $J_i$ be an instance on variables 
$Y_i = \bigcup_{j \colon x_j \in X_i} V_j$.
If $\phi_i(x_j)$ is one of the distinguished null-constraints, 
then let $\phi_{J_i}(Y_i) = \phi_{S}(\tup{v}_j)$.
Otherwise,
assuming $\tup{x}_i=(x_{i_1},\ldots,x_{i_r})$,
let $\phi_{J_i}(Y_i) = h^{-1}(\phi_i)(\tup{v}_{i_1},\ldots,\tup{v}_{i_r})$.
Let $J$ be the $\VCSP(\Delta')$ instance with variables $\bigcup_i Y_i$ and objective function $\sum_i \phi_{J_i}$.

We verify properties $(a)$--$(c)$ of Lemma~\ref{lem:sa-reduc}.

\medskip
($a$)
Let $\alpha$ be any satisfying assignment of $J$ and define $\sigma^\alpha \colon V \to D$
by $\sigma^\alpha(x_j) = h(\alpha(\tup{v}_j))$.
This is well-defined since 
there always is a constraint $\phi_S(\tup{v}_j)$ in $J$ which ensures that
$\alpha(\tup{v}_j)\in S$.
For all $i \in [q]$,
we have
$\phi_i(\sigma^\alpha(\tup{x}_i))=\phi_i(h(\alpha(\tup{v}_{i_1})), \dots, h(\alpha(\tup{v}_{i_r})))=\phi_{J_i}(\alpha(Y_i))$,
where $\tup{x}_i=(x_{i_1},\ldots,x_{i_r})$.
Summing over all $i$ gives $\vcspval(I,\sigma^\alpha) = \vcspval(J,\alpha)$.

\medskip
For each $x_j \in V$ and $a \in D$, let $\tau_{j, a}\colon V_j\to D'$ be an assignment such that
$\tau_{j,a}(\tup{v}_j) \in S$ and $h(\tau_{j,a}(\tup{v}_j))=a$.
Let $k \geq \ar(\Delta)$ and suppose that $\lambda$ is a feasible solution to the SA$(k,2k)$-relaxation of $I$.
For all $i \in [q]$ and $\sigma \colon X_i \to D$ with positive support in $\lambda$,
define $\alpha^{\sigma}_i \colon Y_i \to D'$ by $\alpha^{\sigma}_i(v) = \tau_{j,\sigma(x_j)}(v)$ where $j$ is the index such
that $v \in V_j$, i.e., $\alpha^{\sigma}_i = \bigcup_{j \colon x_j \in X_i} \tau_{j,\sigma(x_j)}$.

\medskip
($b$)
For all $i \in [q]$, assuming $\tup{x}_i=(x_{i_1},\ldots,x_{i_r})$, 
we have
$\vcspval(J_i, \alpha^{\sigma}_i) =
\phi_{J_i}(\alpha^{\sigma}_i(Y_i)) =
h^{-1}(\phi_i)(\alpha^{\sigma}_i(\tup{v}_{i_1}), \dots, \alpha^{\sigma}_i(\tup{v}_{i_r})) =
\phi_i(h(\alpha^{\sigma}_i(\tup{v}_{i_1})), \dots, h(\alpha^{\sigma}_i(\tup{v}_{i_r}))) =
\phi_i(\sigma(\tup{x}_i)) < \infty$, 
where the inequality follows from the feasibility of $\lambda$.

\medskip
($c$)
Let $i, r \in [q]$ and $X \subseteq V$ be as in Lemma~\ref{lem:sa-reduc} and suppose
that $\sigma \colon X \to D$ has positive support in $\lambda$.
Let $v \in Y_i \cap Y_r$ and let $\tup{v}_j$ be the tuple of variables that contains $v$.
Then, $x_j \in X_i \cap X_r$, so
$\alpha^{\sigma_i}_i(v) = \tau_{j, \sigma(x_j)}(v) = \alpha^{\sigma_r}_r(v)$.

It follows that Lemma~\ref{lem:sa-reduc} is applicable, so
$\Delta \reducesto \Delta'$.
\end{proof}


\begin{lemma}\label{lem:opt-preserve}
Let $\Gamma$ be a valued constraint language of finite size and $\phi \in \Gamma$.
Then, 
$\Gamma \cup \{ \opt(\phi) \} \reducesto \Gamma$.
\end{lemma}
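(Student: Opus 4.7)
The plan is to prove the lemma directly, rather than through Lemma~\ref{lem:sa-reduc}: the natural gadget (replacing each $\opt(\phi)$-constraint by many copies of $\phi$) shifts the VCSP optimum by an additive constant $NMm$, where $m = \min_{\tup{x}} \phi(\tup{x})$ and $N$ is the number of $\opt(\phi)$-constraints, and this constant can be negative, violating condition~$(a)$ of that lemma. Instead, I will argue directly that the gadget maps gap instances of $\VCSP(\Gamma \cup \{\opt(\phi)\})$ at level $(k,\ell)$ to gap instances of $\VCSP(\Gamma)$ at the same level.

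Given an instance $I$ with $N$ constraints of the form $\opt(\phi)(\tup{x}_i)$, set $\epsilon = \min\{\phi(\tup{x}) - m : \tup{x} \in \feas(\phi) \setminus \opt(\phi)\} > 0$ (or $\epsilon = +\infty$ if $\opt(\phi) = \feas(\phi)$) and let $W$ be the total range $\sum_i (\max \phi_i - \min \phi_i)$ taken over the remaining ($\Gamma$-)constraints. Since $\epsilon$ depends only on $\phi$ while $W$ is polynomial in $|I|$, one can pick an integer $M > W/\epsilon$ that is polynomial in $|I|$. Construct $J \in \VCSP(\Gamma)$ by keeping the $\Gamma$-constraints of $I$ unchanged and replacing each $\opt(\phi)(\tup{x}_i)$ by $M$ copies of $\phi(\tup{x}_i)$.

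Two observations drive the proof. First, any feasible solution $\lambda$ to SA$(k,\ell)$ of $I$ lifts to a feasible solution $\lambda'$ to SA$(k,\ell)$ of $J$ by assigning $\lambda_i$ to each of the $M$ new copies of $\phi(\tup{x}_i)$: constraints~(\ref{sa:marginal})--(\ref{sa:nonnegative}) for $J$ follow from the corresponding ones for $I$ since the scopes are identical and $\supp(\lambda_i) \subseteq \opt(\phi) \subseteq \feas(\phi)$, and each replaced constraint switches from contributing $0$ to contributing $M \sum_\sigma \lambda_i(\sigma)\phi(\sigma(\tup{x}_i)) = Mm$, so $\lpval{J}{\lambda'}{k}{\ell} = \lpval{I}{\lambda}{k}{\ell} + NMm$. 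Second, $\vcspopt(J) \geq \vcspopt(I) + NMm$, with equality when $I$ is satisfiable: lifting an optimal assignment of $I$ yields value $\vcspopt(I) + NMm$, while any satisfying assignment of $J$ with some $\tup{x}_i \notin \opt(\phi)$ incurs at least $M\epsilon$ extra in the $\phi$-part, which the choice $M\epsilon > W$ ensures cannot be absorbed by the $\Gamma$-part.

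Combining these, a feasible LP-solution $\lambda$ of $I$ with $\lpval{I}{\lambda}{k}{\ell} < \vcspopt(I)$ produces $\lpval{J}{\lambda'}{k}{\ell} < \vcspopt(J)$, so $J$ is a gap instance at the same level; this is the desired polynomial-time reduction witnessing $\Gamma \cup \{\opt(\phi)\} \reducesto \Gamma$. The main subtlety will be the case where $I$ is unsatisfiable but $J$ is not, so $\vcspopt(I) = \infty$ while $\vcspopt(J) < \infty$: here every satisfying assignment of $J$ must leave some $\tup{x}_i$ outside $\opt(\phi)$, and $M\epsilon > W$ forces $\vcspopt(J) \geq NMm + M\epsilon + B^-$ while $\lpval{J}{\lambda'}{k}{\ell} \leq NMm + B^- + W$, where $B^-$ lower-bounds the $\Gamma$-part, still leaving a strict gap.
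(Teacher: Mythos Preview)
Your proposal is correct and takes essentially the same approach as the paper's proof: both replace each $\opt(\phi)$-constraint by many copies of $\phi$, choose the multiplier large enough that any assignment missing $\opt(\phi)$ on some scope is strictly worse than any assignment hitting it everywhere, and then argue directly (not via Lemma~\ref{lem:sa-reduc}) that a gap at level $(k,\ell)$ in $I$ becomes a gap at the same level in $J$. The only cosmetic difference is that the paper first normalises so that $\min\phi'=0$ for every $\phi'\in\Gamma$, which makes your additive shift $NMm$ vanish and slightly simplifies the bookkeeping; your parameters $(M,W,\epsilon)$ correspond to the paper's $(C,U,\delta)$.
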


\begin{proof} To avoid trivial cases, we will assume that all weighted relations
in $\Gamma$ take at least one finite value. Moreover, in order to simplify the
proof, we will assume that $\min(\phi')=0$ for every $\phi'\in\Gamma$. 
This is without loss of generality as
replacing $\phi'$ by $\phi'+c$, for any $c\in\mathbb{Q}$, changes the value of the objective
function of a VCSP instance 
by the same additive constant as the
objective function of the LP relaxation, for all feasible solutions to the
corresponding problems.

Let $I$ be an arbitrary instance of $\VCSP(\Gamma \cup \{ \opt(\phi) \})$ with
variables $V$ and objective function $\sum_{i=1}^{q} \phi_i(\tup{x}_i)$.
We create an instance $J$ of $\VCSP(\Gamma)$ as follows.
The variables of $J$ are the same as the variables in $I$.
Every weighted constraint $\phi_i(\tup{x}_i)$ in $I$, where $\phi_i \neq
\opt(\phi)$ appears also in $J$.
Every weighted constraint $\opt(\phi)(\tup{x}_i)$ is replaced by $C$ copies of
$\phi(\tup{x}_i)$ in $J$,
where the value of the constant $C$ is chosen as follows:
If $\phi$ only takes a single distinct finite value (which we assume is $0$),
then let $C = 1$.
Otherwise,
let $U = \sum_{i=1}^q \max(\phi_i)$, where $\max(\phi_i)$
denotes the largest \emph{finite} value of the weighted relation $\phi_i$.
Let $\delta$ be the smallest non-zero finite value of $\phi$.
Now, let $C = \lceil(U+1)/\delta\rceil$.
$U$ can be computed in polynomial time and
the value of $C$ depends linearly on the number of constraints in $I$, so
the size of $J$ is polynomial in the size of $I$.

\medskip
\noindent
First, we prove that $\vcspopt(J)$ determines $\vcspopt(I)$.
Any satisfying assignment to $I$ is also a satisfying assignment to $J$, so
\begin{equation}\label{eq:jprimesat}
\vcspopt(J) \leq \vcspopt(I).
\end{equation}

If $J$ has a satisfying assignment, then let $\sigma$ be an optimal assignment.
We distinguish two cases.
First, assume that $\sigma$ assigns the optimal zero value to every copy of
$\phi$.
Then, $\sigma$ is also a satisfying assignment of $I$, so
\begin{equation}\label{eq:jopt}
\vcspopt(I) \leq \vcspval(I,\sigma) = \vcspval(J,\sigma) = \vcspopt(J).
\end{equation}
From (\ref{eq:jprimesat}) and (\ref{eq:jopt}), we see that
$\vcspopt(I) \leq \vcspval(I, \sigma) = \vcspopt(J) \leq \vcspopt(I)$, so
$\sigma$ is also an optimal assignment to $I$.

Otherwise, $\sigma$ assigns a sub-optimal value to at least $C$ copies of
$\phi$, so
\begin{equation*}
\vcspval(J,\sigma) \geq C \delta + \vcspopt(I)  \geq  U + 1.
\end{equation*}
In this case, $\vcspopt(J) > U$. But $U \geq \vcspopt(I)$ if $I$ is satisfiable,
which contradicts (\ref{eq:jprimesat}), and hence $I$ is unsatisfiable.
In summary, if $J$ is unsatisfiable, or if $\vcspopt(J) > U$, then $I$ is unsatisfiable,
and otherwise $\vcspopt(I) = \vcspopt(J)$.

\medskip
\noindent
Next, we prove that, for any given parameters $1 \leq k \leq \ell$,
if $\Gamma \cup \{ \opt(\phi) \}$ does not have valued relational width $(k, \ell)$,
then $\Gamma$ does not have valued relational width $(k,\ell)$.
Let $I$ be an instance of $\VCSP(\Gamma \cup \{ \opt{\phi} \})$ and
$\lambda$ a feasible solution to the SA$(k,\ell)$-relaxation of $I$,
with $\lpval{I}{\lambda}{k}{\ell} < \vcspopt(I)$,
where $\vcspopt(I)$ could be $\infty$.
We will assume that $I$ has been augmented with null constraints
so that, for every subset $V' \subseteq V$ with $\left| V' \right| \leq \ell$,
there is some $i \in [q]$ with $X_i = V'$.
Let $J$ be the instance of $\VCSP(\Gamma)$ constructed above.

Let $\lambda'$ be the feasible solution to the SA$(k,\ell)$-relaxation of $J$ obtained from $\lambda$
by letting $\lambda'_j = \lambda_i$ for all $\phi$-constraints of $J$ with index $j$ that were
introduced as copies of the $\opt(\phi)$-constraint of $I$ with index $i$.
Then, $\lambda'$ assigns an optimal value to each $\phi$-constraint, so
$\lpval{J}{\lambda'}{k}{\ell} = \lpval{I}{\lambda}{k}{\ell}$.

If $J$ is unsatisfiable, then $\lpopt{J}{k}{\ell} < \vcspopt(J)$, so $\Gamma$
does not have valued relational width $(k, \ell)$. If $J$ is satisfiable and $I$
is also satisfiable then it was shown above that $\vcspopt(J) = \vcspopt(I)$, so
$\lpopt{J}{k}{\ell} \leq \lpopt{I}{k}{\ell} < \vcspopt(I) = \vcspopt(J)$, and
again $\Gamma$ does not have valued relational width $(k,\ell)$. Finally, if $J$
is satisfiable and $I$ is unsatisfiable then $\vcspopt(J) > U$. Since $\lambda$
is a feasible solution, we have $\lpopt{I}{k}{\ell} \leq U$ from the definition
of $U$. Then, $\lpopt{J}{k}{\ell} \leq \lpopt{I}{k}{\ell} \leq U < \vcspopt(J)$,
so again $\Gamma$ does not have valued relational width $(k,\ell)$.
Since $k$ and $\ell$ were chosen arbitrarily, the result follows. \end{proof}

\begin{lemma}\label{lem:feas-preserve}
Let $\Gamma$ be a valued constraint language of finite size and $\phi \in \Gamma$.
Then, 
$\Gamma \cup \{ \feas(\phi) \} \reducesto \Gamma$.
\end{lemma}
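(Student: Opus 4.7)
The plan is to adapt the scaling construction used in the proof of Lemma~\ref{lem:opt-preserve} to the feasibility setting. Given an instance $I$ of $\VCSP(\Gamma\cup\{\feas(\phi)\})$ with $q$ constraints and objective $\sum_{i=1}^q \phi_i(\tup{x}_i)$, construct an instance $J$ of $\VCSP(\Gamma)$ on the same variable set by replacing each constraint of the form $\feas(\phi)(\tup{x}_i)$ with a single copy of $\phi(\tup{x}_i)$, and by including each remaining $\Gamma$-constraint with multiplicity $C$ for a positive integer $C$ depending on $I$. As in Lemma~\ref{lem:opt-preserve}, first reduce to the case $\min(\phi')=0$ for every $\phi'\in\Gamma$ by an additive shift, which affects neither the VCSP nor the LP gap.

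The first step compares the VCSP optima. Since $\feas(\phi)$ and $\phi$ share the same feasibility region, the sets of satisfying assignments of $I$ and $J$ coincide. For any satisfying $\sigma$, $\vcspval(J,\sigma) = C\cdot V_\Gamma(\sigma) + V_F(\sigma)$, where $V_\Gamma(\sigma) = \sum_{\phi_i\in\Gamma}\phi_i(\sigma(\tup{x}_i))$ and $V_F(\sigma) = \sum_{\phi_i=\feas(\phi)}\phi(\sigma(\tup{x}_i))\in[0,q\max(\phi)]$. This gives $C\cdot\vcspopt(I) \leq \vcspopt(J) \leq C\cdot\vcspopt(I) + q\max(\phi)$, and $I$ unsatisfiable implies $J$ unsatisfiable.

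The second step compares the LPs. Because the marginal-consistency and feasibility constraints are unaffected when $\feas(\phi)$ is replaced by $\phi$, any LP-feasible solution $\lambda$ for the SA$(k,\ell)$-relaxation of $I$ is also LP-feasible for that of $J$, and satisfies $\lpval{J}{\lambda}{k}{\ell} = C\cdot\lpval{I}{\lambda}{k}{\ell} + V^{LP}_F(\lambda)$ with $V^{LP}_F(\lambda)\in[0,q\max(\phi)]$. This yields $C\cdot\lpopt{I}{k}{\ell} \leq \lpopt{J}{k}{\ell} \leq C\cdot\lpopt{I}{k}{\ell} + q\max(\phi)$. Combining the two steps, if $I$ is a gap instance at level $(k,\ell)$ with gap $\delta = \vcspopt(I) - \lpopt{I}{k}{\ell} > 0$, then the gap of $J$ at $(k,\ell)$ is at least $C\delta - q\max(\phi)$, which is strictly positive whenever $C > q\max(\phi)/\delta$.

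The main obstacle will be showing that $C$ can be chosen so that the construction runs in polynomial time. In the opt case, the $C$ copies of $\phi$ contribute zero cost in both the VCSP and the LP, because the LP marginals of an opt-constraint are forced onto $\opt(\phi)$; this is what allowed $C$ to be made polynomial using only a bound on the non-zero values of $\phi$. Here the situation is more subtle: the $\phi$-replaced constraints incur a positive LP cost, so $C$ has to be large enough relative to the LP gap $\delta$ itself. Establishing that $C$ can be taken of polynomial bit-size requires a careful analysis of the vertex denominators of the SA LP polytope, or alternatively a gadget-based reduction via Lemma~\ref{lem:sa-reduc} in which fresh auxiliary variables constrained via $\opt(\phi)$ make the gadget contribute zero in both the VCSP and the LP while still enforcing $\tup{x}_i\in\feas(\phi)$.
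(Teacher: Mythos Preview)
Your approach is exactly the paper's: multiply every $\Gamma$-constraint by $C$ and replace each $\feas(\phi)$-constraint by a single copy of $\phi$. Your first two steps (the VCSP and LP inequalities) also match the paper.

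The ``main obstacle'' you flag, however, is not one. The relation $\reducesto$ asks for two things: (a) a polynomial-time many-one reduction from $\VCSP(\Gamma\cup\{\feas(\phi)\})$ to $\VCSP(\Gamma)$, and (b) that bounded valued relational width is preserved, i.e.\ if $\Gamma$ has bounded valued relational width then so does $\Gamma\cup\{\feas(\phi)\}$. These two parts need not be realised by the same instance map. For (a), the paper takes $C = \lceil N(U+1)/\delta\rceil$, where $U$ is the largest finite value of $\phi$, $N$ is the number of $\feas(\phi)$-constraints, and $\delta = 1/M$ for any integer $M$ clearing the denominators of all the $\phi_i$; this $C$ is polynomial in $|I|$, and a short granularity argument (which your outline omits) then shows that any optimal assignment to $J$ is also optimal for $I$. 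For (b), given a particular gap instance $I$ at level $(k,\ell)$ with gap $\epsilon>0$, the paper simply enlarges $C$ to $\max\{\lceil N(U+1)/\delta\rceil,\ \lceil N(U+1)/\epsilon\rceil\}$. Since here one only needs to \emph{exhibit} a gap instance of $\VCSP(\Gamma)$, not compute it efficiently, there is no requirement that this $C$ be polynomial in $|I|$. No vertex-denominator analysis of the SA polytope and no detour through Lemma~\ref{lem:sa-reduc} is needed.
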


\begin{proof}
To avoid trivial cases, we will assume that all weighted relations in $\Gamma$
take at least one finite value. As in the proof of Lemma~\ref{lem:opt-preserve},
we will assume that $\min(\phi)=0$.
Let $I$ be an arbitrary instance of $\VCSP(\Gamma \cup \{ \feas(\phi) \})$ with
variables $V$ and objective function $\sum_{i=1}^{q} \phi_i(\tup{x}_i)$.
We create an instance $J$ of $\VCSP(\Gamma)$ as follows.
The variables of $J$ are the same as the variables in $I$.
For every weighted constraint $\phi_i(\tup{x}_i)$ in $I$ with $\phi_i \in
\Gamma$,
we add $C$ copies of $\phi_i(\tup{x}_i)$ in $J$.
Every weighted constraint $\feas(\phi)(\tup{x}_i)$ is replaced by
$\phi(\tup{x}_i)$ in $J$.
The value of the constant $C$ is chosen as follows:
If $\phi$ only takes a single distinct finite value, then let $C = 1$.
Otherwise, let $U$ be the largest \emph{finite} value of $\phi$.
Let $\delta = 1/M$ where $M > 0$ is any constant such that $M \cdot \phi_i$ is
integral for every $i$.
This implies that $\delta$ is less than or equal to the least possible
difference between any two satisfying assignments of $I$.
Now, let $C = \lceil N(U+1)/\delta\rceil$,
where $N$ is the number of occurrences of $\feas(\phi)$ in $I$.
The value of $C$ can be computed in polynomial time and
depends linearly on the number of constraints in $I$, so
the size of $J$ is polynomial in the size of $I$.

An assignment $\sigma \colon V \to D$ satisfies $I$ if, and only if, it is
satisfies $J$,
and
\begin{equation}
C \cdot \vcspval(I,\sigma) \leq \vcspval(J,\sigma) \leq C \cdot
\vcspval(I,\sigma)+NU.
\end{equation}

Let $\sigma$ be an optimal assignment to $J$ and suppose that there exists an
assignment $\sigma'$
to $I$ such that $\vcspval(I,\sigma') < \vcspval(I,\sigma)$.
Then,
\begin{align*}
\vcspval(J,\sigma') &\leq C \cdot \vcspval(I,\sigma') + NU \\
&\leq C \cdot (\vcspval(I,\sigma)-\delta) + NU \\
&\leq C \cdot \vcspval(I,\sigma) + NU-C\cdot\delta \\
&< C \cdot \vcspval(I,\sigma) \\
&\leq \vcspval(J,\sigma),
\end{align*}
which contradicts $\sigma$ being optimal.
Hence, $\sigma$ is also an optimal assignment to $I$.

Next, we prove that, for any given parameters $1 \leq k \leq \ell$,
if $\Gamma \cup \{ \feas(\phi) \}$ does not have valued relational width $(k,
\ell)$,
then $\Gamma$ does not have valued relational width $(k,\ell)$.
Let $I$ be an instance of $\VCSP(\Gamma \cup \{ \feas(\phi) \})$ and
$\lambda$ a feasible solution to the SA$(k,\ell)$-relaxation of $I$
with $\lpval{I}{\lambda}{k}{\ell} < \vcspopt(I)$.
We will assume that $I$ has been augmented with null constraints
so that, for every subset $V' \subseteq V$ with $\left| V' \right| \leq \ell$,
there is some $i \in [q]$ with $X_i = V'$.
If $I$ is unsatisfiable, then let $J$ be the instance of $\VCSP(\Gamma)$
constructed as above.
Otherwise, let $\epsilon = \vcspopt(I) - \lpval{I}{\lambda}{k}{\ell} > 0$, and let
$J$ be the instance
constructed as above,
but with $C = \max \{ \lceil N(U+1)/\delta\rceil, \lceil N(U+1)/\epsilon \rceil
\}$.

Let $\lambda'$ be the feasible solution to the SA$(k,\ell)$-relaxation of $J$
obtained from $\lambda$
by letting $\lambda'_j = \lambda_i$ for every constraint of $J$ with index $j$
that was
introduced as a (possibly single) copy of the constraint $\phi_i(\tup{x}_i)$ of
$I$.
Then, $\lpval{J}{\lambda'}{k}{\ell} \leq C \cdot \lpval{I}{\lambda}{k}{\ell} + N U$.

The instance $J$ is unsatisfiable if, and only if, $I$ is unsatisfiable,
and in this case, $\lpopt{J}{k}{\ell} < \vcspopt(J)$, so
$\Gamma$ does not have valued relational width $(k, \ell)$.
Otherwise, $J$ is satisfiable, and $C \cdot \vcspopt(I) \leq \vcspopt(J)$,
so
$\lpopt{J}{k}{\ell}
\leq C \cdot \lpopt{I}{k}{\ell} + NU
\leq C \cdot (\vcspopt(I)-\epsilon) + NU
\leq \vcspopt(J) + NU -C \cdot \epsilon
< \vcspopt(J)$,
so $\Gamma$ does not have valued
relational width $(k,\ell)$.
Since $k$ and $\ell$ were chosen arbitrarily, the result follows.
\end{proof}

\begin{lemma}\label{lem:core-preserve}
Let $\Gamma$ be a valued constraint language of finite size over domain $D$ and
let $\Gamma'$ be a core of $\Gamma$ with $D' \subseteq D$.
Then, $\Gamma' \cup \mathcal{C}_{D'} \reducesto \Gamma$.
\end{lemma}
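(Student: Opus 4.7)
The plan is to apply Lemma~\ref{lem:sa-reduc} to a gadget-based reduction that mirrors the polynomial-time reduction of~\cite[Proposition~20]{Kozik15:icalp}, but with additional care so that the transformation also preserves bounded valued relational width. The underlying idea is that the canonical gadget on the whole domain $D$, together with the retraction witnessing that $\Gamma'$ is a core, can be used to simulate each constant unary constraint of $\mathcal{C}_{D'}$.

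First, I would normalize the unary map witnessing the core property. Let $\omega \in \fpol(\Gamma)$ be such that $f \in \supp(\omega)$ with $f(D) = D'$. Restricting $\omega$ to $D'$ yields a fractional polymorphism of $\Gamma'$ containing $f|_{D'}$ in its support, so $f|_{D'} \in \supp(\Gamma')$; since $\Gamma'$ is a core, $f|_{D'}$ must be a bijection. Hence some power $f^n$ of $f$ is an idempotent retraction onto $D'$ with $f^n|_{D'} = \mathrm{id}_{D'}$, and by Lemma~\ref{lem:suppclone} $f^n \in \supp(\Gamma)$. Replacing $f$ by $f^n$, I assume WLOG $f \circ f = f$, $f(D) = D'$, and $f|_{D'} = \mathrm{id}_{D'}$.

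Next, given an instance $I$ of $\VCSP(\Gamma' \cup \mathcal{C}_{D'})$ with variables $V$ and constraints $\phi_i(\tup{x}_i)$, I would construct $J_i$ as follows: every non-constant constraint $\phi_i \in \Gamma'$ is kept (viewed as a $\Gamma$-constraint on $D$), while every constant constraint $\{(a)\}(x)$ is replaced by a canonical gadget consisting of fresh variables $\{z^i_b : b \in D\}$ with $z^i_a$ identified with $x$, and all constraints $\phi(z^i_{b_1}, \ldots, z^i_{b_r})$ for $\phi \in \Gamma$ and $\tup{b} \in \feas(\phi)$. With this construction, I would verify the three conditions of Lemma~\ref{lem:sa-reduc}: for (b), set $\alpha^\sigma_i(z^i_b) = b$ on the gadget (the identity assignment trivially satisfies the gadget constraints and extends $\sigma$ since $\sigma(x) = a$ by feasibility of the constant constraint in $\lambda$); for (c), consistency across gadgets is automatic because gadget variables are disjoint and only the original designated variable $x$ is shared, on which $\sigma$ already agrees.

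The main obstacle is condition (a): showing that for every satisfying and optimal assignment $\alpha$ of $J$, the restriction $\sigma^\alpha = \alpha|_V$ satisfies every original constant constraint, i.e.\ $\alpha(z^i_{a}) = a$ for each gadget. The argument combines Lemma~\ref{lem:killing} (the set of optimal assignments of the gadget is closed under $\supp(\Gamma)$, so applying $f$ forces gadget values into $D'$) with the core property (the map $b \mapsto \alpha(z^i_b)|_{D'}$ lies in $\supp(\Gamma')$ and must therefore be a bijection, eventually the identity after further iteration, and in particular fixes $a$). This part essentially re-proves Proposition~20 of~\cite{Kozik15:icalp} in our framework; once it is established, the additional SA-preservation content is supplied for free by Lemma~\ref{lem:sa-reduc}.
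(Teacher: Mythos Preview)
Your reduction has two genuine gaps, both stemming from using the indicator instance of the \emph{valued} language $\Gamma$ rather than a crisp one.

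First, condition~(\ref{cond:itoj}) of Lemma~\ref{lem:sa-reduc} fails. For a constant constraint $\{(a)\}(x)$ we have $\phi_i(\sigma(\tup{x}_i))=0$, whereas your identity assignment to the gadget has value $\sum_{\phi\in\Gamma}\sum_{\tup{b}\in\feas(\phi)}\phi(\tup{b})$, which is typically nonzero (take e.g.\ $D=\{0,1\}$ and $\Gamma=\{\phi\}$ with $\phi(0)=0$, $\phi(1)=1$: the gadget value is~$1$). So the required inequality $\phi_i(\sigma(\tup{x}_i))\geq\vcspval(J_i,\alpha^\sigma_i)$ does not hold, and the gadgets contribute spurious weight to both $\vcspopt(J)$ and the LP value in a way that is not constant across assignments.

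Second, your argument for condition~(\ref{cond:jtoi}) is incomplete. A satisfying assignment of the indicator instance of $\Gamma$ corresponds to a unary $g\in\pol(\Gamma)$; after composing with your retraction $f$ you get $f\circ g$ with image in $D'$, hence $(f\circ g)|_{D'}\in\pol(\Gamma')$. But the core hypothesis only says that unary operations in $\supp(\Gamma')$ are bijections, not that all unary operations in $\pol(\Gamma')$ are. Optimality of $\alpha$ in $J$ does not force the gadget map into $\supp(\Gamma')$: optimality of the indicator instance is a single aggregate inequality, not the pointwise inequality defining fractional polymorphisms. So you cannot conclude that $(f\circ g)|_{D'}$ is a bijection, and the step ``eventually the identity after further iteration'' is unjustified.

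The paper resolves both issues by first invoking Lemma~\ref{lem:killingf} with $F$ the set of unary operations on $D'$ not in $\supp(\Gamma')$. This produces a \emph{crisp} language $\Delta$ on $D'$ with $\Delta\reducesto\Gamma'$ and $\pol(\Delta)\cap F=\emptyset$, so every unary polymorphism of $\Delta$ is a bijection. One then proves $\Gamma'\cup\mathcal{C}_{D'}\reducesto\Gamma'\cup\Delta$ via Lemma~\ref{lem:sa-reduc} using the indicator instance $I_\Delta$ (a single shared copy) as the gadget for each constant constraint. Crispness makes condition~(\ref{cond:itoj}) trivial (all satisfying assignments have value $0$), and since every satisfying assignment of $I_\Delta$ yields a bijection $f_\alpha\in\pol(\Delta)$, its inverse lies in $\pol(\Delta)\subseteq\supp(\Gamma')$ and Lemma~\ref{lem:killing} gives an optimal assignment with $f_\beta=\mathrm{id}$, establishing condition~(\ref{cond:jtoi}).
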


\begin{proof}
Let $I'$ be an instance of $\VCSP(\Gamma')$, and let $I$ be the instance of $\VCSP(\Gamma)$
obtained from $I'$ by substituting every restricted weighted relation in $\Gamma'$ by
its corresponding weighted relation in $\Gamma$.
Then, by Lemma~\ref{lem:core},
$\vcspopt(I') = \vcspopt(I)$.
Fix $1 \leq k \leq \ell$ and assume that $I'$ is a gap instance for the SA$(k,\ell)$-relaxation of
$\VCSP(\Gamma')$.
Then, $\lpopt{I}{k}{\ell} \leq \lpopt{I'}{k}{\ell} < \vcspopt(I') = \vcspopt(I)$,  where the first inequality follows
from the fact that $I'$ is a restriction of $I$.
This establishes $\Gamma' \reducesto \Gamma$.

Let $F$ be the set of unary operations on $D'$ that are not in $\supp(\Gamma')$
and apply Lemma~\ref{lem:killingf}
to $\Gamma'$ and $F$.
This provides a crisp constraint language $\Delta$ on $D'$
such that $\Delta \reducesto \Gamma'$ and such that
every unary operation in $\pol(\Delta)$ is also in $\supp(\Delta)$. Since
$\Gamma'$ is a core only bijections can occur in $\supp(\Gamma')$. By 
Lemma~\ref{lem:killingf}, $\pol(\Delta)\cap F=\emptyset$ 
and hence only bijections can occur in $\pol(\Delta)$.
Thus $\Delta$ is also a core.
We finish the proof by showing that
$\Gamma' \cup \mathcal{C}_{D'} \reducesto \Gamma' \cup \Delta$, 
using Lemma~\ref{lem:sa-reduc}. Indeed, by Lemma~\ref{lem:killingf} we have
$\Delta \reducesto \Gamma'$ and we have previously shown that
$\Gamma' \reducesto \Gamma$. Overall, $\Gamma' \cup \mathcal{C}_{D'} \reducesto
\Gamma' \cup \Delta \reducesto \Gamma' \reducesto \Gamma$ and thus $\Gamma' \cup
\mathcal{C}_{D'} \reducesto \Gamma$.

Let $I_\Delta$ be the instance on variables $V_\Delta = \{ x_a \mid a \in D\}$ and containing, for every
$\phi \in \Delta$, and $\tup{a} \in D^{\ar(\phi)}$, a constraint $\phi(\tup{x}_\tup{a})$,
where $\tup{x}_\tup{a}[i] = x_{a[i]}$ for $1 \leq i \leq \ar(\phi)$.
Every satisfying assignment $\alpha$ to $I_\Delta$ defines an operation $f_\alpha \colon D \to D$
by the map $a \mapsto \alpha(x_a)$.
The instance $I_\Delta$ is sometimes called the \emph{indicator instance}~\cite{Jeavons97:jacm} and has the following property:
\begin{equation}\label{snyggatill}
\text{$\alpha$ is a satisfying assignment of $I_\Delta$ if, and only if,
$f_\alpha$ is a unary polymorphism of $\Delta$.}
\end{equation}

Let $I$ be an arbitrary instance of $\VCSP(\Gamma' \cup \mathcal{C}_{D'})$ with
variables $V = \{x_1, \dots, x_n\}$ and objective function 
$\phi_I(x_1, \dots, x_n) = \sum_{i=1}^q \phi_i(\tup{x}_i)$, 
where $\phi_i \in \Gamma' \cup \mathcal{C}_{D'}$ and $\tup{x}_i$ such that $X_i \subseteq V$.
Assume without loss of generality that $V \cap V_\Delta = \emptyset$.
For $v \in V$, define $\hat{v} := x_a$ if there is a unary constraint $v = a$ in $I$,
and define $\hat{v} := v$, otherwise.
For a tuple of variables $\tup{x} = (v_1, \dots, v_m) \in V^m$,
define $\hat{\tup{x}} = (\hat{v}_1, \dots, \hat{v}_m)$.

For $i \in [q]$ such that $\phi_i \in \Gamma'$, let $\tup{y}_i = \hat{\tup{x}}_i$ and
let $J_i$ be the instance on variables $Y_i$ with objective function $\phi_{J_i}(Y_i) = \phi_i(\tup{y}_i)$.
For $i \in [q]$ such that $\phi_i(x_i)$ is a unary constraint $x_i = a$,
let $J_i$ be the instance $I_\Delta$ on variables $Y_i = V_\Delta$.
Note that each $J_i$ corresponding to a unary constraint $x_i = a$ is the same instance $I_\Delta$
on the same variables $V_\Delta$.
Let $J$ be the $\VCSP(\Gamma' \cup \Delta)$ instance with variables $\bigcup_i Y_i$
and objective function $\sum_i \phi_{J_i}$.

We verify properties $(a)$--$(c)$ of Lemma~\ref{lem:sa-reduc}.

\medskip
($a$)
Let $\alpha$ be an optimal assignment to $J$ and consider the operation $f_\alpha$
in (\ref{snyggatill}) obtained from the unique copy of $I_\Delta$ in $J$.
Since the unary operations in $\pol(\Delta)$ are bijections and closed under composition,
it follows that $f^{-1}_{\alpha}$ is also in $\pol(\Delta)$ and therefore in $\supp(\Gamma)$.
Hence, by Lemma~\ref{lem:killing}, $\beta := f^{-1}_{\alpha} \circ \alpha$ is also an optimal
assignment to $J$ and $f_\beta$ is the identity operation.
We define $\sigma^{\alpha}(x) = a$ if $\hat{x} = x_a$ for some $a \in D'$,
and $\sigma^{\alpha}(x) = \beta(x)$ otherwise.
All unary constraints $x = a$ in $I$ are satisfied by $\sigma^{\alpha}$ and all other
constraints take the same value as in $J$, hence $\vcspval(I,\sigma^{\alpha}) = \vcspval(J,\alpha)$.

\medskip
Let $k \geq \ar(\Gamma' \cup \mathcal{C}_{D'})$ 
and suppose that $\lambda$ is a feasible solution to the SA$(k,2k)$-relaxation of $I$.
Let $\gamma$ be the satisfying assignment of $I_{\Delta}$ that assigns $a$ to $x_a$ for
all $a \in D'$.
For all $i \in [q]$ and $\sigma \colon X_i \to D$ with positive support in $\lambda$,
define $\alpha^{\sigma}_i = \Crestrict{(\sigma \cup \gamma)}{Y_i}$.

\medskip
($b$)
For all $i \in [q]$,
$\vcspval(J_i, \alpha^{\sigma}_i) =
\phi_{J_i}(\alpha^{\sigma}_i(Y_i)) =
\phi_i(\sigma(\tup{x}_i)) < \infty$, 
where the equalities hold by construction,
and the inequality follows from the feasibility of $\lambda$.

\medskip
($c$)
Let $i, r \in [q]$ and $X \subseteq V$ be as in the lemma and
suppose that $\sigma \colon X \to D$ has positive support in $\lambda$.
Let $y \in Y_i \cap Y_r$.
If $y = x_a$ for some $a \in D'$‚ then
$\alpha_i^{\sigma_i}(y) = \alpha_r^{\sigma_r}(y) = \gamma(x_a) = a$.
Otherwise, $y \in X_i \cap X_r$, so
$\alpha_i^{\sigma_i}(y) = \sigma_i(y) = \sigma(y) = \sigma_r(y) = \alpha_r^{\sigma_r}(y)$.

It follows that Lemma~\ref{lem:sa-reduc} is applicable, so
$\Gamma' \cup \mathcal{C}_{D'} \reducesto \Gamma' \cup \Delta$.
\end{proof}

%
%
\section{Gap Instances for SA-relaxations of $\VCSP(E_{\mathcal{G},3})$}
\label{sec:gap}

In this section, we give a construction of gap instances for SA-relaxations of
$\VCSP(E_{\mathcal{G},3})$, which shows that $E_{\mathcal{G},3}$ does
not have bounded valued relational width.
This result can also be derived from results in~\cite{Schoenebeck08:focs}
using additional non-trivial results.
We provide here a direct, elementary proof for constant level LP relaxations,
whereas~\cite{Schoenebeck08:focs} deals with linear level SDP relaxations.

Let $\mathcal{G}$ be an Abelian group over a finite set $G$ and
let $g$ be a non-zero element in $G$.
Let $R_0 = \{ (x,y,z) \in G^3 \mid x = y + z + 0\}$ and $R_g = \{ (x,y,z) \in G^3 \mid x = y + z + g \}$
and $\Delta = \{R_0, R_g\}$.
Both $R_0$ and $R_g$ are expressible in $E_{\mathcal{G},3}$:
$R_0(x,y,z)=\min_{y',z'}(R^3_0(x,y',z')+R^2_0(y',y)+R^2_0(z',z))$ and
$R_g(x,y,z)=\min_{y',z'}(R^3_g(x,y',z')+R^2_0(y',y)+R^2_0(z',z))$.
By Theorem~\ref{thm:reductions}(\ref{red:express}), it suffices to prove that $\Delta$ does not have bounded valued relational width.

Let $k \geq 3$.
We construct an unsatisfiable instance $I$ of $\VCSP(\Delta)$ and a feasible
solution to its SA$(k,k)$-relaxation.
The construction is similar to the one in~\cite[Theorem~31]{Feder98:monotone} where
it is used to show that constraint languages without ``the ability to count''
do not have bounded width.
Our theorem is a strengthening of this result. 

Let $n \geq 1$ be a positive integer.
Let $T_{n \times n}$ be the torus grid graph on $n \times n$ vertices
resulting from taking the square grid graph on $(n+1) \times (n+1)$ vertices
and identifying the topmost with the bottommost vertices as well as the 
leftmost with the rightmost vertices.

\begin{figure}[ht]
\begin{center}
\begin{tikzpicture}
[
scale = 5.2,
label distance = 3pt,
point/.style = {draw, circle, fill=black, inner sep=2pt},
edge/.style = {line width=0.5pt}
]

\foreach \y [evaluate=\y as \yn using {int(mod(\y,3))}] in {0, 1, 2, 3} {
  \foreach \x [evaluate=\x as \xn using {int(mod(\x,3))}] in {0, 1, 2, 3}
    \node (X\x\y) at (0.5*\x,4-0.5*\y) [point, label={above left:$x_{\yn,\xn}$}] {};
}

\foreach \y in {0, 1, 2, 3}
  \draw[edge] (X0\y) -- (X3\y);

\foreach \y [evaluate=\y as \yn using {int(mod(\y,3))}] in {0, 1, 2, 3}
  \foreach \x in {0, 1, 2} {
    \node (Y\x\y) at ($(X\x\y)+(0.25,0.05)$) [label={below:$y_{\yn,\x}$}] {};
}

\foreach \x in {0, 1, 2, 3}
  \draw[edge] (X\x0) -- (X\x3);

\foreach \x [evaluate=\x as \xn using {int(mod(\x,3))}] in {0, 1, 2, 3}
  \foreach \y in {0, 1, 2} {
    \node (Z\x\y) at ($(X\x\y)+(-0.05,-0.25)$) [label={right:$z_{\y,\xn}$}] {};
}

\end{tikzpicture}
\end{center}
\caption{Variables in the torus $T_{3,3}$ obtained from the $4 \times 4$ grid graph.}
\label{fig:tnn}
\end{figure}
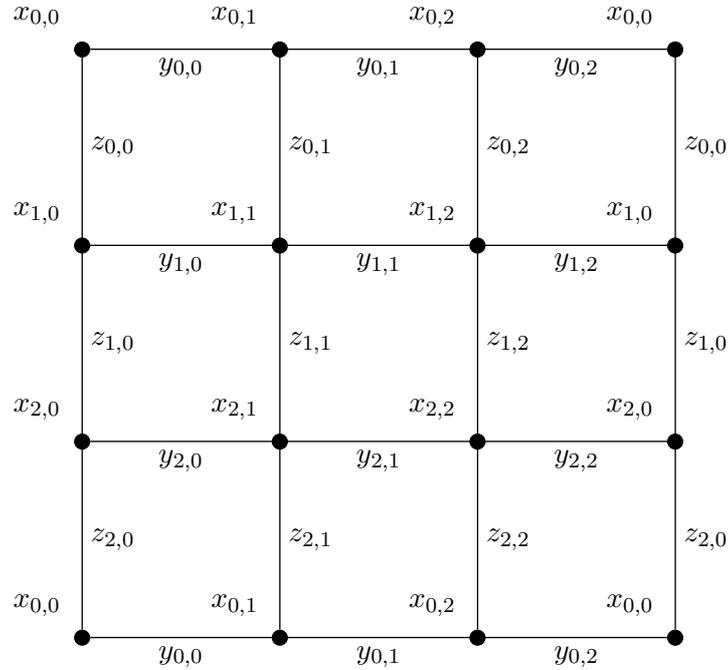

The instance $I_n$ contains one variable for each vertex and one variable for each edge in 
$T_{n \times n}$.
For $0 \leq a, b < n$,
let $x_{a,b}$, $y_{a,b}$, and $z_{a,b}$, be the variables corresponding to 
vertices,  horizontal edges, and vertical edges, respectively;
cf., Figure~\ref{fig:tnn}.
Let $I_n$ contain the following constraints:
\begin{align}
y_{a,b+1} &= y_{a,b} + x_{a,b} + c_{a,b} \label{eq:c-eqs} \\
z_{a+1,b} &= z_{a,b} + x_{a,b} + d_{a,b}, \label{eq:d-eqs}
\end{align}
where indices are taken modulo $n$, and
the elements $c_{a,b}, d_{a,b} \in \{0,g\}$ are chosen so that
\begin{equation}\label{eq:diffg}
\sum_{a,b} c_{a,b} - \sum_{a,b} d_{a,b} = g.
\end{equation}

The following result establishes Theorem~\ref{thm:eg3notbw}.
We note that it actually shows that $I_n$, which has $p=O(n^2)$ variables, is a gap instance for
SA$(k(p),k(p))$, with $k = \Theta(\sqrt{p})$.

\begin{theorem}\label{thm:in}
For every $k \geq 3$ and $n > 2k$, the instance $I_n$ is a gap instance for SA$(k,k)$.
\end{theorem}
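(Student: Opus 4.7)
My plan is to establish two facts: $I_n$ is unsatisfiable (so $\vcspopt(I_n)=\infty$), while the SA$(k,k)$-relaxation of $I_n$ admits a feasible solution of value $0$. For unsatisfiability, I would sum equation~(\ref{eq:c-eqs}) over all $(a,b)$: since indices are taken modulo $n$, the $y_{a,b}$ terms telescope to zero on the torus, yielding $\sum_{a,b} x_{a,b}+\sum_{a,b} c_{a,b}=0$. Summing~(\ref{eq:d-eqs}) analogously gives $\sum x+\sum d=0$, and subtracting forces $\sum c_{a,b}=\sum d_{a,b}$, contradicting~(\ref{eq:diffg}).

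For the SA$(k,k)$ construction, I first augment $I_n$ with null constraints so that every $W\subseteq V$ with $|W|\leq k$ is a scope. The key topological observation is that each variable in $W$ rules out at most two choices of $b^\ast$ and two choices of $a^\ast$; with $|W|\leq k<n/2$ one can therefore pick indices $a^\ast,b^\ast$ such that no constraint of $I_n$ at position $(\ast,b^\ast)$ or $(a^\ast,\ast)$ has its scope intersecting $W$. Let $I_n^{a^\ast,b^\ast}$ denote the sub-instance obtained by removing these ``cut'' constraints. The only syzygy of $I_n$ with non-zero right-hand side is (a scalar multiple of) the combination of all $c$-constraints minus all $d$-constraints, whose right-hand side equals $\sum c-\sum d=g$; since $I_n^{a^\ast,b^\ast}$ removes constraints of both families, it kills this obstruction and is consistent. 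I define $\lambda_W$ as the marginal on $W$ of the uniform distribution over the affine set of satisfying assignments of $I_n^{a^\ast,b^\ast}$. Support on feasible tuples is then automatic, since every constraint of $I_n$ with scope contained in $W$ is retained by cut validity. Marginal consistency along $W_1\subseteq W_2$ with $|W_2|\leq k$ is also automatic, because any cut valid for $W_2$ is valid for $W_1$, so the same cut may be used to define both $\lambda_{W_1}$ and $\lambda_{W_2}$.

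The main obstacle is to prove that $\lambda_W$ does not depend on the choice of valid cut. My plan is to identify the projection $\pi_W$ of the affine solution set of $I_n^{a^\ast,b^\ast}$ onto $G^W$ with the set of $W$-tuples satisfying every ``$W$-only derivable equation'' of $I_n$---that is, every linear combination of $I_n$-constraints whose resulting equation involves only $W$-variables. The crucial combinatorial observation is that no such $W$-only derivation can involve a cut constraint: for a combination of row-$a$ $c$-constraints, after telescoping the $y_{a,b}$'s the surviving $x_{a,b}$'s correspond to the interior of the underlying set of column indices, and each such $x_{a,b}$ must lie in $W$; were the cut column $b^\ast$ inside this interval, we would need $x_{a,b^\ast}\in W$, contradicting cut validity. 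Symmetric arguments (swapping rows and columns, and using constraints of the form $z+x+d$) handle pure $d$-derivations and mixed $c/d$-derivations. Hence the set of $W$-only derivable equations coincides in $I_n$ and in every $I_n^{a^\ast,b^\ast}$ arising from a valid cut, while the uniformity of the projected measure follows from the fact that solution sets are cosets of a subgroup of $G^V$. Combined with feasibility and marginal consistency, this yields $\lpval{I_n}{\lambda}{k}{k}=0<\infty=\vcspopt(I_n)$, so $I_n$ is a gap instance for SA$(k,k)$.
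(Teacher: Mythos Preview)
Your unsatisfiability argument is correct and matches the paper's. Your overall plan for the SA$(k,k)$ feasible solution---take the uniform distribution on solutions of a ``cut'' subinstance and project to $W$---is a reasonable variant of the paper's construction, but the cut-independence step, which you rightly identify as the crux, contains a genuine error.

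Your ``crucial combinatorial observation'' asserts that no $W$-only derivable equation of $I_n$ can involve a cut constraint. This is false. Take the $\mathbb{Z}$-linear combination with $\alpha_{a,b}=1$ on every $c$-constraint and $\beta_{a,b}=-1$ on every $d$-constraint. The coefficient of $y_{a,b}$ is $\alpha_{a,b-1}-\alpha_{a,b}=0$, the coefficient of $z_{a,b}$ is $\beta_{a-1,b}-\beta_{a,b}=0$, and the coefficient of $x_{a,b}$ is $-\alpha_{a,b}-\beta_{a,b}=0$. So the resulting equation is the variable-free (hence $W$-only for every $W$) equation $0=\sum c_{a,b}-\sum d_{a,b}=g$, and its derivation uses every constraint, including all cut constraints. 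This is precisely the mixed $c/d$ case you dismiss with ``symmetric arguments''; the $x_{a,b}$'s are cancelled between a $c$-constraint and a $d$-constraint at the same position, so your telescoping analysis of pure row-$a$ $c$-combinations does not extend. Consequently, the set of $W$-tuples satisfying every $W$-only equation of $I_n$ is empty (it must satisfy $0=g$), whereas $\pi_W(\mathrm{Sol}(I_n^{a^\ast,b^\ast}))$ is non-empty, so the identification you propose cannot hold.

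What you actually need is that the $W$-only equations derivable \emph{in the cut instance} $I_n^{a^\ast,b^\ast}$ are the same for every valid cut; the paper handles this differently. It associates to each small $X_i$ an intrinsic set $\bar X_i$ (the variables of the smallest cross-excluding, hole-free induced subgraph containing $X_i$), defines $\lambda_i$ as the projection of the uniform measure on satisfying assignments over $\bar X_i$, and proves marginal consistency via a counting lemma: for $X_j\subseteq X_i$ every $\bar\tau\in N_j$ has exactly $|N_i|/|N_j|$ extensions in $N_i$ (Lemma~\ref{lem:CHV}). This bypasses any comparison between two different ``cuts'' and replaces your linear-algebraic derivation argument by a direct combinatorial count, established by moving from $\bar X_j$ to $\bar X_i$ one vertex at a time (Lemma~\ref{lem:reachable}). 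If you want to repair your route, you must either prove directly that $\pi_W(\mathrm{Sol}(I_n^{a_1^\ast,b_1^\ast}))=\pi_W(\mathrm{Sol}(I_n^{a_2^\ast,b_2^\ast}))$ for any two valid cuts (e.g.\ by exhibiting, for each solution of one cut instance, a solution of the other agreeing on $W$), or give a correct characterisation of the $W$-only equations derivable in a cut instance and show it is cut-independent.
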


\begin{proof}

The instance $I_n$ is unsatisfiable by construction:
Summing the equations $(\ref{eq:c-eqs})$ over $a$ and $b$
and simplifying implies the equation $0 = \sum_{a,b} (x_{a,b} + c_{a,b})$.
Similarly, the equations $(\ref{eq:d-eqs})$ imply $0 = \sum_{a,b} (x_{a,b} + d_{a,b})$.
By taking the difference of these two equations,
it follows that $0 = \sum_{a,b} (c_{a,b}-d_{a,b}) = g$ by $(\ref{eq:diffg})$,
a contradiction. Hence, the constraints of $I_n$ cannot be simultaneously satisfied.
On the other hand, the SA$(k,k)$-relaxation of $I_n$ has a feasible solution by
Lemma~\ref{lem:sakkfeas}.
\end{proof}

In the remaining part of the section, we prove that
the SA$(k,k)$-relaxation of $I_n$ has a feasible solution.

Denote by $V$ the set all variables of $I_n$ 
and let $V_x = \{ x_{a,b} \mid 0 \leq a, b < n \}$.
For $S \subseteq V_x$,
we say that $S$ \emph{excludes a cross} if
there are indices $a'$ and $b'$ such that $x_{a',b} \not\in S$ for all $0 \leq b < n$, and
$x_{a,b'} \not\in S$ for all $0 \leq a < n$.
We say that $S$ \emph{contains a hole} if
the induced subgraph $T_{n \times n}[V_x \setminus S]$ is not connected.
Let $\mathcal{S}$ be the family of subsets $S \subseteq V_x$ such that
$S$ excludes a cross and does not contain a hole.

For a subgraph $T'$ of $T_{n \times n}$, we denote by $\Var(T')$ the set of
variables on the vertices and edges of $T'$.
Let $X_1, \dots, X_m$ be an enumeration of all subsets $X \subseteq V$ such that 
$X \subseteq \Var(T_{n \times n}[S])$ for some $S \in \mathcal{S}$.
For $i\in[m]$, define
\begin{equation}\label{defbar}
\bar{X}_i = \bigcap_{S \in \mathcal{S} \colon X_i \subseteq \Var(T_{n \times n}[S])} \Var(T_{n \times n}[S]).
\end{equation}
Since $\mathcal{S}$ is closed under intersection,
it follows that $\bar{X}_i = \Var(T_{n \times n}[S])$ for some $S \in \mathcal{S}$,
so $V_x \cap \bar{X}_i = S$ excludes a cross and does not contain a hole.

It follows from the definition of $\bar{X}$  that 
$X_j \subseteq X_i \implies \bar{X}_j \subseteq \bar{X}_i$.
However, we will need a stronger property, namely that it is possible to move within the set family 
$\{ \bar{X}_i \}_{i=1}^m$ 
from $\bar{X}_j$ to $\bar{X}_i$ 
by adding vertices from $V_x$ one at a time.
More formally,
define the binary relation $\to$ on $\{ \bar{X}_i \}_{i=1}^m$ by letting $\bar{X}_j \to \bar{X}_i$
if, and only if, $\bar{X}_j \subseteq \bar{X}_i$ and $V_x \cap (\bar{X}_i \setminus \bar{X}_j) = \{x_{a,b}\}$ for some $0 \leq a,b < n$.
Let $\preccurlyeq$ be the reflexive transitive closure of $\to$.

\begin{lemma}\label{lem:reachable}
$X_j \subseteq X_i \implies \bar{X}_j \preccurlyeq \bar{X}_i$. 
\end{lemma}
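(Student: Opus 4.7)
The plan is to reduce the lemma to a purely combinatorial claim about the family $\mathcal{S}$ and then settle it by a standard non-cut-vertex argument on the torus. The key preliminary observation is that $\mathcal{S}$ is closed under intersection: both defining properties (excluding a cross and having a connected complement in $T_{n\times n}$) are preserved when the set shrinks, so a row and column avoided by any one intersectand is avoided by the intersection, and any vertex of $V_x$ outside the intersection lies outside some intersectand and is therefore connected in that intersectand's complement to the common exterior cross. Because of this, the intersection in (\ref{defbar}) lies in $\mathcal{S}$ and yields $\bar{X}_i = \Var(T_{n\times n}[S_i])$ where $S_i := V_x\cap\bar{X}_i \in \mathcal{S}$. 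Conversely, for every $S\in\mathcal{S}$ the subset $\Var(T_{n\times n}[S])$ of $V$ is one of the $X_l$'s and is its own closure $\bar{X}_l$. Under this dictionary the step relation becomes: $\Var(T_{n\times n}[T]) \to \Var(T_{n\times n}[T'])$ iff $T,T'\in\mathcal{S}$, $T\subseteq T'$, and $|T'\setminus T|=1$. Since $X_j\subseteq X_i$ immediately gives $\bar X_j\subseteq\bar X_i$ and hence $S_j\subseteq S_i$, the lemma reduces to showing that whenever $S_j\subseteq S_i$ are both in $\mathcal{S}$, one can move from $S_j$ to $S_i$ by adding one vertex of $V_x$ at a time while staying in $\mathcal{S}$.

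By induction on $|S_i\setminus S_j|$, this is further reduced to the one-step statement: if $T\subsetneq T'$ are both in $\mathcal{S}$, there exists $v\in T'\setminus T$ with $T\cup\{v\}\in\mathcal{S}$. The cross condition is inherited for free, since any row and column avoided by $T'$ is still avoided by $T\cup\{v\}\subseteq T'$. The genuine content lies in preserving connectivity of the complement, for which I would contract the nonempty connected subgraph $T_{n\times n}[V_x\setminus T']$ to a single vertex $\xi$ inside the connected graph $H := T_{n\times n}[V_x\setminus T]$. This produces a connected graph $H'$ on the vertex set $(T'\setminus T)\cup\{\xi\}$ with $|V(H')|\geq 2$, which therefore possesses at least two non-cut vertices by the classical fact that every finite connected graph on at least two vertices does; at least one of them lies in $T'\setminus T$.

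Finally, any non-cut vertex $v\neq\xi$ of $H'$ is automatically a non-cut vertex of $H$: removing $v$ from $H$ and then contracting $V_x\setminus T'$ yields $H'\setminus\{v\}$, which is connected, and since the blob $V_x\setminus T'$ is itself connected, un-contracting preserves connectivity. Hence $V_x\setminus(T\cup\{v\})$ induces a connected subgraph of $T_{n\times n}$, so $T\cup\{v\}\in\mathcal{S}$, completing the inductive step. The argument is mostly bookkeeping; the only place needing a moment's thought is verifying that $\mathcal{S}$ is intersection-closed and that the resulting dictionary between the $\bar{X}_l$'s and $\mathcal{S}$ really lets us realise each intermediate set $T\cup\{v\}$ as some $\bar{X}_l$ in the enumeration. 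The main obstacle, such as it is, is this translation step rather than any hard graph-theoretic input.
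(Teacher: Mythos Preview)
Your argument is correct and takes a genuinely different route from the paper's. The paper proceeds by contradiction from the top down: assuming a minimal counterexample with $C = V_x\cap(\bar X_i\setminus\bar X_j)$, it shows that for every $x_{a,b}\in C$ the set $S_i\setminus\{x_{a,b}\}$ must still have closure $\bar X_i$ (else one obtains a smaller counterexample), hence $S_i\setminus\{x_{a,b}\}$ contains a hole, so every vertex of $C$ has all four neighbours inside $S_i$; then $\partial C\subseteq S_j$ separates $C$ from the excluded cross of $S_i$ inside $V_x\setminus S_j$, contradicting $S_j\in\mathcal{S}$. You instead work from the bottom up: after translating to $\mathcal{S}$, you show directly that one can add a single vertex of $T'\setminus T$ to $T$ and stay in $\mathcal{S}$, by contracting the connected set $V_x\setminus T'$ inside the connected graph $T_{n\times n}[V_x\setminus T]$ and invoking the standard fact that a connected graph on at least two vertices has two non-cut vertices. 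Your route is arguably more elementary and constructive; the paper's is more topological in flavour.

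One correction to your preliminary remarks: the sentence ``both defining properties \dots\ are preserved when the set shrinks'' is false for the no-hole condition---removing a single interior vertex from a large $S\in\mathcal{S}$ creates an isolated point in the complement. What is true, and what you actually need, is closure under \emph{intersection}: the complement of $\bigcap_\alpha S_\alpha$ is the union of the connected sets $V_x\setminus S_\alpha$, and any two of these meet because any two excluded crosses on the torus intersect. Your phrase ``the common exterior cross'' is aiming at this but is not literally accurate; the repair is routine and does not affect the main argument.
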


\begin{proof}
Assume to the contrary that there are $i$ and $j$ such that $X_j \subseteq X_i$ but
$\bar{X}_j \not\preccurlyeq \bar{X}_i$.
Let $C = V_x \cap (\bar{X}_i \setminus \bar{X}_j)$ and assume that
$i$ and $j$ are chosen so that $\left|C\right|$ is minimised.
Let $x_{a,b} \in C$ and consider the set $S = (V_x \cap \bar{X}_i) \setminus \{x_{a,b}\}$.
By construction $\bar{X}_j \subseteq \bar{S} \subseteq \bar{X}_i$. If $x_{a,b}
\not\in \bar{S}$, then $|V_x \cap (\bar{S} \setminus \bar{X}_j)| <
\left|C\right|$ which contradicts the minimality of $\left|C\right|$. Therefore,
$V_x \cap \bar{S} = V_x \cap \bar{X}_i$, so $\bar{S} = \bar{X}_i$.
This means that $S$ contains a hole, and in particular that
$S$ and therefore $\bar{X}_i$ contains all neighbours of $x_{a,b}$.
Consider the set $\partial C$ of vertices in $V_x \setminus C$ that are neighbours to some
$x_{a,b} \in C$.
By the previous remark, $\partial C \subseteq (V_x \cap \bar{X}_i) \setminus C$,
so $\partial C \subseteq V_x \cap \bar{X}_j$.
Let $C'$ be the vertices of an excluded cross in $\bar{X}_i$.
Then, any path in $T_{n \times n}$ from a vertex in $C$ to a vertex in $C'$ must pass through
a vertex in $\partial C$.
Therefore, the induced subgraph
$T_{n \times n}[V_x \setminus \bar{X}_j]$ is disconnected, so
$V_x \cap \bar{X}_j$ contains a hole; a contradiction.
\end{proof}

For $i\in[m]$,
define $N_i$ to be the set of assignments $\bar{\sigma} \colon \bar{X}_i \to G$ that satisfy
every constraint in $I_n$ whose scope is contained in $\bar{X}_i$.
We argue that $N_i$ is non-empty for every $i$.
A \emph{horizontal component} of $\bar{X}_i$ is a set of
edges $\{ y_{a,b}, y_{a,b+1}, \dots, y_{a,b+r} \} \subseteq \bar{X}_i$
such that $y_{a,b-1}, y_{a,b+r+1} \not\in \bar{X}_i$.
A \emph{vertical component} of $\bar{X}_i$ is defined analogously.
Let
$C_i$, $H_i$, and $V_i$ be the number of 
vertices,
horizontal components, and
vertical components, respectively, in $\bar{X}_i$.
Since $V_x \cap \bar{X}_i$ excludes a cross,
an assignment is precisely determined by freely choosing the value of every vertex,
and of one edge in each horizontal component and one edge in each vertical component:
\begin{equation}\label{eq:nj}
\left| N_i \right| = \left| G \right| ^{C_i+H_i+V_i} \geq 1.
\end{equation}

For $\bar{\tau} \in N_j$ and $i$ such that $\bar{X}_j \subseteq \bar{X}_i$,
let $N_{j,i}(\bar{\tau})$ denote the set of assignments $\bar{\sigma} \in N_i$ such that
$\bar{\tau} = \Crestrict{\bar{\sigma}}{\bar{X}_j}$,
i.e.\ the set of extensions of $\bar{\tau}$ to an assignment in $N_i$.
Next, we give an expression for the size of the sets $N_{j,i}(\bar{\tau})$ that is independent
of the choice of $\bar{\tau}$.

\begin{lemma}\label{lem:CHV}
For $X_j \subseteq X_i$ and all $\bar{\tau} \in N_j$,
\begin{equation}\label{eq:CHV}
\left| N_{j,i}(\bar{\tau}) \right| = \frac{\left|N_i\right|}{\left|N_j\right|}.
\end{equation}
\end{lemma}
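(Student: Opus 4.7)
The plan is to interpret $N_i$ and $N_j$ as affine subspaces (cosets of subgroups) of $G^{\bar X_i}$ and $G^{\bar X_j}$ respectively, since both are defined by $\mathcal{G}$-linear equations and, by~(\ref{eq:nj}), both are non-empty. The coordinate projection $\pi \colon N_i \to G^{\bar X_j}$ is then an affine morphism with $\pi(N_i) \subseteq N_j$. Since all non-empty fibres of an affine morphism between cosets of subgroups have the same cardinality, it suffices to prove that $\pi$ is surjective onto $N_j$: the fibre size then automatically equals $|N_i|/|N_j|$, establishing~(\ref{eq:CHV}).

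To show surjectivity, I would invoke Lemma~\ref{lem:reachable} to obtain a chain $\bar X_j = \bar X_{k_0} \to \bar X_{k_1} \to \cdots \to \bar X_{k_s} = \bar X_i$. Surjectivity is preserved under composition, so by induction on $s$ it is enough to treat the single-step case, where $\bar X_i \setminus \bar X_j$ consists of exactly one new vertex variable $x_{a,b}$ together with the edge variables joining $x_{a,b}$ to its neighbours in $S_j$.

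In the single-step case, given $\bar\tau \in N_j$, I would construct an extension $\bar\sigma \in N_i$ as follows: first pick an arbitrary value for $\bar\sigma(x_{a,b}) \in G$; then, for each newly added edge variable, choose its value so that the ternary constraint of~(\ref{eq:c-eqs}) or~(\ref{eq:d-eqs}) in which it participates (if that constraint is among those newly contained in $\bar X_i$) is satisfied. The main obstacle is the case analysis according to which of the up-to-four neighbours of $x_{a,b}$, and which of their second-neighbours along each row and column, lie in $S_j$: this data determines precisely which new edge variables and which new constraints appear when passing from $\bar X_j$ to $\bar X_i$.

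Whenever the addition of $x_{a,b}$ creates a new horizontal or vertical constraint that merges two previously disjoint components of $\bar X_j$, the required correction is absorbed by a free endpoint of the merged component, which exists because $V_x \cap \bar X_i$ excludes a cross (this is exactly the freedom exploited in the derivation of~(\ref{eq:nj})). Running through the at most $3 \times 3$ configurations for $(n_H, n_V)$, the number of horizontal and vertical neighbours of $x_{a,b}$ in $S_j$, together with their sub-cases depending on second-neighbours, yields an explicit extension in every configuration. Hence $\pi$ is surjective and~(\ref{eq:CHV}) follows.
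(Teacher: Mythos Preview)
Your reformulation via affine structure is valid and is a modest streamlining of the paper's argument: since $N_i$ and $N_j$ are (non-empty) cosets of subgroups of $G^{\bar X_i}$ and $G^{\bar X_j}$, all non-empty fibres of the restriction map $\pi\colon N_i\to N_j$ have the same size, and so the lemma reduces to surjectivity of $\pi$. The paper instead counts extensions explicitly in each single-step configuration and checks that the count matches $|N_i|/|N_j|$ using formula~(\ref{eq:nj}); your framing replaces that verification by the single check that an extension exists. Both approaches use Lemma~\ref{lem:reachable} in the same way to reduce to a single step $\bar X_j \to \bar X_i$.

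There is, however, a genuine gap in your single-step surjectivity argument. The recipe ``pick $\bar\sigma(x_{a,b})$ arbitrarily, then set each new edge to satisfy its constraint'' breaks exactly in the configuration where $x_{a,b}$ has two collinear neighbours in $S_j$ and both of the outer edges (say $y_{a,b-2}$ and $y_{a,b+1}$) lie in $\bar X_j$: then three new constraints appear against three new variables, and $x_{a,b}$ is \emph{determined}, not free. Your proposed repair---that the resulting discrepancy is ``absorbed by a free endpoint of the merged component, which exists because $V_x\cap\bar X_i$ excludes a cross''---does not work: the values on $\bar X_j$ are fixed by the given $\bar\tau$ and cannot be altered, and the endpoints of the merged component lie in $\bar X_j$. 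The property you need is not ``excludes a cross'' but ``contains no hole'': since $S_j$ has no hole, $x_{a,b}$ has at most three neighbours in $S_j$, and one checks in each configuration that the new constraints can be ordered so that each introduces at least one fresh variable not occurring earlier (a triangular system), hence are simultaneously satisfiable for every $\bar\tau$. With that correction your proof goes through; the paper's own case analysis establishes this implicitly, since in every case it exhibits at least one extension.
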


\begin{proof}
First assume that $\bar{X}_j \to \bar{X}_i$ and
let $x_{a,b}$ be the unique vertex in $\bar{X}_{i} \setminus \bar{X}_{j}$.
Since $\bar{X}_j \cap V_x$ does not contain a hole, it follows that $x_{a,b}$ 
must have fewer than four neighbours in $\bar{X}_j$.
We consider the following three possible cases:
\begin{enumerate}
\item
$x_{a,b}$ has a single neighbour in $\bar{X}_j$. Without loss of generality, assume that this
neighbour is $x_{a,b+1}$ so that $\bar{X}_i \setminus \bar{X}_j = \{x_{a,b}, y_{a,b}\}$.
Choose the value of $x_{a,b}$ arbitrarily.
If $y_{a,b+1} \in \bar{X}_j$, then the equation $y_{a,b+1} = y_{a,b} + x_{a,b} + c_{a,b}$
forces the value of $y_{a,b}$.
In this case, we have $|G|$ possible extensions and $C_i = C_j+1$‚ $H_i = H_j$, and $V_i = V_j$, so (\ref{eq:CHV}) holds.
Otherwise, $y_{a,b+1} \not\in \bar{X}_j$, so the value of $y_{a,b}$ can be chosen arbitrarily.
In this case, we have $|G|^2$ possible extensions and
$C_i = C_j+1$, $H_i = H_j+1$, and $V_i = V_j$, so (\ref{eq:CHV}) holds.
\item
$x_{a,b}$ has two neighbours in $\bar{X}_j$.
If $x_{a,b}$ has one horizontal and one vertical neighbour, then we can argue
as in case (1).
Otherwise, without loss of generality, assume that 
$\bar{X}_i \setminus \bar{X}_j = \{y_{a,b-1},x_{a,b},y_{a,b}\}$.
We have three possible cases,
depending on the size of the intersection $\{y_{a,b-2}, y_{a,b+1}\} \cap \bar{X}_j$.
If this intersection contains both $y$-variables,
then the values of $y_{a,b-1}$, $x_{a,b}$, and $y_{a,b}$ are all forced by
the equations.
In this case we have $1$ possible extension,
$C_i = C_j+1$, $H_i = H_j-1$, and $V_i = V_j$, so (\ref{eq:CHV}) holds.
If the intersection contains one or zero $y$-variables, then we can choose the
value of $x_{a,b}$ arbitrarily and proceed similarly to case (1).
\item
$x_{a,b}$ has three neighbours in $\bar{X}_j$.
This case follows by extending the argument in (2) for two vertical neighbours.
\end{enumerate}

We now prove by induction that the general expression in (\ref{eq:CHV}) holds.
By Lemma~\ref{lem:reachable},
there exists an $i'$ such that $\bar{X}_{j} \preccurlyeq \bar{X}_{i'} \to \bar{X}_i$.
We have just shown that (\ref{eq:CHV}) holds for $i'$, $i$, and all $\bar{\sigma}' \in N_{i'}$.
Assume by induction that (\ref{eq:CHV}) holds for $j$, $i'$, and all $\bar{\tau} \in N_j$.
Then,
\begin{align*}
\left| N_{j,i}(\bar{\tau}) \right| &= 
\sum_{\bar{\sigma}' \in N_{j,i'}(\bar{\tau})} \left| N_{i',i}(\bar{\sigma}') \right| 
= \sum_{\bar{\sigma}' \in N_{j,i'}(\bar{\tau})} \frac{\left| N_i \right|}{\left| N_{i'} \right|}
= \frac{\left| N_{i'} \right|}{\left| N_{j} \right|} \frac{\left| N_i \right|}{\left| N_{i'} \right|}
= \frac{\left| N_i \right|}{\left| N_{j} \right|}.
\end{align*}
which proves the lemma.
\end{proof}

We are now ready to finish the proof of Theorem~\ref{thm:in}.

\begin{lemma}\label{lem:sakkfeas}
For $i\in[m]$, with $\left| X_i \right| \leq k$, let $\lambda_i$ be the following probability
distribution:
\begin{equation}\label{eq:lambdadef}
\lambda_i(\sigma) = \Pr_{\bar{\sigma} \sim U_i} \left[ \Crestrict{\bar{\sigma}}{X_i} = \sigma \right],
\end{equation}
where $U_i$ is the uniform distribution on $N_i$.
Then, $\lambda$ is a feasible solution to the SA$(k,k)$-relaxation of $I_{n}$.
\end{lemma}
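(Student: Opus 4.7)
The plan is to verify the four defining properties of a feasible SA$(k,k)$-solution one by one, with the marginal equation being the only non-trivial one.

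First, since each $\lambda_i$ is defined as the push-forward of the uniform distribution $U_i$ on the (non-empty, by (\ref{eq:nj})) set $N_i$, it is a probability distribution on assignments $X_i \to G$, so both $\sum_\sigma \lambda_i(\sigma) = 1$ and $\lambda_i(\sigma) \geq 0$ are immediate, taking care of (\ref{sa:sum1}) and (\ref{sa:nonnegative}). For the infeasibility condition (\ref{sa:infeas}), suppose the $i$-th constraint of $I_n$ has scope $X_i$ (so $X_i \subseteq \bar{X}_i$); every $\bar{\sigma} \in N_i$ satisfies by definition all constraints of $I_n$ whose scope lies in $\bar{X}_i$, in particular the $i$-th one, so any $\sigma$ with $\lambda_i(\sigma) > 0$ is a restriction of such a $\bar{\sigma}$ and hence also satisfies it.

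The main step is verifying the marginal equation (\ref{sa:marginal}). Fix $i, j$ with $X_j \subseteq X_i$ and $|X_j| \leq k$, and fix $\tau \colon X_j \to G$. Directly from the definition (\ref{defbar}) of $\bar{X}$, the inclusion $X_j \subseteq X_i$ forces $\bar{X}_j \subseteq \bar{X}_i$, since any $S \in \mathcal{S}$ occurring in the intersection for $\bar{X}_i$ also occurs in the intersection for $\bar{X}_j$. I will then compute
\begin{align*}
\sum_{\sigma \colon X_i \to G, \sigma|_{X_j} = \tau} \lambda_i(\sigma)
&= \Pr_{\bar{\sigma} \sim U_i}\bigl[\bar{\sigma}|_{X_j} = \tau\bigr]
= \sum_{\bar{\tau} \in N_j,\, \bar{\tau}|_{X_j} = \tau} \Pr_{\bar{\sigma} \sim U_i}\bigl[\bar{\sigma}|_{\bar{X}_j} = \bar{\tau}\bigr] \\
&= \sum_{\bar{\tau} \in N_j,\, \bar{\tau}|_{X_j} = \tau} \frac{|N_{j,i}(\bar{\tau})|}{|N_i|}
= \sum_{\bar{\tau} \in N_j,\, \bar{\tau}|_{X_j} = \tau} \frac{1}{|N_j|}
= \lambda_j(\tau),
\end{align*}
where the key identity $|N_{j,i}(\bar{\tau})| = |N_i|/|N_j|$ is exactly the content of Lemma~\ref{lem:CHV}, and the decomposition in the second equality uses that any $\bar{\sigma} \in N_i$ restricts to some $\bar{\tau} \in N_j$ on $\bar{X}_j$ (because every constraint of $I_n$ with scope inside $\bar{X}_j \subseteq \bar{X}_i$ is satisfied by $\bar{\sigma}$).

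The only potential subtlety is ensuring that this argument covers every constraint scope that appears in the SA$(k,k)$-relaxation of $I_n$; this requires that every subset $X \subseteq V$ with $|X| \leq k$ is among $X_1, \dots, X_m$, which holds because the at most $k$ variables in $X$ touch at most $2k < n$ columns and $2k < n$ rows, so one can pick an unused row $a'$ and unused column $b'$ whose exclusion places $X$ inside $\Var(T_{n\times n}[V_x \setminus \{x_{a',\cdot}, x_{\cdot,b'}\}])$ for some hole-free, cross-excluding $S \in \mathcal{S}$. Beyond this bookkeeping, the argument is complete; the main conceptual hurdle — that the local counts in $N_j$ and $N_i$ are compatible — has already been absorbed into Lemma~\ref{lem:CHV}.
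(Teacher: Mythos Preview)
Your proof is correct and follows essentially the same route as the paper's: the easy constraints (\ref{sa:sum1}), (\ref{sa:nonnegative}), (\ref{sa:infeas}) are dispatched directly from the definition, the marginal condition (\ref{sa:marginal}) is reduced to the counting identity of Lemma~\ref{lem:CHV} via the decomposition through $\bar{X}_j$, and the coverage of all $k$-subsets by some $X_i$ is handled by the same pigeonhole argument on rows and columns (the paper places this step first rather than last, but that is purely presentational).
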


\begin{proof}
Let $X \subseteq V$ with $\left| X \right| \leq k$.
Since $|X| \leq k < n/2$, by the pigeonhole principle, there exists an $a'$ such that
$\{y_{a',b}, x_{a',b}, y_{a'+1,b} \} \cap X = \emptyset$ for every $0 \leq b < n$.
Similarly, there exists a $b'$ such that
$\{z_{a,b'}, x_{a,b'}, z_{a,b'+1} \} \cap X = \emptyset$ for every $0 \leq a < n$.
Let $S = V_x \setminus \{ (a,b) \mid a = a' \text{ or } b = b' \}$.
Then, $S \in \mathcal{S}$ and $X \subseteq \Var(T_{n \times n}[S])$, so $X = X_i$ for some $1 \leq i \leq m$.
It follows that $\lambda$ is defined for all $X \subseteq V$ with $\left| X \right| \leq k$.

By construction, $\lambda$ satisfies (\ref{sa:sum1}) and (\ref{sa:infeas}) for the
SA$(k,k)$-relaxation of $I_n$.
It remains to show that it also satisfies (\ref{sa:marginal}).

Let $X_j \subseteq X_i$ and $\tau \colon X_j \to G$.
Let $X$ be a subset of variables such that $X_j \subseteq X \subseteq \bar{X}_i$.
Then,
\begin{align}
\Pr_{\bar{\sigma} \sim U_i} \left[ \Crestrict{\bar{\sigma}}{X_j} = \tau \right] &=
\sum_{\sigma \colon X \to G}
\Pr_{\bar{\sigma} \sim U_i} \left[ \Crestrict{\bar{\sigma}}{X_j} = \tau \text{ and } \Crestrict{\bar{\sigma}}{X} = \sigma\right] \notag \\ 
&=
\sum_{\substack{\sigma \colon X \to G \\ \Crestrict{\sigma}{X_j} = \tau}} \Pr_{\bar{\sigma} \sim U_i} \left[ \Crestrict{\bar{\sigma}}{X} = \sigma \right]. \label{eq:conditional}
\end{align}
For $X = \bar{X}_j$, equation (\ref{eq:conditional}) implies the following.
\begin{equation}
\Pr_{\bar{\sigma} \sim U_i} \left[ \Crestrict{\bar{\sigma}}{X_j} = \tau \right] =
\sum_{\substack{\bar{\tau} \colon \bar{X}_j \to G \\ \Crestrict{\bar{\tau}}{X_j} = \tau}} \Pr_{\bar{\sigma} \sim U_i} \left[ \Crestrict{\bar{\sigma}}{\bar{X}_j} = \bar{\tau} \right] =
\sum_{\substack{\bar{\tau} \in N_j \\ \Crestrict{\bar{\tau}}{X_j} = \tau}} \frac{\left| N_{j,i}(\bar{\tau}) \right|}{\left| N_i \right|} =
\sum_{\substack{\bar{\tau} \in N_j \\ \Crestrict{\bar{\tau}}{X_j} = \tau}} \frac{1}{\left| N_j \right|} =
\lambda_j(\tau),
\end{equation}
where the next-to-last inequality follows from Lemma~\ref{lem:CHV}.
Hence,
\begin{align}
\lambda_j(\tau) 
&= \Pr_{\bar{\sigma} \sim U_i} \left[ \Crestrict{\bar{\sigma}}{X_j} = \tau \right] \label{eq:deriv2} \\
&= \sum_{\substack{\sigma \colon X_i \to G \\ \sigma|_{X_j} = \tau}} \Pr_{\bar{\sigma} \sim U_i} \left[ \Crestrict{\bar{\sigma}}{X_i} = \sigma \right] \label{eq:deriv3} \\
&= \sum_{\substack{\sigma \colon X_i \to G \\ \sigma|_{X_j} = \tau}} \lambda_i(\sigma), \label{eq:deriv4}
\end{align}
where $(\ref{eq:deriv3})$ follows by (\ref{eq:conditional}) with $X = X_i$.
It follows that $\lambda$ satisfies (\ref{sa:marginal}), hence it is a feasible solution
to the SA$(k,k)$-relaxation of $I_n$.
\end{proof}

%
%

\section{Proof of Lemma~\ref{lem:corewnus}}
\label{sec:corewnus}

\begin{lemma*}[Lemma~\ref{lem:corewnus} restated]
Let $\Gamma$ be a valued constraint language of finite size on domain $D$ and
$\Gamma'$ a core of $\Gamma$ on domain $D' \subseteq D$.
Then, $\supp(\Gamma)$ satisfies the BWC if, and only if, $\supp(\Gamma' \cup \mathcal{C}_{D'})$
satisfies the BWC.
\end{lemma*}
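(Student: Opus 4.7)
The argument centres on Lemma~\ref{lem:killing}, which characterises $\supp(\Lambda)$ by preservation of $\opt(\phi_I)$ over all $I \in \VCSP(\Lambda)$, together with one preparatory observation. Since $\Gamma'$ is a core, it equals $\Gamma[f(D)]$ for some unary $f \in \supp(\Gamma)$ with $f(D)=D'$. Combining Lemma~\ref{lem:killing} for $\Gamma$ with Lemma~\ref{lem:core} shows that $f|_{D'} \in \supp(\Gamma')$; coreness of $\Gamma'$ then forces $f|_{D'}$ to be a bijection of $D'$, so $f^N|_{D'}=\mathrm{id}_{D'}$ for some $N\geq 1$, which I fix throughout.

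For the direction ``$\supp(\Gamma' \cup \mathcal{C}_{D'})$ satisfies the BWC $\Rightarrow$ $\supp(\Gamma)$ satisfies the BWC'', take an idempotent $m$-ary WNU $w_m \in \supp(\Gamma' \cup \mathcal{C}_{D'})$ for each $m\geq 3$ and lift it to $\hat w_m(x_1,\ldots,x_m) := w_m(f^N(x_1),\ldots,f^N(x_m))$ on $D$; the WNU identities are inherited verbatim. To place $\hat w_m$ in $\supp(\Gamma)$ via Lemma~\ref{lem:killing}, take $\sigma_1,\ldots,\sigma_m \in \opt(\phi_I)$ for $I \in \VCSP(\Gamma)$: Lemma~\ref{lem:killing} applied to $f^N \in \supp(\Gamma)$ puts each $f^N(\sigma_i)$ in $\opt(\phi_I) \cap (D')^V$, Lemma~\ref{lem:core} then puts it in $\opt(\phi_{I'})$ for the corresponding $\Gamma'$-instance $I'$, and Lemma~\ref{lem:killing} applied to $w_m \in \supp(\Gamma')$ puts the final $w_m$-image in $\opt(\phi_{I'}) \subseteq \opt(\phi_I)$.

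For the converse, take $g_m \in \supp(\Gamma)$ of each arity $m\geq 3$ satisfying the WNU identities and build an idempotent WNU on $D'$ lying in $\supp(\Gamma' \cup \mathcal{C}_{D'})$. Set $g'_m(x_1,\ldots,x_m) := f^N(g_m(f^N(x_1),\ldots,f^N(x_m)))$, which by Lemma~\ref{lem:suppclone} is in $\supp(\Gamma)$, inherits the WNU identities, and maps $D^m$ into $D'$. Its diagonal $q(x) := g'_m(x,\ldots,x)$ is unary, lies in $\supp(\Gamma)$, and satisfies $q(D)\subseteq D'$; by the same argument used for $f|_{D'}$, the restriction $q|_{D'}$ lies in $\supp(\Gamma')$ and is thus a bijection, so $(q|_{D'})^M=\mathrm{id}_{D'}$ for some $M\geq 1$. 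Then $H_m := q^{M-1} \circ g'_m$ stays in $\supp(\Gamma)$, satisfies the WNU identities, and the restriction $w_m := H_m|_{(D')^m}$ is idempotent on $D'$, since $w_m(x,\ldots,x) = q^{M-1}(q(x)) = x$.

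The principal and hardest step is to verify $w_m \in \supp(\Gamma' \cup \mathcal{C}_{D'})$. By Lemma~\ref{lem:killing}, this reduces to (i) $w_m \in \pol(\Gamma' \cup \mathcal{C}_{D'})$, which is immediate from idempotency of $w_m$ together with $H_m \in \pol(\Gamma)$ mapping $(D')^m$ into $D'$; and (ii) preservation of $\opt(\phi_{I_c})$ for every instance $I_c$ of $\VCSP(\Gamma' \cup \mathcal{C}_{D'})$. The obstruction in (ii) is that the unary fixings from $\mathcal{C}_{D'}$ are not in general expressible in $\Gamma$, so $I_c$ cannot be read verbatim as a $\Gamma$-instance. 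I plan to overcome this by adapting the indicator-instance gadget from the proof of Lemma~\ref{lem:core-preserve}: from $I_c$, substitute each fixed variable $v_k$ by a fresh variable $x_{a_k}$ in the $\Gamma$-extensions of the $\Gamma'$-constraints, and attach a copy of the indicator instance $I_\Delta$ for a crisp $\Delta$ produced from $\Gamma'$ via Lemma~\ref{lem:killingf}, so as to obtain a single $\Gamma$-instance $J$. Coreness forces the map $a \mapsto \alpha(x_a)$ in any optimum $\alpha$ of $J$ to be a bijection of $D'$; correcting by its inverse (itself a unary operation in $\supp(\Gamma)$) translates preservation of $\opt(\phi_J)$ by $H_m$ into preservation of $\opt(\phi_{I_c})$ by $w_m$. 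This bookkeeping between $I_c$ and $J$ is the main technical hurdle.
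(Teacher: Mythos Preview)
Your approach is correct in outline but genuinely different from the paper's. The paper works entirely at the level of fractional polymorphisms: it first invokes \cite[Lemma~10]{ktz15:sicomp} to manufacture a unary fractional polymorphism $\mu'$ of $\Gamma$ all of whose operations land in $D'$, and then builds the desired WNUs in each direction by composing $\mu'$ with the given fractional polymorphism (post-composition for $\Gamma \Rightarrow \Gamma'$, pre-composition for the converse). Your route, by contrast, runs everything through the instance-level characterisation of Lemma~\ref{lem:killing} together with the retraction $f^N$, and avoids the external generating lemma altogether. The trade-off is that your forward direction must explicitly confront the ``add constants to a core'' step---showing that an idempotent $w_m\in\supp(\Gamma')$ lies in $\supp(\Gamma'\cup\mathcal{C}_{D'})$---which the paper simply asserts in one line; your indicator-instance plan makes this step self-contained.

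One imprecision to fix in that plan: the instance $J$ you build is not literally a $\VCSP(\Gamma)$-instance, since the gadget $I_\Delta$ carries constraints from the crisp $\Delta$ produced by Lemma~\ref{lem:killingf}, not from $\Gamma$. The clean repair is to work over $\Gamma'\cup\Delta$ instead. First show $w_m\in\supp(\Gamma')$ directly (same argument you gave for $q|_{D'}$, using that $H_m\in\supp(\Gamma)$ maps into $D'$). Then observe that every relation in $\Delta$ is either $\feas(\phi')$ for $\phi'\in\Gamma'$ or $\opt(\phi_{I_f})$ for a $\Gamma'$-instance $I_f$, and hence---by Lemma~\ref{lem:killing} applied to $\Gamma'$---is preserved by all of $\supp(\Gamma')$; this gives $\fpol(\Gamma')=\fpol(\Gamma'\cup\Delta)$, so $w_m\in\supp(\Gamma'\cup\Delta)$. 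Now $J$ is a $(\Gamma'\cup\Delta)$-instance, Lemma~\ref{lem:killing} yields $w_m(\alpha_1,\dots,\alpha_m)\in\opt(\phi_J)$ for the identity-extended $\alpha_i$, and idempotency of $w_m$ on the indicator variables (no bijection-correction needed here) returns you to $\opt(\phi_{I_c})$. With this adjustment your argument goes through.
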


\begin{proof}
Let $\mu$ be a unary fractional polymorphism of $\Gamma$ with an operation $g$
in its support such that $g(D) = D'$.
We begin by constructing a unary fractional polymorphism $\mu'$ of $\Gamma$ such that
\emph{every} operation in $\supp(\mu')$ has an image in $D'$.
We will use a technique for generating fractional polymorphisms described
in~\cite[Lemma 10]{ktz15:sicomp}.
It takes a fractional polymorphism, such as $\mu$, a set of
\emph{collections} $\mathbb{G}$,
which in our case will be the set of operations in the clone of $\supp(\mu)$, a set of \emph{good}
collections $\mathbb{G^*}$, which will be operations from $\mathbb{G}$ with an image in $D'$,
and an \emph{expansion operator} {\sf Exp} which assigns to every collection 
a probability distribution on $\mathbb{G}$.

The procedure starts by generating each collection $f \in \supp(\mu)$ with probability $\mu(f)$,
and subsequently the expansion operation {\sf Exp} maps 
$f \in \mathbb{G}$ to the probability distribution 
that assigns probability $\Pr_{h \sim \mu} [h \circ f = f']$ to each operation $f' \in \mathbb{G}$.
The expansion operator is required to be \emph{non-vanishing}, which means that
starting from any collection $f \in \mathbb{G}$, repeated expansion must assign
non-zero probability to a good collection in $\mathbb{G}^*$.
In our case, this is immediate, since starting from a collection $f$, the good collection
$g \circ f$ gets probability at least $\mu(g)$ which is non-zero by assumption.
By \cite[Lemma~10]{ktz15:sicomp}, it now follows that $\Gamma$ has a fractional
polymorphism $\mu'$ with $\supp(\mu') \subseteq \mathbb{G}^*$.
So every operation in $\supp(\mu')$ has an image in $D'$.

Now, we show that if $\supp(\Gamma)$ contains an $m$-ary WNU $t$, then
$\supp(\Gamma' \cup \mathcal{C}_{D'})$ also contains an $m$-ary WNU.
Let $\omega$ be a fractional polymorphism of $\Gamma$ with $t$ in its support.
Define $\omega'$ by $\omega'(f') = \Pr_{h \sim \mu', f \sim \omega} [h \circ f = f']$.
Then, $\omega'$ is a
fractional polymorphism of $\Gamma$ in which every operation has an image in $D'$,
so $\omega'$ is a fractional polymorphism of $\Gamma'$.
Furthermore, for any unary operation $h \in \supp(\mu')$, $h \circ t$ is again a WNU, so
$\supp(\Gamma')$ contains an $m$-ary WNU $t'$.
Next, let $h(x) = t'(x, \dots, x)$.
Since $\Gamma'$ is a core, the set of unary operations in $\supp(\Gamma')$ contains only 
bijections and is closed under composition (Lemma~\ref{lem:suppclone}).
It follows that $h$ has an inverse $h^{-1} \in \supp(\Gamma')$, and since $\supp(\Gamma')$
is a clone, $h^{-1} \circ t'$ is an idempotent WNU
in $\supp(\Gamma')$.
We conclude that $h^{-1} \circ t' \in \supp(\Gamma' \cup \{ \mathcal{C}_{D'} \})$.

For the opposite direction, let $t'$ be an $m$-ary WNU in $\supp(\Gamma' \cup \{ \mathcal{C}_{D'} \})$,
and let $\omega'$ be a fractional polymorphism of $\Gamma' \cup \{ \mathcal{C}_{D'} \}$
with $t'$ in its support.
Then, $\omega'$ is also a fractional polymorphism of $\Gamma'$.
Define $\omega$ by $\omega(f) = \Pr_{h \sim \mu', f' \sim \omega'} [f'[h, \dots, h] = f]$.
Then, $\omega$ is a fractional polymorphism of $\Gamma$, and, for every $h \in \supp(\mu')$,
the operation $t[h, \dots, h]$ is an $m$-ary WNU in $\supp(\omega)$.
We conclude that $t \in \supp(\Gamma)$,
which finishes the proof. 
\end{proof}

%
%

\section{Proofs of Theorems~\ref{thm:injective} and~\ref{thm:cons2}}

\begin{theorem*}[Theorem~\ref{thm:injective} restated]
Let $D$ be an arbitrary finite domain and let $\Gamma$ be an arbitrary valued constraint
language of finite size on $D$ with $\mathcal{C}_D\subseteq \Gamma$. Assume that
$\Gamma$ expresses a unary finite-valued weighted relation $\nu$ that is injective on $D$.
Then, either $\supp(\Gamma)$ satisfies the BWC, in which case $\Gamma$ has
valued relational width $(2,3)$, or $\VCSP(\Gamma)$ is NP-hard. 
\end{theorem*}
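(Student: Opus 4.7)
The forward implication is immediate from Theorem~\ref{thm:mainicalp} since $\mathcal{C}_D \subseteq \Gamma$. For the contrapositive of the ``otherwise'' direction I would assume $\supp(\Gamma)$ does not satisfy the BWC and show that $\VCSP(\Gamma)$ is NP-hard, by reducing from Min-Sol over linear equations in a finite non-trivial Abelian group while exploiting the injective $\nu$ to manufacture an injective unary cost on the group side.

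To get to the linear equations I would apply Lemma~\ref{lem:crispwnus}, producing a crisp constraint language $\Delta$ of finite size with $\Delta \reducesto \Gamma$ and $\pol(\Delta)$ failing the BWC; since $\mathcal{C}_D \subseteq \Gamma$, I can arrange $\mathcal{C}_D \subseteq \Delta$. Theorem~\ref{thm:clarity} then yields a finite non-trivial Abelian group $\mathcal{G}$ and an interpretation $(d,S,h)$ of $E_{\mathcal{G},3}$ in $\Delta$, where $S \subseteq D^d$ and $h \colon S \to G$ is surjective, and Theorem~\ref{thm:reductions}(\ref{red:interpret}) gives $E_{\mathcal{G},3} \reducesto \Delta \reducesto \Gamma$. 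As crisp equation-satisfiability over $\mathcal{G}$ is polynomial (Gaussian elimination), the NP-hardness must come from the cost.

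The main obstacle is to construct, using $\nu$, a finite-valued injective unary cost on $G$ realisable inside the reduction. For positive rationals $\alpha_1,\dots,\alpha_d$ to be chosen I would set $\mu(\tup{x}) = \sum_{k=1}^{d} \alpha_k\,\nu(\tup{x}[k])$, which is expressible in $\Gamma$ as a sum of scaled copies of $\nu$ on each coordinate. Given a Min-Sol instance over $E_{\mathcal{G},3}$ with variables $u_1,\dots,u_n$, I would (i) replace each $u_i$ by a fresh $d$-tuple $\tup{u}_i$, (ii) encode each equation $R_a^m(u_{i_1},\dots,u_{i_m})$ as $h^{-1}(R_a^m)(\tup{u}_{i_1},\dots,\tup{u}_{i_m})$ and force $\tup{u}_i \in S$ via $\phi_S$ (all through the interpretation and $\Delta \reducesto \Gamma$), and (iii) add the cost $\mu(\tup{u}_i)$. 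An optimal VCSP solution picks the cheapest representative of $h(\tup{u}_i)$, so the effective unary cost on $G$ is $\tilde\nu(g) = \min_{\tup{x} \in h^{-1}(g)} \sum_k \alpha_k\,\nu(\tup{x}[k])$. Each $\tilde\nu(g)$ is piecewise-linear in $\vec\alpha$; on any region of $\mathbb{R}^d_{>0}$ it equals $\vec\alpha \cdot \vec{\nu}_{\tup{x}}$ for some $\tup{x} \in h^{-1}(g)$, where $\vec\nu_{\tup{x}} = (\nu(\tup{x}[1]),\dots,\nu(\tup{x}[d]))$. If $\tilde\nu(g) = \tilde\nu(g')$ held identically on some region for $g \neq g'$ then the corresponding minimisers $\tup{x}, \tup{y}$ would satisfy $\nu(\tup{x}[k]) = \nu(\tup{y}[k])$ for all $k$, forcing $\tup{x} = \tup{y}$ by injectivity of $\nu$ and thus $g = g'$, a contradiction. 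Hence the bad locus is a finite union of proper affine subvarieties of $\mathbb{R}^d$, so a generic positive-rational $\vec\alpha$ makes $\tilde\nu$ injective.

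This yields a polynomial-time reduction from Min-Sol over $E_{\mathcal{G},3}$ with the injective cost $\tilde\nu$ to $\VCSP(\Gamma)$, with matching optima. Since Min-Sol over $E_{\mathcal{G},3}$ with any finite-valued injective unary cost is NP-hard for every finite non-trivial Abelian $\mathcal{G}$ (classical nearest-codeword-style hardness, reducible via Cauchy's theorem to a $\mathbb{Z}_p$-subgroup case), $\VCSP(\Gamma)$ is NP-hard, completing the argument.
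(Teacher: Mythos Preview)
Your proposal is correct and follows essentially the same route as the paper: apply Lemma~\ref{lem:crispwnus} and Theorem~\ref{thm:clarity} to obtain an interpretation $(d,S,h)$ of $E_{\mathcal{G},3}$, push an injective $d$-ary cost built from $\nu$ through the interpretation to get an injective unary cost on $G$, and then invoke NP-hardness of Min-Sol over abelian-group equations with that cost. The paper differs only in execution: it takes the coefficients explicitly as $\alpha_k = C^{k-1}$ with $C$ exceeding the range of $\nu$ (so the $d$-ary cost is injective on all of $D^d$ outright, and the induced cost on $G$ is injective with no generic-position argument needed), it uses $h^{-1}(\eq{G})$ to express the $G$-level cost $\phi''$ formally inside $\langle\Gamma\rangle$, and it sources the NP-hardness from an explicit APX-hardness result for $E'_{\mathcal{G}}$ in~\cite{Thapper10:thesis} followed by a concrete reduction from $E'_{\mathcal{G}}$ to $E_{\mathcal{G},3}$, rather than a nearest-codeword sketch.
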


\begin{proof}
If $\Gamma$ satisfies the BWC then the result follows from
Theorem~\ref{thm:mainicalp}. If $\Gamma$ does not satisfy the BWC then,
by Lemma~\ref{lem:crispwnus}, there exists a crisp constraint language $\Delta$ 
such that $\pol(\Delta)$ does not satisfy the BWC and $\Delta \reducesto \Gamma$.
By assumption, $\mathcal{C}_D\subseteq \Gamma$ and thus $\Delta\cup\mathcal{C}_D
\reducesto \Gamma$. Hence we may assume, without loss of generality, that
$\mathcal{C}_D\subseteq\Delta$.
By Theorem~\ref{thm:clarity}, there exists a non-trivial Abelian group
$\mathcal{G}$ over a finite set $G$ and an interpretation of $E_{\mathcal{G},3}$
in $\Delta$ with parameters $(d,S,h)$. 
By Theorem~\ref{thm:reductions}(\ref{red:interpret}),
we have $E_{\mathcal{G},3} \reducesto \Delta \reducesto \Gamma$.

Let $C$ be larger than $\max_{a \in D} \nu(a) - \min_{a \in D} \nu(a)$. The $d$-ary
weighted relation $\phi(x_1,\ldots,x_d)$ defined by
\[
\phi(x_1,\ldots,x_d)=\nu(x_1)+C\nu(x_2)+C^2\nu(x_3)+\dots+C^{d-1}\nu(x_d)
\]
is injective on the set of $d$-tuples over $D$. 
By Theorem~\ref{thm:reductions}(\ref{red:express}),
$\{\phi\}\reducesto\Gamma$. 
We define 
\[
\phi'(x_1,\ldots,x_d) = \min_{y_1,\ldots,y_d}
h^{-1}(\eq{G})(x_1,\ldots,x_d,y_1,\ldots,y_d) + \phi(y_1,\ldots,y_d).
\]
By Theorem~\ref{thm:reductions}(\ref{red:express}) and~(\ref{red:interpret}), 
$\{\phi'\}\reducesto\Gamma$.
Thus, we have an injective unary weighted
relation $\phi'$ on the interpreted $E_{\mathcal{G},3}$.
For every $x\in G$, let $h_x\in D^d$ be an arbitrarily chosen element of $h^{-1}(x)$.
Finally, define the unary finite-valued weighted relation $\phi'':G\to\mathbb{Q}$ by
$\phi''(x)=\phi'(h_x)$. (Note that the choice of $h_x$ does not affect the value
of $\phi''(x)$.)

We denote by $E'_{\mathcal{G}}$ the crisp constraint language on domain $G$ with, for every $r\geq 1$, $a\in G$, and
$\vec{c}=(c_1,\ldots,c_r)\in\mathbb{Z}^r$ with $\sum_{i=1}^r c_i=0$, a relation
$S^r_{a,\vec{c}}=\{(x_1,\ldots,x_r)\in G^r \mid \sum_{i=1}^r c_ix_i=a\}$.
By~\cite[Theorem~3.18]{Thapper10:thesis}, $\VCSP(E'_{\mathcal{G}}\cup\{\phi''\})$
is APX-hard, and thus NP-hard since $\phi''$ is injective and thus non-constant
on $G$. We will finish the proof by showing how to reduce, in polynomial time,
any instance $I'$ of $\VCSP(E'_{\mathcal{G}}\cup\{\phi''\})$ to an instance $I$ of
$\VCSP(E_{\mathcal{G},3}\cup\{\phi''\})$.

Let $V$ denote the set of variables of $I'$.
The variables of $I$ will include $V$ and a set of new auxiliary variables for
each constraint of $I'$ not involving $\phi''$.
Let $\phi''(x)$ be a constraint of $I'$ for some $x\in V$. Then we include
the constraint $\phi''(x)$ in $I$.
Let $S^r_{a,\vec{c}}(\vec{x})$ be a constraint of
$I'$ for some $r\geq 1$, $a\in G$, $\vec{c}=(c_1,\ldots,c_r)\in\mathbb{Z}^r$
with $\sum_{i=1}^r c_i=0$, and $\vec{x}=(x_1,\ldots,x_r) \in V^r$ . Since
$|G|x=0$ in $\mathcal{G}$, for all $x\in G$ we can, without loss of generality, assume that
$0\leq c_i<|G|$. Thus $S^r_{a,\vec{c}}$ is equivalent to an $m$-ary relation $S'$ over $G$
where $m=\sum_{i=1}^r c_i\leq r|G|$.
The relation $S'$ can be expressed with $O(m)$ relations from $E_{\mathcal{G},3}$ using
$O(m)$ auxiliary variables.
\end{proof}

\begin{theorem*}[Theorem~\ref{thm:cons2} restated]
Let $\Gamma$ be a conservative valued constraint language.
Either $\VCSP(\Gamma)$ is NP-hard, or $\supp(\Gamma)$ contains a majority operation
and hence $\Gamma$ has valued relational width $(2,3)$.
\end{theorem*}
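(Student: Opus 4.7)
The plan is to establish the two directions of the dichotomy separately. If $\supp(\Gamma)$ contains a majority operation, then $\Gamma$ has valued relational width $(2,3)$ immediately by Corollary~\ref{cor:maj}. It therefore suffices to prove the contrapositive of the hardness direction: if $\supp(\Gamma)$ does not contain a majority operation, then $\VCSP(\Gamma)$ is NP-hard. This is precisely the strengthening of Theorem~\ref{thm:takhanov-kz} in which the hypothesis on $\pol(\Gamma)$ is weakened to one on $\supp(\Gamma)$.

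To derive NP-hardness, I would first apply Lemma~\ref{lem:killingf} with $F$ equal to the (finite) set of all ternary majority operations on $D$. Since $\supp(\Gamma)\cap F=\emptyset$ by assumption, the lemma yields a crisp constraint language $\Delta$ on $D$ such that $\pol(\Delta)\cap F=\emptyset$ and $\Delta\reducesto\Gamma$. In order to invoke Theorem~\ref{thm:takhanov-kz}, we need a conservative valued constraint language with no majority polymorphism. Let $\Gamma_{0,1}$ denote the set of all $\{0,1\}$-valued unary weighted relations on $D$; by conservativity of $\Gamma$ we have $\Gamma_{0,1}\subseteq\Gamma$. Put $\Delta':=\Delta\cup\Gamma_{0,1}$. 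By construction, $\Delta'$ is conservative. Moreover, $\feas(\phi)=D$ for every $\phi\in\Gamma_{0,1}$, so every operation on $D$ is a polymorphism of each such $\phi$; hence $\pol(\Delta')=\pol(\Delta)$ and in particular $\pol(\Delta')$ contains no majority operation. Theorem~\ref{thm:takhanov-kz} then yields that $\VCSP(\Delta')$ is NP-hard.

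It remains to exhibit a polynomial-time reduction from $\VCSP(\Delta')$ to $\VCSP(\Gamma)$. Given an instance $I$ of $\VCSP(\Delta')$, decompose its objective function as $\phi_{I_\Delta}+\phi_{I_{0,1}}$, where $I_\Delta$ collects the constraints with weighted relations in $\Delta$ and $I_{0,1}$ collects those in $\Gamma_{0,1}$. Apply the polynomial-time reduction underlying $\Delta\reducesto\Gamma$ to $I_\Delta$ to obtain an instance $J_\Delta$ of $\VCSP(\Gamma)$ with $\vcspopt(I_\Delta)=\vcspopt(J_\Delta)$, and form the instance $J$ of $\VCSP(\Gamma)$ whose objective function is $\phi_{J_\Delta}+\phi_{I_{0,1}}$---this is well-defined because $\Gamma_{0,1}\subseteq\Gamma$. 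Then $\vcspopt(I)=\vcspopt(J)$, so $\VCSP(\Gamma)$ is NP-hard, completing the proof.

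The main obstacle I anticipate is the bookkeeping around the definitions. Concretely, one must check that the augmented language $\Delta'$ genuinely satisfies the definition of conservativity used in Theorem~\ref{thm:takhanov-kz}; that $\pol(\Delta')$ is not enlarged beyond $\pol(\Delta)$ by adjoining unary weighted relations (which is exactly where the identity $\feas(\phi)=D$ for $\phi\in\Gamma_{0,1}$ is used); and that the reduction underlying $\Delta\reducesto\Gamma$ is of a form that can be applied only to the $\Delta$-constraints, leaving the $\Gamma_{0,1}$-constraints untouched and still valid in $\VCSP(\Gamma)$. Each of these checks is routine, but together they ensure that the argument passes cleanly.
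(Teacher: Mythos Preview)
Your overall strategy matches the paper's: use Lemma~\ref{lem:killing}/\ref{lem:killingf} to produce a language excluding every majority polymorphism, then invoke Theorem~\ref{thm:takhanov-kz}. However, the final paragraph---the split reduction---does not go through as written.

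The problematic step is the claim ``$\vcspopt(I)=\vcspopt(J)$''. The reduction underlying $\Delta\reducesto\Gamma$ is a composition of the constructions from Theorem~\ref{thm:reductions}(\ref{red:express}) and~(\ref{red:feasopt}); in particular the $\opt$-reduction of Lemma~\ref{lem:opt-preserve} replaces each $\opt(\phi)$-constraint by $C$ copies of $\phi$, where $C$ is chosen using a bound $U$ computed from the \emph{whole} input instance. If you apply this only to $I_\Delta$ and then glue $I_{0,1}$ back on, $C$ may be too small relative to the re-attached unary weights, and the optimum can shift. Concretely, if $I_\Delta$ forces $x=0$ via an $\opt(\phi)$-constraint while $I_{0,1}$ piles many unary constraints preferring $x=1$, then in $J$ the $C$ copies of $\phi$ can be outweighed. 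So from $\vcspopt(I_\Delta)=\vcspopt(J_\Delta)$ alone you cannot conclude $\vcspopt(I_\Delta+I_{0,1})=\vcspopt(J_\Delta+I_{0,1})$.

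The fix is painless once you notice that each reduction in Theorem~\ref{thm:reductions} is of the form $\Gamma\cup\{\cdot\}\reducesto\Gamma$; composing them actually yields $\Gamma\cup\Delta\reducesto\Gamma$, not merely $\Delta\reducesto\Gamma$. Since $\Gamma_{0,1}\subseteq\Gamma$, every instance of $\VCSP(\Delta')$ is already an instance of $\VCSP(\Gamma\cup\Delta)$, so you can apply the reduction to the \emph{entire} instance at once and conclude directly that $\VCSP(\Delta')$ polynomial-time reduces to $\VCSP(\Gamma)$. The paper sidesteps the issue by taking $\Gamma'=\Gamma\cup\{\opt(\phi_{I_f})\mid f\in F\}$ from the outset: because $\Gamma\subseteq\Gamma'$, conservativity is inherited automatically and $\Gamma'\reducesto\Gamma$ follows straight from Theorem~\ref{thm:reductions}(\ref{red:feasopt}), with no need to detach and re-attach the unary part.
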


\begin{proof}
If $\pol(\Gamma)$ does not contain a majority operation then $\Gamma$ is NP-hard
by Theorem~\ref{thm:takhanov-kz}. If $\supp(\Gamma)$ contains a majority operation
then, by Corollary~\ref{cor:maj}, $\Gamma$ has valued relational width $(2,3)$.

Let $F$ be the set of majority operations in $\pol(\Gamma) \setminus
\supp(\Gamma)$. By Lemma~\ref{lem:killing}, for each $f \in F$, there is an
instance $I_f$ of $\VCSP(\Gamma)$ such that $f \not\in \pol(\opt(I_f))$. Let
$\Gamma' = \Gamma \cup \{ \opt(I_f) \mid f \in F \}$. If $\pol(\Gamma')$ does
not contain a majority polymorphism, then, since $\Gamma$ is conservative, so is
$\Gamma'$, and hence $\Gamma'$ is NP-hard by Theorem~\ref{thm:takhanov-kz}.
Therefore, $\Gamma$ is NP-hard by
Theorem~\ref{thm:reductions}~(\ref{red:feasopt}). Assume that $\pol(\Gamma')$
contains a majority polymorphism $f$. Then, $f \not\in F$, so $f \in
\supp(\Gamma)$. From Corollary~\ref{cor:maj}, it follows that $\Gamma$ has
valued relational width $(2,3)$. 
\end{proof}

%
%
\section{Conclusions}
Using techniques from the algebraic study of CSPs and the study of linear
programming relaxations, we have given a precise characterisation of the power
of constant level Sherali-Adams linear programming relaxations for exact
solvability of valued constraint languages. Notably, we needed to prove that
certain gadget constructions, such as going to the core and interpretations, common in
the algebraic CSP literature but not commonly used in other areas of CSPs, such
as approximation, preserve solvability by constant level Sherali-Adams
relaxations.

The complexity of Min-Ones problems with respect to exact solvability and
approximability was established
in~\cite{Creignouetal:siam01,Khanna00:approximability}. Minimum-Solutions
problems are a generalisation of Min-Ones problems to larger domains, including
integer programs over bounded domains~\cite{Jonsson08:siam}. Following our
characterisation of the power of Sherali-Adams, we have given a complete
complexity classification of exact solvability of Minimum-Solution problems over
arbitrary finite domains.

\section*{Acknowledgements}
The authors wish to thank Manuel Bodirsky for explaining the relation between
varieties, pseudovarieties, and pp-interpretations in the context of finite
domain CSPs and the anonymous referees for their careful reading of an earlier
version of this paper. The authors are also grateful to Albert Atserias for useful
discussions and for bringing~\cite{Schoenebeck08:focs} to our attention.


\newcommand{\noopsort}[1]{}\newcommand{\Zivny}{\noopsort{ZZ}\v{Z}ivn\'y}

\end{document}